\documentclass[12pt]{article}
\newif\iflongversion
\longversiontrue

\usepackage{fullpage}
\usepackage{times}

\usepackage[bookmarks=false]{hyperref}

\usepackage{doi}

\usepackage{xcolor}
\usepackage{amsthm}
\theoremstyle{plain}
\newtheorem{theorem}{Theorem}
\newtheorem{lemma}[theorem]{Lemma}
\newtheorem{proposition}[theorem]{Proposition}
\newtheorem{corollary}[theorem]{Corollary}

\theoremstyle{definition}
\newtheorem{definition}{Definition}
\theoremstyle{remark}

\newtheorem{example}{Example}

\usepackage{booktabs}   %
\usepackage{subcaption} %

\usepackage{xcolor}

\usepackage{enumitem}
\usepackage{bm}
\usepackage[bbsets,Dfprime,setrelation]{math}

\usepackage{wasysym}
\usepackage[prefixflatinterpret,bracketmodalinterpret,fixformat,silentconst,sidenotecalculus,longseqcontext,seqinsist]{logic}
\usepackage[prefixflatinterpret,bracketmodalinterpret,fixformat,silentconst,differentialdL,simplenames]{dL}
\def\leftrule{L}%
\def\rightrule{R}%
\newcommand{\bebecomes}{\mathrel{::=}}
\newcommand{\alternative}{~|~}

\newcommand{\I}{\dLint[const=I,state=\omega]}
\newcommand{\It}{\dLint[const=I,state=\nu]}
\newcommand{\If}{\DALint[const=I,flow=\varphi]}
\newcommand*{\Iff}[1][\zeta]{\dLint[const=I,state=\varphi(#1)]}%

\newsavebox{\Rval}%
\sbox{\Rval}{$\scriptstyle\mathbb{R}$}
\ProcessOptions\relax

  \linferenceRulevskipamount=0.9em%
  \newdimen\linferenceRulehskipamount%
  \linferenceRulehskipamount=1mm%
  \newdimen\lcalculuscollectionvskipamount%
  \lcalculuscollectionvskipamount=0.1em%

\definecolor{vblue}{rgb}{.1,.15,.62}
\definecolor{vgray}{rgb}{.35,.35,.35}

\usepackage{amsthm}

\let\determinant\blank
\let\sgn\blank
\let\perm\blank
\let\trace\blank

\let\idmatrix\blank
\DeclareMathOperator{\determinant}{det}
\DeclareMathOperator{\trace}{tr}

\DeclareMathOperator{\idmatrix}{\mathbb{I}}
\DeclareMathOperator{\sgn}{sgn}
\DeclareMathOperator{\perm}{perm}

\usepackage{prettyref}
\newcommand{\rref}[2][]{\prettyref{#2}}
\newrefformat{sec}{Section\,\ref{#1}}
\newrefformat{subsec}{Section\,\ref{#1}}
\newrefformat{def}{Def.\,\ref{#1}}
\newrefformat{thm}{Theorem\,\ref{#1}}
\newrefformat{prop}{Proposition\,\ref{#1}}
\newrefformat{lem}{Lemma\,\ref{#1}}
\newrefformat{cor}{Corollary\,\ref{#1}}
\newrefformat{ex}{Example\,\ref{#1}}
\newrefformat{tab}{Table\,\ref{#1}}
\newrefformat{fig}{Fig.\,\ref{#1}}
\newrefformat{case}{case\,\ref{#1}}
\newrefformat{foot}{Footnote\,\ref{#1}}
\iflongversion
\newrefformat{app}{Appendix\,\ref{#1}}
\newenvironment{proofsketch}[1][TODO]{\proof[Proof Summary (\ifthenelse{\equal{#1}{TODO}}{TODO}{\rref{#1}})]}{\endproof}
\newcommand{\seeapp}[1]{(see #1)}
\else
\newrefformat{app}{\cite{DBLP:journals/corr/abs-1802-01226}}
\newenvironment{proofsketch}[1][]{\proof[Proof Summary \ifthenelse{\equal{#1}{TODO}}{TODO}{\cite{DBLP:journals/corr/abs-1802-01226}}]}{\endproof}
\newcommand{\seeapp}[1]{#1}
\fi

\newcommand{\ie}{i.e.}
\newcommand{\cf}{cf.}
\newcommand{\eg}{e.g.}

\newcommand{\cmp}{\succcurlyeq}

\newcommand{\pmc}{\preccurlyeq}

\renewcommand{\allvars}{\mathbb{V}}
\newcommand{\States}{\mathbb{S}}

\newcommand{\dprogressin}[3][]{%
  {\langle{\pevolvein{#2}{#3}}\rangle}{\ddnext} {#1}%
}
\newcommand{\dprogressinsmall}[3][]{%
  {\langle{\pevolvein{#2}{#3}}\rangle}{\ddnextsmall} {#1}%
}

\newcommand{\ddnext}{\bigcirc}
\newcommand{\ddnextsmall}{\circ}

\newcommand{\initassum}{x{=}y}
\renewcommand*{\der}[2][]{(#2)\ifthenelse{\equal{#1}{}}{'}{^{(#1)}}}

\renewcommand*{\lie}[3][]
{\mathcal{L}_{#2}^{\ifthenelse{\equal{#1}{}}{}{^{(#1)}}}(#3)}
\newcommand*{\lied}[3][]{\overset{\bm .}{#3}\ifthenelse{\equal{#1}{}}{}{^{(#1)}}}

\newcommand{\siglied}[3][]{\overset{\bm .}{#3}^{\Dostar{#1}}}

\newcommand{\Dostar}[1]{\ifthenelse{\equal{#1}{}}{(*)}{-(*)}}
\newcommand{\sigliedgt}[3][]{\siglied[#1]{#2}{#3}>0}
\newcommand{\sigliedgeq}[3][]{\siglied[#1]{#2}{#3}\geq0}

\newcommand{\sigliedzero}[3][]{\siglied[#1]{#2}{#3}=0}

\newcommand{\sigliedsai}[3][]{\siglied[#1]{#2}{#3}}

\newcommand{\sem}[1]{\lenvelope#1\renvelope}
\renewcommand{\int}[1]{\textrm{int}(#1)}

\newcommand{\polyn}[2]{#1}

\renewcommand*{\vec}[1]{\mathbf{#1}}

\newcommand{\vecpolyn}[2]{\vec{#1}}

\newcommand{\matpolyn}[2]{#1}

\newcommand{\fvar}{\phi}
\newcommand{\rfvar}{P}
\newcommand{\rrfvar}{R}
\newcommand{\rtfvar}{S}
\newcommand{\rcfvar}{C}
\newcommand{\coalgvar}{\tilde{\ivr}}

\newcommand{\solvar}{\varphi}
\newcommand{\truncafter}[2]{#1|_{#2}}
\newcommand{\soltrunc}[1]{\truncafter{\solvar}{#1}}

\setlength{\belowcaptionskip}{-10pt}

\usepackage{ifpdf}
\ifpdf
\pdfinfo{
   /Author (Andr\'e Platzer and Yong Kiam Tan)
   /Title (Differential Equation Axiomatization: The Impressive Power of Differential Ghosts)
   /Keywords (differential equation axiomatization, differential dynamic logic, differential ghosts)
}
\fi

\usepackage{fancyhdr}
\pagestyle{fancyplain}
\lhead{A. Platzer and Y. K. Tan}
\rhead{Differential Equation Axiomatization}
\headsep = 25pt

\title{Differential Equation Axiomatization\\
  \large The Impressive Power of Differential Ghosts}
\author{Andr\'e Platzer \and
  Yong Kiam Tan \thanks{
  Computer Science Department, Carnegie Mellon University, Pittsburgh, USA
  {\{aplatzer$|$yongkiat\}@cs.cmu.edu}
  }
}
\date{}

\begin{document}
\maketitle
\allowdisplaybreaks
\thispagestyle{empty}

\begin{abstract}
We prove the completeness of an axiomatization for differential equation invariants.
First, we show that the differential equation axioms in differential dynamic logic are complete for all algebraic invariants.
Our proof exploits differential ghosts, which introduce additional variables that can be chosen to evolve freely along new differential equations.
Cleverly chosen differential ghosts are the proof-theoretical counterpart of dark matter.
They create new hypothetical state, whose relationship to the original state variables satisfies invariants that did not exist before.
The reflection of these new invariants in the original system then enables its analysis.

We then show that extending the axiomatization with existence and uniqueness axioms makes it complete for all local progress properties,
and further extension with a real induction axiom makes it complete for all real arithmetic invariants.
This yields a parsimonious axiomatization, which serves as the logical foundation for reasoning about invariants of differential equations.
Moreover, our results are purely axiomatic, and so the axiomatization is suitable for sound implementation in foundational theorem provers.

\noindent
\textbf{Keywords:} {differential equation axiomatization, differential dynamic logic, differential ghosts}
\end{abstract}

\section{Introduction}
\label{sec:introduction}

Classically, differential equations are studied by analyzing their solutions.
This is at odds with the fact that solutions are often much more complicated than the differential equations themselves.
The stark difference between the simple local description as differential equations and the complex global behavior exhibited by solutions is fundamental to the descriptive power of differential equations!

Poincar\'e's qualitative study of differential equations crucially exploits this difference by deducing properties of solutions \emph{directly from the differential equations}. This paper completes an important step in this enterprise by identifying the logical foundations for proving invariance properties of polynomial differential equations.

We exploit the differential equation axioms of differential dynamic logic (\dL)~\cite{DBLP:conf/lics/Platzer12b,DBLP:journals/jar/Platzer17}. \dL is a logic for deductive verification of hybrid systems that are modeled by hybrid programs combining discrete computation (\eg, assignments, tests and loops), and continuous dynamics specified using systems of ordinary differential equations (ODEs).
By the continuous relative completeness theorem for \dL~\cite[Theorem 1]{DBLP:conf/lics/Platzer12b}, verification of hybrid systems reduces completely to the study of differential equations.
Thus, the hybrid systems axioms of \dL provide a way of lifting our findings about differential equations to hybrid systems.
The remaining practical challenge is to find succinct real arithmetic system invariants; any such invariant, once found, can be proved within our calculus.

To understand the difficulty in verifying properties of ODEs, it is useful to draw an analogy between ODEs and discrete program loops.\footnote{In fact, this analogy can be made precise: \dL also has a converse relative completeness theorem~\cite[Theorem 2]{DBLP:conf/lics/Platzer12b} that reduces ODEs to discrete Euler approximation loops.} Loops also exhibit the dichotomy between global behavior and local description. Although the body of a loop may be simple, it is impractical for most loops to reason about their global behavior by unfolding all possible iterations. Instead, the premier reasoning technique for loops is to study their loop invariants, \ie, properties that are preserved across each execution of the loop body.

Similarly, invariants of ODEs are real arithmetic formulas that describe subsets of the state space from which we cannot escape by following the ODEs. The three basic \dL axioms for reasoning about such invariants are: (1) \emph{differential invariants}, which prove simple invariants by locally analyzing their Lie derivatives, (2) \emph{differential cuts}, which refine the state space with additional provable invariants, and (3) \emph{differential ghosts}, which add differential equations for new ghost variables to the existing system of differential equations.

We may relate these reasoning principles to their discrete loop counterparts: (1) corresponds to loop induction by analyzing the loop body, (2) corresponds to progressive refinement of the loop guards, and (3) corresponds to adding discrete ghost variables to remember intermediate program states.
At first glance, differential ghosts seem counter-intuitive: they increase the dimension of the system, and should be adverse to analyzing it! However, just as discrete ghosts~\cite{DBLP:journals/cacm/OwickiG76} allow the expression of new relationships between variables along execution of a program, differential ghosts that suitably co-evolve with the ODEs crucially allow the expression of new relationships along solutions to the differential equations.
Unlike the case for discrete loops, differential cuts strictly increase the deductive power of differential invariants for proving invariants of ODEs; differential ghosts further increase this deductive power~\cite{DBLP:journals/lmcs/Platzer12}.

This paper has the following contributions:
\begin{enumerate}
\item We show that \emph{all} algebraic invariants, \ie, where the invariant set is described by a formula formed from finite conjunctions and disjunctions of polynomial equations, are provable using only the three ODE axioms outlined above.
\item We introduce axioms internalizing the existence and uniqueness theorems for solutions of differential equations. We show that they suffice for reasoning about all \emph{local progress} properties of ODEs for all real arithmetic formulas.
\item We introduce a real induction axiom that allows us to reduce invariance to local progress. The resulting \dL calculus \emph{decides} all real arithmetic invariants of differential equations.
\item Our completeness results are axiomatic, enabling disproofs.
\end{enumerate}

Just as discrete ghosts can make a program logic relatively complete~\cite{DBLP:journals/cacm/OwickiG76}, our first completeness result shows that differential ghosts achieve completeness for algebraic invariants in \dL. We extend the result to larger classes of hybrid programs, including, \eg, loops that switch between multiple different ODEs.

We note that there already exist prior, complete procedures for checking algebraic, and real arithmetic invariants of differential equations~\cite{DBLP:conf/tacas/GhorbalP14,DBLP:conf/emsoft/LiuZZ11}. Our result identifies a list of axioms that serve as a \emph{logical foundation} from which these procedures can be implemented as derived rules. This logical approach allows us to precisely identify the underlying aspects of differential equations that are needed for sound invariance reasoning. Our axiomatization is \emph{not limited} to proving invariance properties, but also completely axiomatizes disproofs and other qualitative properties such as local progress.

The parsimony of our axiomatization makes it amenable to sound implementation and verification in foundational theorem provers~\cite{DBLP:conf/cade/FultonMQVP15,DBLP:conf/cpp/BohrerRVVP17} using \dL's uniform substitution calculus~\cite{DBLP:journals/jar/Platzer17}, and is in stark contrast to previous highly schematic procedures \cite{DBLP:conf/tacas/GhorbalP14,DBLP:conf/emsoft/LiuZZ11}.

\iflongversion
All proofs are in the appendices~\ref{app:axiomatization} and~ \ref{app:completeness}.
\else
All proofs are in a companion report \rref{app:}.
\fi

\section{Background: Differential Dynamic Logic}
\label{sec:background}

This section briefly reviews the relevant continuous fragment of \dL, and establishes the notational conventions used in this paper. The reader is referred to the literature~\cite{DBLP:conf/lics/Platzer12b,DBLP:journals/jar/Platzer17} and \rref{app:axiomatization} for a complete exposition of \dL, including its discrete fragment.

\subsection{Syntax}
\label{subsec:backgroundsyntax}

Terms in \dL are generated by the following grammar, where $x$ is a variable, and $c$ is a rational constant:
\[
	e \bebecomes x \alternative c \alternative e_1 + e_2 \alternative e_1 \cdot e_2
\]
These terms correspond to polynomials over the variables under consideration. For the purposes of this paper, we write $x$ to refer to a vector of variables $x_1,\dots,x_n$, and we use $p(x),q(x)$ to stand for polynomial terms over these variables. When the variable context is clear, we write $p,q$ without arguments instead. Vectors of polynomials are written in bold $\vecpolyn{p}{x},\vecpolyn{q}{x}$, with $\vecpolyn{p}{x}_i,\vecpolyn{q}{x}_i$ for their $i$-th components.

The formulas of \dL are given by the following grammar, where $\sim$ is a comparison operator $=,\geq,>$, and $\alpha$ is a hybrid program:
\[
  \fvar \bebecomes e_1 \sim e_2 \alternative \fvar_1 \land \fvar_2 \alternative \fvar_1 \lor \fvar_2 \alternative \lnot{\fvar} \alternative \lforall{x}{\fvar} \alternative \lexists{x}{\fvar}  \alternative \dbox{\alpha}{\fvar} \alternative\ddiamond{\alpha}{\fvar}
\]
Formulas can be normalized such that $e_1 \sim e_2$ has $0$ on the right-hand side. We write $p \cmp 0$ if there is a free choice between $\geq$ or $>$. Further, \(p \pmc 0\) is \(-p \cmp 0\), where $\pmc$ stands for $\leq$ or $<$, and $\cmp$ is correspondingly chosen. Other logical connectives, e.g., $\limply,\lbisubjunct$ are definable.
For the formula $\vecpolyn{p}{x}=\vecpolyn{q}{x}$ where both $\vecpolyn{p}{x},\vecpolyn{q}{x}$ have dimension $n$, equality is understood \emph{component-wise} as \m{\landfold_{i=1}^{n} \vecpolyn{p}{x}_i=\vecpolyn{q}{x}_i} and $\vecpolyn{p}{x}\neq\vecpolyn{q}{x}$ as $\lnot{(\vecpolyn{p}{x}=\vecpolyn{q}{x})}$.
We write $\rfvar(x),\ivr(x)$ for first-order formulas of real arithmetic, \ie, formulas not containing the modal connectives. We drop the dependency on $x$ when the variable context is clear. The modal formula $\dibox{\alpha}\fvar$ is true iff $\fvar$ is true after all transitions of $\alpha$, and its dual $\didia{\alpha}\fvar$ is true iff $\fvar$ is true after some transition of $\alpha$.

Hybrid programs $\alpha$ allow us to express both discrete and continuous dynamics. This paper focuses on the continuous fragment\footnote{We only consider weak-test $\dL$, where $\ivr$ is a first-order formula of real arithmetic.}:
\[\alpha \bebecomes \cdots \alternative \pevolvein{\D{x}=\genDE{x}}{\ivr}\]
We write \pevolvein{\D{x}=\genDE{x}}{\ivr} for an autonomous vectorial differential equation system in variables $x_1,\dots,x_n$ where the RHS of the system for each $\D{x_i}$ is a polynomial term $f_i(x)$. The evolution domain constraint $\ivr$ is a formula of real arithmetic, which restricts the set of states in which we are allowed to continuously evolve. We write \m{\pevolve{\D{x}=\genDE{x}}} for \m{\pevolvein{\D{x}=\genDE{x}}{\ltrue}}.
We use a running example (\rref{fig:exampleODE}):
\[\alpha_e \mdefequiv \D{u}=-v + \frac{u}{4} (1 - u^2 - v^2), \D{v} = u + \frac{v}{4} (1 - u^2 - v^2)\]

\begin{figure}
\centering
\includegraphics[width=\columnwidth]{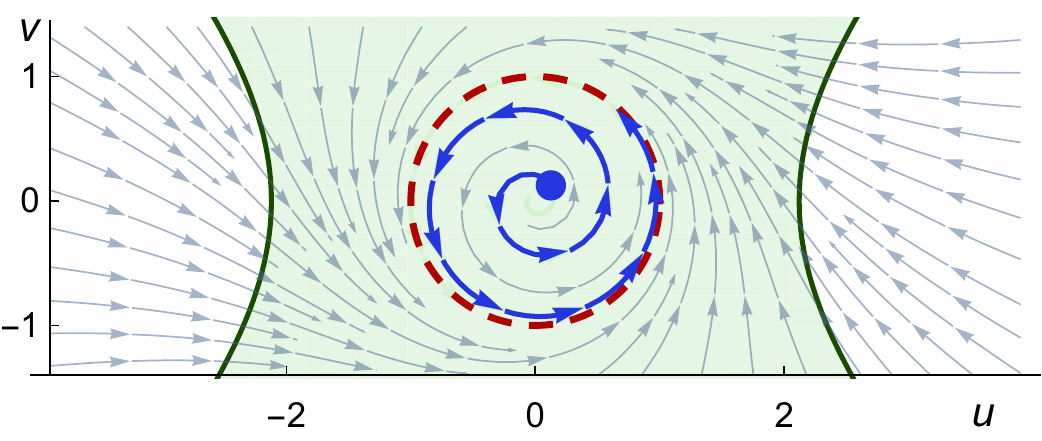}
\caption{The red dashed circle $u^2+v^2=1$ is approached by solutions of $\alpha_e$ from all points except the origin, e.g., the blue trajectory from $(\frac{1}{8},\frac{1}{8})$ spirals towards the circle. The circle, green region $u^2 \leq v^2 + \frac{9}{2}$, and the origin are invariants of the system.}
\label{fig:exampleODE}
\end{figure}
Following our analogy in \rref{sec:introduction}, solutions of $\D{x}=\genDE{x}$ must continuously (locally) follow its RHS, $\genDE{x}$. Figure~\ref{fig:exampleODE} visualizes this with directional arrows corresponding to the RHS of $\alpha_e$ evaluated at points on the plane. Even though the RHS of $\alpha_e$ are polynomials, its solutions, which must locally follow the arrows, already exhibit complex global behavior. Figure~\ref{fig:exampleODE} suggests, \eg, that all points (except the origin) globally evolve towards the unit circle.

\subsection{Semantics}
\label{subsec:backgroundsemantics}

A state $\omega : \allvars \to \reals$ assigns a real value to each variable in $\allvars$. We may let $\allvars = \{x_1,\dots, x_n\}$ since we only need to consider the variables that occur\footnote{Variables $v$ that do not have an ODE $\D{v}=\dots$ also do not change (similar to \(\D{v}=0\)).}. Hence, we shall also write states as $n$-tuples $\omega : \reals^n$ where the $i$-th component is the value of $x_i$ in that state.

The value of term $\polyn{e}{x}$ in state $\iget[state]{\I}$ is written \(\ivaluation{\I}{\polyn{e}{x}}\) and defined as usual. The semantics of comparison operations and logical connectives are also defined in the standard way. We write $\imodel{\I}{\fvar}$ for the set of states in which $\fvar$ is true. For example, $\omega \in \imodel{\I}{e_1 \leq e_2}$ iff $\ivaluation{\I}{\polyn{e_1}{x}} \leq \ivaluation{\I}{\polyn{e_2}{x}}$, and $\omega \in \imodel{\I}{\fvar_1 \land \fvar_2}$ iff $\omega \in \imodel{\I}{\fvar_1}$ and $\omega \in \imodel{\I}{\fvar_2}$.

Hybrid programs are interpreted as transition relations, $\iaccess[\alpha]{\I} \subseteq \reals^n \times \reals^n$, between states. The semantics of an ODE is the set of all pairs of states that can be connected by a solution of the ODE:
\begin{align*}
&\iaccessible[\pevolvein{\D{x}=\genDE{x}}{\ivr}]{\I}{\It} ~\text{iff}~\text{there is a real}~T\geq 0~\text{and a function}\\
&\solvar:[0,T] \to \reals^n ~\text{with}~ \solvar(0)=\iget[state]{\I}, \solvar(T)=\iget[state]{\It}, \solvar \models \pevolvein{\D{x}=\genDE{x}}{\ivr}
\end{align*}
The \m{\solvar \models \pevolvein {\D{x}=\genDE{x}}{\ivr}} condition checks that $\solvar$ is a solution of \m{\D{x}=\genDE{x}}, and that $\solvar(\zeta) \in \imodel{\I}{\ivr}$ for all $\zeta \in [0,T]$.
For any solution $\solvar$, the truncation $\soltrunc{\zeta} :[0,\zeta] \to \reals^n$ defined as \(\soltrunc{\zeta}(\tau)=\solvar(\tau)\) is also a solution.
Thus, $(\omega,\solvar(\zeta)) \in \iaccess[\pevolvein{\D{x}=\genDE{x}}{\ivr}]{\I}$ for all $\zeta \in [0,T]$.

Finally, \m{\imodels{\I}{\dbox{\alpha}{\fvar}}} iff \m{\imodels{\It}{\fvar}} for all $\iget[state]{\It}$ such that \m{\iaccessible[\alpha]{\I}{\It}}. Also, \m{\imodels{\I}{\ddiamond{\alpha}{\fvar}}} iff there is a $\iget[state]{\It}$ such that \m{\iaccessible[\alpha]{\I}{\It}} and \m{\imodels{\It}{\fvar}}.
A formula $\fvar$ is \emph{valid} iff it is true in all states, i.e., \(\imodels{\I}{\fvar}\) for all $\iget[state]{\I}$.

If formula \(P \limply \dbox{\pevolvein{\D{x}=\genDE{x}}{\ivr}}{P}\) is valid, then $P$ is called an \emph{invariant} of $\pevolvein{\D{x}=\genDE{x}}{\ivr}$. By the semantics, that is, from any initial state $\imodels{\I}{P}$, any solution $\solvar$ starting in $\iget[state]{\I}$, which does not leave the evolution domain $\imodel{\I}{\ivr}$, stays in $\imodel{\I}{P}$ for its \emph{entire duration}.

Figure~\ref{fig:exampleODE} suggests several invariants. The unit circle, $u^2+v^2=1$, is an equational invariant because the direction of flow on the circle is always tangential to the circle. The open unit disk $u^2+v^2 < 1$ is also invariant, because trajectories within the disk spiral towards the circle but never reach it. The region described by $u^2 \leq v^2 + \frac{9}{2}$ is invariant but needs a careful proof.

\subsection{Differentials and Lie Derivatives}
\label{subsec:differentials}

The study of invariants relates to the study of time derivatives of the quantities that the invariants involve.
Directly using time derivatives leads to numerous subtle sources of unsoundness, because they are not well-defined in arbitrary contexts (e.g., in isolated states).
\dL, instead, provides differential terms $\der{e}$ that have a local semantics in every state, can be used in any context, and can soundly be used for arbitrary logical manipulations~\cite{DBLP:journals/jar/Platzer17}.
Along an ODE \(\pevolve{\D{x}=\genDE{x}}\), the value of the differential term $\der{e}$ coincides with the time derivative $\D[t]{}$ of the value of $e$ \cite[Lem.\,35]{DBLP:journals/jar/Platzer17}.

The \emph{Lie derivative} of polynomial $p$ along ODE $\D{x}=\genDE{x}$ is:
\[ \lie[]{\genDE{x}}{p} \mdefeq \sum_{x_i\in\allvars} \frac{\partial p}{\partial x_i} f_i(x) = \nabla p \stimes f(x) \]
Unlike time derivatives, Lie derivatives can be written down syntactically. Unlike differentials, they still depend on the ODE context in which they are used. Along an ODE $\D{x}=\genDE{x}$, however, the value of Lie derivative $\lie[]{\genDE{x}}{p}$ coincides with that of the differential $\der{p}$, and \dL allows transformation between the two by proof. For this paper, we shall therefore directly use Lie derivatives, relying under the hood on \dL's axiomatic proof transformation from differentials \cite{DBLP:journals/jar/Platzer17}.
The operator $\lie[]{\genDE{x}}{\cdot}$ inherits the familiar sum and product rules of differentiation from corresponding axioms of differentials.

We reserve the notation $\lie[]{\genDE{x}}{\cdot}$ when used as an operator and simply write $\lied[]{\genDE{x}}{p}$ for $\lie[]{\genDE{x}}{p}$, because $\D{x}=\genDE{x}$ will be clear from the context.
We write $\lied[i]{\genDE{x}}{p}{}$ for the \emph{$i$-th Lie derivative of $p$} along $\D{x}=\genDE{x}$, where higher Lie derivatives are defined by iterating the Lie derivation operator. Since polynomials are closed under Lie derivation w.r.t.\ polynomial ODEs, all higher Lie derivatives of $p$ exist, and are also polynomials in the indeterminates $x$.
\begin{align*}
  \lied[0]{\genDE{x}}{p}{} \mdefeq p,\quad \lied[i+1]{\genDE{x}}{p}{} \mdefeq \lie{\genDE{x}}{\lied[i]{\genDE{x}}{p}{}}{}, \quad \lied[]{\genDE{x}}{p} \mdefeq \lied[1]{\genDE{x}}{p}{}
\end{align*}

\subsection{Axiomatization}
\label{subsec:axiomatizaton}

\irlabel{qear|\usebox{\Rval}}
\irlabel{notr|$\lnot$\rightrule}
\irlabel{notl|$\lnot$\leftrule}
\irlabel{orr|$\lor$\rightrule}
\irlabel{orl|$\lor$\leftrule}
\irlabel{andr|$\land$\rightrule}
\irlabel{andl|$\land$\leftrule}
\irlabel{implyr|$\limply$\rightrule}
\irlabel{implyl|$\limply$\leftrule}
\irlabel{equivr|$\lbisubjunct$\rightrule}
\irlabel{equivl|$\lbisubjunct$\leftrule}
\irlabel{id|id}
\irlabel{cut|cut}
\irlabel{weakenr|W\rightrule}
\irlabel{weakenl|W\leftrule}
\irlabel{existsr|$\exists$\rightrule}
\irlabel{existsrinst|$\exists$\rightrule}
\irlabel{alll|$\forall$\leftrule}
\irlabel{alllinst|$\forall$\leftrule}
\irlabel{allr|$\forall$\rightrule}
\irlabel{existsl|$\exists$\leftrule}
\irlabel{iallr|i$\forall$}
\irlabel{iexistsr|i$\exists$}

The reasoning principles for differential equations in \dL are stated as axioms in its uniform substitution calculus~\cite[Figure 3]{DBLP:journals/jar/Platzer17}. For ease of presentation in this paper, we shall work with a sequent calculus presentation with derived rule versions of these principles. The derivation of these rules from the axioms is shown in~\rref{app:diffaxiomatization}.

We assume a standard classical sequent calculus with all the usual rules for manipulating logical connectives and sequents, \eg, \irref{orl+andr}, and \irref{cut}.
The semantics of sequent \(\lsequent{\Gamma}{\fvar}\) is equivalent to \((\landfold_{A\in\Gamma} A) \limply \fvar\).
When we use an implicational or equivalence axiom, we omit the usual sequent manipulation steps and instead directly label the proof step with the axiom, giving the resulting premises accordingly \cite{DBLP:journals/jar/Platzer17}.
Because first-order real arithmetic is decidable~\cite{Bochnak1998}, we assume access to such a decision procedure, and label steps with \irref{qear} whenever they follow as a consequence of first-order real arithmetic. We use the \irref{existsr} rule over the reals, which allows us to supply a real-valued witness to an existentially quantified succedent. We mark with $\lclose$ the completed branches of sequent proofs.
A proof rule is \emph{sound} iff the validity of all its premises (above the rule bar) imply the validity of its conclusion (below rule bar).

\begin{theorem}[Differential equation axiomatization~\cite{DBLP:journals/jar/Platzer17}]
\label{thm:diffeqax}
The following sound proof rules derive from the axioms of \dL:

\begin{calculus}
\irlabel{dI|dI}
\dinferenceRule[dIeq|dI$_=$]{}
{\linferenceRule
  { \lsequent{\Gamma,\ivr}{p=0} \qquad
    \lsequent{\ivr}{\lied[]{\genDE{x}}{p}=0}
  }
  {\lsequent{\Gamma}{\dbox{\pevolvein{\D{x}=\genDE{x}}{\ivr}}{p=0}} }
}{}
\dinferenceRule[dIgeq|dI$_\cmp$]{}
{\linferenceRule
  { \lsequent{\Gamma,\ivr}{p\cmp0} \qquad
    \lsequent{\ivr}{\lied[]{\genDE{x}}{p}\geq0}
  }
  {\lsequent{\Gamma}{\dbox{\pevolvein{\D{x}=\genDE{x}}{\ivr}}{p\cmp0}} }
  \quad
}{where $\cmp$ is either $\geq$ or $>$}
\dinferenceRule[dC|dC]{}
{\linferenceRule
  {\lsequent{\Gamma}{\dbox{\pevolvein{\D{x}=\genDE{x}}{\ivr}}{\rcfvar}} \qquad
   \lsequent{\Gamma}{\dbox{\pevolvein{\D{x}=\genDE{x}}{\ivr \land \rcfvar}}{\rfvar}}
  }
  {\lsequent{\Gamma}{\dbox{\pevolvein{\D{x}=\genDE{x}}{\ivr}}{\rfvar}}}
}{}
\dinferenceRule[dW|dW]{}
{\linferenceRule
  {\lsequent{\ivr}{\rfvar}}
  {\lsequent{\Gamma}{\dbox{\pevolvein{\D{x}=\genDE{x}}{\ivr}}{\rfvar}}}
}{}

\dinferenceRule[dG|dG]{}
{\linferenceRule
  {\lsequent{\Gamma}{\lexists{y}{\dbox{\pevolvein{\D{x}=\genDE{x},\D{y}=a(x)\stimes y+b(x)}{\ivr}}{\rfvar}}}}
  {\lsequent{\Gamma}{\dbox{\pevolvein{\D{x}=\genDE{x}}{\ivr}}{\rfvar}}}
}{}
\end{calculus}
\end{theorem}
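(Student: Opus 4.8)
The plan is to derive each of the five sequent rules from the corresponding named axiom of \dL's uniform substitution calculus of~\cite{DBLP:journals/jar/Platzer17} --- the differential-invariant axiom $\mathsf{DI}$, the differential-cut axiom $\mathsf{DC}$, the differential-weakening axiom $\mathsf{DW}$, and the differential-ghost axiom $\mathsf{DG}$ --- together with the differential axioms for sums, products, constants, and variables, the differential-effect axiom $\mathsf{DE}$, the generalization and box-monotonicity rules, and the standard propositional and quantifier sequent rules. Since any derived rule of a sound calculus is itself sound, the soundness claims follow at once; alternatively, each rule can be re-verified directly against the transition semantics of~\rref{subsec:backgroundsemantics}.

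The rules $\mathsf{dC}$, $\mathsf{dW}$, and $\mathsf{dG}$ come almost for free. For $\mathsf{dC}$, cut in the first premise $\dbox{\pevolvein{\D{x}=\genDE{x}}{\ivr}}{\rcfvar}$ and combine it with the $\mathsf{DC}$ equivalence $\dbox{\pevolvein{\D{x}=\genDE{x}}{\ivr}}{\rfvar} \lbisubjunct \dbox{\pevolvein{\D{x}=\genDE{x}}{\ivr \land \rcfvar}}{\rfvar}$ to rewrite the succedent, which leaves exactly the second premise. For $\mathsf{dW}$, the $\mathsf{DW}$ axiom expresses that $\dbox{\pevolvein{\D{x}=\genDE{x}}{\ivr}}{\ivr}$ is valid (every state reached while respecting the evolution domain satisfies $\ivr$); applying box-monotonicity to the valid implication $\ivr \limply \rfvar$ supplied by the premise $\lsequent{\ivr}{\rfvar}$, and then weakening in $\Gamma$, yields the conclusion. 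For $\mathsf{dG}$, the rule is precisely the right-to-left implication of the $\mathsf{DG}$ equivalence $\dbox{\pevolvein{\D{x}=\genDE{x}}{\ivr}}{\rfvar} \lbisubjunct \lexists{y}{\dbox{\pevolvein{\D{x}=\genDE{x},\D{y}=a(x)\stimes y+b(x)}{\ivr}}{\rfvar}}$; this equivalence is valid because the ghost equation is \emph{affine} in the fresh variable $y$, so the ghost solution exists for exactly as long as the original solution does. The only obligation is the freshness side condition ($y$ new, not occurring in $\genDE{x}$, $\ivr$, $\rfvar$, or $\Gamma$) imposed by the uniform substitution instantiating $\mathsf{DG}$.

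The differential-invariant rules $\mathsf{dI}_{=}$ and $\mathsf{dI}_{\cmp}$ require the real work. Applying $\mathsf{DI}$ to $\dbox{\pevolvein{\D{x}=\genDE{x}}{\ivr}}{(p=0)}$ splits the goal into two obligations: the postcondition already holds in the initial state (whenever the evolution domain does), and its differential formula $\der{p}=0$ holds throughout the evolution; for $p\cmp0$ the differential formula is $\der{p}\geq0$ --- the same for both $\geq$ and $>$, which is why the second premise has a uniform shape. The initial-state obligation is exactly the first premise $\lsequent{\Gamma,\ivr}{p=0}$ (resp.\ $\lsequent{\Gamma,\ivr}{p\cmp0}$), where $\ivr$ may be added to the antecedent because the box modality is vacuously true if $\ivr$ fails initially. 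The genuinely delicate point is converting the differential $\der{p}$ into the syntactic Lie derivative $\lied[]{\genDE{x}}{p}=\sum_i \frac{\partial p}{\partial x_i}f_i(x)$: inside the ODE's postcondition context the differential-effect axiom $\mathsf{DE}$ exposes the equations $\D{x_i}=f_i(x)$, the differential axioms then rewrite $\der{p}$ as $\sum_i \frac{\partial p}{\partial x_i}\D{x_i}$ and substitute $f_i(x)$ for each $\D{x_i}$ to produce $\lied[]{\genDE{x}}{p}$, and a final use of $\mathsf{dW}$ (or plain generalization) strips the remaining box around the now purely arithmetic conclusion $\lied[]{\genDE{x}}{p}=0$ (resp.\ $\lied[]{\genDE{x}}{p}\geq0$), matching the second premise.

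I expect the main obstacle to be exactly this differential-to-Lie-derivative bookkeeping inside modal contexts: one must check that every rewriting step is licensed by an actual \dL axiom --- in particular the correct interleaving of $\mathsf{DE}$, the differential axioms, and generalization under the box --- and that the uniform substitutions instantiating the $\mathsf{DI}$, $\mathsf{DC}$, $\mathsf{DW}$, and $\mathsf{DG}$ schemata meet their admissibility conditions on free and bound variables, most notably the freshness of $y$ in $\mathsf{dG}$. Beyond that, the sequent presentations are routine propositional manipulation, and soundness of all five rules is inherited from soundness of \dL.
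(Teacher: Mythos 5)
Your proposal is essentially correct and follows the same overall plan as the paper's proof in Appendix~A.2: for $\mathsf{dI}$ you correctly identify the biconditional form of the \irref{DIeq}/\irref{DIgeq} axioms, the initial-state premise, the differential side condition, and the \irref{DE}/differential-axiom/assignment bookkeeping that turns $\der{p}$ into $\lied[]{\genDE{x}}{p}$ before stripping the box with \irref{dW}; for $\mathsf{dG}$ you use the one direction of the ghost equivalence exactly as the paper does. The only point where your route diverges is for $\mathsf{dC}$ and $\mathsf{dW}$: the paper deliberately derives both from the single generalized axiom \irref{DMP} (differential modus ponens) together with \irref{G}/\irref{Mb}, because \irref{DMP} is the form whose soundness is re-established in their Lemma~\ref{lem:dgdmp}, whereas you invoke the standard $\mathsf{DC}$ equivalence directly for $\mathsf{dC}$ and \irref{DW}$+$\irref{Mb} for $\mathsf{dW}$. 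Both are legitimate; yours is the shorter textbook derivation from the axiom base of~\cite{DBLP:journals/jar/Platzer17}, while the paper's \irref{DMP} route lets one axiom do double duty and dovetails with the diamond-modality variant \irref{dDR} used later. No gaps — just be sure to note, as you do, that the conversion from differentials to Lie derivatives relies on \irref{DE} followed by the differential assignment $\irref{Dassignb}$ and a \irref{qear} rearrangement, and that freshness of the ghost $y$ is the admissibility condition for the uniform substitution instantiating \irref{DG}.
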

\noindent
\emph{Differential invariants} (\irref{dI}) reduce questions about invariance of $p=0,p\cmp0$ (globally along solutions of the ODE) to local questions about their respective Lie derivatives. We only show the two instances (\irref{dIeq+dIgeq}) of the more general \irref{dI} rule~\cite{DBLP:journals/jar/Platzer17} that will be used here. They internalize the mean value theorem\footnote{Note that for rule \irref{dIgeq}, we only require $\lied[]{\genDE{x}}{p}\geq0$ even for the $p > 0$ case.} \seeapp{\rref{app:diffaxiomatization}}.
These \emph{derived rules} are schematic because $\lied[]{\genDE{x}}{p}$ in their premises are dependent on the ODEs $\D{x}=\genDE{x}$. This exemplifies our point in \rref{subsec:differentials}: differentials allow the principles underlying \irref{dIeq+dIgeq} to be stated as axioms~\cite{DBLP:journals/jar/Platzer17} rather than complex, schematic proof rules.

\emph{Differential cut} (\irref{dC}) expresses that if we can separately prove that the system never leaves $\rcfvar$ while staying in $\ivr$ (the left premise), then we may additionally assume $\rcfvar$ when proving the postcondition $\rfvar$ (the right premise).
Once we have sufficiently enriched the evolution domain using \irref{dI+dC}, \emph{differential weakening} (\irref{dW}) allows us to drop the ODEs, and prove the postcondition $\rfvar$ directly from the evolution domain constraint $\ivr$.
Similarly, the following derived rule and axiom from \dL will be useful to manipulate postconditions:
\[\dinferenceRule[Mb|M${\dibox{\cdot}}$]{}
{\linferenceRule
  {\lsequent{\fvar_2}{\fvar_1} \qquad \lsequent{\Gamma}{\dbox{\alpha}{\fvar_2}}}
  {\lsequent{\Gamma}{\dbox{\alpha}{\fvar_1}}}
}{} \qquad \dinferenceRule[band|${\dibox{\cdot}\land}$]{}
{\linferenceRule[equiv]
  {\dbox{\alpha}{\fvar_1} \land \dbox{\alpha}{\fvar_2}}
  {\dbox{\alpha}{(\fvar_1 \land \fvar_2)}}
}{}\]
The \irref{Mb} monotonicity rule allows us to strengthen the postcondition to $\fvar_2$ if it implies $\fvar_1$. The derived axiom \irref{band} allows us to prove conjunctive postconditions separately, e.g.,~\irref{dIeq} derives from \irref{dIgeq} using \irref{band} with the equivalence $p=0 \lbisubjunct p\geq0 \land -p \geq 0$.

Even if \irref{dC} increases the deductive power over \irref{dI}, the deductive power increases even further~\cite{DBLP:journals/lmcs/Platzer12} with the \emph{differential ghosts} rule (\irref{dG}).
It allows us to add a \emph{fresh} variable $y$ to the system of equations. The main soundness restriction of \irref{dG} is that the new ODE must be linear\footnote{%
Linearity prevents the newly added equation from unsoundly restricting the duration of existence for solutions to the differential equations.} in $y$. This restriction is enforced by ensuring that $a(x),b(x)$ do not mention $y$. For our purposes, we will allow $\vec{y}$ to be vectorial, i.e., we allow the existing differential equations to be extended by a system that is linear in the new vector of variables $\vec{y}$. In this setting, $a(x)$ (resp.~$b(x)$) is a matrix (resp.~vector) of polynomials in $x$.

Adding differential ghost variables by \irref{dG} for the sake of the proof crucially allows us to express new relationships between variables along the differential equations. The next section shows how \irref{dG} can be used along with the rest of the \dL rules to prove a class of invariants satisfying Darboux-type properties. We exploit this increased deductive power in full in later sections.

\section{Darboux Polynomials}
\label{sec:darboux}

This section illustrates the use of \irref{dG} in proving invariance properties involving Darboux polynomials~\cite{darboux1878memoire}. A polynomial $\polyn{p}{x}$ is a \emph{Darboux polynomial} for the system $\D{x}=\genDE{x}$ iff it satisfies the polynomial identity $\lied[]{\genDE{x}}{p} = \polyn{g}{x}\polyn{p}{x}$ for some polynomial cofactor $\polyn{g}{x}$.

\subsection{Darboux Equalities}
\label{subsec:darbouxeq}

As in algebra, $\polynomials{\reals}{x}$ is the ring of polynomials in indeterminates $x$.
\begin{definition}[Ideal~\cite{Bochnak1998}]
\label{def:ideal}
The \emph{ideal} generated by the polynomials $p_1,\dots,p_s \in \polynomials{\reals}{x}$ is defined as the set of polynomials:
\[ \ideal{p_1,\dots,p_s} \mdefeq \{\text{\large$\Sigma$}_{i=1}^s g_ip_i \with g_i \in \polynomials{\reals}{x}\}\]
\end{definition}

Let us assume that $\polyn{p}{x}$ satisfies the Darboux polynomial identity $\lied[]{\genDE{x}}{p} = \polyn{g}{x}\polyn{p}{x}$.
Taking Lie derivatives on both sides, we get:
\[
\lied[2]{\genDE{x}}{p} = \lie[]{\genDE{x}}{\lied[]{\genDE{x}}{p}} = \lie[]{\genDE{x}}{gp}
= \lied[]{\genDE{x}}{g}p + g\lied[]{\genDE{x}}{p} = (\lied[]{\genDE{x}}{\polyn{g}{x}} + \polyn{g}{x}^2) \polyn{p}{x} \in \ideal{\polyn{p}{x}}
\]
By repeatedly taking Lie derivatives, it is easy to see that all higher Lie derivatives of $\polyn{p}{x}$ are contained in the ideal $\ideal{\polyn{p}{x}}$. Now, consider an initial state $\omega$ where $p$ evaluates to $\ivaluation{\I}{\polyn{p}{x}} = 0$, then:
\[\ivaluation{\I}{\polyn{\lied[]{\genDE{x}}{p}}{x}} = \ivaluation{\I}{\polyn{gp}{x}} = \ivaluation{\I}{\polyn{g}{x}} \cdot \ivaluation{\I}{\polyn{p}{x}} = 0\]
Similarly, because every higher Lie derivative of a Darboux polynomial is contained in the ideal generated by $\polyn{p}{x}$, all of them are simultaneously $0$ in state $\omega$. Thus, it should be the case\footnote{This requires the solution to be an analytic function of time, which is the case here.} that $\polyn{p}{x}=0$ stays invariant along solutions to the ODE starting at $\omega$. The above intuition motivates the following proof rule for invariance of $\polyn{p}{x}=0$:
\[
\dinferenceRule[dbx|dbx]{Darboux}
{\linferenceRule
  {\lsequent{\ivr} {\lied[]{\genDE{x}}{p} = \polyn{g}{x}\polyn{p}{x}}}
  {\lsequent{\polyn{p}{x}=0} {\dbox{\pevolvein{\D{x}=\genDE{x}}{\ivr}}{\polyn{p}{x}=0}}}
}{}
\]

Although we can derive \irref{dbx} directly, we opt for a detour through a proof rule for Darboux \emph{in}equalities instead. The resulting proof rule for invariant inequalities is crucially used in later sections.

\subsection{Darboux Inequalities}
\label{subsec:darbouxineq}

Assume that $\polyn{p}{x}$ satisfies a Darboux \emph{inequality} $\lied[]{\genDE{x}}{p} \geq \polyn{g}{x}\polyn{p}{x}$ for some cofactor polynomial $\polyn{g}{x}$. Semantically, in an initial state $\omega$ where $\ivaluation{\I}{\polyn{p}{x}} \geq 0$, an application of Gr\"onwall's lemma~\cite[\S29.VI]{Walter1998,DBLP:journals/mathann/Gronwall19} allows us to conclude that $\polyn{p}{x} \geq 0$ stays invariant along solutions starting at $\omega$. Indeed, if $\polyn{p}{x}$ is a Darboux polynomial with cofactor $\polyn{g}{x}$, then it satisfies both Darboux inequalities $\lied[]{\genDE{x}}{p} \geq \polyn{g}{x}\polyn{p}{x}$ and $\lied[]{\genDE{x}}{p} \leq \polyn{g}{x}\polyn{p}{x}$, which yields an alternative semantic argument for the invariance of $p=0$.
In our derivations below, we show that these Darboux invariance properties can be proved purely syntactically using \irref{dG}.

\begin{lemma}[Darboux (in)equalities are differential ghosts] \label{lem:Darboux}
The proof rules for Darboux equalities (\irref{dbx}) and inequalities (\irref{dbxineq}) derive from \irref{dG} (and \irref{dI+dC}):
\[
\dinferenceRule[dbxineq|dbx${\cmp}$]{Darboux inequality}
{\linferenceRule
  {\lsequent{\ivr} {\lied[]{\genDE{x}}{p}\geq \polyn{g}{x}\polyn{p}{x}}}
  {\lsequent{\polyn{p}{x}\cmp0} {\dbox{\pevolvein{\D{x}=\genDE{x}}{\ivr}}{\polyn{p}{x}\cmp0}}}
  ~~
}{\text{where $\cmp$ is either $\geq$ or $>$}}
\]
\end{lemma}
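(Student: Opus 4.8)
The plan is to derive \irref{dbxineq} by adjoining a two‑dimensional linear differential ghost that turns $\polyn{p}{x}$ into a quantity with a sign‑definite Lie derivative, and then to close the resulting goal with \irref{dC} and \irref{dW}; rule \irref{dbx} then follows immediately by applying \irref{dbxineq} to both $\polyn{p}{x}$ and $-\polyn{p}{x}$.

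For \irref{dbxineq}, the Lie derivative of $\polyn{p}{x}$ along the ODE is only bounded below by $\polyn{g}{x}\polyn{p}{x}$, so I would introduce a ghost $y$ co-evolving by $\D{y}=-\frac12\polyn{g}{x}y$, chosen so that the ``weight'' $y^2$ satisfies $\lie[]{\genDE{x}}{y^2}=-\polyn{g}{x}y^2$; a short product-rule computation then gives $\lie[]{\genDE{x}}{p y^2}=(\lie[]{\genDE{x}}{p}-\polyn{g}{x}\polyn{p}{x})\,y^2$, and since $\polyn{p}{x},\polyn{g}{x}$ do not mention the ghosts, neither $\lie[]{\genDE{x}}{p}$ nor the cofactor is affected. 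To be able to recover $\polyn{p}{x}\cmp0$ from $\polyn{p}{x}y^2\cmp0$ later, I also need $y$ never to vanish, so I would simultaneously adjoin a second ghost $z$ with $\D{z}=\frac12\polyn{g}{x}z$; together $(y,z)$ is a single vectorial ghost whose right-hand side is linear in $(y,z)$ with coefficients free of $y,z$, which is exactly the side condition of \irref{dG}. Thus I apply (vectorial) \irref{dG}, instantiate the existential quantifiers over the ghosts with the value $1$, and reduce to proving $\lsequent{\polyn{p}{x}\cmp0,\,y=1,\,z=1}{\dbox{\pevolvein{\D{x}=\genDE{x},\D{y}=-\frac12\polyn{g}{x}y,\D{z}=\frac12\polyn{g}{x}z}{\ivr}}{\polyn{p}{x}\cmp0}}$.

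From there I would apply \irref{dC} twice and finish with \irref{dW}. First cut in $yz=1$: it holds initially (as $y=z=1$) and $\lie[]{\genDE{x}}{yz}=-\frac12\polyn{g}{x}yz+\frac12\polyn{g}{x}yz=0$, so \irref{dIeq} proves it (taking the polynomial to be $yz-1$). Then, with $yz=1$ available in the evolution domain, cut in $\polyn{p}{x}y^2\cmp0$: it holds initially since $y=1$ makes $\polyn{p}{x}y^2=\polyn{p}{x}\cmp0$, and its Lie derivative $(\lie[]{\genDE{x}}{p}-\polyn{g}{x}\polyn{p}{x})\,y^2$ is $\geq0$ on $\ivr$ because the premise of the rule gives $\lie[]{\genDE{x}}{p}\geq\polyn{g}{x}\polyn{p}{x}$ there while $y^2\geq0$ always; hence \irref{dIgeq} proves it, and for the strict case ($\cmp$ being $>$) \irref{dIgeq} still only requires the Lie derivative $\geq0$. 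Finally \irref{dW} reduces the goal to the first-order arithmetic implication $\ivr\land yz=1\land\polyn{p}{x}y^2\cmp0\limply\polyn{p}{x}\cmp0$, which \irref{qear} discharges using $\polyn{p}{x}=\polyn{p}{x}(yz)^2=(\polyn{p}{x}y^2)z^2$ together with $z\neq0$ (a consequence of $yz=1$).

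For \irref{dbx}, note that $\lie[]{\genDE{x}}{p}=\polyn{g}{x}\polyn{p}{x}$ entails both $\lie[]{\genDE{x}}{p}\geq\polyn{g}{x}\polyn{p}{x}$ and $\lie[]{\genDE{x}}{-\polyn{p}{x}}=-\polyn{g}{x}\polyn{p}{x}\geq-\polyn{g}{x}\polyn{p}{x}$ (the latter being the Darboux inequality for $-\polyn{p}{x}$ with the same cofactor $\polyn{g}{x}$); so \irref{dbxineq} (with $\cmp$ taken to be $\geq$) yields $\lsequent{\polyn{p}{x}\geq0}{\dbox{\pevolvein{\D{x}=\genDE{x}}{\ivr}}{\polyn{p}{x}\geq0}}$ and $\lsequent{-\polyn{p}{x}\geq0}{\dbox{\pevolvein{\D{x}=\genDE{x}}{\ivr}}{-\polyn{p}{x}\geq0}}$. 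Since the antecedent $\polyn{p}{x}=0$ implies both $\polyn{p}{x}\geq0$ and $-\polyn{p}{x}\geq0$, both box formulas are derivable; combining them with \irref{band} and then \irref{Mb} along the valid implication $\polyn{p}{x}\geq0\land-\polyn{p}{x}\geq0\limply\polyn{p}{x}=0$ produces $\dbox{\pevolvein{\D{x}=\genDE{x}}{\ivr}}{\polyn{p}{x}=0}$. The crux is the second \irref{dC+dW} step: $\polyn{p}{x}y^2\cmp0$ only yields $\polyn{p}{x}\cmp0$ when $y$ stays nonzero, and establishing this purely by proof — rather than by the semantic observation that $y$ never vanishes along the solution — is precisely the role of the companion ghost $z$ and the conserved quantity $yz=1$; the remaining details, namely supplying the \irref{dG} witness $1$ and checking \irref{dG}'s linearity side condition, are routine.
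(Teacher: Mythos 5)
Your proof is correct and uses the same core idea as the paper's --- a balancing differential ghost whose decay mirrors $p$'s growth --- but implements it with a few meaningful variations that are worth contrasting. The paper's derivation introduces a single ghost $\D{y}=-\polyn{g}{x}y$ and proves the auxiliary invariant $y>0 \land \polyn{p}{x}y\cmp0$; it therefore \emph{must} cut in $y>0$ into the evolution domain \emph{before} applying \irref{dI} to $\polyn{p}{x}y\cmp0$, because the Lie-derivative premise $\lied[]{\genDE{x}}{p}\,y - \polyn{g}{x}y\polyn{p}{x}\geq0$ only follows from $\lied[]{\genDE{x}}{p}\geq\polyn{g}{x}\polyn{p}{x}$ when $y\geq0$. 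The second ghost $z$ with $\D{z}=\frac{\polyn{g}{x}}{2}z$ is then used only inside the proof of that sign condition, as a \emph{nested} \irref{dG}, via the conserved quantity $yz^2=1$.

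Your version avoids the sign dependency by using the scaled ghost $\D{y}=-\frac12\polyn{g}{x}y$ so that the weight is $y^2$, which is automatically nonnegative: the Lie derivative $(\lied[]{\genDE{x}}{p}-\polyn{g}{x}\polyn{p}{x})\,y^2\geq0$ holds on $\ivr$ without any sign information about $y$, so \irref{dIgeq} goes through directly. You then need the nonvanishing of $y$ (equivalently of $z$) only at the very last \irref{dW} step, to recover $\polyn{p}{x}=(\polyn{p}{x}y^2)z^2$, and you get it from the simpler conserved quantity $yz=1$. Introducing $(y,z)$ simultaneously as one vectorial \irref{dG} (rather than nesting) is a tidy way to organize this, and both scaled ghosts satisfy the linearity side condition since their right-hand sides are linear in $(y,z)$ with coefficients $\pm\frac12\polyn{g}{x}$ free of the ghosts. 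The final step for \irref{dbx} --- deriving it from \irref{dbxineq} for $\polyn{p}{x}$ and $-\polyn{p}{x}$ via \irref{band} and \irref{Mb} --- matches the paper exactly. Net effect: same ghost-balancing idea, but your ordering of cuts is cleaner because the weight $y^2$ decouples the $\irref{dI}$ premise from the sign condition, at the (small) cost of a slightly more elaborate arithmetic closing step.
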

\begin{proof}
We first derive \irref{dbxineq}, let \textcircled{1} denote the use of its premise, and \textcircled{2} abbreviate the right premise in the following derivation.
{\footnotesize%
\begin{sequentdeduction}[array]
\linfer[dG]
  {\linfer[Mb+existsr]
    {\linfer[dC]
      {
        \lsequent{\polyn{p}{x}{\cmp}0,y{>}0} {\dbox{\pevolvein{\D{x}=\genDE{x}\syssep\D{y}=-\polyn{g}{x}y}{\ivr\land y>0}}{\polyn{p}{x}y\cmp0}}
      \qquad\hfill
        \textcircled{2}
      }
      {\lsequent{\polyn{p}{x}\cmp0,y>0} {\dbox{\pevolvein{\D{x}=\genDE{x}\syssep\D{y}=-\polyn{g}{x}y}{\ivr}}{(y>0 \land \polyn{p}{x}y\cmp0)}}}
    }%
    {\lsequent{\polyn{p}{x}\cmp0} {\lexists{y}{\dbox{\pevolvein{\D{x}=\genDE{x}\syssep\D{y}=-\polyn{g}{x}y}{\ivr}}{\polyn{p}{x}\cmp0}}}}
  }
  {\lsequent{\polyn{p}{x}\cmp0} {\dbox{\pevolvein{\D{x}=\genDE{x}}{\ivr}}{\polyn{p}{x}\cmp0}}}
\end{sequentdeduction}
}%
In the first \irref{dG} step, we introduce a new ghost variable $y$ satisfying a carefully chosen differential equation \(\D{y}=-\polyn{g}{x}y\) as a counterweight.
Next, \irref{existsr} allows us to pick an initial value for $y$. We simply pick any $y>0$. We then observe that in order to prove $\polyn{p}{x}\cmp0$, it suffices to prove the stronger invariant $y>0 \land \polyn{p}{x}y\cmp0$, so we use the monotonicity rule \irref{Mb} to strengthen the postcondition. Next, we use \irref{dC} to first prove $y>0$ in \textcircled{2}, and assume it in the evolution domain constraint in the left premise. This sign condition on $y$ is crucially used when we apply \textcircled{1} in the proof for the left premise:
{\footnotesize\renewcommand{\arraystretch}{1.3}%
\begin{sequentdeduction}[array]
\linfer[dI]
{
  \linfer[qear]{\lclose}
  {\lsequent{\polyn{p}{x}{\cmp}0,y{>}0}{\polyn{p}{x}y\cmp0}}
  !
  \linfer[cut]
    {\textcircled{1}!
    \linfer[qear]
      {\lclose}
      {\lsequent{\lied[]{\genDE{x}}{p}\geq \polyn{g}{x}\polyn{p}{x}, y>0} {\lied[]{\genDE{x}}{p}y-\polyn{g}{x}y\polyn{p}{x}\geq0}}
    }%
    {\lsequent{\ivr \land y>0} {\lied[]{\genDE{x}}{p}y-\polyn{g}{x}y\polyn{p}{x}\geq0}}
}
{\lsequent{\polyn{p}{x}{\cmp}0,y{>}0} {\dbox{\pevolvein{\D{x}=\genDE{x}\syssep\D{y}=-\polyn{g}{x}y}{\ivr\land y>0}}{\polyn{p}{x}y\cmp0}}}
\end{sequentdeduction}
}%
We use \irref{dI} to prove the inequational invariant $\polyn{p}{x}y\cmp0$; its left premise is a consequence of real arithmetic. On the right premise, we compute the Lie derivative of $\polyn{p}{x}y$ using the usual product rule as follows:
\[\lie[]{\genDE{x},-\polyn{g}{x}y}{\polyn{p}{x}y} = \lie[]{\genDE{x},-\polyn{g}{x}y}{\polyn{p}{x}}y + \polyn{p}{x}\lie[]{\genDE{x},-\polyn{g}{x}y}{y} = \lied[]{\genDE{x}}{p}y-\polyn{g}{x}y\polyn{p}{x} \]
 We complete the derivation by cutting in the premise of \irref{dbxineq} (\textcircled{1}). Note that the differential ghost \m{\D{y}=-\polyn{g}{x}y} was precisely chosen so that the final arithmetic step closes trivially.

We continue on premise \textcircled{2} with a second ghost \(\D{z}=\frac{\polyn{g}{x}}{2}z\):
{\footnotesize\renewcommand{\arraystretch}{1.3}%
\begin{sequentdeduction}[array]
\linfer[dG]
{\linfer[Mb+existsr]
  {\linfer[dI]
    {\linfer[qear]
          {\lclose}
        {\lsequent{\ivr} {(-\polyn{g}{x}y)z^2+y(2z(\frac{\polyn{g}{x}}{2}z))=0}}
    }%
    {\lsequent{yz^2=1} {\dbox{\pevolvein{\D{x}=\genDE{x}\syssep\D{y}=-\polyn{g}{x}y\syssep\D{z}=\frac{\polyn{g}{x}}{2}z}{\ivr}}{yz^2=1}}}
  }%
  {\lsequent{y>0} {\lexists{z}{\dbox{\pevolvein{\D{x}=\genDE{x}\syssep\D{y}=-\polyn{g}{x}y\syssep\D{z}=\frac{\polyn{g}{x}}{2}z}{\ivr}}{y>0}}}}
}%
{\lsequent{y>0} {\dbox{\pevolvein{\D{x}=\genDE{x}\syssep\D{y}=-\polyn{g}{x}y}{\ivr}}{y>0}}}
\end{sequentdeduction}
}%
This derivation is analogous to the one for the previous premise. In the \irref{Mb+existsr} step, we observe that if $y>0$ initially, then there exists $z$ such that $yz^2=1$. Moreover, $yz^2=1$ is sufficient to imply $y > 0$ in the postcondition. The differential ghost \m{\D{z}=\frac{\polyn{g}{x}}{2}z} is constructed so that $yz^2=1$ can be proved invariant \emph{along the differential equation}.

The \irref{dbx} proof rule derives from rule \irref{dbxineq} using the equivalence \m{p=0 \lbisubjunct p\geq0 \land -p \geq 0} and derived axiom \irref{band}:
{\footnotesize%
\begin{sequentdeduction}[default]
\linfer[Mb+qear]{
\linfer[band+andr]
{
  \linfer[dbxineq]{
  \linfer[qear]{
    \lsequent{\ivr} {\lied[]{\genDE{x}}{p} = \polyn{g}{x}\polyn{p}{x}}
  }
    {\lsequent{\ivr} {\lied[]{\genDE{x}}{p}\geq \polyn{g}{x}\polyn{p}{x}}}
  }
  {\lsequent{\polyn{p}{x}\geq 0} {\dbox{\pevolvein{\D{x}=\genDE{x}}{\ivr}}{\polyn{p}{x}\geq 0}}}
! \qquad
  \linfer[dbxineq]{
  \linfer[qear]{
    \lsequent{\ivr} {\lied[]{\genDE{x}}{p} = \polyn{g}{x}\polyn{p}{x}}
  }
    {\lsequent{\ivr} {\lied[]{\genDE{x}}{(-p)}\geq -\polyn{g}{x}\polyn{p}{x}}}
  }
  {\lsequent{-\polyn{p}{x}\geq 0} {\dbox{\pevolvein{\D{x}=\genDE{x}}{\ivr}}{{-}\polyn{p}{x}\geq 0}}}
}
  {\lsequent{\polyn{p}{x}\geq 0 \land -\polyn{p}{x}\geq 0 } {\dbox{\pevolvein{\D{x}=\genDE{x}}{\ivr}}{(\polyn{p}{x}\geq 0 \land -\polyn{p}{x}\geq 0)}}}
}
  {\lsequent{\polyn{p}{x}=0} {\dbox{\pevolvein{\D{x}=\genDE{x}}{\ivr}}{\polyn{p}{x}=0}}}
\end{sequentdeduction}
}%
\end{proof}

\begin{example}[Proving continuous properties in \dL]
\label{ex:continuousproperties}
In the running example, \irref{dbxineq} directly proves that the open disk $1-u^2-v^2 > 0$ is an invariant for $\alpha_e$ using cofactor $g\mnodefeq-\frac{1}{2}(u^2+v^2)$:
{\footnotesize%
\begin{sequentdeduction}[default]
\linfer[dbxineq]{
  \linfer[qear]{\lclose}{\lsequent{} {\lie[]{\alpha_e}{1 - u^2 - v^2} \geq -\frac{1}{2}(u^2+v^2) (1 - u^2 - v^2)}}
}
  {\lsequent{1 - u^2 - v^2 > 0} {\dbox{\alpha_e}{1 - u^2 - v^2 > 0}}}
\end{sequentdeduction}
}%

\begin{figure}
\centering
\includegraphics[width=\columnwidth]{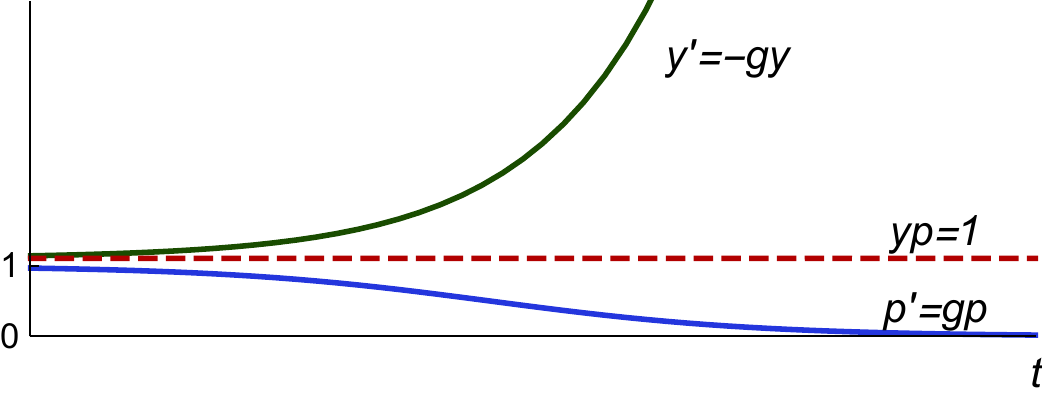}
\caption{The differential ghost $y'=-gy$ (in green) balances out $p'=gp$ (in blue) so that the value of $yp$ (the red dashed line) remains constant at $1$. The horizontal axis tracks the evolution of time $t$.}
\label{fig:darboux}
\end{figure}

Figure~\ref{fig:exampleODE} indicated that trajectories in the open disk spiral towards $1-u^2-v^2=0$, i.e., they evolve towards leaving the invariant region. Intuitively, this makes a direct proof of invariance difficult. The proof of \irref{dbxineq} instead introduces the differential ghost $y'=-gy$. Its effect for our example is illustrated in \rref{fig:darboux}, which plots the value of $p \mnodefeq 1-u^2-v^2$ and ghost $y$ along the solution starting from the point $(\frac{1}{8},\frac{1}{8})$. Although $p$ decays towards $0$, the ghost $y$ balances this by growing away from $0$ so that $yp$ remains constant at its initial value $1$, which implies that $p$ never reaches $0$.
\end{example}

These derivations demonstrate the clever use of differential ghosts. In fact, we have already exceeded the deductive power of \irref{dI+dC} because the formula $y > 0 \limply \dbox{\pevolve{\D{y}=-y}}{y > 0}$ is valid but not provable with \irref{dI+dC} alone but needs a \irref{dG}~\cite{DBLP:journals/lmcs/Platzer12}. It is a simple consequence of \irref{dbxineq}, since the polynomial $y$ satisfies the Darboux equality $\lied[]{}{y} = -y$ with cofactor $-1$. For brevity, we showed the same derivation for both $\geq$ and $>$ cases of \irref{dbxineq} even though the latter case only needs one ghost. Similarly, \irref{dbx} derives directly using two ghosts rather than the four ghosts incurred using \irref{band}. All of these cases, however, only introduce one differential ghost at a time. In the next section, we exploit the full power of \emph{vectorial} \irref{dG}.

\section{Algebraic Invariants}
\label{sec:alginvs}

We now consider polynomials that are not Darboux for the given differential equations, but instead satisfy a \emph{differential radical property}~\cite{DBLP:conf/tacas/GhorbalP14} with respect to its higher Lie derivatives. Let $g_i$ be cofactor polynomials, $N \geq 1$, assume that $p$ satisfies the polynomial identity:
\begin{equation}
\lied[N]{\genDE{x}}{p} = \sum_{i=0}^{N-1} g_i \lied[i]{\genDE{x}}{p}
\label{eq:differential-rank}
\end{equation}

With the same intuition, again take Lie derivatives on both sides:
\begin{align*}
\lied[N+1]{\genDE{x}}{p} =& \lie[]{\genDE{x}}{\lied[N]{\genDE{x}}{p}} = \lie[]{\genDE{x}}{\sum_{i=0}^{N-1} g_i \lied[i]{\genDE{x}}{p}} = \sum_{i=0}^{N-1}  \lie[]{\genDE{x}}{g_i \lied[i]{\genDE{x}}{p}}\\
=& \sum_{i=0}^{N-1}  \left(\lied[]{\genDE{x}}{g_i} \lied[i]{\genDE{x}}{p} + g_i\lied[i+1]{\genDE{x}}{p}\right) \in \ideal{p,\lied[]{\genDE{x}}{p}, \dots, \lied[N-1]{\genDE{x}}{p}}
\end{align*}
In the last step, ideal membership follows by observing that, by \rref{eq:differential-rank}, $\lied[N]{\genDE{x}}{p}$ is contained in the ideal generated by the lower Lie derivatives. By repeatedly taking Lie derivatives on both sides, we again see that $\lied[N]{\genDE{x}}{p}, \lied[N+1]{\genDE{x}}{p}, \dots$ are all contained in the ideal $\ideal{p,\lied[]{\genDE{x}}{p}, \dots, \lied[N-1]{\genDE{x}}{p}}$. Thus, if we start in state $\omega$ where $\ivaluation{\I}{\polyn{p}{x}},\ivaluation{\I}{\lied[]{\genDE{x}}{p}},\dots, \ivaluation{\I}{\lied[N-1]{\genDE{x}}{p}}$ all simultaneously evaluate to $0$, then $\polyn{p}{x}=0$ (\emph{and all higher Lie derivatives}) must stay invariant along (analytic) solutions to the ODE.

This section shows how to axiomatically prove this invariance property using (vectorial) \irref{dG}. We shall see at the end of the section that this allows us to prove \emph{all} true algebraic invariants.

\subsection{Vectorial Darboux Equalities}
\label{subsec:vecdarbouxeq}

We first derive a vectorial generalization of the Darboux rule \irref{dbx}, which will allow us to derive the rule for algebraic invariants as a special case by exploiting a vectorial version of \rref{eq:differential-rank}.
Let us assume that the $n$-dimensional vector of polynomials $\vecpolyn{p}{x}$ satisfies the vectorial polynomial identity $\lied[]{\genDE{x}}{\vecpolyn{p}{x}}=\matpolyn{G}{x}\itimes\vecpolyn{p}{x}$, where $\matpolyn{G}{x}$ is an $n\times n$ matrix of polynomials, and $\lied[]{\genDE{x}}{\vec{p}}$ denotes component-wise Lie derivation of $\vecpolyn{p}{x}$.
If all components of $\vecpolyn{p}{x}$ start at 0, then they stay 0 along \m{\pevolve{\D{x}=\genDE{x}}}.

\begin{lemma}[Vectorial Darboux equalities are vectorial ghosts] \label{lem:vdbx}
The vectorial Darboux proof rule derives from vectorial \irref{dG} (and \irref{dI+dC}).
\[
\dinferenceRule[vdbx|vdbx]{vectorial Darboux}
{\linferenceRule
  {\lsequent{\ivr} {\lied[]{\genDE{x}}{\vec{p}}=\matpolyn{G}{x}\itimes\vecpolyn{p}{x}}}
  {\lsequent{\vecpolyn{p}{x}=0} {\dbox{\pevolvein{\D{x}=\genDE{x}}{\ivr}}{\vecpolyn{p}{x}=0}}}
}{}
\]
\end{lemma}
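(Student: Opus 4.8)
The plan is to mimic the scalar proof of \irref{dbx} from \rref{lem:Darboux}, but using \emph{vectorial} \irref{dG} to introduce a whole matrix of ghost variables at once. The key idea is the same counterweight trick: in the scalar case we added $\D{y} = -g(x)y$ so that $py$ stays constant; here we want to add an $n \times n$ matrix $Y$ of ghost variables evolving by the linear ODE $\D{Y} = -\matpolyn{G}{x} Y$ (which is linear in the ghosts $Y$, as required by \irref{dG}, since $\matpolyn{G}{x}$ mentions only $x$), and track the quantity $Y \itimes \vecpolyn{p}{x}$. Computing its Lie derivative along the extended system by the product rule gives $\lied[]{\genDE{x},-GY}{Y\vecpolyn{p}{x}} = (\D{Y})\vecpolyn{p}{x} + Y(\lied[]{\genDE{x}}{\vecpolyn{p}{x}}) = -\matpolyn{G}{x}Y\vecpolyn{p}{x} + Y\matpolyn{G}{x}\vecpolyn{p}{x}$, which in general is \emph{not} zero because matrices do not commute — so the naive lift of the scalar argument fails, and this is exactly the main obstacle.

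To get around the non-commutativity, I would instead choose $Y$ to be a \emph{fundamental matrix solution} of the linear system: add ghosts $\D{Y} = -\matpolyn{G}{x} Y$ and, by \irref{existsr}, pick the initial value $Y = \idmatrix$ (the identity matrix), so that $Y$ stays invertible. The relevant tracked quantity is then the vector $\vec{z} \mnodefeq Y \itimes \vecpolyn{p}{x}$, and a cleaner computation is available: since $\D{Y} = -GY$, one has $\lied{}{Y^{-1}} = Y^{-1} G$, so actually it is better to track $\vec{z} = Y \vecpolyn{p}{x}$ with $\D{Y}$ chosen as the fundamental matrix of $\D{Y} = -GY$; then $\lied{}{\vec z} = \D{Y}\vecpolyn{p}{x} + Y \lied{}{\vecpolyn{p}{x}} = -GY\vecpolyn{p}{x} + YG\vecpolyn{p}{x}$. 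Because this residual $(YG - GY)\vecpolyn{p}{x}$ is itself in the ideal generated by the components of $\vecpolyn{p}{x}$ — indeed it is a matrix of polynomials times $\vecpolyn{p}{x}$ — one can iterate: treat $\vec z$ as satisfying its own vectorial Darboux equality with a new cofactor matrix, or, more simply, strengthen the postcondition to $\vec z = 0$ and observe $\lied{}{\vec z}$ is again a polynomial-matrix multiple of $\vec z$ (rewriting $\vecpolyn{p}{x} = Y^{-1}\vec z$ using that $Y$ is invertible). Then \irref{dC}/\irref{dI} close it. Concretely: use vectorial \irref{dG} to add $\D{Y} = -GY$; \irref{existsr} to set $Y := \idmatrix$; \irref{Mb} to strengthen the postcondition $\vecpolyn{p}{x}=0$ to $\det Y \neq 0 \land Y\vecpolyn{p}{x} = 0$ (which implies $\vecpolyn{p}{x}=0$); \irref{dC} to first establish $\det Y \neq 0$ as an invariant (itself a Darboux-type equality since $\lied{}{\det Y} = -\trace(G)\det Y$ by Jacobi's formula, handled by \irref{dbx}); then \irref{dI} on $Y\vecpolyn{p}{x}=0$ whose Lie derivative, in the evolution domain $\ivr \land \det Y \neq 0$, equals $(YG-GY)\vecpolyn{p}{x} = (YG-GY)Y^{-1}(Y\vecpolyn{p}{x})$, a matrix multiple of the tracked zero vector — so \irref{dI}'s right premise reduces to real arithmetic after cutting in the premise $\lied[]{\genDE{x}}{\vec{p}}=\matpolyn{G}{x}\itimes\vecpolyn{p}{x}$.

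The main obstacle, as noted, is the matrix non-commutativity: unlike the scalar case where $py$ is literally constant, here $Y\vecpolyn{p}{x}$ is \emph{not} constant, only invariantly zero, and making \irref{dI} go through requires expressing $\lied{}{(Y\vecpolyn{p}{x})}$ as an explicit polynomial multiple of $Y\vecpolyn{p}{x}$, which forces us to carry $\det Y \neq 0$ as a differential cut so that $\vecpolyn{p}{x}$ can be recovered as $Y^{-1}(Y\vecpolyn{p}{x})$. A secondary technical point is that $Y^{-1}$ is not polynomial, so the rewriting $\vecpolyn{p}{x} = Y^{-1}\vec z$ must be done at the level of the evolution-domain-constrained arithmetic (where $\det Y \neq 0$ lets us clear denominators via the adjugate, $Y^{-1} = \frac{1}{\det Y}\adjugate Y$), rather than as a polynomial identity; one multiplies through by $\det Y$ to stay polynomial. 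I expect the bookkeeping — vectorial product rule for $\lied{}{(Y\vecpolyn{p}{x})}$, Jacobi's formula for $\lied{}{\det Y}$, and the adjugate manipulation — to be entirely routine once this structure is in place, exactly paralleling the scalar \irref{dbx} derivation but with $\idmatrix$ in place of $y>0$ and $\det Y \neq 0$ in place of $y \neq 0$.
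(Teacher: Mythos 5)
Your plan is the right shape — ghost matrix $Y$ via vectorial \irref{dG}, cut in $\det Y > 0$ using Liouville/Jacobi, prove $Y\vecpolyn{p}{x}=0$ invariant by \irref{dI}, conclude $\vecpolyn{p}{x}=0$ from invertibility — but the ghost ODE you chose, $\D{Y}=-\matpolyn{G}{x}\,Y$, is on the wrong side, and this is not a cosmetic issue: it manufactures exactly the noncommutativity that then blocks you. The fix is to take $\D{Y}=-Y\matpolyn{G}{x}$ (equivalently, each \emph{row} $\vec{y}_i$ of $Y$ evolves by $\D{\vec y_i}=-\transpose{\matpolyn{G}{x}}\vec y_i$, which is still linear in the ghosts and hence a legal \irref{dG}). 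Then the product rule gives
\[
\lied[]{\genDE{x},\D{Y}=-YG}{(Y\vecpolyn{p}{x})}
= \D{Y}\,\vecpolyn{p}{x} + Y\,\lied[]{\genDE{x}}{\vecpolyn{p}{x}}
= -YG\vecpolyn{p}{x} + YG\vecpolyn{p}{x} = 0,
\]
so the cancellation is \emph{exact}, there is no residual $(YG-GY)\vecpolyn{p}{x}$ term, and the right premise of \irref{dI} closes by \irref{qear} (after cutting in the \irref{vdbx} premise). This is precisely the choice the paper makes.

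Your proposed salvage for the $\D{Y}=-GY$ version does not actually close the proof. The right premise of \irref{dIeq} has the form $\lsequent{\ivr \land \det Y > 0}{(YG-GY)\vecpolyn{p}{x}=0}$ — it is a pure real-arithmetic sequent and does \emph{not} get to assume the candidate invariant $Y\vecpolyn{p}{x}=0$. That sequent is simply invalid: for generic invertible $Y$, non-commuting $G,Y$ and nonzero $\vecpolyn{p}{x}$, the term $(YG-GY)\vecpolyn{p}{x}$ does not vanish. The adjugate rewriting $\vecpolyn{p}{x}=\tfrac{1}{\det Y}\adjugate(Y)\,(Y\vecpolyn{p}{x})$ would only help if you could substitute $Y\vecpolyn{p}{x}=0$ on the right, but that is the very thing under proof; and "multiplying through by $\det Y$'' changes the polynomial identity you are proving, not the Lie-derivative obligation \irref{dI} imposes. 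You also cannot first \irref{dC} in $Y\vecpolyn{p}{x}=0$, since proving that cut is the step that is stuck. In short: the noncommutativity obstacle you identify is real, but it is self-inflicted by the sign choice; once $\D{Y}=-YG$ is used, the obstacle disappears and the rest of your outline (the $\det Y>0$ differential cut by the Darboux equality $\lied[]{}{\det Y}=-\trace(G)\det Y$, then \irref{dW} and arithmetic to recover $\vecpolyn{p}{x}=0$) goes through as you describe and matches the paper's proof.
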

\begin{proof}
Let $\matpolyn{G}{x}$ be an $n \times n$ matrix of polynomials, and $\vecpolyn{p}{x}$ be an $n$-dimensional vector of polynomials satisfying the premise of \irref{vdbx}.

First, we develop a proof that we will have occasion to use repeatedly. This proof adds an $n$-dimensional vectorial ghost \m{\D{\vec{y}}=-\transpose{\matpolyn{G}{x}}\vec{y}} such that the vanishing of the scalar product, \ie, $\vecpolyn{p}{x}\stimes\vec{y} = 0$, is invariant. In the derivation below, we suppress the initial choice of values for $\vec{y}$ till later.
\textcircled{1} denotes the use of the premise of \irref{vdbx}.
In the \irref{dC} step, we mark the remaining open premise with \textcircled{2}.
{\footnotesize\renewcommand*{\arraystretch}{1.3}%
\begin{sequentdeduction}[array]
\linfer[dG]
{\linfer[dC]
  {\textcircled{2} ! \linfer[dI]
      {\linfer[cut]
        {\textcircled{1}
        !\linfer[qear]%
          {\linfer[qear]
            {\lclose}
            {\lsequent{\ivr} {\matpolyn{G}{x}\vecpolyn{p}{x}\stimes\vec{y}-\matpolyn{G}{x}\vecpolyn{p}{x}\stimes\vec{y}=0}}
          }%
          {\lsequent{\ivr} {\matpolyn{G}{x}\vecpolyn{p}{x}\stimes\vec{y}-\vecpolyn{p}{x}\stimes{\transpose{\matpolyn{G}{x}}\vec{y}}=0}}
        }%
        {\lsequent{\ivr} {\lied[]{\genDE{x}}{\vecpolyn{p}{x}}\stimes\vec{y}-\vecpolyn{p}{x}\stimes{\transpose{\matpolyn{G}{x}}\vec{y}}=0}}
      }%
    {\lsequent{\vecpolyn{p}{x}\stimes\vec{y}=0} {\dbox{\pevolvein{\D{x}=\genDE{x}\syssep\D{\vec{y}}=-\transpose{\matpolyn{G}{x}}\vec{y}}{\ivr}}{\vecpolyn{p}{x}\stimes\vec{y}=0}}}
  }
  {\lsequent{\vecpolyn{p}{x}=0} {\lexists{\vec{y}}{\dbox{\pevolvein{\D{x}=\genDE{x}\syssep\D{\vec{y}}=-\transpose{\matpolyn{G}{x}}\vec{y}}{\ivr}}{\vecpolyn{p}{x}=0}}}}
}
{\lsequent{\vecpolyn{p}{x}=0} {\dbox{\pevolvein{\D{x}=\genDE{x}}{\ivr}}{\vecpolyn{p}{x}=0}}}
\end{sequentdeduction}
}%
The open premise \textcircled{2} now includes $\vecpolyn{p}{x}\stimes\vec{y}=0$ in the evolution domain:
\[
\textcircled{2}\quad
\lsequent{\vecpolyn{p}{x}=0} {\dbox{\pevolvein{\D{x}=\genDE{x}\syssep\D{\vec{y}}=-\transpose{\matpolyn{G}{x}}\vec{y}}{\ivr \land \vecpolyn{p}{x}\stimes\vec{y}=0}}{\vecpolyn{p}{x}=0}} \]

So far, the proof is similar to the first ghost step for \irref{dbxineq}. Unfortunately, for $n>1$, the postcondition \(\vecpolyn{p}{x}=0\) does \emph{not} follow from the evolution domain constraint \(\vecpolyn{p}{x}\stimes\vec{y}=0\) even when $\vec{y}\neq0$, because \(\vecpolyn{p}{x}\stimes\vec{y}=0\) merely implies that $\vecpolyn{p}{x}$ and $\vec{y}$ are orthogonal, not that $\vecpolyn{p}{x}$ is 0.

The idea is to repeat the above proof sufficiently often to obtain an entire matrix $Y$ of independent differential ghost variables such that both \(Y \vecpolyn{p}{x}=0\) and \(\determinant(Y)\neq0\) can be proved invariant.\footnote{For a square matrix of polynomials $Y$, $\determinant(Y)$ is its determinant, $\trace(Y)$ its trace, and, of course, $\transpose{Y}$ is its transpose.} The latter implies that $Y$ is invertible, so that \(Y \vecpolyn{p}{x}=0\) implies \(\vecpolyn{p}{x}=0\).
The matrix $Y$ is obtained by repeating the derivation above on premise \textcircled{2}, using \irref{dG} to add $n$ copies of the ghost vectors, $\vec{y}_1, \dots, \vec{y}_n$, each satisfying the ODE system $\vec{y}_i' = -\transpose{\matpolyn{G}{x}}\vec{y}_i$. By the derivation above, each $\vec{y}_i$ satisfies the provable invariant $\vec{y}_i \cdot \vecpolyn{p}{x} = 0$, or more concisely:
{\footnotesize%
\[
\overbrace{
\left(\begin{array}{cccc}
y_{11} & y_{12} & \dots & y_{1n} \\
y_{21} & y_{22} & \dots & y_{2n} \\
\vdots & \vdots & \ddots & \vdots \\
y_{n1} & y_{n2} & \dots & y_{nn} \\
\end{array}\right)}^{Y}
\itimes
\overbrace{\begin{avector}\polyn{p_1}{x} \\ \polyn{p_2}{x} \\ \vdots \\ \polyn{p_n}{x} \end{avector}}^{\vecpolyn{p}{x}} = 0
\]}%

Streamlining the proof, we first perform the \irref{dG} steps that add the $n$ ghost vectors $\vec{y}_i$, before combining \irref{band+dI} to prove:
\[
{\lsequent{\vecpolyn{p}{x}{=}0} {\dbox{\pevolvein{\D{x}=\genDE{x}\syssep\D{\vec{y}_1}=-\transpose{\matpolyn{G}{x}}\vec{y}_1\syssep \dots \syssep\D{\vec{y}_n}=-\transpose{\matpolyn{G}{x}}\vec{y}_n}{\ivr}}{\landfold_{i=1}^n \vec{y}_i\stimes\vecpolyn{p}{x}{=}0}}}
\]
which we summarize using the above matrix notation as:
\[
\textcircled{3}\quad
{\lsequent{\vecpolyn{p}{x}=0} {\dbox{\pevolvein{\D{x}=\genDE{x}\syssep Y' = - Y\matpolyn{G}{x}}{\ivr}}{Y\vecpolyn{p}{x}=0}}}
\]
because when $Y'$ is the component-wise derivative of $Y$, all the differential ghost equations are summarized as $Y' = - Y\matpolyn{G}{x}$.\footnote{The entries on both sides of the differential equations satisfy $Y'_{ij} = (\vec{y}_{ij})' = -(\transpose{\matpolyn{G}{x}}\vec{y}_i)_j = -\sum_{k=1}^{n} \transpose{\matpolyn{G}{x}}_{jk} y_{ik} = -\sum_{k=1}^{n} \matpolyn{G}{x}_{kj} y_{ik} = -\sum_{k=1}^{n} y_{ik} \matpolyn{G}{x}_{kj} = -(Y \matpolyn{G}{x})_{ij}$.}
Now that we have the invariant $Y\vecpolyn{p}{x}=0$ from \textcircled{3}, it remains to prove the invariance of \m{\determinant(Y)>0} to complete the proof.

Since $Y$ only contains $y_{ij}$ variables, $\determinant(Y)$ is a polynomial term in the variables $y_{ij}$.
These $y_{ij}$ are ghost variables that we have introduced by \irref{dG}, and so we are free to pick their initial values. For convenience, we shall pick initial values forming the identity matrix $Y=\idmatrix$, so that $\determinant(Y) = \determinant(\idmatrix) = 1 > 0$ is true initially.

In order to show that $\determinant(Y)>0$ is an invariant, we use rule \irref{dbxineq} with the critical polynomial identity \m{\lied[]{Y'=-YG}{\determinant(Y)} = - \trace{(\matpolyn{G}{x})}\determinant(Y)} that follows from Liouville's formula~\cite[\S15.III]{Walter1998}, where the Lie derivatives are taken with respect to the extended system of equations $\D{x}=\genDE{x}\syssep Y' = -YG$. For completeness, we give an arithmetic proof of Liouville's formula in \rref{app:alginvariants}.
Thus, $\determinant(Y)$ is a Darboux polynomial over the variables $y_{ij}$, with polynomial cofactor \(-\trace(\matpolyn{G}{x})\):
\[
\textcircled{4}\qquad\qquad
\linfer[dbxineq]
{\lsequent{\ivr} {\lied[]{Y'=-YG}{\determinant(Y)} = -\trace(\matpolyn{G}{x}) \determinant(Y)}}
{\lsequent{\determinant(Y){>}0} {\dbox{\pevolvein{\D{x}=\genDE{x}\syssep Y' = - Y\matpolyn{G}{x}}{\ivr}}{\determinant(Y){>}0}}}
\]

Combining \textcircled{3} and \textcircled{4} completes the derivation for the invariance of $\vecpolyn{p}{x} = 0$. We start with the \irref{dG} step and abbreviate the ghost matrix.
{\footnotesize
\renewcommand*{\arraystretch}{1.3}
\begin{sequentdeduction}[array]
\linfer[dG]{
\linfer[]{
  \lsequent{\vecpolyn{p}{x}=0} {\lexists{Y}\dbox{\pevolvein{\D{x}=\genDE{x}\syssep Y' = - Y\matpolyn{G}{x}}{\ivr}}{\vecpolyn{p}{x}=0}}
}
{\lsequent{\vecpolyn{p}{x}=0} {\lexists{\vec{y}_1,\dots,\vec{y}_n}\dbox{\pevolvein{\D{x}=\genDE{x}\syssep\D{\vec{y}_1}=-\transpose{\matpolyn{G}{x}}\vec{y}_1\syssep \dots \syssep\D{\vec{y}_n}=-\transpose{\matpolyn{G}{x}}\vec{y}_n}{\ivr}}{\vecpolyn{p}{x}=0}}}
}
  {\lsequent{\vecpolyn{p}{x}=0} {\dbox{\pevolvein{\D{x}=\genDE{x}}{\ivr}}{\vecpolyn{p}{x}=0}}}
\end{sequentdeduction}
}%
Now, we carry out the rest of the proof as outlined earlier.
{\footnotesize%
\begin{sequentdeduction}[array]
\linfer[existsr]{
\linfer[cut]{
\linfer[dC]{
\linfer[dC]{
\linfer[dW]{
\linfer[qear]{
  \lclose
}
  {\lsequent{\ivr \land Y \vecpolyn{p}{x}{=}0 \land \determinant(Y){>}0}{\vecpolyn{p}{x}{=}0}}
}
  {\textcircled{4}\hfill
  \lsequent{\vecpolyn{p}{x}{=}0} {\dbox{\pevolvein{\D{x}=\genDE{x}\syssep Y' = - Y\matpolyn{G}{x}}{\ivr \land Y \vecpolyn{p}{x}{=}0 \land \determinant(Y){>}0}}{\vecpolyn{p}{x}{=}0}}}
}
  {\textcircled{3}\hfill
  \lsequent{\vecpolyn{p}{x}{=}0, \determinant(Y){>}0} {\dbox{\pevolvein{\D{x}=\genDE{x}\syssep Y' = - Y\matpolyn{G}{x}}{\ivr \land Y \vecpolyn{p}{x} = 0}}{\vecpolyn{p}{x}{=}0}}}
}
  {\lsequent{\vecpolyn{p}{x}{=}0, \determinant(Y){>}0} {\dbox{\pevolvein{\D{x}=\genDE{x}\syssep Y' = - Y\matpolyn{G}{x}}{\ivr}}{\vecpolyn{p}{x}{=}0}}}
}
  {\lsequent{\vecpolyn{p}{x}{=}0, Y=\idmatrix} {\dbox{\pevolvein{\D{x}=\genDE{x}\syssep Y' = - Y\matpolyn{G}{x}}{\ivr}}{\vecpolyn{p}{x}{=}0}}}
}
  {\lsequent{\vecpolyn{p}{x}{=}0} {\lexists{Y}\dbox{\pevolvein{\D{x}=\genDE{x}\syssep Y' = - Y\matpolyn{G}{x}}{\ivr}}{\vecpolyn{p}{x}{=}0}}}
\end{sequentdeduction}
}%
The order of the differential cuts \textcircled{3} and \textcircled{4} is irrelevant.
\end{proof}

Since $\det{(Y)} \neq 0$ is invariant, the $n \times n$ ghost matrix $Y$ in this proof corresponds to a basis for $\reals^n$ that \emph{continuously evolves} along the differential equations. To see what $Y$ does geometrically, let $\vec{p}_0$ be the initial values of $\vec{p}$, and $Y=\idmatrix$ initially. With our choice of $Y$, a variation of step \textcircled{3} in the proof shows that $Y\vecpolyn{p}{x}{=}\vec{p}_0$ is invariant. Thus, the evolution of $Y$ \emph{balances out} the evolution of $\vec{p}$, so that $\vec{p}$ remains constant with respect to the continuously evolving change of basis $Y^{-1}$. This generalizes the intuition illustrated in~\rref{fig:darboux} to the $n$-dimensional case. Crucially, differential ghosts let us soundly express this time-varying change of basis purely axiomatically.

\subsection{Differential Radical Invariants}
\label{subsec:diffradicalinvariants}

We now return to polynomials $p$ satisfying property \rref{eq:differential-rank}, and show how to prove $p=0$ invariant using an instance of \irref{vdbx}.

\begin{theorem}[Differential radical invariants are vectorial Darboux] \label{thm:DRI}
The differential radical invariant proof rule derives from \irref{vdbx} (which in turn derives from vectorial \irref{dG}).
\[
\dinferenceRule[dRI|dRI]{differential radical invariants}
{\linferenceRule
  {\lsequent{\Gamma,\ivr} {\landfold_{i=0}^{N-1}  \lied[i]{\genDE{x}}{p} = 0} & \lsequent{\ivr} {\lied[N]{\genDE{x}}{p} = \sum_{i=0}^{N-1} g_i \lied[i]{\genDE{x}}{p}}}
  {\lsequent{\Gamma} {\dbox{\pevolvein{\D{x}=\genDE{x}}{\ivr}}{\polyn{p}{x}=0}}}
}{}
\]
\end{theorem}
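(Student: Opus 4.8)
The plan is to recognize that the differential radical property \rref{eq:differential-rank} is, verbatim, a vectorial Darboux equality for the vector of iterated Lie derivatives of $p$, so that \irref{dRI} falls out as a special case of \irref{vdbx}. Concretely, set
\[
\vec{p} \mdefeq (p,\; \lied[]{\genDE{x}}{p},\; \lied[2]{\genDE{x}}{p},\; \dots,\; \lied[N-1]{\genDE{x}}{p}),
\]
an $N$-dimensional vector of polynomials. Component-wise Lie differentiation sends each entry $\lied[i]{\genDE{x}}{p}$ to $\lied[i+1]{\genDE{x}}{p}$; for $i \leq N-2$ this is literally the next entry of $\vec{p}$, so those identities are purely definitional, while for $i = N-1$ it is $\lied[N]{\genDE{x}}{p}$, which by the second premise of \irref{dRI} equals the linear combination $\sum_{i=0}^{N-1} g_i \lied[i]{\genDE{x}}{p}$ of the entries of $\vec{p}$. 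Hence $\lied[]{\genDE{x}}{\vec{p}} = \matpolyn{G}{x}\itimes\vec{p}$, where $\matpolyn{G}{x}$ is the $N\times N$ companion matrix
\[
\matpolyn{G}{x} = \left(\begin{array}{ccccc}
0 & 1 & 0 & \cdots & 0\\
\vdots & \ddots & \ddots & \ddots & \vdots\\
0 & \cdots & 0 & 1 & 0\\
0 & \cdots & 0 & 0 & 1\\
g_0 & g_1 & \cdots & g_{N-2} & g_{N-1}
\end{array}\right),
\]
and this matrix identity is provable by \irref{qear} from the second premise alone, the first $N-1$ rows holding unconditionally.

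With this identity at hand, I would apply \irref{vdbx} to $\vec{p}$ and $\matpolyn{G}{x}$, discharging its single premise $\lsequent{\ivr}{\lied[]{\genDE{x}}{\vec{p}}=\matpolyn{G}{x}\itimes\vec{p}}$ by the observation above. This produces a derivation of $\lsequent{\vec{p}=0}{\dbox{\pevolvein{\D{x}=\genDE{x}}{\ivr}}{\vec{p}=0}}$. Because vectorial equality is read component-wise, the formula $\vec{p}=0$ is exactly $\landfold_{i=0}^{N-1}\lied[i]{\genDE{x}}{p}=0$ --- the succedent of \irref{dRI}'s first premise --- and it trivially entails $p=0$. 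So, after weakening $\Gamma,\ivr$ into the antecedent, applying the monotonicity rule \irref{Mb} to pass from postcondition $\vec{p}=0$ to $p=0$, and then using a \irref{cut} with the first premise of \irref{dRI} to discharge the assumption $\landfold_{i=0}^{N-1}\lied[i]{\genDE{x}}{p}=0$, I obtain $\lsequent{\Gamma,\ivr}{\dbox{\pevolvein{\D{x}=\genDE{x}}{\ivr}}{p=0}}$.

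It remains only to remove $\ivr$ from the antecedent, since the conclusion of \irref{dRI} carries just $\Gamma$. I would do this by a case split on $\ivr$ in the initial state: if $\ivr$ holds initially, the previous paragraph finishes the job; if $\lnot\ivr$ holds initially, then $\pevolvein{\D{x}=\genDE{x}}{\ivr}$ has no runs from that state at all --- a solution must satisfy $\ivr$ on the whole of its closed time interval $[0,T]$, in particular at time $0$ --- so the box formula $\dbox{\pevolvein{\D{x}=\genDE{x}}{\ivr}}{p=0}$ is vacuously true there. Combining the two cases with a \irref{cut} on $\ivr$ completes the derivation.

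I do not expect a real obstacle here: given \irref{vdbx}, Theorem~\rref{thm:DRI} is essentially a repackaging, and all the substantial work --- the vectorial differential-ghost construction of a continuously evolving basis $Y$ with $\determinant(Y)\neq0$ invariant, proved via Liouville's formula --- already lives in the proof of \irref{vdbx}. The two points needing a little care are (i) pinning down the indexing so that $\lied[]{\genDE{x}}{\vec{p}}=\matpolyn{G}{x}\itimes\vec{p}$ collapses to $N-1$ trivial rows plus the second premise for the last row, and (ii) the mismatch between the antecedent $\Gamma$ of the conclusion and the antecedent $\Gamma,\ivr$ under which the initial condition $\vec{p}=0$ is actually available, which is resolved by the vacuity observation above.
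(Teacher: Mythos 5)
Your proposal matches the paper's proof essentially step for step: the companion-matrix $\matpolyn{G}{x}$, the vector $\vec{p}$ of iterated Lie derivatives, discharging the single premise of \irref{vdbx} from the right premise of \irref{dRI} by \irref{qear}, weakening the postcondition $\vec{p}=0$ to $p=0$ via \irref{Mb}, and bringing in the left premise to establish $\vec{p}=0$ initially.

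The one place you wave your hands is the final step, removing $\ivr$ from the antecedent. You argue \emph{semantically} that when $\lnot\ivr$ holds initially the box modality is vacuously true. That is correct, but since \irref{dRI} is claimed as a \emph{derived rule}, the vacuity must itself be derived inside the calculus, not appealed to semantically. The paper does this through the equivalence form of axiom \irref{DIeq}: combining \irref{DIeq} with \irref{testb} gives a propositional derivation of
\[
\lnot\ivr \limply \dbox{\pevolvein{\D{x}=\genDE{x}}{\ivr}}{p=0},
\]
and hence of the formula \((\ivr \limply \dbox{\pevolvein{\D{x}=\genDE{x}}{\ivr}}{p=0}) \limply \dbox{\pevolvein{\D{x}=\genDE{x}}{\ivr}}{p=0}\), which is exactly the "assume $\ivr$ initially" step you need. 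If you splice in this \irref{DIeq+testb} derivation (or an equivalent axiomatic argument) where you presently cite the semantic vacuity fact, your proof is complete and agrees with the paper's.
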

\begin{proofsketch}[app:alginvariants]
Rule \irref{dRI} derives from rule \irref{vdbx} with:
{\footnotesize%
\[\matpolyn{G}{x}= \left(\begin{array}{ccccc}
0      & 1      & 0      & \dots & 0      \\
0      & 0      & \ddots & \ddots & \vdots \\
\vdots & \vdots & \ddots & \ddots & 0      \\
0      & 0      & \dots & 0      & 1\\
g_0    & g_1    & \dots & g_{N-2}& g_{N-1} \end{array}\right),
\quad
\vecpolyn{p}{x} = \left(\begin{array}{l}p\\ \lied[1]{\genDE{x}}{p}\\ \vdots \\\lied[N-2]{\genDE{x}}{p}\\ \lied[N-1]{\genDE{x}}{p}\end{array}\right)
\]}%
The matrix $\matpolyn{G}{x}$ has $1$ on its superdiagonal, and the $g_i$ cofactors in the last row.
The left premise of \irref{dRI} is used to show $\vecpolyn{p}{x} = 0$ initially, while the right premise is used to show the premise of \irref{vdbx}.
\end{proofsketch}

\subsection{Completeness for Algebraic Invariants}
\label{subsec:completenessalg}

Algebraic formulas are formed from finite conjunctions and disjunctions of polynomial equations, but, over $\reals$, can be normalized to a single equation $p=0$ using the real arithmetic equivalences:
\begin{align*}
p=0 \land q =0 \lbisubjunct p^2 + q^2 = 0,\quad p=0 \lor q =0 \lbisubjunct pq = 0
\end{align*}

The key insight behind completeness of \irref{dRI} is that higher Lie derivatives stabilize.
Since the polynomials $\polynomials{\reals}{x}$ form a Noetherian ring, for every polynomial $p$ and polynomial ODE \m{\D{x}=\genDE{x}},
there is a smallest natural number\footnote{%
The only polynomial satisfying \rref{eq:differential-rank} for $N=0$ is the 0 polynomial, which gives correct but trivial  invariants $0=0$ for any system (and 0 can be considered to be of rank 1).}
$N{\geq}1$ called \emph{rank}~\cite{novikov1999trajectories,DBLP:conf/tacas/GhorbalP14} such that $p$ satisfies the polynomial identity \rref{eq:differential-rank} for some cofactors $g_i$.
This $N$ is computable by successive ideal membership checks \cite{DBLP:conf/tacas/GhorbalP14}.

Thus, some suitable rank at which the right premise of \irref{dRI} proves exists for any polynomial $p$.\footnote{%
\rref{thm:DRI} shows $\ivr$ can be assumed when proving ideal membership of $\lied[N]{\genDE{x}}{p}$. A finite rank exists either way, but assuming $\ivr$ may reduce the number of higher Lie derivatives of $p$ that need to be considered.}
The succedent in the remaining left premise of \irref{dRI} entails that \emph{all} Lie derivatives evaluate to zero.

\begin{definition}[Differential radical formula]
The \emph{differential radical formula} $\sigliedzero{\genDE{x}}{p}$ of a polynomial $p$ with rank $N{\geq}1$ from \rref{eq:differential-rank} and Lie derivatives with respect to \m{\D{x}=\genDE{x}} is defined to be:
\begin{align*}
\sigliedzero{\genDE{x}}{p} ~\mdefequiv~ \landfold_{i=0}^{N-1}  \lied[i]{\genDE{x}}{p} = 0
\end{align*}
\end{definition}
The completeness of \irref{dRI} can be proved semantically~\cite{DBLP:conf/tacas/GhorbalP14}. However, using the extensions developed in \rref{sec:extaxioms}, we derive the following characterization for algebraic invariants axiomatically.

\begin{theorem}[Algebraic invariant completeness]
\label{thm:algcomplete}
The following is a derived axiom in \dL when $\ivr$ characterizes an open set:
\[\dinferenceRule[DRI|DRI]{differential radical invariant axiom}
{\linferenceRule[equiv]
  {\big(\ivr \limply \sigliedzero{\genDE{x}}{p}\big)}
  {\axkey{\dbox{\pevolvein{\D{x}=\genDE{x}}{\ivr}}{p=0}}}
}{}\]
\end{theorem}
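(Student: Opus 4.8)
The plan is to derive the stated biconditional as a formula by establishing each of its two implications as a separate sequent and joining them with \irref{equivr}. The implication $\big(\ivr \limply \sigliedzero{\genDE{x}}{p}\big) \limply \dbox{\pevolvein{\D{x}=\genDE{x}}{\ivr}}{p=0}$ is essentially rule \irref{dRI} (\rref{thm:DRI}) in disguise and needs no hypothesis on $\ivr$, while the converse $\dbox{\pevolvein{\D{x}=\genDE{x}}{\ivr}}{p=0} \limply \big(\ivr \limply \sigliedzero{\genDE{x}}{p}\big)$ is the real content — and it is precisely here that openness of $\ivr$ is needed, exploited through the existence and uniqueness (local progress) axiomatization of \rref{sec:extaxioms}.

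For the first implication I would instantiate \irref{dRI} with $\Gamma := \ivr \limply \sigliedzero{\genDE{x}}{p}$, with $N\geq1$ the rank of $p$ (which exists because $\polynomials{\reals}{x}$ is Noetherian), and with cofactors $g_i$ witnessing the polynomial identity \rref{eq:differential-rank} at that rank. The right premise $\lsequent{\ivr}{\lied[N]{\genDE{x}}{p}=\sum_{i=0}^{N-1} g_i \lied[i]{\genDE{x}}{p}}$ is a valid polynomial identity and closes by \irref{qear} (it does not even use $\ivr$); the left premise $\lsequent{\Gamma,\ivr}{\landfold_{i=0}^{N-1}\lied[i]{\genDE{x}}{p}=0}$ has succedent exactly $\sigliedzero{\genDE{x}}{p}$ and closes by modus ponens on $\Gamma$. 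The conclusion of the rule is then precisely the sequent $\lsequent{\ivr \limply \sigliedzero{\genDE{x}}{p}}{\dbox{\pevolvein{\D{x}=\genDE{x}}{\ivr}}{p=0}}$.

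For the converse I would argue by contraposition inside the derivation: from $\ivr$ together with $\lnot\sigliedzero{\genDE{x}}{p}$ I would derive $\ddiamond{\pevolvein{\D{x}=\genDE{x}}{\ivr}}{p\neq0}$, which is incompatible with the antecedent $\dbox{\pevolvein{\D{x}=\genDE{x}}{\ivr}}{p=0}$ by the modal duality $\dbox{\alpha}{p=0}\limply\lnot\ddiamond{\alpha}{p\neq0}$. Since $\lnot\sigliedzero{\genDE{x}}{p}$ asserts that not all of $p,\lied[]{\genDE{x}}{p},\dots,\lied[N-1]{\genDE{x}}{p}$ vanish, a finite case split picks the least index $j\leq N-1$ with $\lied[j]{\genDE{x}}{p}\neq0$; along the solution from the current state, $p$ then behaves like $\frac{t^j}{j!}\lied[j]{\genDE{x}}{p}$ for small $t>0$, so $p\neq0$ after arbitrarily short time, and because $\ivr$ is open and holds initially the solution stays in $\ivr$ over that short interval, witnessing the diamond formula. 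Axiomatically this is exactly a local-progress step: depending on $\sgn(\lied[j]{\genDE{x}}{p})$ one invokes the local-progress rules of \rref{sec:extaxioms} for the strict sign condition $p>0$ or $p<0$, whose hypotheses reduce to the sign of the first nonvanishing Lie derivative and to the openness of the domain. The degenerate branch, where every $\lied[i]{\genDE{x}}{p}$ with $i<N$ vanishes, makes $\sigliedzero{\genDE{x}}{p}$ true outright and needs no progress argument.

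The hard part will be this converse direction, and more specifically recasting the informal statement ``the first nonvanishing Lie derivative governs the short-time sign of $p$ along the solution'' as a sound axiomatic derivation — which is exactly the service provided by the existence and uniqueness axioms and the local-progress calculus of \rref{sec:extaxioms}. The remaining work is bookkeeping: arranging the finite case analysis over $j$ and over $\sgn(\lied[j]{\genDE{x}}{p})$, discharging the local-progress subgoals, and verifying that openness of $\ivr$ is invoked exactly at the point where the short-time solution must remain admissible in the evolution domain. That this hypothesis is essential is visible from the failure mode for non-open $\ivr$: with $\ivr\equiv(x=0)$ and $\D{x}=1$, the box $\dbox{\pevolvein{\D{x}=1}{x=0}}{x=0}$ is valid while $\lied[]{}{x}=1\neq0$, so the equivalence would be false.
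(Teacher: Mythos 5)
Your proposal is correct and matches the paper's own proof in all essentials: the ``$\lylpmi$'' direction by instantiating \irref{dRI} at the rank $N$ of $p$ is identical, and the ``$\limply$'' direction is the paper's contraposition argument, where $\lnot(\sigliedzero{\genDE{x}}{p})$ is rewritten to $\sigliedgt{\genDE{x}}{p}\lor\sigliedgt{\genDE{x}}{(-p)}$ (\rref{prop:rearrangement}), local progress into $p>0$ or $p<0$ is obtained from \irref{Lpgtfull}, local progress within open $\ivr$ from \irref{ContOpen}, and the two are combined via \irref{Uniq} (through \irref{decompand}). The sign/index case split you describe informally is exactly what the progress formula $\sigliedgt{\genDE{x}}{(\pm p)}$ and the derivation of \irref{Lpgtfull} encode syntactically, and your non-open counterexample ($\ivr\equiv x{=}0$, $\D{x}{=}1$) is valid.
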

\begin{proofsketch}[app:alginvariants]
The ``$\lylpmi$" direction follows by an application of \irref{dRI} (whose right premise closes by \rref{eq:differential-rank} for any $\ivr$). The ``$\limply$" direction relies on existence and uniqueness of solutions to differential equations, which are internalized as axioms in \rref{sec:extaxioms}.
\end{proofsketch}
For the proof of \rref{thm:algcomplete}, we emphasize that additional axioms are \emph{only required} for syntactically deriving the ``$\limply$" direction (completeness) of \irref{DRI}. Hence, the base \dL axiomatization with differential ghosts is complete for proving properties of the form $\dbox{\pevolvein{\D{x}=\genDE{x}}{\ivr}}{p=0}$ because \irref{dRI} reduces all such questions to $\ivr \limply \sigliedzero{\genDE{x}}{p}$, which is a formula of real arithmetic, and hence, decidable.
The same applies for our next result, which is a corollary of \rref{thm:algcomplete}, but applies beyond the continuous fragment of \dL.

\begin{corollary}[Decidability]
\label{cor:testfree}
For algebraic formulas $\rfvar$ and hybrid programs $\alpha$ whose tests and domain constraints are negations of algebraic formulas \seeapp{\rref{app:alginvariants}}, it is possible to compute a polynomial $q$ such that the equivalence $\dbox{\alpha}{\rfvar} \lbisubjunct q=0$ is derivable in \dL.
\end{corollary}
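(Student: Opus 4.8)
\begin{proofsketch}[app:alginvariants]
The plan is to prove, by structural induction on the hybrid program $\alpha$, the slightly stronger statement: for \emph{every} algebraic formula $\rfvar$ and every $\alpha$ whose tests and evolution domain constraints are negations of algebraic formulas, one can compute a polynomial $q$ with $\dbox{\alpha}{\rfvar}\lbisubjunct q=0$ derivable in \dL. Two normalizations are used throughout. First, every algebraic formula is provably equivalent, via the real arithmetic equivalences of \rref{subsec:completenessalg}, to a single equation $p=0$ --- which we normalize to wherever convenient --- so finite conjunctions and disjunctions of previously computed equivalences can always be recombined into one equation. Second, the negation of an algebraic formula is provably equivalent to some $p\neq0$ and, semantically, describes the complement of a real algebraic set, hence an \emph{open} set; this is exactly the side condition needed to invoke \rref{thm:algcomplete}.

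First I would dispatch the base cases. Assignments $x:=e$ use the \dL assignment axiom $\dbox{x:=e}{\rfvar}\lbisubjunct\rfvar(e)$, and substituting the polynomial $e$ for $x$ in an equation yields an equation (nondeterministic $x:=\ast$ gives $\lforall{x}{(p=0)}$, which \irref{qear} identifies with the algebraic conjunction of the coefficients of $p$ as a polynomial in $x$); tests $?\psi$ with $\psi$ a negation of an algebraic formula use the test axiom $\dbox{?\psi}{\rfvar}\lbisubjunct(\psi\limply\rfvar)$, and $\psi\limply\rfvar$ is then propositionally a disjunction of two algebraic formulas; and the ODE $\pevolvein{\D{x}=\genDE{x}}{\ivr}$ is the continuous crux, where $\ivr$, being a negation of an algebraic formula, characterizes an open set, so \rref{thm:algcomplete} gives $\dbox{\pevolvein{\D{x}=\genDE{x}}{\ivr}}{p=0}\lbisubjunct(\ivr\limply\sigliedzero{\genDE{x}}{p})$, whose right-hand side is again a disjunction of algebraic formulas since the rank $N$ and the finitely many Lie derivatives are computable. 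For choice and composition I would use the corresponding \dL axioms: $\dbox{\alpha\cup\beta}{\rfvar}\lbisubjunct\dbox{\alpha}{\rfvar}\land\dbox{\beta}{\rfvar}$ reduces by the induction hypothesis to $q_\alpha^2+q_\beta^2=0$, while $\dbox{\alpha;\beta}{\rfvar}\lbisubjunct\dbox{\alpha}{\dbox{\beta}{\rfvar}}$ is handled by first rewriting $\dbox{\beta}{\rfvar}$ to the equivalent algebraic formula $q_\beta=0$ under the box (box congruence, derivable from \irref{Mb}) and then applying the induction hypothesis to $\alpha$ with the algebraic postcondition $q_\beta=0$.

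The loop case $\alpha^*$ is the step I expect to be the main obstacle. The idea is to iterate: set $I_0\mdefequiv\rfvar$ and $I_{k+1}\mdefequiv\rfvar\land\dbox{\alpha}{I_k}$, so that $I_k$ is provably equivalent (using \irref{band} and unrolling with the iteration axiom) to $\landfold_{j=0}^{k}\dbox{\alpha^{j}}{\rfvar}$, and by the induction hypothesis (applied to $\alpha$ with the algebraic postconditions $I_k$, inductively on $k$) each $I_k$ is provably equivalent to a computable equation $q_k=0$. The sets defined by $q_0=0,q_1=0,\dots$ form a descending chain of real algebraic sets; by Noetherianity of $\polynomials{\reals}{x}$ --- the ascending chain condition on the corresponding vanishing ideals --- this chain stabilizes, and stabilization is effectively detected by testing with \irref{qear} whether $q_k=0\limply q_{k+1}=0$ is valid (the converse implication holds by construction, and box congruence propagates a semantic agreement of $I_k$ with $I_{k+1}$ to all later iterates, so the first detected agreement is genuine). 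At that $k$, $q_k=0$ and $\dbox{\alpha^*}{\rfvar}$ have the same models, so $q\mdefeq q_k$ works: the ``$\limply$'' direction of $\dbox{\alpha^*}{\rfvar}\lbisubjunct q_k=0$ is derivable by unrolling the loop $k$ times with the iteration axiom, and the ``$\lylpmi$'' direction by \dL's loop induction rule with $q_k=0$ as loop invariant, whose premises $q_k=0\limply\rfvar$ and $q_k=0\limply\dbox{\alpha}{(q_k=0)}$ are valid real-arithmetic facts (the latter after rewriting the box by the induction hypothesis) discharged by \irref{qear}. I expect the subtlety to be confined entirely here: a naive fixpoint iteration need not terminate, and it is Noetherianity of the polynomial ring --- mirroring its role in bounding the rank in \rref{subsec:completenessalg} --- that forces the descending chain of algebraic sets to stabilize after finitely many, effectively computable, steps; everything else is bookkeeping with \dL's propositional rules, its dynamic-logic axioms, and the decision procedure \irref{qear}, which is why each computed equivalence is genuinely \emph{derivable} and not merely valid.
\end{proofsketch}
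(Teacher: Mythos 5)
Your overall plan — structural induction on $\alpha$, with the ODE case dispatched by \rref{thm:algcomplete} (noting $\ivr$ describes an open set), assignment by \irref{assignb}, test by \irref{testb} with $q\mdefeq rp$, choice by $q_1^2+q_2^2$, and composition by first rewriting the inner box — matches the paper's proof exactly. The one place you take a genuinely different route is the loop case $\prepeat{\alpha}$, and there both approaches are correct but emphasize different algebra. The paper defines $q_0\mdefeq p$, $q_{i+1}$ as the polynomial with $q_{i+1}=0\lbisubjunct\dbox{\alpha}{q_i=0}$, and terminates at the smallest $k$ with the \emph{ideal membership} $q_k\in\ideal{q_0,\dots,q_{k-1}}$ (checked by Gr\"obner bases); the loop invariant it uses is the conjunction $\landfold_{i=0}^{k-1}q_i=0$, and the second loop premise is discharged by the polynomial cofactor identity $q_k=\sum_{i=0}^{k-1}g_iq_i$. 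You instead accumulate the conjunction into $I_k$ from the start (so your single polynomial $q_k$ already encodes $\landfold_{j\leq k}\dbox{\alpha^{j}}{\rfvar}$), terminate when the \emph{real algebraic set} $V(q_k)$ stops shrinking (checked by \irref{qear}/quantifier elimination), and use $q_k=0$ alone as the invariant. Both termination arguments hinge on Noetherianity of $\polynomials{\reals}{x}$, but on different chains: the paper's on $\ideal{q_0}\subset\ideal{q_0,q_1}\subset\dots$, yours on the vanishing ideals $I(V(q_0))\subseteq I(V(q_1))\subseteq\dots$. Your stopping criterion is never later than the paper's (ideal membership over $\reals$ is strictly stronger than set inclusion, so the paper may continue past the point where the varieties have already stabilized), at the cost of invoking full real QE rather than a Gr\"obner-basis membership test at each step. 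Your propagation argument via box congruence — that the first semantic agreement $q_k=0\lbisubjunct q_{k+1}=0$ persists to all later $j$ — is exactly the point needed to make the QE-based test sound, and you state it correctly.
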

\begin{proofsketch}[app:alginvariants]
By structural induction on $\alpha$ analogous to \cite[Thm.\ 1]{DBLP:conf/lics/Platzer12b}, using \rref{thm:algcomplete} for the differential equations case.
\end{proofsketch}

\section{Extended Axiomatization}
\label{sec:extaxioms}

In this section, we present the axiomatic extension that is used for the rest of this paper. The extension requires that the system \m{\D{x}=\genDE{x}} locally evolves $x$, i.e., it has no fixpoint at which $\genDE{x}$ is the 0 vector. This can be ensured syntactically, e.g., by requiring that the system contains a clock variable $\D{x_1}=1$ that tracks the passage of time, which can always first be added using \irref{dG} if necessary.

\subsection{Existence, Uniqueness, and Continuity}
\label{subsec:existenceuniqcont}

The differential equations considered in this paper have polynomial right-hand sides. Hence, the Picard-Lindel\"{o}f theorem~\cite[\S10.VI]{Walter1998} guarantees that for any initial state $\iget[state]{\I} \in \reals^n $, a \emph{unique} solution of the system $\pevolve{\D{x}=\genDE{x}}$, i.e., $\solvar : [0,T] \to \reals^n$ with $\solvar(0) = \iget[state]{\I}$, \emph{exists} for some duration $T > 0$. The solution $\solvar$ can be extended (uniquely) to its maximal open interval of existence~\cite[\S10.IX]{Walter1998} and $\solvar(\zeta)$ is differentiable, and hence continuous with respect to $\zeta$.

\begin{lemma}[Continuous existence, uniqueness, and differential adjoints]
\label{lem:uniqcont}
The following axioms are sound. In \irref{Cont} and \irref{Dadjoint}, $y$ are fresh variables (not in \m{\pevolvein{\D{x}=\genDE{x}}{\ivr(x)}} or $p$).

\begin{calculus}
\cinferenceRule[Uniq|Uniq]{uniqueness}
{\linferenceRule[impll]
  {\big(\ddiamond{\pevolvein{\D{x}=\genDE{x}}{\ivr_1}}{\rfvar_1}\big) \land
  \big(\ddiamond{\pevolvein{\D{x}=\genDE{x}}{\ivr_2}}{\rfvar_2}\big)}
  {\axkey{\ddiamond{\pevolvein{\D{x}=\genDE{x}}{\ivr_1 \land \ivr_2}}{(\rfvar_1 \lor \rfvar_2)}}}
}{}
\cinferenceRule[Cont|Cont]{continuous existence}
{
 \linferenceRule[impl]
  {x = y}
  {\big(p>0 \limply \axkey{\ddiamond{\pevolvein{\D{x}=\genDE{x}}{p > 0}}{x \not= y}}\big)}
}{}
\cinferenceRule[Dadjoint|Dadj]{differential adjoints}
{\linferenceRule[equiv]
  {\ddiamond{\pevolvein{\D{y}=-\genDE{y}}{\ivr(y)}}{\,y=x}}
  {\axkey{\ddiamond{\pevolvein{\D{x}=\genDE{x}}{\ivr(x)}}{\,x=y}}}
}{}
\end{calculus}%
\end{lemma}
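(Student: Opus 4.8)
The plan is to establish soundness of each of the three axioms by a direct argument from the ODE semantics of \rref{subsec:backgroundsemantics}, using only the Picard--Lindel\"of existence-and-uniqueness theorem and the continuity of solutions recalled just above. Fix a state $\omega$ and write $a$, $b$ for the values of $x$, $y$ in $\omega$. Throughout, freshness of $y$ is used in one uniform way: on whichever side of an (in)equivalence $y$ does not occur in the ODE, $y$ carries no differential equation, so its value stays frozen at $b$ along the solution; this is what makes a postcondition such as $x = y$ a sensible ``the solution reaches the target $b$'' condition.

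For \irref{Cont}, assume $\omega \models x = y$ (so $a = b$) and $\omega \models p > 0$. Let $\varphi : [0,T] \to \reals^n$ with $T > 0$ be the solution of $\D{x}=\genDE{x}$ with $\varphi(0) = a$ furnished by Picard--Lindel\"of. Since $p$ is a polynomial and $\varphi$ is continuous, $\zeta \mapsto p(\varphi(\zeta))$ is continuous and positive at $0$, hence positive on some $[0,\varepsilon]$ with $0 < \varepsilon \le T$. By the standing hypothesis of \rref{sec:extaxioms} the right-hand side $\genDE{x}$ has no zero, so $\varphi'(0) = \genDE{a} \neq 0$; therefore some coordinate of $\varphi$ leaves its initial value for all sufficiently small $\zeta > 0$, which yields $\zeta_0 \in (0,\varepsilon]$ with $\varphi(\zeta_0) \neq a = b$. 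Then $\varphi$ restricted to $[0,\zeta_0]$ is a solution of $\pevolvein{\D{x}=\genDE{x}}{p > 0}$ whose endpoint satisfies $x \neq y$, which is exactly the witness the diamond modality requires.

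For \irref{Uniq}, suppose the antecedent holds, witnessed by solutions $\varphi_i : [0,T_i] \to \reals^n$ of $\pevolvein{\D{x}=\genDE{x}}{\ivr_i}$ with $\varphi_i(0) = a$ and $\varphi_i(T_i) \models \rfvar_i$ for $i = 1,2$; without loss of generality $T_1 \le T_2$. By uniqueness of solutions of $\D{x}=\genDE{x}$ from $a$, $\varphi_1$ and $\varphi_2$ agree on $[0,T_1]$. Hence along $\varphi_1$ restricted to $[0,T_1]$ the constraint $\ivr_1$ holds throughout (from $\varphi_1$) and $\ivr_2$ holds throughout (from $\varphi_2$ via the agreement), while the endpoint $\varphi_1(T_1)$ satisfies $\rfvar_1$, hence also $\rfvar_1 \lor \rfvar_2$; so this restricted solution witnesses $\ddiamond{\pevolvein{\D{x}=\genDE{x}}{\ivr_1 \land \ivr_2}}{(\rfvar_1 \lor \rfvar_2)}$. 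The case $T_2 < T_1$ is symmetric, witnessed by $\varphi_2$ together with $\rfvar_2$.

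For \irref{Dadjoint}, both directions follow from the observation that time reversal is a bijection between solutions: if $\varphi : [0,T] \to \reals^n$ solves $\D{x}=\genDE{x}$, then $\sigma \mapsto \varphi(T-\sigma)$ solves $\D{y}=-\genDE{y}$ on $[0,T]$, it preserves the set of curve parameters at which the first-order constraint $\ivr$ holds, and it swaps the endpoints. Consequently a solution of $\pevolvein{\D{x}=\genDE{x}}{\ivr(x)}$ from $a$ reaching $b$ exists iff a solution of $\pevolvein{\D{y}=-\genDE{y}}{\ivr(y)}$ from $b$ reaching $a$ exists; since $y$ (resp.\ $x$) is fresh on the left (resp.\ right), these are precisely the two diamond formulas. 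I expect the fiddliest part to be this bookkeeping in \irref{Dadjoint}: one must verify that the reversed curve is a genuine \dL solution on a genuine interval $[0,T]$, that the evolution-domain check transports correctly under reversal, and that freshness of $y$ is exactly what pins the frozen target so that reaching $x = y$ forward matches reaching $y = x$ backward. The only place where the no-fixpoint hypothesis of \rref{sec:extaxioms} is genuinely needed is \irref{Cont}; without it, at an equilibrium every solution is constant, $x = y$ persists, and the axiom would be unsound.
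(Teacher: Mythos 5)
Your proposal is correct and follows essentially the same route as the paper's proof: each axiom is established directly from the ODE semantics via Picard--Lindel\"of (existence and uniqueness), continuity of polynomial evaluation along solutions, and time reversal of solutions, with the standing no-fixpoint hypothesis used only for \irref{Cont} exactly as you note. The only difference is that the paper carries out in detail the bookkeeping you flag for \irref{Dadjoint}, explicitly constructing the reversed curve in the extended \dL state by renaming $x$ to $y$ and negating the differential variables $\D{y}$, then invoking the coincidence lemma; and it derives the converse direction of \irref{Dadjoint} by applying the forward direction with $x,y$ swapped (using $-(-\genDE{x})=\genDE{x}$) rather than arguing a two-sided bijection outright.
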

\begin{proofsketch}[app:extaxiomatization]
\irref{Uniq} internalizes uniqueness, \irref{Cont} internalizes continuity of the values of $p$ and existence of solutions, and \irref{Dadjoint} internalizes the group action of time on ODE solutions, which is another consequence of existence and uniqueness.
\end{proofsketch}

The \emph{uniqueness axiom} \irref{Uniq} can be intuitively read as follows. If we have two solutions $\solvar_1,\solvar_2$ respectively staying in evolution domains $\ivr_1,\ivr_2$ and whose endpoints satisfy $\rfvar_1,\rfvar_2$, then one of $\solvar_1$ or $\solvar_2$ is a prefix of the other, and therefore, the prefix stays in both evolution domains so $\ivr_1 \land \ivr_2$ and satisfies $\rfvar_1 \lor \rfvar_2$ at its endpoint.

\emph{Continuity axiom} \irref{Cont} expresses a notion of \emph{local progress} for differential equations. It says that from an initial state satisfying $x=y$, the system can locally evolve to another state satisfying $x \neq y$ while staying in the \emph{open set} of states characterized by $p>0$. This uses the assumption that the system locally evolves $x$ at all.

The \emph{differential adjoints} axiom \irref{Dadjoint} expresses that $x$ can flow forward to $y$ iff $y$ can flow backward to $x$ along an ODE. It is at the heart of the ``there and back again" axiom that equivalently expresses properties of differential equations with evolution domain constraints in terms of properties of forwards and backwards differential equations without evolution domain constraints~\cite{DBLP:conf/lics/Platzer12b}.

To make use of these axioms, it will be useful to derive rules and axioms that allow us to work directly in the diamond modality, rather than the box modality.

\begin{corollary}[Derived diamond modality rules and axioms]
\label{cor:diadiffeqax}
The following derived axiom and derived rule are provable in \dL:

\begin{calculus}
\dinferenceRule[dDR|DR${\didia{\cdot}}$]{}
{\linferenceRule[impll]
  {\dbox{\pevolvein{\D{x}=\genDE{x}}{\rrfvar}}{\ivr}}
  {\big(\ddiamond{\pevolvein{\D{x}=\genDE{x}}{\rrfvar}}{\rfvar} \limply \axkey{\ddiamond{\pevolvein{\D{x}=\genDE{x}}{\ivr}}{\rfvar}}\big)}
}{}

\dinferenceRule[gddR|dRW${\didia{\cdot}}$]{}
{\linferenceRule
  {
  \lsequent{\rrfvar}{\ivr} &
  \lsequent{\Gamma}{\ddiamond{\pevolvein{\D{x}=\genDE{x}}{\rrfvar}}{\rfvar}}
  }
  {\lsequent{\Gamma}{\ddiamond{\pevolvein{\D{x}=\genDE{x}}{\ivr}}{\rfvar}}}
}{}
\end{calculus}
\end{corollary}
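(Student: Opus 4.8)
The plan is to obtain both statements by dualizing the box-modality differential refinement rule of \dL through the equivalence $\ddiamond{\alpha}{\fvar}\lbisubjunct\lnot\dbox{\alpha}{\lnot\fvar}$. The ingredient I rely on is the derived \dL rule of differential refinement~\cite{DBLP:journals/jar/Platzer17}, which in implicational form reads $\dbox{\pevolvein{\D{x}=\genDE{x}}{\rrfvar}}{\ivr}\limply\big(\dbox{\pevolvein{\D{x}=\genDE{x}}{\ivr}}{\fvar}\limply\dbox{\pevolvein{\D{x}=\genDE{x}}{\rrfvar}}{\fvar}\big)$ for every postcondition $\fvar$: it refines the evolution domain $\rrfvar$ to $\ivr$ whenever every $\rrfvar$-solution stays in $\ivr$ throughout, which — by the prefix-closure of ODE solutions — makes every $\rrfvar$-solution an $\ivr$-solution, so a box property over all $\ivr$-solutions descends to all $\rrfvar$-solutions.

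For the axiom \irref{dDR}, I would instantiate this rule with $\fvar$ taken as $\lnot\rfvar$, apply contraposition to the \emph{inner} implication only (keeping the box antecedent $\dbox{\pevolvein{\D{x}=\genDE{x}}{\rrfvar}}{\ivr}$ in place), and rewrite $\lnot\dbox{\pevolvein{\D{x}=\genDE{x}}{\ivr}}{\lnot\rfvar}$ and $\lnot\dbox{\pevolvein{\D{x}=\genDE{x}}{\rrfvar}}{\lnot\rfvar}$ as $\ddiamond{\pevolvein{\D{x}=\genDE{x}}{\ivr}}{\rfvar}$ and $\ddiamond{\pevolvein{\D{x}=\genDE{x}}{\rrfvar}}{\rfvar}$, respectively; the outcome is exactly the implication stated by \irref{dDR}. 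For the rule \irref{gddR}, the first premise $\lsequent{\rrfvar}{\ivr}$ already has the shape of a \irref{dW} premise, so \irref{dW} derives $\lsequent{\Gamma}{\dbox{\pevolvein{\D{x}=\genDE{x}}{\rrfvar}}{\ivr}}$; combining this with the second premise $\lsequent{\Gamma}{\ddiamond{\pevolvein{\D{x}=\genDE{x}}{\rrfvar}}{\rfvar}}$ through \irref{dDR} and discharging the nested implications with \irref{cut} gives the conclusion $\lsequent{\Gamma}{\ddiamond{\pevolvein{\D{x}=\genDE{x}}{\ivr}}{\rfvar}}$.

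I expect the main obstacle, such as it is, to be organizational rather than mathematical. The one substantive prerequisite is the box differential refinement rule; if one prefers not to import it wholesale, it is itself derivable from \irref{dC} and \irref{dW} of \rref{thm:diffeqax}: use \irref{dC} with the hypothesis $\dbox{\pevolvein{\D{x}=\genDE{x}}{\rrfvar}}{\ivr}$ as the differential cut to reduce $\dbox{\pevolvein{\D{x}=\genDE{x}}{\rrfvar}}{\fvar}$ to $\dbox{\pevolvein{\D{x}=\genDE{x}}{\rrfvar\land\ivr}}{\fvar}$, and finish using the standard contravariant monotonicity of the box modality in its evolution domain (since $\rrfvar\land\ivr$ entails $\ivr$), with care taken to keep every step an exact instance of the rules as stated. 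The duality step also requires attention: contraposition must be taken of the inner implication alone, since contraposing the whole implication yields a different, invalid formula. No analytic input about solutions is needed beyond what \irref{dC}, \irref{dW}, and the box refinement rule already encapsulate, and the degenerate zero-duration case ($T=0$) needs no separate treatment because the entire argument stays at the level of box and diamond modalities.
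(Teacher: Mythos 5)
Your derivation of \irref{dDR} — starting from the box-level domain refinement principle and dualizing through $\ddiamond{\alpha}{\fvar}\lbisubjunct\lnot\dbox{\alpha}{\lnot\fvar}$ — and of \irref{gddR} — discharging the first premise of \irref{dDR} with \irref{dW} — is exactly the paper's proof; the ``box differential refinement rule'' you invoke is the \irref{DMP} axiom the paper proves sound in \rref{lem:dgdmp} and combines with \irref{DW} in the one-line derivation. So the main argument is correct and takes the same route.

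Your parenthetical claim that the box refinement rule ``is itself derivable from \irref{dC} and \irref{dW}'' does not go through, however. After the \irref{dC} step you are left needing $\dbox{\pevolvein{\D{x}=\genDE{x}}{\ivr}}{\fvar}\limply\dbox{\pevolvein{\D{x}=\genDE{x}}{\rrfvar\land\ivr}}{\fvar}$, and that ``contravariant monotonicity in the domain constraint'' is not a consequence of \irref{dC} or \irref{dW}: \irref{dC} only strengthens a domain after proving the added conjunct invariant (not from a first-order entailment), and \irref{dW} only weakens postconditions. It is precisely \irref{DMP} (instantiated with a hypothesis that is provable by \irref{G} from the valid implication $\rrfvar\land\ivr\limply\ivr$) that supplies this step — which is why the paper keeps \irref{DMP} as a separate soundness lemma from which \irref{dC}, \irref{dW}, and \irref{dDR} all derive, rather than deriving \irref{DMP} from them. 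Since you only offered this as an optional alternative, it does not affect the correctness of the main proof, but it would be circular if followed literally.
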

\iflongversion
\begin{proofsketch}[app:diaaxioms]
Axiom \irref{dDR} is the diamond version of the \dL refinement axiom that underlies \irref{dC}; if we never leave $\ivr$ when staying in $\rrfvar$ (first assumption), then any solution staying in $\rrfvar$ (second assumption) must also stay in $\ivr$ (conclusion).
The rule \irref{gddR} derives from \irref{dDR} using \irref{dW} on its first assumption.
\end{proofsketch}
\fi

\subsection{Real Induction}
\label{subsec:realind}

Our final axiom is based on the real induction principle~\cite{clark2012instructor}. It internalizes the topological properties of solutions. For space reasons, we only present the axiom for systems without evolution domain constraints, leaving the general version to
\iflongversion
\rref{app:extaxiomatization}.
\else
the report \rref{app:extaxiomatization}.
\fi

\begin{lemma}[Real induction]
\label{lem:realindODE}
The real induction axiom is sound, where $y$ is fresh in \m{\dbox{\pevolve{\D{x}=\genDE{x}}}{\rfvar}}.
\[
\cinferenceRule[RealInd|RI]{real induction axiom}
{\linferenceRule[equivl2]
{\lforall{y}{\dbox{\pevolvein{\D{x}=\genDE{x}}{\rfvar \lor x=y}}{
\big(x=y \limply \rfvar \land \ddiamond{\pevolvein{\D{x}=\genDE{x}}{\rfvar}}{x \neq y}\big)}}}
{\axkey{\dbox{\pevolve{\D{x}=\genDE{x}}}{\rfvar}}}
}{}
\]
\end{lemma}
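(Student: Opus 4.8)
The plan is to prove that the equivalence \irref{RealInd} is valid, \ie\ holds in every state, arguing semantically from existence and uniqueness of solutions rather than from the other new axioms of \rref{lem:uniqcont}. Fix a state $\omega$; since $y$ is fresh, neither the truth of $\dbox{\pevolve{\D{x}=\genDE{x}}}{\rfvar}$ nor that of $\rfvar$ itself depends on the value assigned to $y$, which is what will later let us re-instantiate the quantified $y$ to moving points along a solution. Let $\solvar:[0,T^{\star})\to\reals^n$ be the unique maximal solution of $\D{x}=\genDE{x}$ with $\solvar(0)=\omega$; it exists because $\genDE{x}$ is polynomial, hence locally Lipschitz (Picard--Lindel\"{o}f), and uniqueness is the workhorse that identifies every other ODE subformula occurring in the axiom with a restriction or forward shift of $\solvar$. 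I will also use the standing assumption of \rref{sec:extaxioms} that $\genDE{x}$ is nowhere $0$, so $\solvar'(\zeta)=\genDE{x}(\solvar(\zeta))\neq0$ and $\solvar$ is locally injective at every time.

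For the ``$\limply$'' direction, assume $\dbox{\pevolve{\D{x}=\genDE{x}}}{\rfvar}$ at $\omega$, equivalently $\solvar(\zeta)\models\rfvar$ for all $\zeta\in[0,T^{\star})$. Fix a value $p$ for $y$ and any solution $\solvar_0$ from $\omega$ staying in the region $\rfvar\lor x=p$; by uniqueness $\solvar_0$ is a restriction of $\solvar$, so whenever $\solvar_0$ reaches a state with $x=p$ at some time $\tau$ (so $\tau<T^{\star}$) we get $p=\solvar(\tau)\models\rfvar$, discharging the first conjunct. For the second conjunct, $\zeta\mapsto\solvar(\tau+\zeta)$ on a sufficiently short interval $[0,\varepsilon]$ with $\varepsilon<T^{\star}-\tau$ is a solution from $p$ that stays in $\rfvar$ and whose endpoint differs from $p$ (local injectivity at $\tau$, since $\solvar'(\tau)\neq0$), witnessing $\ddiamond{\pevolvein{\D{x}=\genDE{x}}{\rfvar}}{x\neq p}$ at $p$.

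For the ``$\lylpmi$'' direction, which is the heart of the argument, assume the right-hand side holds at $\omega$ and fix an arbitrary $T<T^{\star}$; I will show $\solvar(\zeta)\models\rfvar$ for all $\zeta\in[0,T]$ by the real induction principle~\cite{clark2012instructor} applied to the downward-closed \emph{prefix set} $S\mdefeq\{\zeta\in[0,T]\with\solvar(\tau)\models\rfvar\text{ for all }\tau\in[0,\zeta]\}$. For the limit (and base) clause: if $[0,\zeta)\subseteq S$, then $\solvar$ restricted to $[0,\zeta]$ stays in $\rfvar\lor x=\solvar(\zeta)$ and satisfies $x=\solvar(\zeta)$ at its right endpoint, so instantiating the hypothesis with $y\mdefeq\solvar(\zeta)$ yields $\solvar(\zeta)\models\rfvar$, hence $\zeta\in S$ (at $\zeta=0$ this gives $\omega\models\rfvar$). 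For the inductive clause: if $\zeta\in S$ and $\zeta<T$, instantiate the hypothesis again with $y\mdefeq\solvar(\zeta)$ along $\solvar$ restricted to $[0,\zeta]$; the $\ddiamond{\pevolvein{\D{x}=\genDE{x}}{\rfvar}}{x\neq\solvar(\zeta)}$ conjunct supplies a solution from $\solvar(\zeta)$ staying in $\rfvar$ whose endpoint escapes $\solvar(\zeta)$, which by uniqueness must be $\zeta'\mapsto\solvar(\zeta+\zeta')$ on some $[0,U]$ with $U>0$ (positive precisely because its endpoint escapes, and $\zeta+U<T^{\star}$ by maximality of $\solvar$); this forces $\solvar(\tau)\models\rfvar$ for all $\tau\in[\zeta,\zeta+U]$, so $[0,\min(\zeta+U,T)]\subseteq S$ with $\min(\zeta+U,T)>\zeta$. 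Real induction then gives $S=[0,T]$, and since $T<T^{\star}$ was arbitrary, $\solvar(\zeta)\models\rfvar$ for all $\zeta\in[0,T^{\star})$, \ie\ $\dbox{\pevolve{\D{x}=\genDE{x}}}{\rfvar}$ at $\omega$.

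I expect the inductive clause of the real induction to be the main obstacle, because it is exactly there that the two syntactic devices in the axiom earn their keep: the evolution-domain disjunct $x=y$ is what makes the hypothesis applicable to a prefix of $\solvar$ that touches $\solvar(\zeta)$ only at its endpoint and hence need not satisfy $\rfvar$ there \emph{before} we prove it does, while the diamond conjunct $x\neq y$, in concert with uniqueness of solutions, is the mechanism that actually transports membership in $\rfvar$ strictly past time $\zeta$. Getting the interplay of uniqueness, maximality of $\solvar$, the short-interval witnesses, and the real-induction clauses exactly right --- and justifying that assigning $y$ the moving point $\solvar(\zeta)$ is legitimate, which is where freshness of $y$ is used --- is the delicate part; the remaining ingredients are standard facts about polynomial ODEs.
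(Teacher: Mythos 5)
Your proof is correct and takes essentially the same approach as the paper's: semantic soundness via the real induction principle of~\cite{clark2012instructor} applied to the solution's time domain, with Picard--Lindel\"of existence/uniqueness identifying the diamond witness with a forward time-shift of the maximal solution, coincidence justifying the re-instantiation of the fresh $y$, and the no-fixpoint assumption supplying the escape $x\neq y$. The paper proves the strengthened version with evolution domain constraints (\rref{lem:realindODEin}), of which \rref{lem:realindODE} is the $\ivr\equiv\ltrue$ special case, where the implication form of the diamond conjunct~\textcircled{b} collapses to the bare diamond you use.
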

\begin{proofsketch}[app:extaxiomatization]
The \irref{RealInd} axiom follows from the real induction principle~\cite{clark2012instructor} and the Picard-Lindel\"{o}f theorem~\cite[\S10.VI]{Walter1998}.
\end{proofsketch}

\begin{figure}
\centering
\includegraphics[width=\columnwidth]{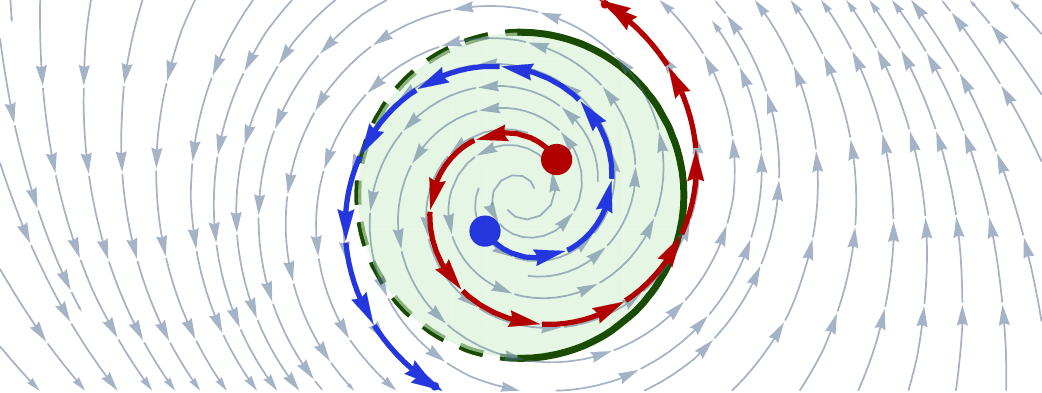}
\caption{The half-open disk $u^2+v^2 < \frac{1}{4} \lor u^2+v^2 = \frac{1}{4}\land u \geq 0$ is not invariant for $\alpha_e$ because the red and blue trajectories spiral out of it towards the unit circle at a closed or open boundary, respectively.}
\label{fig:realinduct}
\end{figure}

To see the topological significance of \irref{RealInd}, recall the running example and consider a set of points that is \emph{not invariant}. Figure~\ref{fig:realinduct} illustrates two trajectories that leave the candidate invariant disk $S$. These trajectories must stay in $S$ before leaving it through its boundary, and only in one of two ways: either at a point which is also in $S$ (red trajectory exiting right) or is not (the blue trajectory).

Real induction axiom \irref{RealInd} can be understood as $\lforall{y}{\dbox{\dots}{\big(x=y \limply \dots\big)}}$ quantifying over all final states ($x=y$) reachable by trajectories still within $\rfvar$ except possibly at the endpoint $x=y$.
The left conjunct under the modality expresses that $\rfvar$ is still true at such an endpoint, while the right conjunct expresses that the ODE still remains in $\rfvar$ locally. The left conjunct rules out trajectories like the blue one exiting left in \rref{fig:realinduct}, while the right conjunct rules out trajectories like the red trajectory exiting right.

The right conjunct suggests a way to use \irref{RealInd}: it reduces invariants to local progress properties under the box modality. This motivates the following syntactic modality abbreviations for \emph{progress within a domain $\ivr$} (with the initial point) or \emph{progress into $\ivr$} (without):
\[\begin{aligned}
\dprogressinsmall{\D{x}=\genDE{x}}{\ivr}{} &\mdefequiv \ddiamond{\pevolvein{\D{x}=\genDE{x}}{\ivr}}{\,x\neq y} \\
\dprogressin{\D{x}=\genDE{x}}{\ivr}{} &\mdefequiv \ddiamond{\pevolvein{\D{x}=\genDE{x}}{\ivr \lor x=y}}{\,x\neq y}
\end{aligned}\]
All remaining proofs in this paper only use these two modalities with an initial assumption $x=y$. In this case, where \(\ivaluation{\I}{x}=\ivaluation{\I}{y}\), the $\ddnext$ modality has the following semantics:
\begin{align*}
&\imodels{\I}{\dprogressin{\D{x}=\genDE{x}}{\ivr}{}}~\text{iff}~\text{there is a function}~\solvar:[0,T] \to \reals^n\\&\text{with}~T>0, \solvar(0)=\omega, \solvar~\text{is a solution of the system}~\D{x}=\genDE{x}, \text{and}\\&\solvar(\zeta) \in \imodel{\I}{\ivr}~\text{for all $\zeta$ in the half-open interval}~(0,T]
\end{align*}
For \(\dprogressinsmall{\D{x}=\genDE{x}}{\ivr}{}\) it is the closed interval~$[0,T]$ instead of~$(0,T]$.
Both $\ddnext$ and $\ddnextsmall$ resemble continuous-time versions of the next modality of temporal logic with the only difference being whether the initial state already needs to start in $\ivr$.
Both coincide if \(\imodels{\I}{\ivr}\).

The motivation for separating these modalities is topological: $\dprogressinsmall{\D{x}=\genDE{x}}{\ivr}$ is uninformative (trivially true) if the initial state \(\imodels{\I}{\ivr}\) and $\ivr$ describes an open set, because existence and continuity already imply local progress. Excluding the initial state as in $\dprogressin{\D{x}=\genDE{x}}{\ivr}{}$ makes this an insightful question, because it allows the possibility of starting on the topological boundary before entering the open set.

For brevity, we leave the $x=y$ assumption in the antecedents and axioms implicit in all subsequent derivations. For example, we shall elide the implicit $x=y$ assumption and write axiom \irref{Cont} as:
\[
\dinferenceRule[Contabbrev|Cont]{continuous existence}
{
 \linferenceRule[impl]
  {p>0}
  {\dprogressinsmall{\D{x}=\genDE{x}}{p > 0}}
}{}
\]

\begin{corollary}[Real induction rule]
\label{cor:realind}
This rule derives from \irref{RealInd+Dadjoint}.
\[\dinferenceRule[realind|rI]{}
{\linferenceRule
  {
   \lsequent{\rfvar}{\dprogressin{\D{x}=\genDE{x}}{\rfvar}} & \lsequent{\lnot{\rfvar}}{\dprogressin{\D{x}=-\genDE{x}}{\lnot{\rfvar}}}
  }
  {\lsequent{\rfvar}{\dbox{\pevolve{\D{x}=\genDE{x}}}{\rfvar}}}
}{}\]
\end{corollary}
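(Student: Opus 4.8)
The plan is to use the real induction axiom \irref{RealInd} to replace the box modality by a statement about every solution that stays inside $\rfvar$ except possibly at its endpoint, and then to discharge that statement with the two progress premises and \irref{Dadjoint}. Starting from $\lsequent{\rfvar}{\dbox{\pevolve{\D{x}=\genDE{x}}}{\rfvar}}$, I would apply the ``$\lylpmi$'' direction of \irref{RealInd} and then \irref{allr} to introduce the fresh variable $y$, leaving the goal
\[
  \lsequent{\rfvar}{\dbox{\pevolvein{\D{x}=\genDE{x}}{\rfvar \lor x=y}}{\big(x=y \limply \rfvar \land \dprogressinsmall{\D{x}=\genDE{x}}{\rfvar}\big)}},
\]
where $\dprogressinsmall{\D{x}=\genDE{x}}{\rfvar}$ abbreviates $\ddiamond{\pevolvein{\D{x}=\genDE{x}}{\rfvar}}{x\neq y}$, which is legitimate since $x=y$ is assumed under the modality.

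Next I would weaken the postcondition via monotonicity \irref{Mb}. Under the assumption $x=y$, the first progress premise $\lsequent{\rfvar}{\dprogressin{\D{x}=\genDE{x}}{\rfvar}}$ yields $\dprogressin{\D{x}=\genDE{x}}{\rfvar}$ from $\rfvar$, and since $\dprogressin{\D{x}=\genDE{x}}{\rfvar}$ and $\dprogressinsmall{\D{x}=\genDE{x}}{\rfvar}$ coincide whenever $\rfvar$ already holds, the weaker postcondition $x=y\limply\rfvar$ already entails the stronger $x=y\limply\rfvar\land\dprogressinsmall{\D{x}=\genDE{x}}{\rfvar}$. Hence \irref{Mb} reduces the goal to
\[
  \lsequent{\rfvar}{\dbox{\pevolvein{\D{x}=\genDE{x}}{\rfvar \lor x=y}}{(x=y \limply \rfvar)}} ,
\]
i.e. no solution that starts in $\rfvar$ and never leaves $\rfvar\lor x=y$ may reach a state where $x=y$ holds but $\rfvar$ fails.

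This last step is the crux, and it is where the second (backward) progress premise and \irref{Dadjoint} come in. The idea: if such a solution $\solvar$ did run from $\omega\models\rfvar$, stay in $\rfvar\lor x=y$ on $[0,T]$, and end with $\solvar(T)\models x=y\land\lnot\rfvar$, then applying the backward premise $\lsequent{\lnot\rfvar}{\dprogressin{\D{x}=-\genDE{x}}{\lnot\rfvar}}$ at $\solvar(T)$ produces a solution of $\D{x}=-\genDE{x}$ issued from $\solvar(T)$ that moves strictly into $\lnot\rfvar$ on an initial half-open interval; but \irref{Dadjoint} (``there and back again'') exhibits the time reversal of $\solvar$ as another solution of $\D{x}=-\genDE{x}$ from $\solvar(T)$, one that stays in $\rfvar\lor x=y$. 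By uniqueness of solutions (the Picard--Lindel\"of content packaged into \irref{Uniq}), these two backward solutions agree near $\solvar(T)$, so the reversed $\solvar$ would lie in $\lnot\rfvar$ and in $\rfvar\lor x=y$ simultaneously, forcing $x=y$ there and hence making $\solvar$ stationary near time $T$ --- contradicting the standing assumption of \rref{sec:extaxioms} that $\D{x}=\genDE{x}$ has no fixpoint.

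Recasting this contradiction argument syntactically --- turning the forward run into a backward one with \irref{Dadjoint}, colliding it with the backward-progress witness via \irref{Uniq}, using the diamond-modality refinements of \rref{cor:diadiffeqax} to manage the evolution domains, and closing the residual arithmetic with \irref{qear} --- is the part I expect to be the main obstacle; the \irref{RealInd}, \irref{allr}, and \irref{Mb} steps together with the propositional bookkeeping are routine.
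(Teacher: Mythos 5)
Your plan follows the same route as the paper's derivation (carried out in full in the appendix as the generalized rule \irref{realindin} for arbitrary domain constraints, of which \irref{realind} is the \(\ivr\equiv\ltrue\) case). Applying \irref{RealInd}, introducing \(y\) with \irref{allr}, and reducing the postcondition under the box is exactly where the paper goes. The one organizational difference is that you absorb the forward-progress conjunct of the \irref{RealInd} postcondition into the \(\rfvar\)-at-endpoint conjunct via \irref{Mb} (using the first rule premise and the $\ddnext/\ddnextsmall$ coincidence under \(\initassum\land\rfvar\)), whereas the paper first splits the two conjuncts with \irref{band+andr} and handles the forward-progress one separately via \irref{dW}. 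Both arrive at the same residual obligation \(\lsequent{\rfvar}{\dbox{\pevolvein{\D{x}=\genDE{x}}{\rfvar \lor x=y}}{(x=y \limply \rfvar)}}\), so your consolidation is a valid but modest shortcut, not a genuinely different argument.

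For the crux you correctly identify the ingredients (backward premise, \irref{Dadjoint}, uniqueness) but stop at a semantic sketch and flag the syntactic recasting as the open obstacle. That is precisely what the paper carries out: after a case split on \(\initassum\lor x\neq y\) (the \(\initassum\) case is trivial because \(\rfvar(y)\) is constant), the \(x\neq y\) case is dualized, \irref{diareflect} (from \irref{Dadjoint}) reverses the trajectory, and the backward-progress diamond from the second rule premise is merged with the reversed trajectory's diamond via \irref{Uniq}. The combined domain constraint then simplifies to \(x=y\), which clashes with the \(x\neq y\) endpoint required by the \(\ddnext\) modality, closing the branch by \irref{dW}. Note that the contradiction is not that the reversed solution becomes stationary, as your phrasing suggests, but that the merged domain constraint forces \(x=y\) while the diamond requires reaching \(x\neq y\) — that is the cleaner statement to syntacticize. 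Finally, you are right to invoke \irref{Uniq}: the appendix derivation of \irref{realindin} (and hence of \irref{realind}) does use \irref{UniqAx}, even though the proof-summary in the main text only names \irref{RealInd+Dadjoint}.
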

\begin{proofsketch}[app:diaaxioms]
The rule derives from \irref{RealInd}, where we have used \irref{Dadjoint} to axiomatically flip the signs of its second premise.
\end{proofsketch}

Rule \irref{realind} shows what our added axioms buys us: \irref{RealInd} reduces global invariance properties of ODEs to local progress properties. These properties will be provable with \irref{Cont+Uniq} and existing \dL axioms. Both premises of \irref{realind} allow us to assume that the formula we want to prove local progress for is true initially. Thus, we could have equivalently stated the succedent with $\ddnextsmall$ modalities instead of $\ddnext$ in both premises. The choice of $\ddnext$ will be better for strict inequalities.

\section{Semialgebraic Invariants}
\label{sec:semialg}

From now on, we simply assume domain constraint $\ivr \equiv \ltrue$ since $\ivr$ is not fundamental \cite{DBLP:conf/lics/Platzer12b} and not central to our discussion.\footnote{We provide the case of arbitrary semialgebraic evolution domain $\ivr$ in~\rref{app:completeness}.}
Any first-order formula of real arithmetic, $\rfvar$, characterizes a \emph{semialgebraic set}, and by quantifier elimination~\cite{Bochnak1998} may equivalently be written as a finite, quantifier-free formula with polynomials $p_{ij},q_{ij}$:
\begin{equation}
\rfvar \mequiv \lorfold_{i=0}^{M} \Big(\landfold_{j=0}^{m(i)} p_{ij} \geq 0 \land \landfold_{j=0}^{n(i)} q_{ij} > 0\Big)
\label{eq:normalform}
\end{equation}
$\rfvar$ is also called a \emph{semialgebraic} formula, and the first step in our invariance proofs for semialgebraic $\rfvar$ will be to apply rule \irref{realind}, yielding premises of the form \(\lsequent{\rfvar}{\dprogressin{\D{x}=\genDE{x}}{\rfvar}}\) (modulo sign changes and negation). The key insight then is that local progress can be completely characterized by a finite formula of real arithmetic.

\subsection{Local Progress}
\label{subsec:localprogress}

Local progress was implicitly used previously for semialgebraic invariants~\cite{DBLP:conf/emsoft/LiuZZ11,DBLP:journals/cl/GhorbalSP17}. Here, we show how to derive the characterization syntactically in the \dL calculus, starting from atomic inequalities. We observe interesting properties, e.g., self-duality, along the way.

\subsubsection{Atomic Non-strict Inequalities}

Let $\rfvar$ be $p \geq 0$. Intuitively, since we only want to show \emph{local progress}, it is sufficient to locally consider the \emph{first} (significant) Lie derivative of $p$. This is made precise with the following key lemma.

\begin{lemma}[Local progress step]
The following axiom derives from \irref{Cont} in \dL.
\[\dinferenceRule[Lpgeq|LPi$_\geq$]{}
{
\linferenceRule[impll]
  {p \geq 0 \land \big(p=0 \limply \dprogressinsmall{\D{x}=\genDE{x}}{\lied[]{\genDE{x}}{p} \geq 0}\big)}
  {\axkey{\dprogressinsmall{\D{x}=\genDE{x}}{p \geq 0}}}
}{}\]
\end{lemma}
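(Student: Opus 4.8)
The plan is to prove the axiom by a case split on whether $p > 0$ or $p = 0$ holds in the initial state, which is exhaustive under the assumed $p \geq 0$. Recall throughout that there is an implicit $x = y$ antecedent and that $\dprogressinsmall{\D{x}=\genDE{x}}{\ivr}$ abbreviates $\ddiamond{\pevolvein{\D{x}=\genDE{x}}{\ivr}}{x \neq y}$; also recall that the local-evolution hypothesis of \rref{sec:extaxioms} (so $\genDE{x}$ has no fixpoint) is in force, which is what makes $\irref{Cont}$ applicable and the $x\neq y$ postcondition attainable.

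In the \emph{open case} $p > 0$, axiom $\irref{Cont}$ gives $\dprogressinsmall{\D{x}=\genDE{x}}{p > 0}$ outright; since $p > 0 \limply p \geq 0$ is valid, rule $\irref{gddR}$ (with left premise $\lsequent{p>0}{p \geq 0}$ discharged by $\irref{qear}$) weakens the domain constraint to $p \geq 0$, yielding $\dprogressinsmall{\D{x}=\genDE{x}}{p \geq 0}$. The conditional subformula of the antecedent is unused here.

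In the \emph{boundary case} $p = 0$, the conditional in the antecedent discharges (by $\irref{implyl}$) to the diamond fact $\dprogressinsmall{\D{x}=\genDE{x}}{\lied[]{\genDE{x}}{p} \geq 0}$, i.e., $\ddiamond{\pevolvein{\D{x}=\genDE{x}}{\lied[]{\genDE{x}}{p}\geq0}}{x \neq y}$. Separately, $\irref{dIgeq}$ proves the box fact $\lsequent{p=0}{\dbox{\pevolvein{\D{x}=\genDE{x}}{\lied[]{\genDE{x}}{p}\geq0}}{p \geq 0}}$: the key point is that here the \emph{evolution domain constraint itself} supplies the sign condition on the Lie derivative, so the premise $\lsequent{\lied[]{\genDE{x}}{p}\geq0}{\lied[]{\genDE{x}}{p}\geq0}$ is trivial and the remaining premise $\lsequent{p=0,\lied[]{\genDE{x}}{p}\geq0}{p\geq0}$ closes by $\irref{qear}$. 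Feeding this box fact as the premise of $\irref{dDR}$ (instantiated with $\rrfvar \equiv \lied[]{\genDE{x}}{p}\geq0$, $\ivr \equiv p \geq 0$, $\rfvar \equiv x \neq y$) and applying it to the diamond fact above produces $\ddiamond{\pevolvein{\D{x}=\genDE{x}}{p\geq0}}{x \neq y}$, i.e., $\dprogressinsmall{\D{x}=\genDE{x}}{p \geq 0}$, completing the case and hence the derivation.

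I do not anticipate a genuine obstacle: the derivation is short, and the only real idea is to notice that when $p$ starts on its zero set the hypothesis ``the solution locally keeps $\lied[]{\genDE{x}}{p}\geq0$'' is exactly the information $\irref{dIgeq}$ needs — with that very formula serving as the evolution domain — to certify that $p$ stays nonnegative, while $\irref{dDR}$ is precisely the glue for transporting the diamond-reachability fact from the $\lied[]{\genDE{x}}{p}\geq0$-restricted system to the $p\geq0$-restricted one. The only care needed is bookkeeping of the suppressed $x=y$ assumption across the two branches.
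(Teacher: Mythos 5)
Your proposal is correct and matches the paper's derivation step for step: case split on $p>0$ vs.\ $p=0$, handle the open case with \irref{Cont} followed by \irref{gddR}, and handle the boundary case by discharging the conditional, proving the box fact $\lsequent{p=0}{\dbox{\pevolvein{\D{x}=\genDE{x}}{\lied[]{\genDE{x}}{p}\geq0}}{p\geq0}}$ with \irref{dIgeq} (the paper labels it \irref{dI}), and transporting the diamond across via \irref{dDR}.
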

\begin{proof}
The proof starts with a case split since $p \geq 0$ is equivalent to \(p > 0 \lor p = 0\). In the $p>0$ case, \irref{Contabbrev} and \irref{gddR} close the premise. The premise from the $p=0$ case is abbreviated with \textcircled{1}.

{\footnotesize\renewcommand*{\arraystretch}{1.3}%
\begin{sequentdeduction}[array]
\linfer[qear+orl]{
    \linfer[gddR]{
      \linfer[Contabbrev]{\lclose}
      {\lsequent{p > 0}{\dprogressinsmall{\D{x}=\genDE{x}}{p > 0}}}
    }
    {\lsequent{p > 0}{\dprogressinsmall{\D{x}=\genDE{x}}{p \geq 0}}} !
    \textcircled{1}
}
  {\lsequent{p \geq 0, p=0 \limply \dprogressinsmall{\D{x}=\genDE{x}}{\lied[]{\genDE{x}}{p} \geq 0}}{\dprogressinsmall{\D{x}=\genDE{x}}{p \geq 0}}}
\end{sequentdeduction}
}%
We continue on \textcircled{1} with \irref{dDR} and finish the proof using \irref{dI}:
{\footnotesize\renewcommand*{\arraystretch}{1.3}%
\begin{sequentdeduction}[array]
\linfer[implyl]{
  \linfer[dDR]{
  \linfer[dI]{
    \lclose
  }
  {\lsequent{p=0}{\dbox{\pevolvein{\D{x}=\genDE{x}}{\lied[]{\genDE{x}}{p} \geq 0}}{p\geq 0}}}
  }
  {\lsequent{p=0,\dprogressinsmall{\D{x}=\genDE{x}}{\lied[]{\genDE{x}}{p} \geq 0}}{\dprogressinsmall{\D{x}=\genDE{x}}{p \geq 0}}}
}
  {\lsequent{p=0, p=0 \limply \dprogressinsmall{\D{x}=\genDE{x}}{\lied[]{\genDE{x}}{p} \geq 0}}{\dprogressinsmall{\D{x}=\genDE{x}}{p \geq 0}}}
\end{sequentdeduction}
}%
\end{proof}

Observe that \irref{Lpgeq} allows us to pass from reasoning about local progress for $p \geq 0$ to local progress for its Lie derivative $\lied[]{\genDE{x}}{p} \geq 0$ whilst accumulating $p=0$ in the antecedent. Furthermore, this can be iterated for higher Lie derivatives, as in the following derivation:
{\footnotesize\renewcommand{\linferPremissSeparation}{~~~}%
\begin{sequentdeduction}[array]
\linfer[Lpgeq]{
  \lsequent{\Gamma}{p \geq 0} !
  \linfer[Lpgeq]{
  \lsequent{\Gamma,p=0}{\lied[]{\genDE{x}}{p} \geq 0} !
  \linfer[Lpgeq]{
  \lsequent{\Gamma,p=0,\dots}{\dprogressinsmall{\D{x}=\genDE{x}}{\lied[k]{\genDE{x}}{p} \geq 0}}
}
  {\dots}
}
  {\lsequent{\Gamma,p=0}{\dprogressinsmall{\D{x}=\genDE{x}}{\lied[]{\genDE{x}}{p} \geq 0}}}
}
  {\lsequent{\Gamma}{\dprogressinsmall{\D{x}=\genDE{x}}{p \geq 0}}}
\end{sequentdeduction}
}%
Indeed, if we could prove $\lied[k]{\genDE{x}}{p} > 0$ from the antecedent, \irref{Contabbrev+gddR} finish the proof, because we must then locally enter $\lied[k]{\genDE{x}}{p} > 0$:
{\footnotesize\renewcommand{\linferPremissSeparation}{~~~}%
\begin{sequentdeduction}[array]
\linfer[cut]{
  \lsequent{\Gamma,p=0,\dots,\lied[k-1]{\genDE{x}}{p}= 0}{\lied[k]{\genDE{x}}{p} > 0} !
  \linfer[gddR]{
  \linfer[Contabbrev]{
    \lclose
  }
    {\lsequent{\lied[k]{\genDE{x}}{p} > 0}{\dprogressinsmall{\D{x}=\genDE{x}}{\lied[k]{\genDE{x}}{p} > 0}}}
  }
  {\lsequent{\lied[k]{\genDE{x}}{p} > 0}{\dprogressinsmall{\D{x}=\genDE{x}}{\lied[k]{\genDE{x}}{p} \geq 0}}}
}
  \lsequent{\Gamma,p=0,\dots,\lied[k-1]{\genDE{x}}{p}= 0}{\dprogressinsmall{\D{x}=\genDE{x}}{\lied[k]{\genDE{x}}{p} \geq 0}}
\end{sequentdeduction}
}%

This derivation repeatedly examines higher Lie derivatives when lower ones are indeterminate ($p=0,\dots,\lied[k-1]{\genDE{x}}{p}=0$), until we find the \emph{first} significant derivative with a definite sign ($\lied[k]{\genDE{x}}{p} > 0$). Fortunately, we already know that this terminates: when $N$ is the rank of $p$, then once we gathered $p=0,\dots,\lied[N-1]{\genDE{x}}{p}=0$, i.e., $\sigliedzero{\genDE{x}}{p}$ in the antecedents, \irref{dRI} proves the invariant $p=0$, and ODEs always locally progress in invariants.
The following definition gathers the open premises above to obtain the \emph{first significant Lie derivative}.

\begin{definition}[Progress formula]
The \emph{progress formula} \m{\sigliedgt{\genDE{x}}{p}} for a polynomial $p$ with rank $N{\geq}1$ is defined as the following formula, where Lie derivatives are with respect to $\D{x}=\genDE{x}$:
\begin{align*}
\sigliedgt{\genDE{x}}{p} \mdefequiv &p\geq 0 \land \big(p=0 \limply \lied[]{\genDE{x}}{p} \geq 0\big) \land \big(p=0 \land \lied[]{\genDE{x}}{p} = 0  \limply \lied[2]{\genDE{x}}{p} \geq 0\big) \\
\land& \dots\\
\land& \big(p=0 \land \lied[]{\genDE{x}}{p} = 0 \land \dots \land \lied[N-2]{\genDE{x}}{p} = 0 \limply \lied[N-1]{\genDE{x}}{p} > 0\big)
\end{align*}
We define $\sigliedgeq{\genDE{x}}{p}$ as \m{\sigliedgt{\genDE{x}}{p} \lor \sigliedzero{\genDE{x}}{p}}. We write $\sigliedgt[-]{\genDE{x}}{p}$ (or $\sigliedgeq[-]{\genDE{x}}{p}$) when taking Lie derivatives w.r.t.\ \(\D{x}=-\genDE{x}\).
\end{definition}

\begin{lemma}[Local progress $\geq$]
\label{lem:localprogressgeq}
This axiom derives from \irref{Lpgeq}:
\[\dinferenceRule[Lpgeqfull|LP$_{\geq^*}$]{Progress Conditions}
{
\linferenceRule[impl]
  {\sigliedgeq{\genDE{x}}{p}}
  {\axkey{\dprogressinsmall{\D{x}=\genDE{x}}{p \geq 0}}}
}{}
\]
\end{lemma}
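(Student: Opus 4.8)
The plan is to use \irref{orl} on the disjunction \m{\sigliedgeq{\genDE{x}}{p} \lbisubjunct \sigliedgt{\genDE{x}}{p} \lor \sigliedzero{\genDE{x}}{p}} and handle the two disjuncts by the two ideas foreshadowed in this section: for the progress disjunct \m{\sigliedgt{\genDE{x}}{p}}, iterate \irref{Lpgeq} up the tower of higher Lie derivatives until the first strict one appears; for the radical disjunct \m{\sigliedzero{\genDE{x}}{p}}, detour through the fact that $p=0$ is then an invariant. Both branches end in the goal \m{\dprogressinsmall{\D{x}=\genDE{x}}{p \geq 0}}, with the implicit $x=y$ assumption in the antecedent and the standing no-fixpoint assumption of \rref{sec:extaxioms} in force.

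\emph{The progress branch.} Assume \m{\sigliedgt{\genDE{x}}{p}}. Applying \irref{Lpgeq} to the goal leaves the premise $p\geq0$ (the first conjunct of \m{\sigliedgt{\genDE{x}}{p}}, closed by \irref{qear}) and the premise \m{p=0 \limply \dprogressinsmall{\D{x}=\genDE{x}}{\lied[]{\genDE{x}}{p}\geq0}}; after moving $p=0$ into the antecedent, reapply \irref{Lpgeq} with $\lied[]{\genDE{x}}{p}$ playing the role of $p$ (using \m{\lie[]{\genDE{x}}{\lied[i]{\genDE{x}}{p}} = \lied[i+1]{\genDE{x}}{p}}), discharging its sign premise with the next conjunct of \m{\sigliedgt{\genDE{x}}{p}} and accumulating the equality \m{\lied[]{\genDE{x}}{p}=0}. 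This is exactly the iterated \irref{Lpgeq} derivation displayed just before the progress-formula definition. After $N{-}1$ such steps the antecedent contains \m{p=0, \lied[]{\genDE{x}}{p}=0,\dots,\lied[N-2]{\genDE{x}}{p}=0} and the remaining goal is \m{\dprogressinsmall{\D{x}=\genDE{x}}{\lied[N-1]{\genDE{x}}{p}\geq0}}; the last conjunct of \m{\sigliedgt{\genDE{x}}{p}} then gives \m{\lied[N-1]{\genDE{x}}{p}>0}, so \irref{gddR} weakens the domain \m{\lied[N-1]{\genDE{x}}{p}>0} to \m{\lied[N-1]{\genDE{x}}{p}\geq0} and reduces the goal to \m{\dprogressinsmall{\D{x}=\genDE{x}}{\lied[N-1]{\genDE{x}}{p}>0}}, which \irref{Contabbrev} closes. (For $N=1$ there are zero \irref{Lpgeq} steps and \m{\sigliedgt{\genDE{x}}{p}\equiv p>0}, handled directly by \irref{gddR}+\irref{Contabbrev}.)

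\emph{The radical branch.} Assume \m{\sigliedzero{\genDE{x}}{p}}. Then $p=0$ is invariant: \irref{dRI} applies with the rank $N$, its left premise being \m{\sigliedzero{\genDE{x}}{p}} itself (closed trivially) and its right premise being the rank identity \rref{eq:differential-rank} (valid by the choice of rank and cofactors), yielding \m{\dbox{\pevolve{\D{x}=\genDE{x}}}{p=0}}, which \irref{Mb} weakens to \m{\dbox{\pevolve{\D{x}=\genDE{x}}}{p\geq0}}. Separately, since the system locally evolves $x$, \irref{Contabbrev} instantiated with the constant polynomial $1$ gives \m{\dprogressinsmall{\D{x}=\genDE{x}}{1>0}}, and \irref{gddR} (weakening \m{1>0} to \m{\ltrue}) turns this into \m{\ddiamond{\pevolve{\D{x}=\genDE{x}}}{x\neq y}}. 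Feeding both facts into \irref{dDR} (trivial domain on the left, target domain $p\geq0$, postcondition $x\neq y$) produces \m{\ddiamond{\pevolvein{\D{x}=\genDE{x}}{p\geq0}}{x\neq y}}, which is precisely \m{\dprogressinsmall{\D{x}=\genDE{x}}{p\geq0}}.

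I expect the radical branch to be the only real subtlety: local progress inside the possibly non-open set $p\geq0$ genuinely needs \emph{both} ingredients — invariance from \irref{dRI} and the existence of a genuinely progressing solution from \irref{Cont} together with the no-fixpoint assumption — glued by the diamond refinement axiom \irref{dDR}; neither alone suffices, and it is worth double-checking that the instance of \irref{dDR} used matches its stated form. The progress branch is routine once \irref{Lpgeq} is in hand, the only care being the bookkeeping of the accumulated equalities and the switch from $\geq$ to $>$ at the $(N{-}1)$-th Lie derivative.
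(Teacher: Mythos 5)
Your proof is correct and takes essentially the same route as the paper's: split on \(\sigliedgeq{\genDE{x}}{p} \lbisubjunct \sigliedgt{\genDE{x}}{p} \lor \sigliedzero{\genDE{x}}{p}\), iterate \irref{Lpgeq} up the Lie-derivative tower for the first disjunct (closing with \irref{gddR}+\irref{Contabbrev} at the first strict sign), and for the second combine \irref{dRI} (for invariance of \(p=0\)), \irref{Cont} (for actual local progress), and \irref{dDR} to transfer the progress into the domain \(p\geq 0\). The only cosmetic difference is that the paper instantiates the trivial domain in the radical branch with \(1>0\) (then \(1\geq 0\)) rather than \(\ltrue\), and folds the \irref{Mb}-weakening of \(p=0\) to \(p\geq 0\) into the \irref{dDR} step.
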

\begin{proofsketch}[app:localprogress]
This follows by the preceding discussion with iterated use of derived axioms \irref{Lpgeq} and \irref{dRI}.
\end{proofsketch}

In order to prove $\dprogressinsmall{\D{x}=\genDE{x}}{p \geq 0}$, it is not always necessary to consider the entire progress formula for $p$. The iterated derivation shows that once the antecedent ($\Gamma,p=0,\dots,\lied[k-1]{\genDE{x}}{p}=0$) implies that the next Lie derivative is significant ($\lied[k]{\genDE{x}}{p} > 0$), the proof can stop early without considering the remaining higher Lie derivatives.

\subsubsection{Atomic Strict Inequalities}

Let $P$ be $p > 0$. Unlike the above non-strict cases, where $\ddnext$ and $\ddnextsmall$ were equivalent, we now exploit the $\ddnext$ modality. The reason for this difference is that the set of states satisfying $p > 0$ is topologically open and, as mentioned earlier, it is possible to \emph{locally enter} the set from an initial point on its boundary. This becomes important when we generalize to the case of semialgebraic $\rfvar$ in normal form \rref{eq:normalform} because it allows us to move between its outer disjunctions.

\begin{lemma}[Local progress $>$]
\label{lem:localprogressgt}
This axiom derives from \irref{Lpgeq}:
\[\dinferenceRule[Lpgtfull|LP$_{>^*}$]{Progress Conditions}
{
\linferenceRule[impl]
  {\sigliedgt{\genDE{x}}{p}}
  {\axkey{\dprogressin{\D{x}=\genDE{x}}{p > 0}}}
}{}
\]
\end{lemma}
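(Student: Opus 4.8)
The plan is to mirror the proof of the non-strict companion \irref{Lpgeqfull}, but to use the $\ddnext$ modality (which ignores the initial time) to squeeze out \emph{strict} positivity. First I would rewrite the progress formula: by \irref{qear}, $\sigliedgt{\genDE{x}}{p}$ is equivalent over the reals to $\lorfold_{k=0}^{N-1}\phi_k$, where $\phi_k\mnodefeq\landfold_{i=0}^{k-1}\lied[i]{\genDE{x}}{p}=0 \land \lied[k]{\genDE{x}}{p}>0$ (reading $\lied[0]{\genDE{x}}{p}$ as $p$); the conjunct structure of $\sigliedgt{\genDE{x}}{p}$, which ends in a strict inequality at level $N-1$, is exactly what forces one of the $\phi_k$ to hold. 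By \irref{orl} it then suffices to derive $\lsequent{\phi_k}{\dprogressin{\D{x}=\genDE{x}}{p>0}}$ for each $k$. For $k=0$, $\phi_0$ is $p>0$, so \irref{Contabbrev} followed by \irref{gddR} (weakening the domain $p>0$ to $p>0\lor x=y$) closes the goal immediately.

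For $k\geq 1$ I would apply \irref{dDR} with the \emph{open} refined domain $\lied[k]{\genDE{x}}{p}>0$. The diamond premise is $\dprogressinsmall{\D{x}=\genDE{x}}{\lied[k]{\genDE{x}}{p}>0}$, which closes by \irref{Contabbrev} since $\phi_k$ entails $\lied[k]{\genDE{x}}{p}>0$ and this describes an open set. The box premise is $\lsequent{\phi_k}{\dbox{\pevolvein{\D{x}=\genDE{x}}{\lied[k]{\genDE{x}}{p}>0}}{(p>0\lor x=y)}}$, which I would attack with \irref{dC}: successively cut the non-strict sign conditions $\lied[k-1]{\genDE{x}}{p}\geq 0,\lied[k-2]{\genDE{x}}{p}\geq 0,\dots,\lied[1]{\genDE{x}}{p}\geq 0$ into the evolution domain, each discharged by \irref{dIgeq} (its left premise closes from $\phi_k$, its right premise from the next higher Lie derivative, which is already present in the domain). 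Once the domain records $\lied[k]{\genDE{x}}{p}>0$ together with $\lied[i]{\genDE{x}}{p}\geq 0$ for all $1\leq i\leq k-1$, it entails $\lied[]{\genDE{x}}{p}\geq 0$, and a differential invariant argument for the disjunctive formula $p>0\lor x=y$ closes the box premise: that formula holds at the initial state because there $x=y$, its differential obligation $\lied[]{\genDE{x}}{p}\geq 0\lor\genDE{x}=0$ follows from the evolution domain, and the $\lied[k]{\genDE{x}}{p}>0$ recorded in the domain is precisely what forbids $p$ from lingering at $0$ for any positive time, so $p>0$ holds off the initial point.

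The main obstacle is this final step — extracting the strict postcondition $p>0$ from a box modality when $p=0$ at the initial point: no amount of \irref{dIgeq}/\irref{dbxineq} reasoning alone yields a strict inequality from an equality, so the strictness must be obtained as a genuinely dynamical fact, by carrying the disjunct $x=y$ (true exactly at the initial time) through the argument alongside the full chain of higher Lie-derivative sign conditions. This is also where the choice of the \emph{open} domain $\lied[k]{\genDE{x}}{p}>0$ in the \irref{dDR} step is essential: with a weaker domain (e.g.\ $p\geq 0$) the corresponding box premise is actually false, since a solution could leave $p\geq 0$ through the boundary without $p$ ever having been positive.
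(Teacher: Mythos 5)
Your overall decomposition is genuinely different from the paper's. You rewrite $\sigliedgt{\genDE{x}}{p}$ into the disjunctive form $\lorfold_{k=0}^{N-1}\phi_k$ with $\phi_k \equiv \landfold_{i=0}^{k-1}\lied[i]{\genDE{x}}{p}=0 \land \lied[k]{\genDE{x}}{p}>0$ and case-split on the first significant Lie derivative, whereas the paper keeps the full conjunctive progress formula and reduces the strict case to the non-strict one via $p \geq |x-y|^{2N}$, whose Lie derivatives provably vanish up to order $N-1$ at the initial state $x=y$ (\rref{prop:leibnizpowers}). Your $k=0$ branch, your choice of refined open domain $\lied[k]{\genDE{x}}{p}>0$ in the \irref{dDR} step, and the \irref{dC}$+$\irref{dIgeq} chain that cuts $\lied[i]{\genDE{x}}{p}\geq 0$ (for $i=k-1,\dots,1$) into the domain are all fine.

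The gap is the final closing step. After your cuts, the goal is
\[
\lsequent{\phi_k}{\dbox{\pevolvein{\D{x}=\genDE{x}}{\lied[k]{\genDE{x}}{p}>0 \land \landfold_{i=1}^{k-1}\lied[i]{\genDE{x}}{p}\geq 0}}{(p>0\lor x=y)}}
\]
and none of the rules you invoke can close it. First, the ``differential obligation'' for $p>0\lor x=y$ is \emph{not} the disjunction $\lied[]{\genDE{x}}{p}\geq 0\lor\genDE{x}=0$: in \dL the differential of a disjunction is the \emph{conjunction} of the differentials (see \cite{DBLP:journals/lmcs/Platzer12}), and a disjunctive obligation would in fact be unsound (take $\D{x}=1$ and the formula $x>1\lor x=0$: the disjunctive obligation $1\geq 0\lor 1=0$ holds, yet the disjunction fails at $x=\tfrac12$). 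Moreover, \rref{thm:diffeqax} only gives you the atomic instances \irref{dIeq},\irref{dIgeq}; there is no disjunctive \irref{dI} in this calculus to apply at all. Second, the final sentence of your argument — that $\lied[k]{\genDE{x}}{p}>0$ ``forbids $p$ from lingering at $0$'' so ``$p>0$ holds off the initial point'' — is a correct semantic fact about analytic solutions, but it is the whole content of the lemma, not a derivation step; phrasing it as ``a genuinely dynamical fact'' is an acknowledgment that no syntactic rule has been applied. The paper's $|x-y|^{2N}$ device is precisely the syntactic counterweight you are missing here: replacing your target postcondition $p>0\lor x=y$ with the single non-strict invariant $p-|x-y|^{2k}\geq 0$ (for which \irref{dIgeq} \emph{is} applicable, and from which $p>0\lor x=y$ follows by \irref{qear}) is how one turns your semantic observation into a derivation, and in effect this collapses your per-$k$ case split back into the paper's uniform argument. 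As written, the box premise cannot be discharged and the proof does not go through.
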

\begin{proofsketch}[app:localprogress]
We start by unfolding the syntactic abbreviation of the $\ddnext$ modality, and observing that we can reduce to the non-strict case with \irref{gddR} and the real arithmetic fact\footnote{Here, $|x-y|^2$ is the squared Euclidean norm $(x_1-y_1)^2+\dots+(x_n-y_n)^2$} $p \geq |x-y|^{2N} \limply p > 0 \lor x=y$, where $N{\geq}1$ is the rank of $p$. The appearance of $N$ in this latter step corresponds to the fact that we only need to inspect the first $N-1$ Lie derivatives of $p$ with $\sigliedgt{\genDE{x}}{p}$. We further motivate this choice in the full proof \seeapp{\rref{app:localprogress}}.
{\footnotesize%
\begin{sequentdeduction}[array]
\linfer[]{
  \linfer[gddR]{
    \linfer[qear]{ \lclose }
    {\lsequent{p \geq |x-y|^{2N}}{p > 0 \lor x=y}}!
    \lsequent{\Gamma}{\dprogressinsmall{\D{x}=\genDE{x}}{p \geq |x-y|^{2N}}}
  }
  {\lsequent{\Gamma}{\dprogressinsmall{\D{x}=\genDE{x}}{p > 0 \lor x=y}}}
}
  {\lsequent{\Gamma}{\dprogressin{\D{x}=\genDE{x}}{p > 0}}}
\end{sequentdeduction}
}%
We continue on the remaining open premise with iterated use of \irref{Lpgeq}, similar to the derivation for \rref{lem:localprogressgeq}.
\end{proofsketch}

\subsubsection{Semialgebraic Case}

We finally lift the progress formulas for atomic inequalities to the general case of an arbitrary semialgebraic formula in normal form.
\begin{definition}[Semialgebraic progress formula]
The \emph{semialgebraic progress formula} $\sigliedsai{\genDE{x}}{\rfvar}$ for a semialgebraic formula $\rfvar$ written in normal form~\rref{eq:normalform} is defined as follows:
\begin{align*}
\sigliedsai{\genDE{x}}{\rfvar} ~\mdefequiv~ \lorfold_{i=0}^{M} \Big(\landfold_{j=0}^{m(i)} \sigliedgeq{\genDE{x}}{p_{ij}} \land \landfold_{j=0}^{n(i)} \sigliedgt{\genDE{x}}{q_{ij}}\Big)
\end{align*}
We write $\sigliedsai[-]{\genDE{x}}{\rfvar}$ when taking Lie derivatives w.r.t.\ \(\D{x}=-\genDE{x}\).
\end{definition}

\begin{lemma}[Semialgebraic local progress]
\label{lem:localprogresssemialg}
Let $\rfvar$ be a semialgebraic formula in normal form \rref{eq:normalform}. The following axiom derives from \dL extended with $\irref{Cont+Uniq}$.
\[
\dinferenceRule[LpRfull|LP\usebox{\Rval}]{Progress Condition}
{\linferenceRule[impl]
  {\sigliedsai{\genDE{x}}{\rfvar}}
  {\axkey{\dprogressin{\D{x}=\genDE{x}}{\rfvar}}}
}{}
\]
\end{lemma}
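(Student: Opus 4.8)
The plan is to reduce, via the outer disjunction, to a single disjunct of the normal form \rref{eq:normalform}, handle each atomic inequality in it separately using the atomic local progress axioms \irref{Lpgeqfull} and \irref{Lpgtfull} already derived above, and then glue the resulting diamond properties back into a single solution using the uniqueness axiom \irref{Uniq}.

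\emph{Reduction to one disjunct.} Write $\rfvar \mequiv \lorfold_{i=0}^{M} \psi_i$ with $\psi_i \mequiv \landfold_{j=0}^{m(i)} p_{ij}\geq 0 \land \landfold_{j=0}^{n(i)} q_{ij}>0$. Applying \irref{orl} to the disjunctive antecedent $\sigliedsai{\genDE{x}}{\rfvar}$, it suffices to show, for each $i$, that $\landfold_{j} \sigliedgeq{\genDE{x}}{p_{ij}} \land \landfold_{j} \sigliedgt{\genDE{x}}{q_{ij}}$ implies $\dprogressin{\D{x}=\genDE{x}}{\psi_i}$; since $\psi_i \limply \rfvar$ and $\dprogressin{\D{x}=\genDE{x}}{\cdot}$ unfolds to $\ddiamond{\pevolvein{\D{x}=\genDE{x}}{\,\cdot\,\lor x=y}}{x\neq y}$, one further application of \irref{gddR} with the \irref{qear} implication $\psi_i \lor x=y \limply \rfvar \lor x=y$ then yields $\dprogressin{\D{x}=\genDE{x}}{\rfvar}$.

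\emph{Atoms and gluing.} For each non-strict conjunct, \irref{Lpgeqfull} turns $\sigliedgeq{\genDE{x}}{p_{ij}}$ into $\dprogressinsmall{\D{x}=\genDE{x}}{p_{ij}\geq 0}$, i.e.\ $\ddiamond{\pevolvein{\D{x}=\genDE{x}}{p_{ij}\geq 0}}{x\neq y}$; for each strict conjunct, \irref{Lpgtfull} turns $\sigliedgt{\genDE{x}}{q_{ij}}$ into $\dprogressin{\D{x}=\genDE{x}}{q_{ij}>0}$, i.e.\ $\ddiamond{\pevolvein{\D{x}=\genDE{x}}{q_{ij}>0\lor x=y}}{x\neq y}$. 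Now combine all of these finitely many diamond formulas by iterated \irref{Uniq}: the evolution domains intersect and, because every postcondition is literally $x\neq y$, the disjunction of postconditions collapses back to $x\neq y$, giving $\ddiamond{\pevolvein{\D{x}=\genDE{x}}{(\landfold_j p_{ij}\geq 0)\land\landfold_j(q_{ij}>0\lor x=y)}}{x\neq y}$. A final \irref{gddR} with the \irref{qear} fact that $(\landfold_j p_{ij}\geq 0)\land\landfold_j(q_{ij}>0\lor x=y) \limply \psi_i \lor x=y$ — proved by case split on $x=y$ versus $x\neq y$, where in the latter case each disjunct $q_{ij}>0\lor x=y$ forces $q_{ij}>0$ — delivers $\dprogressin{\D{x}=\genDE{x}}{\psi_i}$ and closes the branch.

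\emph{Main obstacle.} The delicate point is the gluing step: each atomic claim is witnessed by its own solution, yet \irref{Uniq} is exactly what forces these to be prefixes of a common solution (uniqueness for polynomial ODEs), so that the intersected-domain diamond is legitimate. Relatedly, the arithmetic weakening must be arranged so that the strict inequalities $q_{ij}>0$ are only demanded \emph{after} the initial instant — past the seam $x=y$ — which is precisely the mechanism by which a trajectory starting on the topological boundary can progress into the open-plus-closed set $\rfvar$, possibly entering through a different outer disjunct than the one active at the start. Keeping the $\ddnext$ versus $\ddnextsmall$ interval conventions straight through these combinations is the bookkeeping that needs care.
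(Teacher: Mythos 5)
Your proposal is correct and follows essentially the same approach as the paper's proof: split the outer disjunction, reduce to a single disjunct via \irref{gddR}, split the conjunction via \irref{Uniq} (the paper packages this as the derived axiom \irref{decompand}), and close each atom with \irref{Lpgeqfull} or \irref{Lpgtfull}. The only cosmetic difference is that you combine the atomic diamonds forwards with \irref{Uniq} and fix up the domain with one final \irref{gddR} and the arithmetic implication $(\landfold_j p_{ij}\geq 0)\land\landfold_j(q_{ij}>0\lor x=y) \limply \psi_i \lor x=y$, whereas the paper distributes $\lor\, x{=}y$ over the conjunction first and applies a per-atom \irref{gddR} on the non-strict factors; both routings are valid and rely on the same tools.
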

\begin{proofsketch}[app:localprogress]
We decompose $\sigliedsai{\genDE{x}}{\rfvar}$ according to its outermost disjunction, and accordingly decompose $\rfvar$ in the local progress succedent with \irref{gddR}. We then use \irref{Uniq+band} to split the conjunctive local progress condition in the resulting succedents of open premises, before finally utilizing \irref{Lpgeqfull} or \irref{Lpgtfull}, respectively.
\end{proofsketch}

\rref{lem:localprogresssemialg} implies that the implication in \irref{LpRfull} can be strengthened to an equivalence. It also justifies our syntactic abbreviation $\ddnext$, recalling that the $\ddnext$ modality of temporal logic is self-dual.
\begin{corollary}[Local progress completeness]
\label{cor:localprogresscomplete}
Let $\rfvar$ be a semialgebraic formula in normal form \rref{eq:normalform}. The following axioms derive from \dL extended with $\irref{Cont+Uniq}$.

\begin{calculus}
\dinferenceRule[Lpiff|LP]{Iff Progress Condition}
{\linferenceRule[equiv]
  {\sigliedsai{\genDE{x}}{\rfvar}}
  {\axkey{\dprogressin{\D{x}=\genDE{x}}{\rfvar}}}
}{}

\dinferenceRule[duality|$\lnot{\ddnext}$]{Duality}
{\linferenceRule[equiv]
  {\lnot{\dprogressin{\D{x}=\genDE{x}}{\lnot{\rfvar}}}}
  {\axkey{\dprogressin{\D{x}=\genDE{x}}{\rfvar}}}
}{}
\end{calculus}
\end{corollary}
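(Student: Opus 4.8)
The plan is to derive both \irref{Lpiff} and \irref{duality} by combining \rref{lem:localprogresssemialg} (\irref{LpRfull}) with the uniqueness axiom \irref{Uniq} and one real-arithmetic complementarity fact. Throughout, fix a normal form \(\rfvar^c\) of \(\lnot\rfvar\) obtained from \rref{eq:normalform} by De Morgan duality: each atom \(p\geq 0\) becomes \(-p>0\), each atom \(q>0\) becomes \(-q\geq 0\), and the resulting conjunction of disjunctions is distributed into the shape \rref{eq:normalform}.

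First I would record a ``two-sided progress is impossible'' fact, derived purely from \irref{Uniq}: \(\dprogressin{\D{x}=\genDE{x}}{\rfvar}\land\dprogressin{\D{x}=\genDE{x}}{\lnot\rfvar}\) is contradictory. Unfolding the \(\ddnext\)-modality abbreviations and applying \irref{Uniq} with evolution domains \(\rfvar\lor x=y\) and \(\lnot\rfvar\lor x=y\) and postconditions \(x\neq y\) gives \(\ddiamond{\pevolvein{\D{x}=\genDE{x}}{(\rfvar\lor x=y)\land(\lnot\rfvar\lor x=y)}}{(x\neq y\lor x\neq y)}\). Since \((\rfvar\lor x=y)\land(\lnot\rfvar\lor x=y)\) and \(x\neq y\lor x\neq y\) are propositionally equivalent to \(x=y\) and \(x\neq y\), this reduces to \(\ddiamond{\pevolvein{\D{x}=\genDE{x}}{x=y}}{x\neq y}\), which is contradictory since along any solution staying in the domain \(x=y\) the endpoint still satisfies \(x=y\).

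Next I would establish the real-arithmetic complementarity \(\sigliedsai{\genDE{x}}{\rfvar}\lor\sigliedsai{\genDE{x}}{\rfvar^c}\), a first-order validity provable by \irref{qear}. The atomic core is \(\lnot\sigliedgeq{\genDE{x}}{p}\lbisubjunct\sigliedgt{\genDE{x}}{-p}\): the rank of \(-p\) equals the rank \(N\) of \(p\) because \(\lied[i]{\genDE{x}}{-p}=-\lied[i]{\genDE{x}}{p}\), so the identity \rref{eq:differential-rank} transfers between \(p\) and \(-p\) with the same \(N\); and \(\sigliedgeq{\genDE{x}}{p}\) says exactly that the first nonzero entry of \(p,\lied[]{\genDE{x}}{p},\dots,\lied[N-1]{\genDE{x}}{p}\) is positive or that all of them vanish (\(\sigliedzero{\genDE{x}}{p}\)), whereas \(\sigliedgt{\genDE{x}}{-p}\) says the first nonzero such entry is negative. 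Since, by the rank property, these \(N\) entries determine the whole sequence of Lie derivatives, exactly one of the three cases holds. Lifting through the De Morgan combinatorics: if \(\sigliedsai{\genDE{x}}{\rfvar}\) fails, then each disjunct \(i\) of \rref{eq:normalform} has a failing conjunct, i.e.\ some \(p_{ij}\) with \(\lnot\sigliedgeq{\genDE{x}}{p_{ij}}\) (hence \(\sigliedgt{\genDE{x}}{-p_{ij}}\)) or some \(q_{ij}\) with \(\lnot\sigliedgt{\genDE{x}}{q_{ij}}\) (hence \(\sigliedgeq{\genDE{x}}{-q_{ij}}\)); picking one such witness per \(i\) selects a true disjunct of \(\sigliedsai{\genDE{x}}{\rfvar^c}\).

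Finally I would assemble the pieces. For \irref{Lpiff}, the ``\(\limply\)'' direction is \irref{LpRfull}; for ``\(\lylpmi\)'' argue contrapositively, deducing \(\sigliedsai{\genDE{x}}{\rfvar^c}\) from \(\lnot\sigliedsai{\genDE{x}}{\rfvar}\) by complementarity, then \(\dprogressin{\D{x}=\genDE{x}}{\lnot\rfvar}\) by \irref{LpRfull} on \(\rfvar^c\), then \(\lnot\dprogressin{\D{x}=\genDE{x}}{\rfvar}\) by the ``two-sided progress is impossible'' fact. Then \irref{duality} follows quickly: ``\(\limply\)'' is exactly that fact, and for ``\(\lylpmi\)'' argue contrapositively, turning \(\lnot\dprogressin{\D{x}=\genDE{x}}{\rfvar}\) into \(\lnot\sigliedsai{\genDE{x}}{\rfvar}\) via \irref{Lpiff}, then into \(\sigliedsai{\genDE{x}}{\rfvar^c}\) by complementarity, then into \(\dprogressin{\D{x}=\genDE{x}}{\lnot\rfvar}\) by \irref{Lpiff}/\irref{LpRfull} on \(\rfvar^c\). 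I expect the main obstacle to be the bookkeeping in the complementarity lemma: carefully matching the De Morgan dual of \rref{eq:normalform} to the progress-formula construction, and confirming that per-polynomial ranks are unchanged by sign flips. Once that is pinned down, the remaining steps are routine propositional reasoning plus single applications of \irref{LpRfull} and \irref{Uniq}.
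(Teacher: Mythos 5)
Your proposal is correct and takes essentially the same route as the paper: the paper also derives \irref{Lpiff} by taking the ``progress formula implies modality'' direction from \irref{LpRfull}, handling the converse by contraposition through the De~Morgan rewriting of $\lnot(\sigliedsai{\genDE{x}}{\rfvar})$ into a progress formula for $\lnot\rfvar$, applying \irref{LpRfull} again, and refuting two-sided local progress via \irref{Uniq} followed by \irref{dW}; and it derives \irref{duality} by combining \irref{Lpiff} with that rewriting. The one bookkeeping step you flag --- matching the De~Morgan dual of \rref{eq:normalform} against the progress-formula construction and confirming ranks are sign-invariant --- is exactly what the paper isolates as \rref{prop:negationrearrangement}, where it is stated as the provable equivalence $\lnot(\sigliedsai{\genDE{x}}{\rfvar})\lbisubjunct\sigliedsai{\genDE{x}}{(\lnot\rfvar)}$ rather than your disjunctive ``complementarity''; the two formulations are interchangeable once the two-sided-impossibility argument is in hand, and your handling of the atomic case via \rref{eq:sigliedfullrearrangement}-style trichotomy is the intended one.
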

\begin{proofsketch}[app:localprogress]
Both follow because any $\rfvar$ in normal form \rref{eq:normalform} has a corresponding normal form for $\lnot{\rfvar}$ such that the equivalence $\lnot{(\sigliedsai{\genDE{x}}{\rfvar})} \lbisubjunct \sigliedsai{\genDE{x}}{(\lnot{\rfvar})}$ is provable. Then apply \irref{Uniq+LpRfull}.
\end{proofsketch}

In continuous time, there is no discrete next state, so unlike the $\ddnext$ modality of discrete temporal logic, local progress is \emph{idempotent}.

\subsection{Completeness for Semialgebraic Invariants}
\label{subsec:completenesssemialg}

We summarize our results with the following derived rule.
\begin{theorem}[Semialgebraic invariants] \label{thm:sAI}
For semialgebraic $\rfvar$ with progress formulas $\sigliedsai{\genDE{x}}{\rfvar}, \sigliedsai[-]{\genDE{x}}{(\lnot{\rfvar})}$ w.r.t.\ their respective normal forms \rref{eq:normalform}, this rule derives from the \dL calculus with $\irref{RealInd+Dadjoint+Cont+Uniq}$.
\[
\dinferenceRule[sAI|sAI]{semialgebraic invariants}
{\linferenceRule
  {
  \lsequent{\rfvar} {\sigliedsai{\genDE{x}}{\rfvar}} & \lsequent{\lnot{\rfvar}} {\sigliedsai[-]{\genDE{x}}{(\lnot{\rfvar})}}
  }
  {\lsequent{\rfvar}{\dbox{\pevolve{\D{x}=\genDE{x}}}{\rfvar}}}
}{}
\]
\end{theorem}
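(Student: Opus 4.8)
The plan is to obtain \irref{sAI} as the composition of two results already at hand: the real induction rule \irref{realind} (\rref{cor:realind}), which reduces invariance of $\rfvar$ to \emph{forward} local progress in $\rfvar$ together with \emph{backward} local progress in $\lnot{\rfvar}$, and the semialgebraic local progress axiom \irref{LpRfull} (\rref{lem:localprogresssemialg}), which discharges precisely such local progress obligations from the corresponding progress formulas. The two premises of \irref{sAI} are exactly the antecedents needed to feed \irref{LpRfull} on each of the two branches.

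Before invoking these, I would dispose of the standing assumption behind the extended axiomatization, namely that $\D{x}=\genDE{x}$ locally evolves $x$, on which \irref{RealInd} and \irref{Cont}, and hence both \irref{realind} and \irref{LpRfull}, rely. If the system may have a fixpoint, I would first apply \irref{dG} to adjoin a fresh clock $\D{t}=1$ (which is linear in $t$, so the soundness restriction of \irref{dG} is satisfied) and then \irref{existsr} to pick an arbitrary initial value for $t$, leaving the goal $\lsequent{\rfvar}{\dbox{\pevolve{\D{x}=\genDE{x}\syssep\D{t}=1}}{\rfvar}}$ for a system with no fixpoint. Since $\rfvar$ and all the $p_{ij},q_{ij}$ occurring in its normal form \rref{eq:normalform} do not mention $t$, their Lie derivatives with respect to the extended system agree with those with respect to $\D{x}=\genDE{x}$; hence $\sigliedsai{\genDE{x}}{\rfvar}$ and $\sigliedsai[-]{\genDE{x}}{(\lnot{\rfvar})}$ are unchanged, and both premises of \irref{sAI} transfer verbatim. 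From here on I silently work with this locally-evolving system.

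Now I would apply \irref{realind} to $\lsequent{\rfvar}{\dbox{\pevolve{\D{x}=\genDE{x}}}{\rfvar}}$, which produces the two premises $\lsequent{\rfvar}{\dprogressin{\D{x}=\genDE{x}}{\rfvar}}$ and $\lsequent{\lnot{\rfvar}}{\dprogressin{\D{x}=-\genDE{x}}{\lnot{\rfvar}}}$. On the first, I would \irref{cut} in $\sigliedsai{\genDE{x}}{\rfvar}$: the left cut branch $\lsequent{\rfvar}{\sigliedsai{\genDE{x}}{\rfvar}}$ is literally the first premise of \irref{sAI}, while after weakening the right cut branch is $\lsequent{\sigliedsai{\genDE{x}}{\rfvar}}{\dprogressin{\D{x}=\genDE{x}}{\rfvar}}$, which is exactly axiom \irref{LpRfull} for $\rfvar$ in its fixed normal form. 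The second premise is handled symmetrically: $\lnot{\rfvar}$ is again a semialgebraic formula, hence has a normal form \rref{eq:normalform} --- the one with respect to which $\sigliedsai[-]{\genDE{x}}{(\lnot{\rfvar})}$ is defined --- and $\D{x}=-\genDE{x}$ is again a polynomial ODE that still locally evolves $x$ (the clock simply runs backward), so, again by \irref{cut} with $\sigliedsai[-]{\genDE{x}}{(\lnot{\rfvar})}$, this premise splits into the second premise of \irref{sAI} and the instance of \irref{LpRfull} for the ODE $\D{x}=-\genDE{x}$ with postcondition $\lnot{\rfvar}$; recall $\sigliedsai[-]{\genDE{x}}{\cdot}$ abbreviates the progress formula with Lie derivatives taken along $\D{x}=-\genDE{x}$, so this instance matches precisely.

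I do not expect any computationally hard step: all of the analysis was already carried out in \rref{lem:localprogresssemialg} and in the soundness argument for \irref{RealInd}. The care that remains is purely bookkeeping --- keeping the normal forms of $\rfvar$ and of $\lnot{\rfvar}$ used to state the premises of \irref{sAI} identical to those fed to \irref{LpRfull}, checking that the preliminary clock ghost perturbs neither premise (it does not, since Lie derivatives of $t$-free polynomials are insensitive to the clock), and observing that the $\ddnext$ modality appearing in the premises of \irref{realind} is syntactically the same $\ddnext$ concluded by \irref{LpRfull}, so the two results compose directly with no intervening manipulation of modalities.
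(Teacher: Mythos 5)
Your proof is correct and follows essentially the same route as the paper, which derives \irref{sAI} as a "straightforward application of \irref{realind+Lpiff}": reduce invariance to forward/backward local progress via \irref{realind}, then discharge each progress obligation from the corresponding semialgebraic progress formula (the paper cites \irref{Lpiff}, you cite the one-directional \irref{LpRfull}, which suffices here). The extra care you take with the preliminary clock ghost and the agreement of Lie derivatives is exactly the bookkeeping the paper elides in \rref{app:extaxiomatization}.
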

\begin{proof}
Straightforward application of \irref{realind+Lpiff}.
\end{proof}

Completeness of \irref{sAI} was proved semantically in~\cite{DBLP:conf/emsoft/LiuZZ11} making crucial use of semialgebraic sets and analytic solutions to polynomial ODE systems. We showed that the \irref{sAI} proof rule can be \emph{derived} syntactically in the \dL calculus and derive its completeness, too:

\begin{theorem}[Semialgebraic invariant completeness]
\label{thm:semialgcompleteness}
For semialgebraic $\rfvar$ with progress formulas $\sigliedsai{\genDE{x}}{\rfvar}, \sigliedsai[-]{\genDE{x}}{(\lnot{\rfvar})}$ w.r.t.\ their respective normal forms \rref{eq:normalform}, this axiom derives from \dL with $\irref{RealInd+Dadjoint+Cont+Uniq}$.
\[\dinferenceRule[semialgiff|SAI]{Semialgebraic invariant axiom}
{\linferenceRule[equiv]
  {\lforall{x}{\big(\rfvar \limply \sigliedsai{\genDE{x}}{\rfvar}\big)} \land\lforall{x}{\big(\lnot{\rfvar} \limply \sigliedsai[-]{\-genDE{x}}{(\lnot{\rfvar})}\big)}}
  {\axkey{\lforall{x}{(\rfvar \limply \dbox{\pevolve{\D{x}=\genDE{x}}}{\rfvar})}}}
}{}\]
\end{theorem}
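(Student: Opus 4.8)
The statement is an equivalence, and the plan is to prove its two directions separately, in each case reducing to the local-progress machinery already developed. The ``$\limply$'' direction is, up to routine quantifier and sequent bookkeeping, exactly rule \irref{sAI} of \rref{thm:sAI}: after \irref{implyr} and \irref{allr} the two conjuncts of the antecedent of \irref{semialgiff} supply, by instantiation, the two premises $\lsequent{\rfvar}{\sigliedsai{\genDE{x}}{\rfvar}}$ and $\lsequent{\lnot{\rfvar}}{\sigliedsai[-]{\genDE{x}}{(\lnot{\rfvar})}}$ of \irref{sAI}, whose conclusion $\lsequent{\rfvar}{\dbox{\pevolve{\D{x}=\genDE{x}}}{\rfvar}}$ is then generalized over $x$ to obtain the succedent of \irref{semialgiff}. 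This is the only place \irref{RealInd} is needed (it is used inside \irref{sAI}), with \irref{Cont+Uniq} entering through the local-progress subgoals.

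For the ``$\lylpmi$'' direction (completeness: invariance forces the two progress conditions) I would assume $\lforall{x}{\big(\rfvar \limply \dbox{\pevolve{\D{x}=\genDE{x}}}{\rfvar}\big)}$ and derive each conjunct. For the first conjunct, fix $x$ and assume $\rfvar$, hence $\dbox{\pevolve{\D{x}=\genDE{x}}}{\rfvar}$. Since the system locally evolves $x$ (the running hypothesis of \rref{sec:extaxioms}, ensured by a clock $\D{x_1} = 1$ if necessary), a solution of positive duration exists and moves off the initial state; it also stays in $\rfvar$, hence in $\rfvar \lor x = y$, reaching a state with $x \neq y$. This witnesses $\dprogressin{\D{x}=\genDE{x}}{\rfvar}$, which by axiom \irref{Lpiff} of \rref{cor:localprogresscomplete} is equivalent to $\sigliedsai{\genDE{x}}{\rfvar}$; generalizing over $x$ gives the first conjunct. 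For the second conjunct I would first pass to the time-reversed system: the hypothesis says no forward solution connects a $\rfvar$-state to a $\lnot{\rfvar}$-state, so by \irref{Dadjoint} (the ``there and back again'' manipulation it supports) it is equivalent to $\lnot{\rfvar} \limply \dbox{\pevolve{\D{x}=-\genDE{x}}}{\lnot{\rfvar}}$, i.e.\ $\lnot{\rfvar}$ is an invariant of $\D{x}=-\genDE{x}$. Running the first-conjunct argument for the system $\D{x}=-\genDE{x}$ and formula $\lnot{\rfvar}$ (with the normal form used in \rref{cor:localprogresscomplete}) yields $\dprogressin{\D{x}=-\genDE{x}}{\lnot{\rfvar}}$, and \irref{Lpiff} instantiated at $\D{x}=-\genDE{x}$ converts this to $\sigliedsai[-]{\genDE{x}}{(\lnot{\rfvar})}$, the second conjunct.

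I expect the completeness (``$\lylpmi$'') direction to be the main obstacle, for two reasons. First, going from bare invariance to a witness of the $\ddnext$ modality is not purely formal: one must know the forward-invariant solution genuinely leaves the initial state, which is precisely the non-stationarity hypothesis of \rref{sec:extaxioms} and tacitly relies on the existence-of-solutions content of \irref{Cont}; turning this into an actual \dL derivation, rather than a semantic observation, is the delicate part. Second, the passage from forward-invariance of $\rfvar$ to backward-invariance of $\lnot{\rfvar}$ must be carried out axiomatically through \irref{Dadjoint}, not by informally reversing trajectories. Everything else — the quantifier and sequent manipulations, and checking that the normal forms chosen for $\rfvar$ and $\lnot{\rfvar}$ are the ones used in \rref{cor:localprogresscomplete} so that the two uses of \irref{Lpiff} deliver exactly $\sigliedsai{\genDE{x}}{\rfvar}$ and $\sigliedsai[-]{\genDE{x}}{(\lnot{\rfvar})}$ — is routine.
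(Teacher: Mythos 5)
Your proof is correct and the overall plan is right, but the ``$\limply$'' (completeness) direction is handled by a route that is genuinely different from the one the paper uses, so the comparison is worth spelling out. (Note first a surface mismatch: since \irref{semialgiff} carries \texttt{axkey} on the invariance formula, what you call the ``$\limply$'' direction is the paper's ``$\lylpmi$'' direction and vice versa; I follow your naming below.)

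For the soundness direction you match the paper exactly: \irref{sAI} plus routine quantifier and sequent manipulation, with \irref{RealInd+Dadjoint} consumed inside \irref{sAI} and \irref{Cont+Uniq} inside the local-progress subgoals.

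For the completeness direction the paper does \emph{not} directly derive $\dprogressin{\D{x}=\genDE{x}}{\rfvar}$ from $\dbox{\pevolve{\D{x}=\genDE{x}}}{\rfvar}$. It instead proves the contrapositive: after dualizing, it rewrites $\lnot(\sigliedsai{\genDE{x}}{\rfvar})$ into $\sigliedsai{\genDE{x}}{(\lnot\rfvar)}$ via the normal-form negation lemma (\rref{prop:negationrearrangement}), cuts in local progress for $\lnot\rfvar$ via the one-directional \irref{LpRfull}, and combines with local progress in the domain to exhibit a solution reaching $\lnot\rfvar$, contradicting the invariance assumption. The second conjunct is handled by first applying \irref{reflect} (derived from \irref{Dadjoint}) and then running the same dualized template backwards. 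Your proposal instead runs the first conjunct \emph{directly}: from $\initassum$, \irref{Cont} with $p\mdefeq 1$ yields $\ddiamond{\pevolve{\D{x}=\genDE{x}}}{x\neq y}$; from invariance and \irref{Mb} one has $\dbox{\pevolve{\D{x}=\genDE{x}}}{(\rfvar\lor x=y)}$; \irref{dDR} transfers the witness into domain $\rfvar\lor x=y$ to give $\dprogressin{\D{x}=\genDE{x}}{\rfvar}$; and then the full equivalence \irref{Lpiff} converts this to $\sigliedsai{\genDE{x}}{\rfvar}$. This is a valid alternative derivation, and it is slightly more economical at this level precisely because \irref{Lpiff} already packages the \rref{prop:negationrearrangement}-plus-\irref{Uniq} work that the paper's contrapositive route does by hand. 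The trade-off is that the paper's route is built to scale to the generalized \rref{thm:semialgcompletenessdom} with semialgebraic evolution domain $\ivr$ (the local-progress-in-$\ivr$ cut slots in naturally there), from which the statement you prove is obtained as the special case $\ivr\equiv\ltrue$, whereas your direct route would need to be adapted to carry the extra $\sigliedsai{\genDE{x}}{\ivr}$ hypothesis through \irref{dDR}. You correctly identify the two delicate points (turning ``the solution moves off $x=y$'' into an actual \dL derivation via \irref{Cont}, and doing the time reversal axiomatically via \irref{Dadjoint}), and the ingredients you cite suffice to close both.
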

In~\rref{app:completeness}, we prove a generalization of \rref{thm:semialgcompleteness} that handles semialgebraic evolution domains $\ivr$ using \irref{Lpiff} and a corresponding generalization of axiom \irref{RealInd}. Thus, \dL decides invariance properties for all first-order real arithmetic formulas $\rfvar$, because quantifier elimination~\cite{Bochnak1998} can equivalently rewrite $\rfvar$ to normal form \rref{eq:normalform} first. Unlike for \rref{thm:algcomplete}, which can decide algebraic postconditions from any semialgebraic precondition, \rref{thm:semialgcompleteness} (and its generalized version) are still limited to proving invariants, the search of which is the only remaining challenge.

Of course, \irref{sAI} can be used to prove all the invariants considered in our running example. However, we had a significantly simpler proof for the invariance of $1-u^2-v^2>0$ with \irref{dbxineq}. This has implications for implementations of \irref{sAI}: simpler proofs help minimize dependence on real arithmetic decision procedures. Similarly, we note that if $\rfvar$ is either topologically open (resp.\ closed), then the left (resp.\ right) premise of \irref{sAI} closes trivially. Logically, this follows by the finiteness theorem~\cite[Theorem 2.7.2]{Bochnak1998}, which implies that formula $\rfvar \limply \sigliedsai{\genDE{x}}{\rfvar}$ is provable in real arithmetic for open semialgebraic $\rfvar$. Topologically, this corresponds to the fact that only one of the two exit trajectory cases in \rref{subsec:realind} can occur.

\section{Related Work}
\label{sec:relatedwork}

We focus our discussion on work related to deductive verification of hybrid systems. Readers interested in ODEs~\cite{Walter1998}, real analysis~\cite{clark2012instructor}, and real algebraic geometry~\cite{Bochnak1998} are referred to the respective cited texts. Orthogonal to our work is the question of how invariants can be efficiently generated, e.g.~\cite{DBLP:conf/tacas/GhorbalP14,DBLP:conf/emsoft/LiuZZ11,DBLP:journals/fmsd/SankaranarayananSM08}.

\paragraph{Proof Rules for Invariants.}
There are numerous useful but incomplete proof rules for ODE invariants \cite{DBLP:conf/hybrid/PrajnaJ04,DBLP:journals/fmsd/SankaranarayananSM08,DBLP:conf/fsttcs/TalyT09}.
An overview can be found in~\cite{DBLP:journals/cl/GhorbalSP17}.
The soundness and completeness theorems for \irref{dRI+sAI} were first shown in~\cite{DBLP:conf/tacas/GhorbalP14} and~\cite{DBLP:conf/emsoft/LiuZZ11} respectively.

In their original presentation, \irref{dRI} and \irref{sAI}, are \emph{algorithmic procedures} for checking invariance, requiring \eg, checking ideal membership for all polynomials in the semialgebraic decomposition.
This makes them very difficult to implement soundly as part of a small, trusted axiomatic core, such as the implementation of \dL in \KeYmaeraX~\cite{DBLP:conf/cade/FultonMQVP15}.
We instead show that these rules can be \emph{derived} from a small set of axiomatic principles.
Although we also leverage ideal computations, they are only used in \emph{derived rules}.
With the aid of a theorem prover, derived rules can be implemented as tactics that crucially remain \emph{outside} the soundness-critical axiomatic core.
Our completeness results are axiomatic, so complete for disproofs.

\paragraph{Deductive Power and Proof Theory.} The derivations shown in this paper are fully general, which is necessary for completeness of the resulting derived rules. The number of conjuncts in the progress and differential radical formulas, for example, are equal to the rank of $p$. Known upper bounds for the rank of $p$ in $n$ variables are doubly exponential in $n^2 \ln{n}$~\cite{novikov1999trajectories}. Fortunately, many simpler classes of invariants can be proved using simpler derivations.
This is where a study of the deductive power of various sound, but incomplete, proof rules~\cite{DBLP:journals/cl/GhorbalSP17} comes into play. If we know that an invariant of interest is of a simpler class, then we could simply use the proof rule that is complete for that class. This intuition is echoed in~\cite{DBLP:journals/lmcs/Platzer12}, which studies the relative deductive power of differential invariants (\irref{dI}) and differential cuts (\irref{dC}). Our first result shows, in fact, that \dL with \irref{dG} is already complete for algebraic invariants.
Other proof-theoretical studies of \dL~\cite{DBLP:conf/lics/Platzer12b} reveal surprising correspondences between its hybrid, continuous and discrete aspects in the sense that each aspect can be axiomatized completely relative to any other aspect. Our \rref{cor:testfree} is a step in this direction.

\section{Conclusion and Future Work}
\label{sec:conclusion}

The first part of this paper demonstrates the impressive deductive power of differential ghosts: they prove all algebraic invariants and Darboux inequalities. We leave open the question of whether their deductive power extends to larger classes of invariants.
The second part of this paper introduces extensions to the base \dL axiomatization, and shows how they can be used together with the existing axioms to decide real arithmetic invariants syntactically.

It is instructive to examine the mathematical properties of solutions and terms that underlie our axiomatization. In summary:
\begin{center}
\begin{tabular}{ll}
\hline
\textbf{Axiom}           & \textbf{Property}  \\ \hline
\irref{dI}       & Mean value theorem\\
\irref{dC}       & Prefix-closure of solutions\\
\irref{dG}       & Picard-Lindel\"of\\
\irref{Cont}     & Existence of solutions\\
\irref{Uniq}     & Uniqueness of solutions\\
\irref{Dadjoint} & Group action on solutions\\
\irref{RealInd}  & Completeness of $\reals$\\
\hline
\end{tabular}
\end{center}
The soundness of our axiomatization, therefore, easily extends to term languages beyond polynomials, \eg, continuously differentiable terms satisfy the above properties. We may, of course, lose completeness and decidable arithmetic in the extended language, but we leave further exploration of these issues to future work.

\section*{Acknowledgments}
We thank Brandon Bohrer, Khalil Ghorbal, Andrew Sogokon, and the anonymous reviewers for their detailed feedback on this paper.
This material is based upon work supported by the National Science Foundation under NSF CAREER Award CNS-1054246.
The second author was also supported by A*STAR, Singapore.

Any opinions, findings, and conclusions or recommendations expressed in this publication are those of the author(s) and do not necessarily reflect the views of the National Science Foundation.
\bibliographystyle{plainurl}
\bibliography{diffaxiomatic-arXiv}

\iflongversion
\else
\fi
\clearpage
\appendix
\newcommand{\usb}{b}
\newcommand{\usf}{f}
\newcommand{\usg}{g}
\newcommand{\usp}{p}
\newcommand{\usP}{\rfvar}
\newcommand{\usR}{\rrfvar}
\newcommand{\usQ}{\ivr}
\newcommand{\usC}{C}
\newcommand{\usa}{a}
\newcommand{\usDE}{\usf\argx}
\newcommand{\usDEy}{\usf\argy}
\newcommand{\argx}{(x)}
\newcommand{\argy}{(y)}
\newcommand{\argxx}{(x,x')}
\newcommand{\interp}{I}
\newcommand{\itsem}[1]{\interp\omega\sem{#1}}
\newcommand{\itsemst}[2]{\interp{#1}\sem{#2}}
\newcommand{\ifsem}[1]{\interp\sem{#1}}
\newcommand{\ipsem}[1]{\interp\sem{#1}}
\newcommand{\solmodels}[2]{\interp,#1 \models #2}

\section{Differential Dynamic Logic Axiomatization}
\label{app:axiomatization}
We work with \dL's uniform substitution calculus presented in~\cite{DBLP:journals/jar/Platzer17}. The calculus is based on the uniform substitution inference rule:
\[
\cinferenceRule[US|US]{Uniform Substitution}
{\linferenceRule[formula]
  {\fvar}
  {\sigma(\fvar)}
}{}
\]
The uniform substitution calculus requires a few extensions to the syntax and semantics presented in \rref{sec:background}. Firstly we extend the term language with differential terms $\D{(e)}$ and $k$-ary function symbols $\usf$, where $e_1,\dots,e_k$ are terms. The formulas are similarly extended with $k$-ary predicate symbols $\usp$:
\begin{align*}
e &\bebecomes \cdots \alternative \D{(e)} \alternative \usf(e_1,\dots,e_k)\\
\fvar &\bebecomes \cdots \alternative \usp(e_1,\dots,e_k)
\end{align*}
The grammar of \dL programs is as follows ($a$ is a program symbol):
\[ \alpha,\beta \bebecomes \usa \alternative \pumod{x}{e}  \alternative \ptest{\fvar} \alternative \pevolvein{\D{x}=\genDE{x}}{\ivr}  \alternative \pchoice{\alpha}{\beta} \alternative \alpha;\beta \alternative \prepeat{\alpha}\]

We refer readers to~\cite{DBLP:journals/jar/Platzer17} for the complete, extended semantics. Briefly, for each variable $x$, there is an associated differential variable $\D{x}$, and states map all of these variables (including differential variables) to real values; we write $\States$ for the set of all states. The semantics also requires an interpretation $\interp$ for the uniform substitution symbols. The term semantics, $\itsem{e}$, gives the value of $e$ in state $\omega$ and interpretation $\interp$.
Differentials have a differential-form semantics \cite{DBLP:journals/jar/Platzer17} as the sum of all partial derivatives by all variables $x$ multiplied by the corresponding values of $\D{x}$:
\[
\itsem{\der{\theta}}
=
\sum_{x} \iget[state]{\I}(\D{x}) \Dp[x]{\itsem{\theta}}
\]
The formula semantics, $\ifsem{\fvar}$, is the set of states where $\fvar$ is true in interpretation $\interp$, and the transition semantics of hybrid programs $\ipsem{\alpha}$ is given with respect to interpretation $I$. The transition semantics for $\D{x}=\genDE{x}$ requires:
\begin{align*}
&(\omega,\nu) \in \ipsem{\pevolvein{\D{x}=\genDE{x}}{\ivr}} ~\text{iff}~\text{there is}~T\geq 0~\text{and a function}~\\
&\solvar:[0,T] \to \States ~\text{with}~ \solvar(0)=\omega ~\text{on}~ \scomplement{\{\D{x}\}}, \solvar(T)=\nu,~\text{and}\\
&\solmodels{\solvar}{\pevolvein{\D{x}=\genDE{x}}{\ivr}}
\end{align*}
The \m{\solmodels{\solvar}{\pevolvein {\D{x}=\genDE{x}}{\ivr}}} condition checks \m{\solvar(\zeta) \in \ifsem{x'=f(x) \land \ivr}}, $\solvar(0) = \solvar(\zeta)$ on $\scomplement{\{x,\D{x}\}}$ for $0 \leq \zeta \leq T$, and, if $T>0$, then $\frac{d\solvar(t)(x)}{dt}(\zeta)$ exists, and is equal to $\solvar(\zeta)(\D{x})$ for all $0 \leq \zeta \leq T$. In other words, $\solvar$ is a solution of the differential equations $\D{x}=\genDE{x}$ that stays in the evolution domain constraint. It is also required to hold all variables other than $x,\D{x}$ constant. Most importantly, the values of the differential variables $\D{x}$ is required to match the value of the RHS of the differential equations along the solution. We refer readers to~\cite[Definition 7]{DBLP:journals/jar/Platzer17} for further details.

The \dL calculus allows all its axioms (\cf~\cite[Figures 2 and 3]{DBLP:journals/jar/Platzer17}) to be stated as \emph{concrete} instances, which are then instantiated by uniform substitution. In this appendix, we take the same approach: all of our (new) axioms will be stated as concrete instances as well. We will need to be slightly more careful, and write down explicit variable dependencies for all the axioms.
We shall directly use vectorial notation when presenting the axioms.
To make this paper self-contained, we state all of the axioms used in the paper and the appendix. However, we only provide justification for derived rules and axioms that are not already justified in~\cite{DBLP:journals/jar/Platzer17}.

\subsection{Base Axiomatization}
\label{app:baseaxiomatization}

The following are the base axioms and axiomatic proof rules for \dL from~\cite[Figure 2]{DBLP:journals/jar/Platzer17} where $\usall$ is the vector of all variables.

\begin{theorem}[Base axiomatization~\cite{DBLP:journals/jar/Platzer17}]
\label{thm:baseeqaxioms}
The following are sound axioms and proof rules for \dL.

    \begin{calculus}
      \cinferenceRule[diamond|$\didia{\cdot}$]{diamond axiom}
      {\linferenceRule[equiv]
        {\lnot\dbox{a}{\lnot p(\usall)}}
        {\axkey{\ddiamond{a}{p(\usall)}}}
      }
      {}
      \cinferenceRule[assignb|$\dibox{:=}$]{assignment / substitution axiom}
      {\linferenceRule[equiv]
        {p(f)}
        {\axkey{\dbox{\pupdate{\umod{x}{f}}}{p(x)}}}
      }
      {}%
      \irlabel{Dassignb|$\dibox{:=}$}%
      \cinferenceRule[testb|$\dibox{?}$]{test}
      {\linferenceRule[equiv]
        {(q \limply p)}
        {\axkey{\dbox{\ptest{q}}{p}}}
      }{}
      \cinferenceRule[choiceb|$\dibox{\cup}$]{axiom of nondeterministic choice}
      {\linferenceRule[equiv]
        {\dbox{a}{p(\usall)} \land \dbox{b}{p(\usall)}}
        {\axkey{\dbox{\pchoice{a}{b}}{p(\usall)}}}
      }{}
      \cinferenceRule[composeb|$\dibox{{;}}$]{composition} %
      {\linferenceRule[equiv]
        {\dbox{a}{\dbox{b}{p(\usall)}}}
        {\axkey{\dbox{a;b}{p(\usall)}}}
      }{}
      \cinferenceRule[iterateb|$\dibox{{}^*}$]{iteration/repeat unwind} %
      {\linferenceRule[equiv]
        {p(\usall) \land \dbox{a}{\dbox{\prepeat{a}}{p(\usall)}}}
        {\axkey{\dbox{\prepeat{a}}{p(\usall)}}}
      }{}
      \cinferenceRule[K|K]{K axiom / modal modus ponens} %
      {\linferenceRule[impl]
        {\dbox{a}{(p(\usall)\limply q(\usall))}}
        {(\dbox{a}{p(\usall)}\limply\axkey{\dbox{a}{q(\usall)}})}
      }{}
      \cinferenceRule[I|I]{loop induction}
      {\linferenceRule[equiv]
        {p(\usall) \land \dbox{\prepeat{a}}{(p(\usall)\limply\dbox{a}{p(\usall)})}}
        {\axkey{\dbox{\prepeat{a}}{p(\usall)}}}
      }{}
      \cinferenceRule[V|V]{vacuous $\dbox{}{}$}
      {\linferenceRule[impl]
        {p}
        {\axkey{\dbox{a}{p}}}
      }{}%
      \cinferenceRule[G|G]{$\dbox{}{}$ generalisation} %
      {\linferenceRule[formula]
        {p(\usall)}
        {\dbox{a}{p(\usall)}}
      }{}
    \end{calculus}
\end{theorem}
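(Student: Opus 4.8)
The plan is to observe that this statement is exactly the uniform‑substitution axiomatization of~\cite{DBLP:journals/jar/Platzer17}, so the primary route is to point to the soundness proof established there; what follows is the shape of that argument. Recall that an axiom is sound iff it is valid (true at every state $\omega$ under every interpretation $\interp$), and a proof rule is sound iff validity of all its premises entails validity of its conclusion. Because every axiom here is stated as a \emph{concrete instance} over bare function, predicate, and program symbols whose meaning is pinned down by $\interp$, each one reduces to unfolding the transition semantics $\ipsem{\cdot}$ of the program in question together with the modal semantics of $\dbox{\cdot}{}$ and $\ddiamond{\cdot}{}$, and checking that the two sides denote the same set of states (for equivalences) or that the left side's set is contained in the right's (for implications).

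First I would dispatch the purely structural items, which are pure semantic bookkeeping. Axiom \irref{diamond} is the Boolean duality of the two modalities. Axioms \irref{choiceb}, \irref{composeb}, \irref{iterateb} follow by pushing $\dbox{\cdot}{}$ through the set identities $\ipsem{\pchoice{a}{b}}=\ipsem{a}\cup\ipsem{b}$, the fact that $\ipsem{a;b}$ is the relational composition of $\ipsem{a}$ and $\ipsem{b}$, and the fixpoint unwinding $\ipsem{\prepeat{a}}=\{(\omega,\omega)\}\cup\ipsem{a;\prepeat{a}}$. Axiom \irref{testb} uses that $\ipsem{\ptest{q}}$ is the partial identity on $\ifsem{q}$, while \irref{assignb} uses that $\ipsem{\pupdate{\umod{x}{f}}}$ is the graph of the state modification sending $x$ to $\itsem{f}$, together with the observation that, for a predicate \emph{symbol} $p$, membership in $\ifsem{p(f)}$ at $\omega$ and in $\ifsem{p(x)}$ at that modified state both amount to $\itsem{f}\in\interp(p)$. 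The modal laws \irref{K}, \irref{G}, \irref{V} are the standard normal‑modal‑logic facts that $\dbox{a}{}$ distributes over implication up to modus ponens, preserves validity, and leaves an argument‑free $p$ (whose $\ifsem{p}$ is $\States$ or $\emptyset$) unchanged along every run of $a$.

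The only genuinely non‑bookkeeping part — the step I expect to be the main obstacle — is the pair of loop axioms \irref{iterateb} and \irref{I}: for each, the nontrivial implication goes by induction on the number $n$ of $a$‑steps used to reach a post‑state of $\prepeat{a}$, appealing to $\ipsem{\prepeat{a}}=\bigcup_{n\ge 0}\ipsem{a}^n$, with \irref{I} additionally carrying the invariant $p(\usall)$ through each step. Finally, to obtain the \emph{schematic} versions of all these axioms that the rest of the paper actually uses, one also needs soundness of the uniform substitution rule \irref{US}; its proof — the uniform substitution lemma together with the coincidence and bound‑variable lemmas of~\cite{DBLP:journals/jar/Platzer17} — is the one substantial ingredient, but since the term language here is polynomial and none of these ten axioms themselves mention differentials, no differential‑specific machinery enters this theorem.
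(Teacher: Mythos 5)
The paper gives no proof of this theorem at all: it is imported directly from the cited reference, and the surrounding text is explicit about this (``we only provide justification for derived rules and axioms that are not already justified in~\cite{DBLP:journals/jar/Platzer17}''). Your proposal correctly recognizes that the primary content is a pointer to that prior soundness proof, and the accompanying sketch of the semantic case analysis (duality for \irref{diamond}, relational set identities for \irref{choiceb}, \irref{composeb}, \irref{iterateb}, partial identity for \irref{testb}, state modification for \irref{assignb}, standard normal-modal-logic facts for \irref{K}, \irref{G}, \irref{V}, an induction on the iteration count for \irref{iterateb} and \irref{I}, and the soundness of \irref{US} for instantiating the concrete axioms) is an accurate summary of how that cited proof goes.
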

In sequent calculus, these axioms (and axiomatic proof rules) are instantiated by uniform substitution and then used by congruence reasoning for equivalences (and equalities). All of the substitutions that we require are admissible~\cite[Definition 19]{DBLP:journals/jar/Platzer17}. We use weakening to elide assumptions from the antecedent without notice and, e.g., use G\"odel's rule \irref{G} directly as:
\[\linfer[G]
  {\lsequent{}{\fvar}}
  {\lsequent{\Gamma}{\dbox{\alpha}{\fvar}}}
\]

The \irref{band} axiom derives from \irref{G+K} \cite{DBLP:conf/lics/Platzer12b}.
The \irref{Mb} rule derives using \irref{G+K} as well \cite{DBLP:journals/jar/Platzer17}.
The loop induction rule derives from the induction axiom \irref{I} using \irref{G} on its right conjunct \cite{DBLP:conf/lics/Platzer12b}.
\[\dinferenceRule[loop|loop]{}
{\linferenceRule
  {\lsequent{\fvar}{\dbox{\alpha}{\fvar}}}
  {\lsequent{\fvar}{\dbox{\prepeat{\alpha}}{\fvar}}}
}{}\]

The presentation of the base axiomatization in \rref{thm:baseeqaxioms} follows~\cite{DBLP:journals/jar/Platzer17}, where $\usp(\usall)$ is used to indicate a predicate symbol which takes, $\usall$, the vector of all variables. In the sequel, in order to avoid notational confusion with earlier parts of this paper, we return to using $\usP$ for predicate symbols, reserving $\usp$ for polynomial terms. Correspondingly, we return to using $x$ for the vector of variables appearing in ODE $\D{x}=\genDE{x}$ when stating the \dL axioms for differential equations.

\subsection{Differential Equation Axiomatization}
\label{app:diffaxiomatization}

The following are axioms for differential equations and differentials from~\cite[Figure 3]{DBLP:journals/jar/Platzer17}. Note that when $x$ is a vector of variables $x_1,x_2,\dots,x_n$, then $x'$ is the corresponding vector of differential variables $x_1',x_2',\dots,x_n'$, and $\usDE$ is a vector of $n$-ary function symbols $\usf_1\argx,\usf_2\argx,\dots,\usf_n\argx$.

\begin{theorem}[Differential equation axiomatization~\cite{DBLP:journals/jar/Platzer17}]
\label{thm:diffeqaxiomatization}
The following are sound axioms of \dL.

\begin{calculus}
      \cinferenceRule[DW|DW]{differential evolution domain} %
      {\axkey{\dbox{\pevolvein{\D{x}=\usDE}{\usQ\argx}}{\usQ\argx}}}
      {}

\cinferenceRule[DIeq|DI$_=$]{}
{\linferenceRule[impll]
  {(\usQ\argx \limply \dbox{\pevolvein{\D{x}=\usDE}{\usQ\argx}}{(\usp\argx)'=0})}
  {(\axkey{\dbox{\pevolvein{\D{x}=\usDE}{\usQ\argx}}{\usp\argx=0}} \lbisubjunct \dbox{\ptest{\usQ\argx}}{\usp\argx=0})}
}{}

\cinferenceRule[DIgeq|DI$_\cmp$]{}
{\linferenceRule[impll]
  {(\usQ\argx \limply \dbox{\pevolvein{\D{x}=\usDE}{\usQ\argx}}{(\usp\argx)'\geq0})}
  {(\axkey{\dbox{\pevolvein{\D{x}=\usDE}{\usQ\argx}}{\usp\argx\cmp0}} \lbisubjunct \dbox{\ptest{\usQ\argx}}{\usp\argx\cmp0})}
}{}

      \cinferenceRule[DE|DE]{differential effect} %
      {\linferenceRule[equivl]
        {\dbox{\pevolvein{\D{x}=\usDE}{\usQ\argx}}{\dbox{\Dupdate{\Dumod{\D{x}}{\usDE}}}{\usP(x,\D{x})}}}
        {\axkey{\dbox{\pevolvein{\D{x}=\usDE}{\usQ\argx}}{\usP(x,\D{x})}}}
      }
      {}%

      \cinferenceRule[Dconst|$c'$]{derive constant}
      {\linferenceRule[eq]
        {0}
        {\axkey{\der{f}}}
      }
      {}%
      \cinferenceRule[Dvar|$x'$]{derive variable}
      {\linferenceRule[eq]
        {\D{x}}
        {\axkey{\der{x}}}
      }
      {}%
      \cinferenceRule[Dplus|$+'$]{derive sum}
      {\linferenceRule[eq]
        {\der{\usf(\usall)}+\der{\usg(\usall)}}
        {\axkey{\der{\usf(\usall)+\usg(\usall)}}}
      }
      {}
      \cinferenceRule[Dtimes|$\cdot'$]{derive product}
      {\linferenceRule[eq]
        {\der{\usf(\usall)}\cdot \usg(\usall)+\usf(\usall)\cdot\der{\usg(\usall)}}
        {\axkey{\der{\usf(\usall)\cdot \usg(\usall)}}}
      }
      {}
\end{calculus}
\end{theorem}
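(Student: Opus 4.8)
The plan is to verify, one axiom at a time, that each displayed formula is true in every state under every interpretation $\interp$, working directly from the differential-form semantics $\itsem{\der{\theta}} = \sum_x \omega(\D{x})\,\frac{\partial}{\partial x}\itsem{\theta}$ of differentials and the ODE transition semantics recalled in \rref{app:axiomatization}; since all eight facts are already established in \cite{DBLP:journals/jar/Platzer17}, the goal is only to record the essential semantic ingredients. First I would dispatch the four syntactic derivative axioms \irref{Dconst}, \irref{Dvar}, \irref{Dplus}, \irref{Dtimes}, each of which is a pointwise identity of reals: fixing $\interp$ and $\omega$ and unfolding the differential-form semantics, for \irref{Dconst} the value of a constant symbol depends on no variable so every partial derivative vanishes; for \irref{Dvar} the partial of $x$ by $y$ is $1$ iff $y=x$, leaving exactly $\omega(\D{x})=\itsem{\D{x}}$; and for \irref{Dplus} and \irref{Dtimes} one commutes the finite sum over variables with, respectively, linearity $\partial_x(u+v)=\partial_x u+\partial_x v$ and the Leibniz rule $\partial_x(uv)=(\partial_x u)v+u\,\partial_x v$, which apply because interpretations assign smooth functions to function symbols (and polynomials are smooth). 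These computations are routine and I would not grind through them.

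Next I would handle \irref{DW} and \irref{DE} by inspecting what the ODE transition relation forces at points of a solution. For \irref{DW}: if $(\omega,\nu)\in\ipsem{\pevolvein{\D{x}=\usDE}{\usQ\argx}}$ then there is a solution $\solvar:[0,T]\to\States$ with $\solvar(T)=\nu$ and $\solvar(\zeta)\in\ifsem{x'=\usDE\land\usQ\argx}$ for every $\zeta\in[0,T]$, in particular at $\zeta=T$; since $\usQ\argx$ reads only the non-differential variables and $\solvar(T)=\nu$, we obtain $\nu\in\ifsem{\usQ\argx}$. For \irref{DE}: the same condition at $\zeta=T$ already forces $\nu(\D{x})=\itsemst{\nu}{\usDE}$, so performing the update $\Dumod{\D{x}}{\usDE}$ does not change $\nu$, whence $\usP(x,\D{x})$ holds at $\nu$ iff it holds after that assignment; quantifying over all reachable $\nu$ yields the stated equivalence. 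A small edge case worth checking is $T=0$, where the relation still imposes $x'=\usDE\land\usQ\argx$ at the single point $\solvar(0)$, so both arguments go through unchanged.

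Finally I would prove the differential invariant axioms \irref{DIeq} and \irref{DIgeq}. The crux is the differential lemma: if $\solvar$ solves $\D{x}=\usDE$ on $[0,T]$, then $h(t)=\itsemst{\solvar(t)}{\usp\argx}$ is (one-sidedly, at the endpoints) differentiable with $h'(\zeta)=\itsemst{\solvar(\zeta)}{\der{\usp\argx}}$, by the chain rule $h'(\zeta)=\sum_i \frac{\partial}{\partial x_i}\itsem{\usp\argx}\big|_{\solvar(\zeta)}\cdot\frac{d\,\solvar(t)(x_i)}{dt}(\zeta)$ together with the defining ODE property $\frac{d\,\solvar(t)(x_i)}{dt}(\zeta)=\solvar(\zeta)(\D{x_i})$, the resulting sum being exactly the differential-form value $\itsemst{\solvar(\zeta)}{\der{\usp\argx}}$. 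Granting this, for the ``$\lylpmi$'' direction of \irref{DIgeq} I would assume $\usQ\argx\limply\usp\argx\cmp0$ at $\omega$ and take any reachable $\nu$ with solution $\solvar:[0,T]$: the initial point satisfies $\usQ\argx$ (checked at $\zeta=0$), so $\itsem{\usp\argx}\cmp0$; the assumed hypothesis $\usQ\argx\limply\dbox{\pevolvein{\D{x}=\usDE}{\usQ\argx}}{\der{\usp\argx}\geq0}$, applied to each truncation $\soltrunc{\zeta}$ (itself a solution, as in \rref{subsec:backgroundsemantics}), gives $h'(\zeta)\geq0$, so $h$ is non-decreasing by the mean value theorem and $\itsemst{\nu}{\usp\argx}\geq\itsem{\usp\argx}\cmp0$ — which also explains why the strict case $\usp\argx>0$ needs only the non-strict hypothesis $\der{\usp\argx}\geq0$. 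The ``$\lylpmi$'' direction of \irref{DIeq} is identical with ``$=0$'' replacing ``$\cmp0$'' and $h$ forced constant, and the converse ``$\limply$'' direction of each biimplication is immediate via the length-zero ODE run from any state satisfying $\usQ\argx$.

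The main obstacle is making the differential lemma rigorous: justifying smoothness of term values, handling the one-sided derivatives at $\zeta=0$ and $\zeta=T$, and verifying that the differential-form semantics of $\der{\usp\argx}$ literally coincides with the chain-rule expression for $h'$. Everything else is careful but elementary bookkeeping with the semantics, the only real subtlety being the consistent treatment of the degenerate $T=0$ ODE run.
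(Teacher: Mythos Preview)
Your proposal is correct and takes essentially the same approach as the paper. The paper itself does not prove this theorem but cites it from \cite{DBLP:journals/jar/Platzer17}; the only argument it records is the brief sketch of \irref{DIgeq} via the differential lemma \cite[Lemma 35]{DBLP:journals/jar/Platzer17} and the mean value theorem (just after the theorem statement, in the paragraph leading into \rref{cor:meanvalue}), which matches your treatment exactly.
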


We additionally use the Barcan axiom~\cite{DBLP:conf/lics/Platzer12b} specialized to ODEs in the diamond modality:
\[\cinferenceRule[dBarcan|B$'$]{}
{\linferenceRule[equivl]
  {\lexists{y}{\ddiamond{\pevolvein{\D{x}=\usDE}{\usQ\argx}}{\usP(x,y)}}}
  {\axkey{\ddiamond{\pevolvein{\D{x}=\usDE}{\usQ\argx}}{\exists{y}\usP(x,y)}}}
}{y \not\in x}\]
The ODE axiom \irref{dBarcan} requires the variables $y$ be fresh in $x$ ($y \not\in x$).

\irlabel{Dall|$e'$}
Syntactic differentiation under differential equations is performed using the \irref{DE} axiom along with the axioms for working with differentials~\irref{Dconst+Dvar+Dplus+Dtimes}~\cite[Lemmas 36-37]{DBLP:journals/jar/Platzer17}, and the assignment axiom~\irref{Dassignb} for differential variables. We label the exhaustive use of the differential axioms as \irref{Dall}. The following derivation is sound for any polynomial term $p$ (where $\lied[]{\genDE{x}}{p}$ is the polynomial term for the Lie derivative of $p$). We write $\sim$ for a free choice between $=$ and $\cmp$:
{\footnotesize\begin{sequentdeduction}[array]
\linfer[DE]{
\linfer[Dall+Dassignb+qear]{
  \lsequent{} {\dbox{\pevolvein{\D{x}=\genDE{x}}{\ivr}}{\lied[]{\genDE{x}}{p} \sim 0}}
}
  {\lsequent{} {\dbox{\pevolvein{\D{x}=\genDE{x}}{\ivr}}{\Dusubst{\D{x}}{\genDE{x}}{\der[]{p} \sim 0}}}}
}
  {\lsequent{} {\dbox{\pevolvein{\D{x}=\genDE{x}}{\ivr}}{\der[]{p} \sim 0}}}
\end{sequentdeduction}
}%
The \irref{Dall+Dassignb+qear} step first performs syntactic Lie derivation on $p$, and then additionally uses \irref{qear} to rearrange the resulting term into $\lied[]{\genDE{x}}{p}$ as required.
To see this more concretely, we perform the above derivation with a polynomial from the running example.
\begin{example}[Using syntactic derivations]
\label{ex:derivation}
Let $p_e \mdefeq v^2-u^2+\frac{9}{2}$, unfolding the Lie derivative, we have:
\begin{align*}
\lie[]{\alpha_e}{p_e} &= \frac{\partial p_e}{\partial u}(-v + \frac{u}{4} (1 - u^2 - v^2)) + \frac{\partial p_e}{\partial v}(u + \frac{v}{4} (1 - u^2 - v^2))\\
&= -2u(-v + \frac{u}{4} (1 - u^2 - v^2)) + 2v (u + \frac{v}{4} (1 - u^2 - v^2))\\
&= 4uv + \frac{1}{2}(1-u^2-v^2)(v^2-u^2) = \lied[]{\alpha_e}{p_e}
\end{align*}
In \dL, we have the following derivation:
{\footnotesize\begin{sequentdeduction}[array]
\linfer[DE]{
\linfer[Dall]{
\linfer[Dassignb]{
\linfer[qear]{
  \lsequent{} {\dbox{\alpha_e}{\lied[]{\alpha_e}{p_e} \sim 0}}
}
  {\lsequent{} {\dbox{\alpha_e}{2v(u + \frac{v}{4} (1 - u^2 - v^2))- 2u(-v + \frac{u}{4} (1 - u^2 - v^2))\sim 0}}}
}
  {\lsequent{} {\dbox{\alpha_e}{\Dusubst{\D{u}}{-v + \frac{u}{4} (1 - u^2 - v^2)}{\Dusubst{\D{v}}{u + \frac{v}{4} (1 - u^2 - v^2)}{2vv'-2uu' \sim 0}}}}}
}
  {\lsequent{} {\dbox{\alpha_e}{\Dusubst{\D{u}}{-v + \frac{u}{4} (1 - u^2 - v^2)}{\Dusubst{\D{v}}{u + \frac{v}{4} (1 - u^2 - v^2)}{\der[]{p_e} \sim 0}}}}}
}
  {\lsequent{} {\dbox{\alpha_e}{\der[]{p_e} \sim 0}}}
\end{sequentdeduction}
}%
Note that we needed the \irref{qear} step to rearrange the result from syntactically differentiating $p_e$ to match the expression $\lied[]{\alpha_e}{p_e}$ for the Lie derivative. Since the two notions must coincide under the ODEs, this rearrangement step is always possible.
\end{example}

We also use \dL's coincidence lemmas~\cite[Lemma 10,11]{DBLP:journals/jar/Platzer17}.
\begin{lemma}[Coincidence for terms and formulas~\cite{DBLP:journals/jar/Platzer17}]
\label{lem:coincide}
The following are coincidence properties of \dL, where free variables $\freevars{e},\freevars{\phi}$ are as defined in~\cite{DBLP:journals/jar/Platzer17}.
\begin{itemize}
\item If the states $\iget[state]{\I}, \iget[state]{\It}$ agree on the free variables of term $e$ (\m{\freevars{e}}), then $\itsemst{\iget[state]{\I}}{\polyn{e}{x}} = \itsemst{\iget[state]{\It}}{\polyn{e}{x}}$.
\item If the states $\iget[state]{\I}, \iget[state]{\It}$ agree on the free variables of formula $\phi$ (\m{\freevars{\phi}}), then $\iget[state]{\I} \in \ifsem{\phi}$ iff $\iget[state]{\It} \in \ifsem{\phi}$.
\end{itemize}
\end{lemma}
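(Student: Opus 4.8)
The plan is to prove both items simultaneously by structural induction, strengthened with a companion coincidence property for the transition relation of hybrid programs, following the development of the coincidence lemmas in~\cite{DBLP:journals/jar/Platzer17}. Throughout write $\omega \mathrel{:=} \iget[state]{\I}$ and $\nu \mathrel{:=} \iget[state]{\It}$. The two items are genuinely interdependent: atomic formulas $e_1\sim e_2$ and predicate applications $\usP(e_1,\dots,e_k)$ reduce to the term claim, while the modal formulas $\dbox{\alpha}{\fvar}$ and $\ddiamond{\alpha}{\fvar}$ reduce to the following program claim, which I would prove by a third, parallel induction on $\alpha$: if $\omega,\nu$ agree on $\freevars{\alpha}$ and $(\omega,\tilde\omega)\in\ipsem{\alpha}$, then there is $\tilde\nu$ with $(\nu,\tilde\nu)\in\ipsem{\alpha}$ such that $\tilde\omega$ and $\tilde\nu$ agree on $\freevars{\alpha}$ together with every variable that $\alpha$ does not change.

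For the term claim I would induct on the structure of $e$. The variable case $e=x$ is immediate since $\freevars{x}=\{x\}$; the rational constant case is immediate; and the cases $e_1+e_2$, $e_1\cdot e_2$, and $\usf(e_1,\dots,e_k)$ follow because $\freevars{e}$ is the union of the free variables of the immediate subterms, so agreement on $\freevars{e}$ restricts to agreement on each subterm, whence the inductive hypothesis and the compositional definition of $\itsem{\cdot}$ (with the interpretation $I$ of $\usf$ held fixed) give equality. The one delicate case is the differential term $\der{e}$: here $\freevars{\der{e}}$ comprises the variables free in $e$ together with their differential variables, and the differential-form semantics $\itsem{\der{e}}=\sum_{x}\omega(\D{x})\,\frac{\partial}{\partial x}\itsem{e}$ must be shown to depend only on those. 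This follows from an easy auxiliary induction showing $\itsem{e}$ is a smooth function of the values of $\freevars{e}$, so that each partial derivative $\frac{\partial}{\partial x}\itsem{e}$ is again a function of the values of $\freevars{e}$ only, while the extra coefficients $\omega(\D{x})$ involve exactly the differential variables placed in $\freevars{\der{e}}$.

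For the formula claim I would induct on the structure of $\fvar$. Atomic comparisons and predicate applications are handled by the term claim just established; the connectives $\land,\lor,\lnot$ are immediate from the inductive hypothesis and the fact that $\freevars{\cdot}$ distributes over them; for $\lforall{x}{\fvar}$ and $\lexists{x}{\fvar}$ one uses $\freevars{\lforall{x}{\fvar}}=\freevars{\fvar}\setminus\{x\}$, so that for every real $r$ the states obtained from $\omega$ and $\nu$ by reassigning $x\mapsto r$ agree on all of $\freevars{\fvar}$ and the inductive hypothesis applies uniformly in $r$. For $\dbox{\alpha}{\fvar}$ and $\ddiamond{\alpha}{\fvar}$, agreement on $\freevars{\dbox{\alpha}{\fvar}}\supseteq\freevars{\alpha}$ lets the program claim transport each $\alpha$-successor of $\omega$ to an $\alpha$-successor of $\nu$ that agrees on $\freevars{\alpha}$ and on the unchanged variables, hence on $\freevars{\fvar}$, and the inductive hypothesis on $\fvar$ finishes it.

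The main obstacle is the program claim for the continuous case $\pevolvein{\D{x}=\genDE{x}}{\ivr}$. Given a solution $\solvar$ from $\omega$ on $[0,T]$, one must build a solution from $\nu$ that tracks it on the relevant variables. Since $\genDE{x}$ and $\ivr$ mention only variables in $\freevars{\pevolvein{\D{x}=\genDE{x}}{\ivr}}$, on which $\omega$ and $\nu$ already agree, both solutions obey literally the same differential equation and domain constraint on $x$; uniqueness of solutions (Picard-Lindel\"{o}f, as already invoked in \rref{lem:uniqcont}) then forces agreement on $x$ and $\D{x}$ throughout, and the ODE semantics holds all remaining variables constant so they stay equal as well. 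The discrete constructs $\pumod{x}{e}$, $\ptest{\fvar}$, $\pchoice{\alpha}{\beta}$, $\alpha;\beta$, and $\prepeat{\alpha}$ are routine variable bookkeeping by the same inductive hypothesis. Since this is exactly the content of the coincidence lemmas of~\cite{DBLP:journals/jar/Platzer17}, I would, in the write-up, discharge the tedious cases by citing that development and only spell out the differential term and ODE cases in the detail above.
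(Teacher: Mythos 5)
The paper does not prove this lemma; it imports it verbatim (as Lemmas~10--11) from~\cite{DBLP:journals/jar/Platzer17}, so there is no in-paper proof to compare against. Your reconstruction follows the same route taken in that reference: a simultaneous structural induction over terms, formulas, and programs with a strengthened program-coincidence claim, the differential-term case handled through the differential-form semantics, and the ODE case through Picard--Lindel\"of uniqueness; that is correct in outline.

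One slip to repair: as stated, your program claim cannot conclude that $\tilde\omega$ and $\tilde\nu$ agree on every variable that $\alpha$ does not change when the hypothesis only gives agreement on $\freevars{\alpha}$. A variable outside $\freevars{\alpha}$ that $\alpha$ leaves untouched may already differ between $\omega$ and $\nu$ and will simply continue to differ, so the promised agreement set is too large. The claim must be parameterized over a set $V \supseteq \freevars{\alpha}$ on which $\omega,\nu$ agree, and conclude agreement on $V$ together with the must-bound variables of $\alpha$. Your application in the $\dbox{\alpha}{\fvar}$ and $\ddiamond{\alpha}{\fvar}$ cases implicitly uses this corrected version with $V = \freevars{\dbox{\alpha}{\fvar}}$, so the downstream argument stands once the lemma statement is fixed.
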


We prove generalized versions of axioms from~\cite{DBLP:journals/jar/Platzer17}. These are the vectorial differential ghost axioms (\irref{DG} and \irref{DGall})\footnote{We do not actually need \irref{DGall} in this paper. We prove it for completeness, because~\cite{DBLP:journals/jar/Platzer17} proves a similar axiom for single variable \irref{DG}.} which were proved only for the single variable case, and the differential modus ponens axiom, \irref{DMP}, which was specialized for differential cuts.
\begin{lemma}[Generalized axiom soundness]
\label{lem:dgdmp}
The following axioms are sound. Note that $\vec{y}$ is an $m$-dimensional vector of variables, $\D{\vec{y}}$ is its corresponding vector of differential variables, and $\usa\argx$ (resp.~$\usb\argx$) is an $m \times m$ matrix (resp.~$m$-dimensional vector) of function symbols.

\begin{calculus}
\cinferenceRule[DG|DG]{vectorial differential ghost variables}
{\linferenceRule[equivl]
  {\lexists{\vec{y}}{\dbox{\pevolvein{\D{x}=\usDE,\D{\vec{y}}=\usa\argx\stimes \vec{y}+\usb\argx}{\usQ\argx}}{\usP\argx}}}
  {\axkey{\dbox{\pevolvein{\D{x}=\usDE}{\usQ\argx}}{\usP\argx}}}
}{}
\cinferenceRule[DGall|DG$_\forall$]{vectorial differential ghost variables, universally}
{\linferenceRule[equivl]
  {\lforall{\vec{y}}{\dbox{\pevolvein{\D{x}=\usDE,\D{\vec{y}}=\usa\argx\stimes \vec{y}+\usb\argx}{\usQ\argx}}{\usP\argx}}}
  {\axkey{\dbox{\pevolvein{\D{x}=\usDE}{\usQ\argx}}{\usP\argx}}}
}{}
\cinferenceRule[DMP|DMP]{differential modus ponens}
{\linferenceRule[impll]
  {\dbox{\pevolvein{\D{x}=\usDE}{\usQ\argx}}{(\usQ\argx \limply \usR\argx)}}
  {(\dbox{\pevolvein{\D{x}=\usDE}{\usR\argx}}{\usP\argx} \limply \axkey{\dbox{\pevolvein{\D{x}=\usDE}{\usQ\argx}}{\usP\argx}})}
}{}
\end{calculus}
\end{lemma}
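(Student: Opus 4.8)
The three claimed axioms are semantic statements about ODE transitions, so the plan is to prove each directly from the transition semantics of $\pevolvein{\D{x}=\usDE}{\usQ\argx}$ recalled in \rref{app:axiomatization}, using the coincidence lemma \rref{lem:coincide} to exploit the freshness of $\vec{y}$ and $\D{\vec{y}}$, and citing the single-variable results of~\cite{DBLP:journals/jar/Platzer17} only for orientation. For \irref{DG} and \irref{DGall} everything reduces to a pair of transfer facts relating the solutions of $\D{x}=\usDE$ to the solutions of the extended system $\D{x}=\usDE, \D{\vec{y}}=\usa\argx\stimes\vec{y}+\usb\argx$, after which both equivalences fall out by the usual quantifier manipulations (a universal implies the matching existential over the nonempty domain $\reals^m$).

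\emph{Projection.} I would first observe that any solution of the extended system, read off on its $x$-components, is a solution of $\D{x}=\usDE$ staying in $\usQ\argx$: since $\usDE$ and $\usQ\argx$ mention neither $\vec{y}$ nor $\D{\vec{y}}$, by \rref{lem:coincide} the $x$-part evolves and the domain is checked exactly as in the original system. Because $\usP\argx$ likewise depends only on $x$, if the original system satisfies $\dbox{\pevolvein{\D{x}=\usDE}{\usQ\argx}}{\usP\argx}$ in a state $\omega$, then the extended system satisfies $\dbox{\cdot}{\usP\argx}$ in $\omega$ for \emph{every} choice of initial value for $\vec{y}$. This already yields the ``$\limply$'' directions of both \irref{DG} and \irref{DGall}.

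\emph{Extension.} Conversely, given any solution $\varphi\colon[0,T]\to\reals^n$ of $\D{x}=\usDE$ staying in $\usQ\argx$ and any initial vector $\vec{d}$, the strategy is to solve the \emph{linear} initial value problem $\D{\vec{y}}(t)=\usa(\varphi(t))\stimes\vec{y}(t)+\usb(\varphi(t))$ with $\vec{y}(0)=\vec{d}$: because $\usa,\usb$ are polynomials and $\varphi$ is continuous, the coefficient maps $t\mapsto\usa(\varphi(t))$ and $t\mapsto\usb(\varphi(t))$ are continuous, hence bounded, on the compact interval $[0,T]$, so the standard a priori bound from Gr\"onwall's inequality shows the maximal solution cannot blow up before $T$ and a solution $\psi\colon[0,T]\to\reals^m$ exists on the full interval. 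Then $(\varphi,\psi)$ is a solution of the extended system staying in $\usQ\argx$ (again by \rref{lem:coincide}), so if the extended system started with $\vec{y}=\vec{d}$ satisfies $\dbox{\cdot}{\usP\argx}$ then $\usP\argx$ holds at $\varphi(T)$, hence at the end of $\varphi$. Instantiating $\vec{d}$ with the witness supplied by the existential (resp.\ with any value, for the universal) gives the ``$\lylpmi$'' directions of \irref{DG} and \irref{DGall}. The remaining differential-variable bookkeeping --- that a \dL solution holds all variables other than $x,\vec{y},\D{x},\D{\vec{y}}$ constant and matches each differential variable to the corresponding right-hand side --- is routine given the semantics of~\cite{DBLP:journals/jar/Platzer17}.

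For \irref{DMP} I would argue exactly as for \irref{dC}. Assuming both antecedents hold in $\omega$, take any solution $\varphi\colon[0,T]\to\reals^n$ of $\D{x}=\usDE$ staying in $\usQ\argx$; for each $\zeta\in[0,T]$ the truncation $\soltrunc{\zeta}$ is again a solution staying in $\usQ\argx$ by prefix-closure of solutions, so the first antecedent forces $\usQ\argx\limply\usR\argx$ at $\varphi(\zeta)$, and since $\varphi(\zeta)\models\usQ\argx$ we get $\varphi(\zeta)\models\usR\argx$. Thus $\varphi$ stays in $\usR\argx$ for its entire duration, so the second antecedent yields $\usP\argx$ at $\varphi(T)$, which is the conclusion. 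I expect the \emph{Extension} step to be the main obstacle: global, non-blow-up existence of the ghost solution $\psi$ on the \emph{same} interval as $\varphi$ is precisely what linearity of the new ODE buys, and is why the linearity restriction on \irref{DG} is soundness-critical; the vectorial generalization needs only the matrix form of the Gr\"onwall estimate and no new ideas beyond the single-variable argument, and it leaves \emph{Projection} and the \irref{DMP} argument entirely unchanged.
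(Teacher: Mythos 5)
Your proposal is correct and takes essentially the same approach as the paper's proof: the ``$\limply$'' direction of \irref{DG}/\irref{DGall} by projecting an extended solution back to the base system (holding the fresh $\vec{y},\D{\vec{y}}$ constant and invoking \rref{lem:coincide}), the ``$\lylpmi$'' direction by extending a base solution with the time-varying linear IVP in $\vec{y}$ whose solution is guaranteed to exist on the entire interval $[0,T]$, and \irref{DMP} by prefix-closure of solutions to propagate $\usR\argx$ along the trajectory before applying the second antecedent. The only cosmetic difference is that you re-derive global existence of the ghost solution from boundedness of the coefficients plus a Gr\"onwall estimate, where the paper simply cites~\cite[\S14.VI]{Walter1998} for linear ODEs; this is the same fact, and you correctly identify it as the soundness-critical role of linearity.
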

\begin{proof}
We use $\omega$ for the initial state, and $\nu$ for the state reached at the end of a continuous evolution. The valuations for matrix and vectorial terms are applied component-wise.

We first prove vectorial \irref{DG} and \irref{DGall}. Our proof is specialized to ODEs that are (inhomogeneous) linear in $\vec{y}$.
We only need to prove the ``$\limply$'' direction for \irref{DGall}, because $\lforall{\vec{y}}{\fvar}$ implies $\lexists{\vec{y}}{\fvar}$ over the reals, and so we get the ``$\limply$'' direction for \irref{DG} from the ``$\limply$'' direction of \irref{DGall}. Conversely, we only need to prove the ``$\lylpmi$'' direction for \irref{DG}, because the ``$\lylpmi$'' direction for \irref{DGall} follows from it.

\begin{enumerate}
\item[``$\limply$''] We need to show the RHS of \irref{DGall} assuming its LHS. Let $\omega_\vec{y}$ be identical to $\omega$ except where the values for variables $\vec{y}$ are replaced with any initial values $d \in \reals^m$. Consider any solution $\solvar_\vec{y} : [0,T] \to \States$ where $\solvar_\vec{y}(0) = \omega_\vec{y}$ on $\scomplement{\{\D{x},\D{\vec{y}}\}}$, $\solvar_\vec{y}(T) = \nu$, and $\solmodels{\solvar_\vec{y}}{\pevolvein{\D{x}=\usDE,\D{\vec{y}}=\usa\argx\stimes \vec{y}+\usb\argx}{\usQ\argx}}$.

Define $\solvar : [0,T] \to \States$ satisfying:
\[ \solvar(t)(z) \mdefeq
  \begin{cases}
  \solvar_\vec{y}(t)(z)  & z \in  \scomplement{\{\vec{y},\D{\vec{y}}\}}\\
  \omega(z)   & z \in \{\vec{y},\D{\vec{y}}\}
  \end{cases} \]
In other words, $\solvar$ is identical to $\solvar_\vec{y}$ except it holds all of $\vec{y},\D{\vec{y}}$ constant at their initial values in $\omega$. By construction, $\solvar(0)=\omega$ on $\scomplement{\{\D{x}\}}$, and moreover, because $\vec{y}$ is fresh \ie, not mentioned in $\usQ\argx,\usDE$, by~\rref{lem:coincide}, we have that:
\[\solmodels{\solvar}{\pevolvein{\D{x}=\usDE}{\usQ\argx}}\]

Therefore, $\solvar(T) \in \ifsem{\usP\argx}$ from the LHS of \irref{DGall}. Since $\solvar(T)$ coincides with $\solvar_\vec{y}(T)$ on $x$ (since $\vec{y}$ is fresh), by~\rref{lem:coincide} we also have $\solvar_\vec{y}(T) \in \ifsem{\usP\argx}$ as required.

\item[``$\lylpmi$''] We need to show the LHS of \irref{DG} assuming its RHS. Consider a solution $\solvar : [0,T] \to \States$ where $\solvar(0) = \omega$ on $\scomplement{\{\D{x}\}}$, $\solvar(T)=\nu$, and $\solmodels{\solvar}{\pevolvein{\D{x}=\usDE}{\usQ\argx}}$.
Let $\solvar_a(t) \mdefeq I\solvar(t)\sem{\usa\argx}$, and $\solvar_b(t) \mdefeq I\solvar(t)\sem{\usb\argx}$ be the valuation of $\usa\argx,\usb\argx$ along $\solvar$ respectively. Recall that $\solvar_a : [0,T] \to \reals^m \times \reals^m$ and $\solvar_b : [0,T] \to \reals^m$.

By~\cite[Definition 5]{DBLP:journals/jar/Platzer17}, $\solvar_a(t) = I(\usa)\big(I(\solvar(t))(x)\big)$, where $I(\usa)$ is continuous (and similarly for $\solvar_b(t)$). Since $\solvar$ is a continuous function in $t$, both $\solvar_a(t),\solvar_b(t)$ are compositions of continuous functions, and are thus, also continuous functions in $t$. Consider the $m$-dimensional initial value problem:
\[\D{\vec{y}} = \solvar_a(t) \vec{y} + \solvar_b(t),\quad \vec{y}(0) = \itsem{\vec{y}} \]
By \cite[\S14.VI]{Walter1998}, there exists a unique solution $\psi : [0,T] \to \reals^m$ for this system that is defined on the \emph{entire} interval $[0,T]$.
Therefore, we may construct the extended solution $\solvar_\vec{y}$ satisfying:
\[ \solvar_\vec{y}(t)(z) \mdefeq
  \begin{cases}
  \solvar(t)(z)  & z \in  \scomplement{\{\vec{y},\D{\vec{y}}\}} \\
  \psi(t)(z)     & z \in \vec{y} \\
  \frac{d\psi(t)(w)}{dt}     & z = \D{w} \in \D{\vec{y}}
  \end{cases} \]
By definition, $\solvar_\vec{y}(0) = \omega$ on $\scomplement{\{\D{x},\D{\vec{y}}\}}$, and by construction and~\rref{lem:coincide},
\[\solmodels{\solvar_\vec{y}}{\pevolvein{\D{x}=\usDE,\D{\vec{y}}=\usa\argx\stimes \vec{y}+\usb\argx}{\usQ\argx}}\]
Thus, we have $\solvar_\vec{y}(T)\in \ifsem{\usP\argx}$ from the RHS of \irref{DG}. Since $\solvar(T)$ coincides with $\solvar_\vec{y}(T)$ on $x$, again by~\rref{lem:coincide} we have $\nu = \solvar(T) \in \ifsem{\usP\argx}$ as required.
\end{enumerate}

To prove soundness of \irref{DMP} consider any initial state $\omega$ satisfying
\begin{align*}
&\textcircled{1}~\omega \in \ifsem{\dbox{\pevolvein{\D{x}=\usDE}{\usQ\argx}}{(\usQ\argx \limply \usR\argx)}}, \text{and}\\
&\textcircled{2}~\omega \in \ifsem{\dbox{\pevolvein{\D{x}=\usDE}{\usR\argx}}{\usP\argx}}
\end{align*}
We need to show $\omega \in \ifsem{\dbox{\pevolvein{\D{x}=\usDE}{\usQ\argx}}{\usP\argx}}$, \ie, for any solution $\solvar : [0,T] \to \States$ where $\solvar(0) = \omega$ on $\scomplement{\{\D{x}\}}$, and $\solmodels{\solvar}{\pevolvein{\D{x}=\usDE}{\usQ\argx}}$, we have $\solvar(T) \in \ifsem{\usP\argx}$. By definition, we have $\solvar(t) \in \ifsem{\usQ\argx}$ for $t \in [0,T]$, but by \textcircled{1}, we also have that $\solvar(t) \in \ifsem{\usQ\argx \limply \usR\argx}$ for all $t \in [0,T]$. Therefore, $\solvar(t) \in \ifsem{\usR\argx}$ for all $t \in [0,T]$, and hence, $\solmodels{\solvar}{\pevolvein{\D{x}=\usDE}{\usR\argx}}$. Thus, by \textcircled{2}, we have $\solvar(T) \in \ifsem{\usP\argx}$ as required.
\end{proof}

Using the axiomatization from \rref{thm:diffeqaxiomatization} and \rref{lem:dgdmp}, we now derive all of the rules shown in \rref{thm:diffeqax}.

\begin{proof}[Proof of \rref{thm:diffeqax}]
For each rule, we show a derivation from the \dL axioms. The open premises in these derivations correspond to the open premises for each rule.

\begin{itemize}
\item[\irref{dW}]
By \irref{DMP} we obtain two premises corresponding to the two formulas on the left of its implications. The right premise closes using \irref{DW}. The left premise uses \irref{G}, which leaves the open premise of \irref{dW}.
{\footnotesize
\begin{sequentdeduction}[array]
\linfer[DMP]{
  \linfer[G+implyr]{
    \lsequent{\ivr}{\rfvar}
  }
  {\lsequent{\Gamma}{\dbox{\pevolvein{\D{x}=\genDE{x}}{\ivr}}{(\ivr \limply \rfvar)}}} !
  \linfer[DW]{\lclose}
  {\lsequent{\Gamma}{\dbox{\pevolvein{\D{x}=\genDE{x}}{\rfvar}}{\rfvar}}}
}
  {\lsequent{\Gamma}{\dbox{\pevolvein{\D{x}=\genDE{x}}{\ivr}}{\rfvar}}}
\end{sequentdeduction}
}%
\item[\irref{dIgeq}] This rule follows from the \irref{DIgeq} axiom, and also using the equivalence between Lie derivatives and differentials within the context of the ODEs.
{\footnotesize
\renewcommand{\arraystretch}{1.3}
\begin{sequentdeduction}[array]
\linfer[cut]{
  \linfer[testb+implyr]{\lsequent{\Gamma,\ivr}{p\cmp0}}
  {\lsequent{\Gamma}{\dbox{?\ivr}{p\cmp0}}}
  !
  \linfer[cut+DIgeq]{
  \linfer[DE+Dall+qear]{
  \linfer[dW+implyr]{
    \lsequent{\ivr}{\lied[]{\genDE{x}}{p}\geq0}
  }
    {\lsequent{\Gamma,\ivr}{\dbox{\pevolvein{\D{x}=\genDE{x}}{\ivr}}{\lied[]{\genDE{x}}{p}\geq0}}}
  }
    {\lsequent{\Gamma,\ivr}{\dbox{\pevolvein{\D{x}=\genDE{x}}{\ivr}}{\D{(p)}\geq0}}}
  }
  {\lsequent{\Gamma,\dbox{?\ivr}{p\cmp0}}{\dbox{\pevolvein{\D{x}=\genDE{x}}{\ivr}}{p\cmp0}}}
}
  {\lsequent{\Gamma}{\dbox{\pevolvein{\D{x}=\genDE{x}}{\ivr}}{p\cmp0}}}
\end{sequentdeduction}
}%
\item[\irref{dIeq}] The derivation is similar to \irref{dIgeq} using \irref{DIeq} instead of \irref{DIgeq}.
\irref{dIeq} also derives from \irref{dIgeq} using \irref{band} with the equivalence $p=0 \lbisubjunct p\geq0 \land p \leq 0$.

\item[\irref{dC}] We cut in a premise with postcondition $\ivr \limply (\ivr \land \rcfvar)$ and then reduce this postcondition to $\rcfvar$ by \irref{Mb} using the propositional tautology $\rcfvar \limply (\ivr \limply (\ivr \land \rcfvar))$. The right premise after the cut is abbreviated by \textcircled{1}.
{\footnotesize
\begin{sequentdeduction}[array]
\linfer[cut]{
  \linfer[Mb]{
    \lsequent{\Gamma}{\dbox{\pevolvein{\D{x}=\genDE{x}}{\ivr}}{\rcfvar}}
  }
  {\lsequent{\Gamma}{\dbox{\pevolvein{\D{x}=\genDE{x}}{\ivr}}{(\ivr \limply (\ivr \land \rcfvar))}}} !
  \textcircled{1}
  }
  {\lsequent{\Gamma}{\dbox{\pevolvein{\D{x}=\genDE{x}}{\ivr}}{\rfvar}}}
\end{sequentdeduction}
}%
Continuing on \textcircled{1} we use \irref{DMP} to refine the domain constraint, which leaves open the remaining premise of \irref{dC}:
{\footnotesize
\begin{sequentdeduction}[array]
\linfer[DMP]{
    \lsequent{\Gamma}{\dbox{\pevolvein{\D{x}=\genDE{x}}{\ivr \land \rcfvar}}{\rfvar}}
  }{\lsequent{\Gamma,\dbox{\pevolvein{\D{x}=\genDE{x}}{\ivr}}{(\ivr \limply (\ivr \land \rcfvar))}}{\dbox{\pevolvein{\D{x}=\genDE{x}}{\ivr}}{\rfvar}}}
\end{sequentdeduction}
}%

\item[\irref{dG}] This derives by rewriting the RHS with (vectorial) \irref{DG}.
{\footnotesize
\begin{sequentdeduction}[array]
\linfer[DG]{
  \lsequent{\Gamma}{\lexists{\vec{y}}{\dbox{\pevolvein{\D{x}=\genDE{x},\D{\vec{y}}=a(x)\stimes \vec{y}+b(x)}{\ivr}}{\rfvar}}}
}
  {\lsequent{\Gamma}{\dbox{\pevolvein{\D{x}=\genDE{x}}{\ivr}}{\rfvar}}}
\end{sequentdeduction}
}\qedhere%
\end{itemize}
\end{proof}

The soundness of \irref{DIgeq} is proved \cite[Theorem 38]{DBLP:journals/jar/Platzer17} from the mean value theorem as follows. Briefly, consider any solution \m{\solvar : [0,T] \to \States}, and let \m{\solvar_p(t) \mdefeq I\solvar(t)\sem{p}} be the value of $p$ along $\solvar$. We may, without loss of generality, assume $T>0$, and $\solvar_p(0) \cmp 0$. By assumption on the left of the implication in \irref{DIgeq}, $(\usp\argx)'\geq0$, which, by the differential lemma~\cite[Lemma 35]{DBLP:journals/jar/Platzer17}, means $\frac{d\solvar_p(t)}{dt}(\zeta) \geq 0$ for $0 \leq \zeta \leq T$. The mean value theorem implies \(\solvar_p(T) - \solvar_p(0) = \frac{d\solvar_p(t)}{dt}(\zeta)(T-0)\) for some $0 < \zeta < T$. Since $\solvar_p(0) \cmp 0, T>0$, and $\frac{d\solvar_p(t)}{dt}(\zeta) \geq 0$, we have $\solvar_p(T) \cmp 0$ as required.
Conversely, a logical version of the mean value theorem derives from \irref{DIgeq}:
\begin{corollary}[Mean value theorem]
\label{cor:meanvalue}
The following analogue of the mean value theorem derives from \irref{DIgeq}:
\[
\dinferenceRule[MVT|MVT]{}
{\linferenceRule[impl]
  {p \geq 0 \land \ddiamond{\pevolvein{\D{x}=\genDE{x}}{\ivr}}{p<0}}
  {\ddiamond{\pevolvein{\D{x}=\genDE{x}}{\ivr}}{\der[]{p}<0}}
}{}
\]
\end{corollary}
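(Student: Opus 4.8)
The plan is to derive \irref{MVT} from \irref{DIgeq} by a short contrapositive argument, with no differential-to-Lie-derivative bookkeeping needed since both \irref{MVT} and \irref{DIgeq} are already phrased in terms of the differential $\der[]{p}$ rather than the syntactic Lie derivative $\lied[]{\genDE{x}}{p}$. Working in sequent form, after \irref{implyr} and \irref{andl} the goal is $\lsequent{p\geq0,~\ddiamond{\pevolvein{\D{x}=\genDE{x}}{\ivr}}{p<0}}{\ddiamond{\pevolvein{\D{x}=\genDE{x}}{\ivr}}{\der[]{p}<0}}$. I would \irref{cut} in the box formula $\dbox{\pevolvein{\D{x}=\genDE{x}}{\ivr}}{\der[]{p}\geq0}$, which is exactly the negation of the succedent. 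On the branch where this formula appears in the succedent, the \irref{diamond} axiom rewrites it to $\lnot\ddiamond{\pevolvein{\D{x}=\genDE{x}}{\ivr}}{\der[]{p}<0}$, which closes against the succedent $\ddiamond{\pevolvein{\D{x}=\genDE{x}}{\ivr}}{\der[]{p}<0}$ via \irref{notr} and \irref{id}. The other branch carries the cut formula as an extra antecedent and is where the real work happens.

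On that branch I would recover $\dbox{\pevolvein{\D{x}=\genDE{x}}{\ivr}}{p\geq0}$ from $\dbox{\pevolvein{\D{x}=\genDE{x}}{\ivr}}{\der[]{p}\geq0}$ and $p\geq0$ using \irref{DIgeq}, instantiated with the polynomial $p$ and with the free choice $\cmp$ resolved to $\geq$ (so the recovered postcondition matches the non-strict hypothesis rather than forcing a strict inequality). The left-hand side of the implication in \irref{DIgeq}, namely $\ivr\limply\dbox{\pevolvein{\D{x}=\genDE{x}}{\ivr}}{\der[]{p}\geq0}$, follows from the cut-in formula by weakening. Then \irref{DIgeq} yields $\dbox{\pevolvein{\D{x}=\genDE{x}}{\ivr}}{p\geq0}\lbisubjunct\dbox{\ptest{\ivr}}{p\geq0}$, whose right conjunct unfolds by \irref{testb} to $\ivr\limply p\geq0$, which is \irref{qear}-provable from the assumption $p\geq0$; hence $\dbox{\pevolvein{\D{x}=\genDE{x}}{\ivr}}{p\geq0}$. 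By the \irref{diamond} axiom this is $\lnot\ddiamond{\pevolvein{\D{x}=\genDE{x}}{\ivr}}{p<0}$, which contradicts the antecedent $\ddiamond{\pevolvein{\D{x}=\genDE{x}}{\ivr}}{p<0}$, so the branch closes with \irref{notl} and \irref{id}.

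There is no genuine obstacle in this derivation; it amounts to reading \irref{DIgeq} contrapositively. The two points deserving care are (i) choosing $\cmp$ to be $\geq$ in \irref{DIgeq} so that the recovered postcondition lines up with the non-strict hypothesis $p\geq0$, and (ii) observing that because the succedent of \irref{MVT} already mentions the differential $\der[]{p}$, the usual \irref{DE}/\irref{Dall} conversion steps that accompany differential-invariant reasoning are not needed here.
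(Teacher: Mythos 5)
Your proposal is correct and takes essentially the same route as the paper's proof: dualize the diamond modalities into box form, massage the hypotheses into the shapes DIgeq expects (an implication for its left side and a tested postcondition via \irref{testb}), and apply \irref{DIgeq} with $\cmp$ instantiated to $\geq$. The only cosmetic difference is that you set up the dualization via an early \irref{cut} of the negated succedent, whereas the paper flips the whole sequent up front in a single \irref{andl}+\irref{diamond}+\irref{notl}+\irref{notr} step and cuts afterwards; your observation that no \irref{DE}/\irref{Dall} conversion is needed because both MVT and DIgeq already speak of the differential $\der[]{p}$ is also exactly right.
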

\begin{proof}
This follows immediately by taking contrapositives, dualizing with \irref{diamond}, and then applying \irref{DIgeq}. In the \irref{cut} steps, we weakened the antecedents to implications since $\rfvar \limply (\ivr \limply \rfvar)$ is a propositional tautology for any formula $\rfvar$.
{\footnotesize
\begin{sequentdeduction}[array]
\linfer[andl+diamond+notl+notr]{
\linfer[cut+testb]{
\linfer[cut]{
\linfer[DIgeq]{
  \lclose
}
  {\lsequent{\dbox{\ptest{\ivr}}{p\geq 0},\ivr \limply \dbox{\pevolvein{\D{x}=\genDE{x}}{\ivr}}{\der[]{p}\geq0}}{\dbox{\pevolvein{\D{x}=\genDE{x}}{\ivr}}{p\geq0}}}
}
  {\lsequent{\dbox{\ptest{\ivr}}{p\geq 0},\dbox{\pevolvein{\D{x}=\genDE{x}}{\ivr}}{\der[]{p}\geq0}}{\dbox{\pevolvein{\D{x}=\genDE{x}}{\ivr}}{p\geq0}}}
}
  {\lsequent{p \geq 0,\dbox{\pevolvein{\D{x}=\genDE{x}}{\ivr}}{\der[]{p}\geq0}}{\dbox{\pevolvein{\D{x}=\genDE{x}}{\ivr}}{p\geq0}}}
}
  {\lsequent{p \geq 0 \land \ddiamond{\pevolvein{\D{x}=\genDE{x}}{\ivr}}{p<0}}{\ddiamond{\pevolvein{\D{x}=\genDE{x}}{\ivr}}{\der[]{p}<0}}}
\end{sequentdeduction}
}%
\end{proof}
Intuitively, this version of the mean value theorem asserts that if $p$ changes sign from $p\geq 0$ to $p<0$ along a solution, then its (Lie) derivative must have been negative somewhere along the solution.

\subsection{Extended Axiomatization}
\label{app:extaxiomatization}

We prove the soundness of the axioms shown in \rref{sec:extaxioms} in uniform-substitution style after restating them as concrete \dL formula instances. The axioms of \rref{sec:extaxioms} then derive as uniform substitution instances, because we only use them for concrete instances where the predicates involved ($\rfvar,\ivr$) mention all (proper) variables $x$ changing in the respective system $\D{x}=\genDE{x}$.

For these proofs we will often need to take truncations of solutions $\soltrunc{t}$ (defined in \rref{subsec:backgroundsemantics}). For any solution $\solvar : [0,T] \to \States$, we write $\solvar([a,b]) \in \ifsem{\rfvar}$ to mean $\solvar(\zeta) \in \ifsem{\rfvar}$ for all $a \leq \zeta \leq b$. We use $\solvar((a,b))$ instead when the interval is open, and similarly for the half-open cases. For example, if $\solvar$ obeys the evolution domain constraint $\ivr$ on the interval $[0,T]$, we write $\solvar([0,T]) \in \ifsem{\ivr}$. We will only use this notation when $[a,b]$ is a subinterval of $[0,T]$.

As explained in \rref{sec:extaxioms}, the soundness of the extended axioms require that the system \m{\D{x}=\genDE{x}} always locally evolves $x$. In a uniform substitution formulation for \irref{ContAx+RealIndInAx}, the easiest syntactic check ensuring this condition is that the system contains an equation \(\D{x_1}=1\). But our proofs are more general and only use the assumption that the system locally evolves $x$.
The requirement that \(\D{x_1}=1\) occurs is minor, since such a clock variable can always be added using \irref{DG} if necessary before using the axioms. We elide these \irref{DG} steps for subsequent derivations.

\subsubsection{Existence, Uniqueness, and Continuity}

We prove soundness for concrete versions of the axioms in \rref{lem:uniqcont}.

\begin{lemma}[Continuous existence, uniqueness, and differential adjoints for \rref{lem:uniqcont}]
\label{lem:uniqcontUS}
The following axioms are sound.

\begin{calculus}
\cinferenceRule[UniqAx|Uniq]{}
{\linferenceRule[impll]
  {\big(\ddiamond{\pevolvein{\D{x}=\usDE}{\usQ_1\argx}}{\usP_1\argx}\big) \land
   \big(\ddiamond{\pevolvein{\D{x}=\usDE}{\usQ_2\argx}}{\usP_2\argx}\big) }
  {\axkey{\ddiamond{\pevolvein{\D{x}=\usDE}{\usQ_1\argx \land \usQ_2\argx}}{(\usP_1\argx \lor \usP_2\argx)}}}
}{}

\cinferenceRule[ContAx|Cont]{}
{
 \linferenceRule[impl]
  {x = y}
  {\big(\usg\argx>0 \limply \axkey{\ddiamond{\pevolvein{\D{x}=\genDE{x}}{\usg\argx > 0}}{x \not= y}}\big)}
}{}

\cinferenceRule[DadjointAx|Dadj]{}
{\linferenceRule[equiv]
  {\ddiamond{\pevolvein{\D{y}=-\genDE{y}}{\ivr(y)}}{\,y=x}}
  {\axkey{\ddiamond{\pevolvein{\D{x}=\genDE{x}}{\ivr(x)}}{\,x=y}}}
}{}
\end{calculus}
\end{lemma}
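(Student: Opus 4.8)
The plan is to establish soundness of each of the three concrete axioms in \rref{lem:uniqcontUS} semantically, by exhibiting or transforming solutions of the relevant differential equations; the axioms of \rref{lem:uniqcont} then follow at once, since the uniform substitutions instantiating them are admissible and only substitute the predicates $\ivr,\rfvar$ by formulas over the proper variables of the respective ODE. Throughout I would invoke the Picard--Lindel\"of theorem~\cite[\S10.VI]{Walter1998} (polynomial right-hand sides are locally Lipschitz, so initial value problems have unique, maximally extendable solutions) and the coincidence lemma (\rref{lem:coincide}) to move truth of terms and formulas between states agreeing on the relevant free variables.

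For \irref{UniqAx}, I would fix an initial state $\omega$ validating the antecedent, obtaining solutions $\solvar_1:[0,T_1]\to\States$ and $\solvar_2:[0,T_2]\to\States$ of $\D{x}=\usDE$ from $\omega$ (off $\D{x}$) with $\solvar_k([0,T_k])\in\ifsem{\usQ_k\argx}$ and $\solvar_k(T_k)\in\ifsem{\usP_k\argx}$. Forward uniqueness forces $\solvar_1$ and $\solvar_2$ to coincide on $[0,\min(T_1,T_2)]$. Taking $T_1\le T_2$ without loss of generality, the restriction of $\solvar_2$ to $[0,T_1]$ then equals $\solvar_1$, so it stays in both $\usQ_1\argx$ and $\usQ_2\argx$, hence in their conjunction, and its endpoint $\solvar_1(T_1)\in\ifsem{\usP_1\argx}\subseteq\ifsem{\usP_1\argx\lor\usP_2\argx}$; this witnesses the succedent. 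The degenerate case $T_1=0$ or $T_2=0$ is immediate.

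For \irref{ContAx}, I would fix $\omega$ with $\omega$ agreeing on $x$ and $y$ and with the value of $\usg\argx$ at $\omega$ positive, and use the standing assumption from \rref{sec:extaxioms} that the system locally evolves $x$, i.e.\ the value of $\genDE{x}$ at $\omega$ is not the zero vector (ensured syntactically, e.g., by a clock equation $\D{x_1}=1$). Picard--Lindel\"of gives a solution $\solvar:[0,T]\to\States$ from $\omega$ with $T>0$; the real function $t\mapsto\itsemst{\solvar(t)}{\usg\argx}$ is continuous and positive at $0$, hence positive on some $[0,\varepsilon]$. Choosing a coordinate $i$ with $f_i(\omega)\neq 0$, the function $t\mapsto\solvar(t)(x_i)$ has derivative $f_i(\solvar(t))$, which is nonzero and of constant sign on some $[0,\delta]$ by continuity, so $\solvar(t)(x_i)\neq\omega(x_i)=\omega(y_i)$ for $0<t\le\delta$. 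The truncation $\soltrunc{t}$ with $t=\min(\varepsilon,\delta)$ then stays in $\usg\argx>0$ and ends in a state satisfying $x\neq y$, witnessing $\ddiamond{\pevolvein{\D{x}=\genDE{x}}{\usg\argx>0}}{x\neq y}$.

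For \irref{DadjointAx}, the idea is a time-reversal bijection between solutions of the forward system $\D{x}=\genDE{x}$ in $\ivr(x)$ and solutions of the backward system $\D{y}=-\genDE{y}$ in $\ivr(y)$. For the ``$\limply$'' direction I would take a solution $\solvar:[0,T]\to\States$ of $\D{x}=\genDE{x}$ with $\solvar([0,T])\in\ifsem{\ivr(x)}$ and $\solvar(T)\in\ifsem{x=y}$ (so $\solvar(T)(x)=\omega(y)$, since $y$ is fresh, hence held constant along this ODE), and define $\psi:[0,T]\to\States$ by $\psi(t)(y_j)\mdefeq\solvar(T-t)(x_j)$, $\psi(t)(\D{y_j})\mdefeq-\solvar(T-t)(\D{x_j})$, and $\psi(t)(z)\mdefeq\omega(z)$ for every other variable $z$. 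A chain-rule computation then shows $\psi$ solves $\D{y}=-\genDE{y}$, with $\psi(0)=\omega$ off $\D{y}$, $\psi(T)\in\ifsem{y=x}$, and $\psi([0,T])\in\ifsem{\ivr(y)}$ by \rref{lem:coincide} ($y$ being fresh for $\genDE{x}$ and $\ivr$), so $\psi$ witnesses the right-hand diamond. The ``$\lylpmi$'' direction is symmetric (reverse $\psi$ once more), and $T=0$ is trivial. I expect the bookkeeping in this last argument to be the main obstacle: one has to check carefully that $\psi$ is a bona fide solution in the uniform substitution semantics~\cite[Definition 7]{DBLP:journals/jar/Platzer17} --- that it holds every variable other than $y,\D{y}$ constant, that its differential variables $\D{y}$ match the reversed right-hand side, and that freshness of $y$ together with \rref{lem:coincide} legitimately transports the evolution-domain and endpoint conditions between the $x$-named and $y$-named copies. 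By comparison, \irref{UniqAx} reduces cleanly to forward uniqueness and \irref{ContAx} is a short continuity argument once the local-evolution assumption is in place.
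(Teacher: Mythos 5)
Your proposal is correct and matches the paper's own proof essentially step for step: forward uniqueness gives the prefix relation for \irref{UniqAx}, Picard--Lindel\"of plus continuity of the evaluation of $\usg\argx$ along the solution (and the local-evolution assumption) give \irref{ContAx}, and the time-reversal with negated differential variables gives \irref{DadjointAx} (with the reverse direction by symmetry). The only difference is cosmetic: you spell out the ``locally evolves $\Rightarrow$ $x\neq y$ for small $t>0$'' step via a constant-sign argument on one coordinate's derivative, while the paper asserts it more tersely; both are fine.
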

\begin{proof}
For the ODE system $\D{x}=\usDE$, the RHSes, when interpreted as functions on $x$ are continuously differentiable. Therefore, by the Picard-Lindel\"{o}f theorem~\cite[\S10.VI]{Walter1998}, from \emph{any} state $\omega$, there is an interval $[0,\tau), \tau > 0$ on which there is a unique, continuous solution $\solvar : [0,\tau)\to \States$ with $\solvar(0) = \omega$ on $\scomplement{\{\D{x}\}}$. Moreover, the solution may be uniquely extended in time (to the right), up to its maximal open interval of existence~\cite[\S10.IX]{Walter1998}.

We first prove axiom \irref{UniqAx}. Consider an initial state $\omega$, satisfying both conjuncts on the left of the implication in \irref{UniqAx}. Expanding the definition of the diamond modality, this means that there exist two solutions $\solvar_1 : [0,T_1] \to \States$, $\solvar_2 : [0,T_2] \to \States$ from $\omega$ where $\solmodels{\solvar_1}{\pevolvein{\D{x}=\usDE}{\usQ_1\argx}}$ and $\solmodels{\solvar_2}{\pevolvein{\D{x}=\usDE}{\usQ_2\argx}}$, with $\solvar(T_1) \in \ifsem{\usP_1}$ and $\solvar(T_2)\in \ifsem{\usP_2\argx}$.

Now let us first assume $T_1 \leq T_2$. Since both $\solvar_1,\solvar_2$ are solutions starting from $\omega$, the uniqueness of solutions implies that $\solvar_1(t) = \solvar_2(t)$ for $t \in [0,T_1]$. Therefore, since $\solvar_2([0,T_2]) \in \ifsem{\usQ_2\argx}$ and $T_1 \leq T_2$, we have $\solmodels{\solvar_1}{\pevolvein{\D{x}=\usDE}{\usQ_1\argx \land \usQ_2\argx}}$. Since $\solvar_1(T_1) \in \ifsem{\usP_1\argx}$, which implies $\solvar_1(T_1) \in \ifsem{\usP_1\argx \lor \usP_2\argx}$, we therefore have $\omega \in \ifsem{\ddiamond{\pevolvein{\D{x}=\usDE}{\usQ_1\argx \land \usQ_2\argx}}{(\usP_1\argx \lor \usP_2\argx)}}$.

The case for $T_2<T_1$ is similar, except now we have $\solvar_2(T_2) \in \ifsem{\usP_2\argx}$. In either case, we have the required RHS of \irref{UniqAx}:
\[\omega \in \ifsem{\ddiamond{\pevolvein{\D{x}=\usDE}{\usQ_1\argx \land \usQ_2\argx}}{(\usP_1\argx \lor \usP_2\argx)}}\]

Next, we prove axiom \irref{ContAx}. Consider an arbitrary initial state $\iget[state]{\I}$, with \(\iget[state]{\I} \in \ifsem{x=y \land \usg\argx > 0}\).
By Picard-Lindel\"of, there is a solution $\iget[flow]{\If} : [0,\tau) \to \States$ with $\iget[state]{\Iff[0]} = \iget[state]{\I}$ on $\scomplement{\{\D{x}\}}$ for some $\tau>0$ such that \(\solmodels{\iget[flow]{\If}}{\pevolve{\D{x}=\usDE}}\).
Since $\D{x}\not\in\usg\argx$, coincidence (\rref{lem:coincide}) implies \(\itsemst{\iget[state]{\Iff[0]}}{\usg\argx}>0\).
As a composition of continuous evaluation \cite[Definition 5]{DBLP:journals/jar/Platzer17} with the continuous solution $\iget[flow]{\If}$, $\itsemst{\iget[state]{\Iff[t]}}{\usg\argx}$ is a continuous function of time $t$.
Thus, \(\itsemst{\iget[state]{\Iff[0]}}{\usg\argx}>0\) implies \(\itsemst{\iget[state]{\Iff[t]}}{\usg\argx}>0\) for all $t$ in some interval $[0,T]$ with $0 < T < \tau$.
Hence, the truncation $\truncafter{\iget[flow]{\If}}{T}$ satisfies
\[
\solmodels{\truncafter{\iget[flow]{\If}}{T}}{\pevolvein{\D{x}=\usDE}{\usg\argx>0}}
\]
Since $y$ is constant in the ODE but \(\pevolve{\D{x}=\usDE}\) was assumed to locally evolve (for example with \(\D{x_1}=1\)), there is a time $0<\epsilon\leq T$ at which \(\iget[state]{\Iff[\epsilon]}\in \ifsem{x\neq y}\).
Thus, the truncation \(\truncafter{\iget[flow]{\If}}{\epsilon}\) witnesses
\(\iget[state]{\I} \in \ifsem{\ddiamond{\pevolvein{\D{x}=\usDE}{\usg\argx>0}}{x\neq y}}\).

Finally, we prove axiom \irref{DadjointAx}. The ``$\lylpmi$'' direction follows immediately from the ``$\limply$'' direction by swapping the names $x,y$, because \(-(-\usDE)=\usDE\). Therefore, we only prove the ``$\limply$'' direction. Consider an initial state $\omega$ where $\omega \in \ifsem{\ddiamond{\pevolvein{\D{x}=\usDE}{\usQ\argx}}{\,x=y}}$. Unfolding the semantics, there is a solution $\solvar : [0,T] \to \States$, of the system $\D{x}=\usDE$, with $\solvar(0) = \omega$ on $\scomplement{\{\D{x}\}}$, with \(\solvar(t) \in \ifsem{\usQ\argx}\) for all $t$, and $\solvar(T) \in \ifsem{x=y}$.

Note that since the variables $y$ do not appear in the differential equations, its value is held constant along the solution $\solvar$. Now, let us consider the time- and variable reversal $\psi : [0,T]$, where
\[\psi(\tau)(z) \mdefeq
  \begin{cases}
  \solvar(T-\tau)(x_i)  & z = y_i \\
  -\solvar(T-\tau)(\D{x_i}) & z = \D{y_i} \\
  \omega(z)             & \text{otherwise}
  \end{cases}\]
By construction, $\psi(0)$ agrees with $\omega$ on $\scomplement{\{\D{y}\}}$, because \(\solvar(T) \in \ifsem{x=y}\). Moreover, we have explicitly negated the signs of the differential variables $\D{y_i}$ along $\psi$. By uniqueness, the solutions of $\D{x}=-\usDE$ are exactly the time-reversed solutions of $\D{x}=\usDE$. As we have constructed, $\psi$ is the time-reversed solution for $\D{x}=\usDE$ except we have replaced variables $x$ by $y$ instead. Moreover, since $\solvar([0,T]) \in \ifsem{\usQ\argx}$, we also have $\psi([0,T]) \in \ifsem{\usQ\argy}$ by construction and \rref{lem:coincide}. Therefore, $\solmodels{\psi}{\pevolvein{\D{y}=-\usDEy}{\usQ\argy}}$.
Finally, observe that $\psi(T)(y) = \solvar(0)(x)$, but $\psi$ holds the values of $x$ constant, thus $\psi(T)(x) = \solvar(0)(x)$ and so $\psi(T) \in \ifsem{y=x}$. Therefore, $\psi$ is a witness for
\[
\omega \in \ifsem{\ddiamond{\pevolvein{\D{y}=-\usDEy}{\usQ\argy}}{y=x}}
\qedhere
\]
\end{proof}

\subsubsection{Real Induction}

For completeness, we state and prove a succinct version of the real induction principle that we use. This and other principles are in~\cite{clark2012instructor}.
\begin{definition}[Inductive subset~\cite{clark2012instructor}]
\label{def:indsubset}
The subset $S\subseteq\interval{[a,b]}$ is called an \emph{inductive} subset of the compact interval $\interval{[a,b]}$ iff for all $a \leq x \leq b$ and $[a,x) \subseteq S$,
\begin{enumerate}
\item [\textcircled{1}] $x \in S$.
\item [\textcircled{2}] If $x < b$ then $[x,x+\epsilon] \subseteq S$ for some $0 < \epsilon$.
\end{enumerate}
Here, $[a,a)$ is the empty interval, hence \textcircled{1} requires $a \in S$.
\end{definition}

\begin{proposition}[Real induction principle~\cite{clark2012instructor}]
\label{prop:realindR}
The subset $S \subseteq [a,b]$ is inductive if and only if $S=[a,b]$.
\end{proposition}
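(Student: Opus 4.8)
The plan is to prove the two directions separately. The easy direction ($\Leftarrow$) is immediate: if $S = [a,b]$ then for every $a \le x \le b$ we have $x \in S$, so condition \textcircled{1} holds trivially, and whenever $x < b$ we may take $\epsilon = b - x > 0$ so that $[x, x+\epsilon] = [x,b] \subseteq [a,b] = S$, giving \textcircled{2}. Hence $[a,b]$ is inductive.

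For the nontrivial direction ($\Rightarrow$), suppose $S \subseteq [a,b]$ is inductive. I would introduce the auxiliary set $T \mdefeq \{ x \in [a,b] : [a,x] \subseteq S\}$ of left-closed initial segments contained in $S$, and show $T = [a,b]$; since $b \in T$ forces $[a,b] \subseteq S$, and $S \subseteq [a,b]$ by hypothesis, this yields $S = [a,b]$. First, $T$ is nonempty: applying \textcircled{1} with $x = a$ (legitimate since $[a,a)$ is empty, hence contained in $S$) gives $a \in S$, so $[a,a] = \{a\} \subseteq S$, i.e.\ $a \in T$. As $T \subseteq [a,b]$ is nonempty and bounded above, put $s \mdefeq \sup T$.

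The crux is to show $s \in T$ and then $s = b$. For membership: for any $x$ with $a \le x < s$, the definition of supremum gives some $t \in T$ with $x < t \le s$, and $[a,t] \subseteq S$ yields $x \in S$; hence $[a,s) \subseteq S$, and \textcircled{1} applied with $x = s$ gives $s \in S$, so $[a,s] \subseteq S$, i.e.\ $s \in T$. For the second, suppose toward a contradiction that $s < b$. Then \textcircled{2} with $x = s$ (using $[a,s) \subseteq [a,s] \subseteq S$) supplies $\epsilon > 0$ with $[s, s+\epsilon] \subseteq S$; shrinking $\epsilon$ if necessary so that $s + \epsilon \le b$, we obtain $[a, s+\epsilon] = [a,s] \cup [s,s+\epsilon] \subseteq S$, so $s+\epsilon \in T$ with $s + \epsilon > s = \sup T$, a contradiction. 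Hence $s = b \in T$, so $[a,b] \subseteq S$ and therefore $S = [a,b]$.

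The only subtlety — what I would call the main obstacle, though it is minor — is handling the endpoint conventions correctly: one must invoke \textcircled{1} at $x = a$ to seed $T$, relying on the convention that $[a,a)$ is empty, and one must separately establish $s \in S$ (again via \textcircled{1}, now at $x = s$) before concluding $s \in T$, rather than conflating $[a,s) \subseteq S$ with $[a,s] \subseteq S$. Everything else is a standard least-upper-bound argument over $\reals$.
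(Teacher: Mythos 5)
Your proof is correct and follows essentially the same completeness-of-$\reals$ argument as the paper: locate the critical boundary point, apply \textcircled{1} to place it in $S$, and apply \textcircled{2} to push past it for a contradiction. The only cosmetic difference is that you take the supremum of the auxiliary set $T = \{x : [a,x] \subseteq S\}$, whereas the paper directly takes the infimum of the complement $[a,b] \setminus S$; these are dual descriptions of the same boundary point, and the rest of the reasoning matches step for step.
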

\begin{proof}
In the ``$\mimply$'' direction, if $S = [a,b]$, then $S$ is inductive by definition. For the ``$\mylpmi$'' direction, let $S \subseteq [a,b]$ be inductive.
Suppose that $S \neq [a,b]$, so that the complement set $\scomplement{S} = [a,b] \setminus S$ is nonempty.
Let $x$ be the infimum of $\scomplement{S}$, and note that $x \in [a,b]$ since $[a,b]$ is left-closed.

First, we note that $[a,x) \subseteq S$. Otherwise, $x$ is not an infimum of $\scomplement{S}$, because there would exist $a \leq y < x$, such that $y \in \scomplement{S}$. By \textcircled{1}, $x \in S$.
Next, if $x = b$, then $S = [a,b]$, contradiction. Thus, $x < b$, and by \textcircled{2}, $[x,x+\epsilon] \subseteq S$ for some $\epsilon > 0$. However, this implies that $x + \epsilon$ is a greater lower bound of $\scomplement{S}$ than $x$, contradiction.
\end{proof}
We now restate and prove a generalized, concrete version of the real induction axiom given in \rref{lem:realindODE}. This strengthened version includes the evolution domain constraint.

\begin{lemma}[Real induction for \rref{lem:realindODE}]
\label{lem:realindODEin}
The following real induction axiom is sound, where $y$ is fresh in \m{\dbox{\pevolvein{\D{x}=\usDE}{\usQ\argx}}{\usP\argx}}.
\begin{align*}
\cinferenceRule[RealIndInAx|RI{$\&$}]{}
{
&\axkey{\dbox{\pevolvein{\D{x}=\usDE}{\usQ\argx}}{\usP\argx}} \lbisubjunct\\
&\lforall{y}{\dbox{\pevolvein{\D{x}=\usDE}{\usQ\argx \land (\usP\argx \lor x=y)}}{\Big(\initassum \limply \\
&\tag{\textcircled{a}} \usP\argx \land\\
&\tag{\textcircled{b}} \big(\ddiamond{\pevolvein{\D{x}=\usDE}{\usQ\argx}}{x\neq y} \limply
       \ddiamond{\pevolvein{\D{x}=\usDE}{\usP\argx}}{x\neq y}\big)\Big)
}}
}{}
\end{align*}

\end{lemma}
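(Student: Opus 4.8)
The plan is to establish the biconditional of \rref{lem:realindODEin} directly on the transition semantics of \rref{app:axiomatization}; the axioms of \rref{lem:realindODE} and \rref{cor:realind} then follow as uniform substitution instances (for \rref{lem:realindODE} one takes $\usQ\argx\equiv\ltrue$, so that the antecedent of \textcircled{b} becomes vacuously true). For the ``$\limply$'' direction, fix a state $\omega$ in the box formula, a value $d$ for the fresh vector $y$, and a solution $\mu : [0,T]\to\States$ whose initial state agrees with $\omega$ except that $y$ has value $d$, which stays in $\usQ\argx\land(\usP\argx\lor x=y)$ and satisfies $x=y$ at time $T$. Since $y$ is fresh in $\usDE,\usQ\argx,\usP\argx$, \rref{lem:coincide} lets me reset the value of $y$ to $\omega(y)$ and view $\mu$ as a genuine witness of the box formula from $\omega$; prefix-closure of ODE solutions (\rref{subsec:backgroundsemantics}) then forces every prefix of this witness to end inside $\usP\argx$, and the prefix of length $T$ gives the left conjunct \textcircled{a}. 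For the right conjunct \textcircled{b}, given a $\usQ\argx$-solution $\eta$ from $\mu(T)$ reaching $x\neq y$, I concatenate $\mu$ with $\eta$ — the concatenation of a solution with a solution continuing from its endpoint is again a solution — reset $y$ as before, and invoke the box formula with prefix-closure to conclude that the whole concatenation, hence in particular its $\eta$-tail, stays in $\usP\argx$; so $\eta$ itself witnesses $\ddiamond{\pevolvein{\D{x}=\usDE}{\usP\argx}}{x\neq y}$.

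For the ``$\lylpmi$'' direction, fix $\omega$ satisfying the universally quantified right-hand side; it suffices to show $\solvar(T)\in\ifsem{\usP\argx}$ for an arbitrary solution $\solvar : [0,T]\to\States$ from $\omega$ staying in $\usQ\argx$. Let $S\mdefeq\{t\in[0,T]\mid\solvar(t)\in\ifsem{\usP\argx}\}$; by the real induction principle \rref{prop:realindR} it is enough to verify that $S$ is an inductive subset of $[0,T]$ in the sense of \rref{def:indsubset}. So fix $t_0\in[0,T]$ with $\solvar([0,t_0))\subseteq\ifsem{\usP\argx}$, and instantiate the universally quantified $y$ at the value $\solvar(t_0)(x)$. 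The truncation $\soltrunc{t_0}$, with $y$ reset to that value (legitimate by \rref{lem:coincide} since $y$ is fresh), stays in $\usQ\argx$, stays in $\usP\argx\lor x=y$ (it is in $\usP\argx$ on $[0,t_0)$ by the inductive hypothesis and in $x=y$ at $t_0$), and satisfies $x=y$ at its endpoint; hence the modal formula on the right-hand side applies to it, and its conjunct \textcircled{a} immediately gives $\solvar(t_0)\in\ifsem{\usP\argx}$, i.e.\ condition \textcircled{1} of \rref{def:indsubset}.

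It remains to establish condition \textcircled{2}, so assume $t_0<T$. The tail $\tau\mapsto\solvar(t_0+\tau)$, $\tau\in[0,T-t_0]$, is a $\usQ\argx$-solution from $\solvar(t_0)$; because $\D{x}=\usDE$ is assumed to have no equilibrium (its right-hand side $\usDE$ is nowhere the zero vector; e.g.\ a clock $\D{x_1}=1$ can be prepended by \irref{DG}), this tail reaches a state with $x\neq y$, so the antecedent of \textcircled{b} holds at $\solvar(t_0)$. Conjunct \textcircled{b} then supplies a $\usP\argx$-solution $\rho : [0,s]\to\States$ from $\solvar(t_0)$ that reaches $x\neq y$, which forces $s>0$. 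By uniqueness of solutions of $\D{x}=\usDE$ (Picard-Lindel\"{o}f), $\rho$ agrees with the tail of $\solvar$ on $[0,\min(s,T-t_0)]$ (the mismatch in the constant value of $y$ is immaterial by \rref{lem:coincide}), so $\solvar$ stays in $\usP\argx$ throughout $[t_0,t_0+\epsilon]$ with $\epsilon\mdefeq\min(s,T-t_0)>0$. This is condition \textcircled{2}, so $S$ is inductive, whence $S=[0,T]$ by \rref{prop:realindR} and in particular $\solvar(T)\in\ifsem{\usP\argx}$.

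The bookkeeping of the fresh vector $y$ through \rref{lem:coincide}, and the standard facts that ODE solutions are closed under truncation and under concatenation at a shared endpoint, are routine. The step I expect to be the main obstacle is condition \textcircled{2} of the reverse direction: the right-hand side only hands us an \emph{abstract} local-progress witness $\rho$ out of $\solvar(t_0)$, and the crux is to argue — via uniqueness of solutions, and with the $y$-instantiation chosen exactly as $\solvar(t_0)(x)$ so that $x=y$ marks precisely the point $t_0$ — that this abstract witness pins down the behaviour of the \emph{concrete} solution $\solvar$ for a positive duration; getting the interaction between $s$, $T-t_0$, the endpoint condition $x=y$, and the choice of $y$ right is where the care is needed.
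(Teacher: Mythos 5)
Your proof is correct and takes essentially the same route as the paper's: both argue directly on the transition semantics, using coincidence for the fresh $y$, prefix-closure and concatenation of solutions for ``$\limply$'', and the real induction principle \rref{prop:realindR} with $y$ instantiated to $\solvar(\zeta)(x)$ plus Picard-Lindel\"{o}f uniqueness for ``$\lylpmi$''. The step you flag as the main obstacle---pinning the concrete solution $\solvar$ to the abstract progress witness in condition \textcircled{2}---is exactly where the paper also appeals to uniqueness of solutions.
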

\begin{proof}
We label the two conjuncts on the RHS of \irref{RealIndInAx} as \textcircled{a} and \textcircled{b} respectively, as shown above. Consider an initial state $\omega$, we prove both directions of the axiom separately.
\begin{enumerate}
\item[``$\limply$'']
Assume that \textcircled{$\star$} $\omega \in \ifsem {\dbox{\pevolvein{\D{x}=\usDE}{\usQ\argx}}{\usP\argx}}$.
Unfolding the quantification and box modality on the RHS, let $\omega_y$ be identical to $\omega$ except where the values for $y$ are replaced with any initial values $d \in \reals^n$. Consider any solution $\solvar_y : [0,T] \to \States$
of \(\pevolvein{\D{x}=\usDE}{\usQ\argx \land (\usP\argx \lor x=y)}\)
where $\solvar_y(0) = \omega_y$ on $\scomplement{\{\D{x}\}}$, $\solvar_y(T) \in \ifsem{\initassum}$, and
\[\solmodels{\solvar_y}{\pevolvein{\D{x}=\usDE}{\usQ\argx \land (\usP\argx \lor x=y)}}\]

We construct a similar solution $\solvar : [0,T] \to \States$ that keeps $y$ constant at their initial values in $\omega$:
\[ \solvar(t)(z) \mdefeq
  \begin{cases}
  \solvar_y(t)(z)  & z \in  \scomplement{\{y\}}\\
  \omega(z)   & z \in \{y\}
  \end{cases} \]
By construction, $\solvar(0)$ is identical to $\omega$ on $\scomplement{\{\D{x}\}}$. Since $y$ is fresh in \(\pevolvein{\D{x}=\usDE}{\usQ\argx}\), by coincidence~(\rref{lem:coincide}), we must have $\solmodels{\solvar}{\pevolvein{\D{x}=\usDE}{\usQ\argx}}$. By assumption \textcircled{$\star$}, $\solvar(T) \in \ifsem{\usP\argx}$, which implies that $\solvar_y(T) \in \ifsem{\usP\argx}$ by coincidence since $y$ is fresh in $\usP\argx$. This proves conjunct \textcircled{a}.

Unfolding the implication and diamond modality of conjunct \textcircled{b}, we may assume that there is another solution $\psi_y : [0,\tau] \to \States$ starting from $\solvar_y(T)$ with $\psi_y(\tau) \in \ifsem{x \neq y}$ and $\solmodels{\psi_y}{\pevolvein{\D{x}=\usDE}{\usQ\argx}}$. Note that $\psi_y(0) = \solvar_y(T)$ \emph{exactly} rather than just on $\scomplement{\{\D{x}\}}$, because both of these states already have the same values for the differential variables.
We need to show:
\[\solvar_y(T) \in \ifsem{\ddiamond{\pevolvein{\D{x}=\usDE}{\usP\argx}}{x\neq y}}\]
We shall directly show:
\[\solmodels{\psi_y}{\pevolvein{\D{x}=\usDE}{\usP\argx}}\]
In particular, since $\psi_y$ already satisfies the requisite differential equations and $\psi_y(\tau)\in \ifsem{x \neq y}$, it is sufficient to show that it stays in the evolution domain for its entire duration, \ie, $\psi_y([0,\tau]) \in \ifsem{\usP\argx}$.
Let $0 \leq \zeta \leq \tau$ and consider the concatenated solution $\Phi : [0,T+\zeta] \to \States$ defined by:
\[\Phi(t)(z) \mdefeq
  \begin{cases}
  \solvar_y(t)(z)  & t \leq T, z \in \scomplement{\{y\}} \\
  \psi_y(t-T)(z)     & t > T, z \in \scomplement{\{y\}} \\
  \omega(z)        & z \in \{y\}
  \end{cases}\]
As with $\solvar$, the solution $\Phi$ is constructed to keep $y$ constant at their initial values in $\omega$. Since $\psi_y$ must uniquely extend $\solvar_y$~\cite[\S10.IX]{Walter1998}, the concatenated solution $\Phi$ is a solution starting from $\omega$, it solves the system $\D{x}=\usDE$, and it stays in $\usQ\argx$ for its entire duration by coincidence~(\rref{lem:coincide}). Hence, by \textcircled{$\star$}, $\Phi(T+\zeta) \in \ifsem{\usP\argx}$, which implies $\psi_y(\zeta) \in \ifsem{\usP\argx}$ by coincidence~(\rref{lem:coincide}), as required.

\item[``$\lylpmi$'']
We assume the RHS and prove the LHS in initial state $\omega$. If $\omega \notin \ifsem{\usQ\argx}$, then there is nothing to show, because there are no solutions that stay in $\usQ\argx$. Otherwise, consider an arbitrary solution $\solvar : [0,T] \to \States$ starting from $\omega$ such that $\solmodels{\solvar}{\pevolvein{\D{x}=\usDE}{\usQ\argx}}$. We prove $\solvar([0,T]) \in \ifsem{\usP\argx}$ by showing that the subset $S \mdefeq \{\zeta : \solvar(\zeta) \in \ifsem{\usP\argx}\}$ is an inductive subset of $[0,T]$ i.e., satisfies properties \textcircled{1} and \textcircled{2} in \rref{def:indsubset}.
So, assume that $[0,\zeta) \subseteq S$ for some time $0 \leq \zeta \leq T$.

Consider the state $\omega_y$ identical to $\omega$, except where the values for variables $y$ are replaced with the corresponding values of $x$ in $\solvar(\zeta)$:
\[\omega_y(z) \mdefeq
  \begin{cases}
  \solvar(\zeta)(x_i)  & z = y_i \\
  \omega(z)          & \text{otherwise}
  \end{cases}\]

Correspondingly, consider the solution $\solvar_y : [0,\zeta] \to \States$ identical to $\solvar$ but which keeps $y$ constant at initial values in $\omega_y$ rather than in $\omega$:
\[ \solvar_y(t)(z) \mdefeq
  \begin{cases}
  \solvar(t)(z)  & z \in  \scomplement{\{y\}}\\
  \omega_y(z)    & z \in \{y\}
  \end{cases} \]

By coincidence (\rref{lem:coincide}), $\solvar_y$ solves \(\pevolvein{\D{x}=\usDE}{\usQ\argx}\) from initial state $\omega_y$. We still know $\solvar_y([0,\zeta)) \in \ifsem{\usP\argx}$ by coincidence. Additionally, note that $\solvar_y(\zeta) \in \ifsem{x=y}$ by construction. Therefore, $\solvar_y([0,\zeta]) \in \ifsem{\usQ\argx \land (\usP\argx \lor x=y)}$. We now unfold the quantification, box modality and implication on the RHS to obtain:
\[\solvar_y(\zeta) \in \ifsem{\textcircled{a} \land \textcircled{b}} \]

\begin{itemize}
\item[\textcircled{1}] We need to show $\solvar(\zeta) \in \ifsem{\usP\argx}$, but by \textcircled{a}, we have $\solvar_y(\zeta) \in \ifsem{\usP\argx}$. By coincidence~(\rref{lem:coincide}), this implies $\solvar(\zeta) \in \ifsem{\usP\argx}$.

\item[\textcircled{2}] We further assume that $\zeta < T$, and we need to show $\solvar([\zeta,\zeta+\epsilon]) \in \ifsem{\usP\argx}$ for some $\epsilon > 0$.
We shall first discharge the implication in \textcircled{b}, i.e.~we show:
\[\solvar_y(\zeta) \in \ifsem{\ddiamond{\pevolvein{\D{x}=\usDE}{\usQ\argx}}{x \neq y}}\]

Observe that since $\zeta < T$, we may consider the solution that extends from state $\solvar(\zeta)$, \ie, $\psi : [0,T-\zeta] \to \States$, where $\psi(\tau) \mdefeq \solvar(\tau+\zeta)$, and we have $\solmodels{\psi}{\pevolvein{\D{x}=\usDE}{\usQ\argx}}$.

We correspondingly construct the solution that extends from state $\solvar_y(\zeta)$, $\psi_y : [0,T-\zeta] \to \States$ that keeps $y$ constant instead:
\[ \psi_y(t)(z) \mdefeq
  \begin{cases}
  \psi(t)(z)  & z \in  \scomplement{\{y\}}\\
  \solvar_y(\zeta)(z)    & z \in \{y\}
  \end{cases} \]

We already know \(\solvar_y(\zeta) \in \ifsem{x=y}\). We also have $T-\zeta > 0$, and therefore, since the differential equation  is assumed to always locally evolve (for example \(\D{x_1}=1\)), there must be some duration $0<\epsilon<T-\zeta$ after which the value of $x$ has changed from its initial value which is held constant in $y$, i.e., $\psi_y(\epsilon) \in \ifsem{x\neq y}$. In other words, the truncation \(\truncafter{\psi_y}{\epsilon}\) witnesses:
\[\solvar_y(\zeta) \in \ifsem{\ddiamond{\pevolvein{\D{x}=\usDE}{\usQ\argx}}{x\neq y}}\]

Discharging the implication in \textcircled{b}, we obtain:
\[\solvar_y(\zeta) \in \ifsem{\ddiamond{\pevolvein{\D{x}=\usDE}{\usP\argx}}{x\neq y}}\]

Unfolding the semantics gives us a solution, which by uniqueness, yields a truncation $\truncafter{\psi_y}{\epsilon}$ of $\psi_y$, for some $\epsilon > 0$ which starts from $\solvar_y(\zeta)$ and satisfies $\psi_y|_\epsilon([0,\epsilon]) \in \ifsem{\usP\argx}$.

By construction, $\truncafter{\psi_y}{\epsilon}(\tau)$ coincides with $\solvar(\tau+\zeta)$ on $x$ for all $0 \leq \tau \leq \epsilon$, which implies $\solvar([\zeta,\zeta+\epsilon]) \in \ifsem{\usP\argx}$ by \rref{lem:coincide}. \qedhere
\end{itemize}
\end{enumerate}
\end{proof}

\subsection{Derived Rules and Axioms}
\label{app:diaaxioms}

We now derive several useful rules and axioms that we will use in subsequent derivations. Some of which were already proved in~\cite{DBLP:conf/lics/Platzer12b,DBLP:journals/jar/Platzer17} so their proofs are omitted.

\subsubsection{Basic Derived Rules and Axioms}

We start with basic derived rules and axioms of \dL. The axiom \irref{Kd} derives from \irref{K} by dualizing its inner implication with \irref{diamond} \cite{DBLP:conf/lics/Platzer12b}, and the rule \irref{Md} derives by \irref{G} on the outer assumption of \irref{Kd} \cite{DBLP:journals/jar/Platzer17}.
\[
\dinferenceRule[Kd|K${\didia{\cdot}}$]{}
{\linferenceRule[impll]
  {\dbox{\alpha}{(\fvar_2 \limply \fvar_1)}}
  {(\ddiamond{\alpha}{\fvar_2} \limply \axkey{\ddiamond{\alpha}{\fvar_1}})}
}{} \qquad
\dinferenceRule[Md|M${\didia{\cdot}}$]{}
{\linferenceRule
  {\lsequent{\fvar_2}{\fvar_1} \qquad \lsequent{\Gamma}{\ddiamond{\alpha}{\fvar_2}}}
  {\lsequent{\Gamma}{\ddiamond{\alpha}{\fvar_1}}}
}{}
\]

If $\rrfvar(y)$ is true in an initial state and $y$ has no differential equation in $\D{x}=\genDE{x}$, then it trivially continues to hold along solutions to the differential equations from that state because $y$ remains constant along these solutions. Axiom \irref{V} proves this for box modalities (and for diamond modalities in the antecedents):

{\footnotesize
\begin{sequentdeduction}[array]
\linfer[dC]{
  \linfer[V]{
    \lclose
  }
  {\lsequent{\rrfvar(y)}{\dbox{\pevolvein{\D{x}=\genDE{x}}{\ivr}}{\rrfvar(y)}}} !
  \lsequent{\Gamma}{\dbox{\pevolvein{\D{x}=\genDE{x}}{\ivr \land \rrfvar(y)}}{\rfvar}}
}
  {\lsequent{\Gamma,\rrfvar(y)}{\dbox{\pevolvein{\D{x}=\genDE{x}}{\ivr}}{\rfvar}}}
\end{sequentdeduction}
}%

Conversely, if $\rrfvar(y)$ is true in a final state reachable by an ODE $\D{x}=\genDE{x}$, then it must have trivially been true initially. In the derivation below, the open premise labelled \textcircled{1} closes because it leads to a domain constraint that contradicts the postcondition of the diamond modality.

{\footnotesize
\begin{sequentdeduction}[array]
\linfer[cut]{
\linfer[orl]{
    \lsequent{\Gamma, \rrfvar(y), \ddiamond{\pevolvein{\D{x}=\genDE{x}}{\ivr}}{(\rfvar \land \rrfvar(y))}}{\phi}
    !
    \textcircled{1}
}
  {\lsequent{\Gamma, \rrfvar(y) \lor \lnot{\rrfvar(y)}, \ddiamond{\pevolvein{\D{x}=\genDE{x}}{\ivr}}{(\rfvar \land \rrfvar(y))}}{\phi}}
}
  {\lsequent{\Gamma,\ddiamond{\pevolvein{\D{x}=\genDE{x}}{\ivr}}{(\rfvar \land \rrfvar(y))}}{\phi}}
\end{sequentdeduction}
}%

To prove premise \textcircled{1}, we use \irref{V+dC+diamond}
{\footnotesize
\begin{sequentdeduction}[array]
\linfer[diamond+notl]{
\linfer[dC+V]{
\linfer[dW]{
  \lclose
}
  {\lsequent{}{\dbox{\pevolvein{\D{x}=\genDE{x}}{\ivr\land \lnot{\rrfvar(y)}}}{\lnot{(\rfvar \land \rrfvar(y))}}}}
}
  {\lsequent{\lnot{\rrfvar(y)}}{\dbox{\pevolvein{\D{x}=\genDE{x}}{\ivr}}{\lnot{(\rfvar \land \rrfvar(y))}}}}
}
  {\lsequent{\lnot{\rrfvar(y)}, \ddiamond{\pevolvein{\D{x}=\genDE{x}}{\ivr}}{(\rfvar \land \rrfvar(y))}}{\lfalse}}
\end{sequentdeduction}
}%

In the sequel, we omit these routine steps and label proof steps that manipulate constant context assumptions with \irref{V} directly.

We now prove \rref{cor:diadiffeqax}, which provides tools for working with diamond modalities involving ODEs.
\begin{proof}[Proof of \rref{cor:diadiffeqax}]
Axiom \irref{dDR} derives from \irref{DMP} by dualizing with the \irref{diamond} axiom.

{\footnotesize
\begin{sequentdeduction}[array]
\linfer[diamond+notr+notl]{
\linfer[DMP+DW]{
  \lclose
}
  {\lsequent{\dbox{\pevolvein{\D{x}=\genDE{x}}{\rrfvar}}{\ivr},\dbox{\pevolvein{\D{x}=\genDE{x}}{\ivr}}{\lnot{\rfvar}}}{\dbox{\pevolvein{\D{x}=\genDE{x}}{\rrfvar}}{\lnot{\rfvar}}}}
}
  {\lsequent{\dbox{\pevolvein{\D{x}=\genDE{x}}{\rrfvar}}{\ivr},\ddiamond{\pevolvein{\D{x}=\genDE{x}}{\rrfvar}}{\rfvar}}{\ddiamond{\pevolvein{\D{x}=\genDE{x}}{\ivr}}{\rfvar}}}
\end{sequentdeduction}
}%

Rule \irref{gddR} derives from \irref{dDR} by simplifying its left premise with rule \irref{dW}.
\end{proof}

\subsubsection{Extended Derived Rules and Axioms}

We derive additional rules and axioms that make use of our axiomatic extensions.

\begin{corollary}[Extended diamond modality rules and axioms] \label{cor:extended-diamond-deriveds}
The following are derived axioms in \dL extended with \irref{UniqAx+DadjointAx}.

\begin{calculus}
\dinferenceRule[decompand|$\didia{\&\land}$]{}
{\linferenceRule[equivl]
  {\ddiamond{\pevolvein{\D{x}=\genDE{x}}{\ivr}}{\rfvar} \land \ddiamond{\pevolvein{\D{x}=\genDE{x}}{\rrfvar}}{\rfvar}}
  {\axkey{\ddiamond{\pevolvein{\D{x}=\genDE{x}}{\ivr \land \rrfvar}}{\rfvar}}}
}{}
\dinferenceRule[diareflect|reflect$_{\didia{\cdot}}$]{}
{\linferenceRule[equivl]
  {\lexists{x}{(\rrfvar(x) \land \ddiamond{\pevolvein{\D{x}=-\genDE{x}}{\ivr(x)}}{\rfvar(x)})}}
  {\axkey{\lexists{x}{(\rfvar(x) \land \ddiamond{\pevolvein{\D{x}=\genDE{x}}{\ivr(x)}}{\rrfvar(x)})}}}
}{}
\end{calculus}
\end{corollary}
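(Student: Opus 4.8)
The plan is to derive the two equivalences of \rref{cor:extended-diamond-deriveds} separately. The first, \irref{decompand}, falls out quickly from the uniqueness axiom \irref{UniqAx} together with domain weakening \irref{gddR}. The second, \irref{diareflect}, needs a ``there and back again''-style argument: name the endpoint of a forward flow with a fresh variable, reverse the flow with the differential adjoint axiom \irref{DadjointAx}, then re-absorb the fresh variable; the only extra ingredients are the ODE Barcan axiom \irref{dBarcan} and the constant-formula axiom \irref{V}.

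For \irref{decompand}, the ``$\limply$'' direction (the key implies the conjunction) is immediate: split the conjunctive succedent, then apply \irref{gddR} to each conjunct, using that a solution staying in $\ivr \land \rrfvar$ and reaching $\rfvar$ in particular stays in $\ivr$, and likewise stays in $\rrfvar$. For the ``$\lylpmi$'' direction I would instantiate \irref{UniqAx} with the two evolution domains taken to be $\ivr$ and $\rrfvar$ and both postconditions taken to be $\rfvar$; this yields $\ddiamond{\pevolvein{\D{x}=\genDE{x}}{\ivr \land \rrfvar}}{(\rfvar \lor \rfvar)}$ from the two assumed diamond formulas, and one collapses the postcondition with \irref{Md} and the propositional tautology $\rfvar \lor \rfvar \lbisubjunct \rfvar$. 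This is exactly the content of \irref{Uniq}: two solutions from the same initial state are prefixes of one another, so the shorter witnesses both evolution-domain restrictions at once.

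For \irref{diareflect}, starting from the key formula $\lexists{x}{(\rfvar(x) \land \ddiamond{\pevolvein{\D{x}=\genDE{x}}{\ivr(x)}}{\rrfvar(x)})}$, I would first rewrite $\rrfvar(x)$ as the real-arithmetic equivalent $\lexists{y}{(\rrfvar(y) \land x = y)}$ with $y$ fresh, then pull $\lexists{y}$ out of the diamond modality with \irref{dBarcan} (valid because $y$ occurs neither in $\D{x}=\genDE{x}$ nor in $\ivr(x)$), and then pull the conjunct $\rrfvar(y)$ out of the modality with \irref{V} (valid because $y$ is held constant along $\D{x}=\genDE{x}$). After moving the fresh $\lexists{y}$ past $\rfvar(x)$, the key formula becomes $\lexists{x}{\lexists{y}{(\rfvar(x) \land \rrfvar(y) \land \ddiamond{\pevolvein{\D{x}=\genDE{x}}{\ivr(x)}}{x = y})}}$. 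Now \irref{DadjointAx} rewrites the inner modality to $\ddiamond{\pevolvein{\D{y}=-\genDE{y}}{\ivr(y)}}{y = x}$, after which $y$ is the active state variable and $x$ the constant parameter. Running the same \irref{dBarcan}/\irref{V} manipulations in reverse --- now moving $\rfvar(x)$ and $\lexists{x}$ through the $\D{y}=-\genDE{y}$ modality --- collapses $\lexists{x}{(\rfvar(x) \land y = x)}$ back to $\rfvar(y)$, leaving $\lexists{y}{(\rrfvar(y) \land \ddiamond{\pevolvein{\D{y}=-\genDE{y}}{\ivr(y)}}{\rfvar(y)})}$, which is the top formula of \irref{diareflect} after renaming the bound variable $y$ to $x$. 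Since every step is an equivalence, both directions of \irref{diareflect} are obtained simultaneously.

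The main obstacle is the bookkeeping of freshness and coincidence side conditions as the quantifiers are shuttled in and out of the modalities: \irref{dBarcan} requires the quantified variable to be absent from the ODE, and the \irref{V}-style steps require the carried-along formula ($\rrfvar(y)$ before the adjoint step, $\rfvar(x)$ after it) to mention only variables held constant by the respective system. These are all met because the axiom is stated with $x$ the only proper variable of the system and $y$ introduced fresh, but the write-up must keep straight which variable is the active state variable and which the constant parameter on each side of the \irref{DadjointAx} rewrite. The remaining pieces --- the real-arithmetic equivalences $\rrfvar(x) \lbisubjunct \lexists{y}{(\rrfvar(y) \land x = y)}$ and $\rfvar \lor \rfvar \lbisubjunct \rfvar$, and the congruence and monotonicity steps --- are routine.
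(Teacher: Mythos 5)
Your proof follows the paper's own argument almost exactly. For \irref{decompand}, the paper also uses \irref{gddR} for the ``$\limply$'' direction and takes the ``$\lylpmi$'' direction as a direct instance of \irref{UniqAx} with $\ivr_1,\ivr_2 \mathrel{:=}\ivr,\rrfvar$ and $\rfvar_1=\rfvar_2\mathrel{:=}\rfvar$, collapsing $\rfvar\lor\rfvar$ by monotonicity. For \irref{diareflect}, the paper likewise names the endpoint with a fresh variable (introduced by $\rrfvar(y)\limply\lexists{z}(z=y\land\rrfvar(z))$), shuttles the quantifier through the diamond with \irref{dBarcan}, moves the constant conjunct out with \irref{V}, applies \irref{DadjointAx}, and then pushes $\rfvar$ back under the reversed modality; the paper presents this as a sequent derivation of one implication and obtains the converse by the symmetry $-(-\genDE{x})=\genDE{x}$, whereas you observe that each step is itself an equivalence, which gives both directions at once. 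That is a cosmetic rather than a substantive difference; the freshness and constancy bookkeeping you flag is exactly what the paper's $y,z$-renaming discipline is handling.
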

\begin{proof}
The equivalence \irref{decompand} derives from \irref{gddR} for the ``$\limply$" direction, because of the propositional tautologies $\ivr \land \rrfvar \limply \ivr$ and $\ivr \land \rrfvar \limply \rrfvar$. The ``$\lylpmi$" direction is an instance of \irref{UniqAx} by setting $\rfvar_1,\rfvar_2$ to $\rfvar$, and $\ivr_1,\ivr_2$ to $\ivr,\rrfvar$ respectively.

We prove \irref{diareflect} from \irref{DadjointAx}. Both implications are proved separately and the ``$\lylpmi$'' direction follows by instantiating the proof of the ``$\limply$'' direction, since $-(-\genDE{x}) = \genDE{x}$.

In the derivation below, the succedent is abbreviated with $\phi \mdefequiv \lexists{z}{(\rrfvar(z) \land \ddiamond{\pevolvein{\D{z}=-\genDE{z}}{\ivr(z)}}{\rfvar(z)})}$, where we have renamed the variables for clarity. The first \irref{cut+Md} step introduces an existentially quantified $z$ under the diamond modality using the provable first-order formula $\rrfvar(y) \limply \lexists{z}{(z=y \land \rrfvar(z))}$. Next, Barcan \irref{dBarcan} moves the existentially quantified $z$ out of the diamond modality.
{\footnotesize
\begin{sequentdeduction}[array]
\linfer[existsl]{
\linfer[cut+Md]{
\linfer[dBarcan]{
\linfer[existsl]{
  \lsequent{\rfvar(y), \ddiamond{\pevolvein{\D{y}=\genDE{y}}{\ivr(y)}}{(z=y \land \rrfvar(z))}}{\phi}
}
  {\lsequent{\rfvar(y), \lexists{z}{\ddiamond{\pevolvein{\D{y}=\genDE{y}}{\ivr(y)}}{(z=y \land \rrfvar(z))}}}{\phi}}
}
  {\lsequent{\rfvar(y), \ddiamond{\pevolvein{\D{y}=\genDE{y}}{\ivr(y)}}{\lexists{z}{(z=y \land \rrfvar(z))}}} {\phi}}
}
  {\lsequent{\rfvar(y), \ddiamond{\pevolvein{\D{y}=\genDE{y}}{\ivr(y)}}{\rrfvar(y)}} {\phi}}
}
  {\lsequent{\lexists{x}{(\rfvar(x) \land \ddiamond{\pevolvein{\D{x}=\genDE{x}}{\ivr(x)}}{\rrfvar(x)})}} {\phi}}
\end{sequentdeduction}
}%

Continuing, since $z$ is not bound in $\D{y}=\genDE{y}$, a \irref{V} step allows us to move $\rrfvar(z)$ out from under the diamond modality in the antecedents. We then use \irref{DadjointAx} to flip the differential equations from evolving $y$ forwards to evolving $z$ backwards. The \irref{V+Kd} step uses the fact that the (new) ODE does not modify $y$ so that $\rfvar(y)$ remains true along the ODE, which allows its postcondition to be strengthened to $\rfvar(z)$, yielding a witness for the succedent.
{\footnotesize
\begin{sequentdeduction}[array]
\linfer[V]{
\linfer[DadjointAx]{
\linfer[V+Kd]{
\linfer[existsr]{
  \lclose
}
  {\lsequent{\rrfvar(z), \ddiamond{\pevolvein{\D{z}=-\genDE{z}}{\ivr(z)}}{\rfvar(z)}} {\phi}}
}
  {\lsequent{\rfvar(y),\rrfvar(z), \ddiamond{\pevolvein{\D{z}=-\genDE{z}}{\ivr(z)}}{z=y}} {\phi}}
}
  {\lsequent{\rfvar(y),\rrfvar(z), \ddiamond{\pevolvein{\D{y}=\genDE{y}}{\ivr(y)}}{z=y}} {\phi}}
}
  {\lsequent{\rfvar(y), \ddiamond{\pevolvein{\D{y}=\genDE{y}}{\ivr(y)}}{(z=y \land \rrfvar(z))}}{\phi}}
\end{sequentdeduction}
}%
\end{proof}

An invariant reflection principle derives from \irref{diareflect}: the negation of invariants $\rfvar(x)$ of the forwards differential equations $\D{x}=\genDE{x}$ are invariants of the backwards differential equations $\D{x}=-\genDE{x}$.

\begin{corollary}[Reflection]
The invariant reflection axiom derives from axiom \irref{DadjointAx}:
\[
\dinferenceRule[reflect|reflect]{}
{\linferenceRule[equivl]
  {\lforall{x}{\big(\lnot{\rfvar(x)} \limply \dbox{\pevolvein{\D{x}=-\genDE{x}}{\ivr(x)}}{\lnot{\rfvar(x)}}\big)}}
  {\axkey{\lforall{x}{\big(\rfvar(x) \limply \dbox{\pevolvein{\D{x}=\genDE{x}}{\ivr(x)}}{\rfvar(x)}\big)}}}
}
{}
\]
\end{corollary}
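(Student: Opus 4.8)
The plan is to derive \irref{reflect} as the contrapositive of the already-proved equivalence \irref{diareflect}, using only the \irref{diamond} axiom to pass between box and diamond modalities. As with \irref{diareflect} and \irref{DadjointAx} themselves, I would first note that it suffices to establish just one direction of the stated equivalence --- say, that the forward-ODE invariant formula implies the backward-ODE invariant formula --- because substituting $-\genDE{x}$ for $\genDE{x}$ and $\lnot\rfvar$ for $\rfvar$ turns that derivation into a derivation of the converse, using $-(-\genDE{x})=\genDE{x}$ and $\lnot\lnot\rfvar\lbisubjunct\rfvar$.

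For that direction I would argue by contraposition inside the sequent calculus. Negating the formula $\lforall{x}{(\rfvar(x)\limply\dbox{\pevolvein{\D{x}=\genDE{x}}{\ivr(x)}}{\rfvar(x)})}$ and dualizing the resulting box with \irref{diamond} gives $\lexists{x}{(\rfvar(x)\land\ddiamond{\pevolvein{\D{x}=\genDE{x}}{\ivr(x)}}{\lnot\rfvar(x)})}$. This is precisely the instance of $\lexists{x}{(\rfvar(x) \land \ddiamond{\pevolvein{\D{x}=\genDE{x}}{\ivr(x)}}{\rrfvar(x)})}$ in \irref{diareflect} with $\rrfvar$ taken to be $\lnot\rfvar$, so \irref{diareflect} rewrites it to $\lexists{x}{(\lnot\rfvar(x)\land\ddiamond{\pevolvein{\D{x}=-\genDE{x}}{\ivr(x)}}{\rfvar(x)})}$. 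Dualizing the inner diamond back to a box with \irref{diamond} yields $\lexists{x}{(\lnot\rfvar(x)\land\lnot\dbox{\pevolvein{\D{x}=-\genDE{x}}{\ivr(x)}}{\lnot\rfvar(x)})}$, which is exactly the negation of $\lforall{x}{(\lnot\rfvar(x)\limply\dbox{\pevolvein{\D{x}=-\genDE{x}}{\ivr(x)}}{\lnot\rfvar(x)})}$. Reassembling these steps with \irref{notl} and \irref{notr} around a single use of \irref{diareflect} and congruence rewriting by \irref{diamond} closes the derivation.

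The hard part here is essentially nonexistent mathematically: all of the content --- the adjointness between forward and backward flow in the presence of an evolution domain --- is already packaged into \irref{DadjointAx} and hence into \irref{diareflect}. The only care needed is bookkeeping: tracking the negations correctly, and checking that the uniform-substitution side conditions for instantiating \irref{diareflect} are met for the concrete instances used here (the evolution-domain and postcondition predicates depend only on the evolved variables $x$), with the same renaming between the quantified $x$ and the ODE-bound $x$ as in the proof of \irref{diareflect}. I expect the routine negation-pushing to be the bulk of the formal write-up, with no step requiring a new idea.
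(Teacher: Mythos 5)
Your proposal is correct and matches the paper's proof exactly: instantiate \irref{diareflect} with $\rrfvar \mdefequiv \lnot\rfvar$ and negate both sides of the resulting equivalence, passing between box and diamond modalities with \irref{diamond}. Your opening remark that one direction of the equivalence would suffice is superfluous --- since \irref{diareflect} is itself an equivalence, the contraposition argument you then give yields both directions of \irref{reflect} at once --- but it does no harm.
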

\begin{proof}
The axiom derives from \irref{diareflect} by instantiating it with $R\mdefequiv \lnot P$ and negating both sides of the equivalence with \irref{diamond}.
\end{proof}

Finally, we derive the real induction rule corresponding to axiom \irref{RealIndInAx}. We will use the $\ddnextsmall$ abbreviation from~\rref{sec:extaxioms} in the statement of the rule but explicitly include $\initassum$ which was elided for brevity in \rref{sec:semialg}.

\begin{corollary}[Real induction rule with domain constraints for \rref{cor:realind}]
\label{cor:realindin}
This rule (with two stacked premises) derives from \irref{RealIndInAx+DadjointAx+UniqAx}.
\[\dinferenceRule[realindin|rI{$\&$}]{}
{\linferenceRule
  {
  \begin{aligned}
  \lsequent{\initassum,\rfvar,\ivr,\dprogressinsmall{\D{x}=\genDE{x}}{\ivr}}{\dprogressinsmall{\D{x}=\genDE{x}}{\rfvar}}\\
  \lsequent{\initassum,\lnot{\rfvar},\ivr,\dprogressinsmall{\D{x}=-\genDE{x}}{\ivr}}{\dprogressinsmall{\D{x}=-\genDE{x}}{\lnot{\rfvar}}}
  \end{aligned}
  }
  {\lsequent{\rfvar}{\dbox{\pevolvein{\D{x}=\genDE{x}}{\ivr}}{\rfvar}}}
}{}\]
\end{corollary}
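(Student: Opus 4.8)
The plan is to derive \irref{realindin} from the real induction axiom \irref{RealIndInAx}, using \irref{DadjointAx} to reverse time in the $\lnot\rfvar$ branch and \irref{UniqAx} to move evolution-domain information between the forward and the reversed solution; this parallels the derivation of \rref{cor:realind}, with \irref{RealIndInAx} replacing \irref{RealInd}. First I would rewrite the succedent $\dbox{\pevolvein{\D{x}=\genDE{x}}{\ivr}}{\rfvar}$ via the ``$\lylpmi$'' direction of \irref{RealIndInAx}, reducing the goal to
\[
\lsequent{\rfvar}{\lforall{y}{\dbox{\pevolvein{\D{x}=\genDE{x}}{\ivr \land (\rfvar \lor \initassum)}}{\big(\initassum \limply \rfvar \land \big(\dprogressinsmall{\D{x}=\genDE{x}}{\ivr} \limply \dprogressinsmall{\D{x}=\genDE{x}}{\rfvar}\big)\big)}}},
\]
using that under the assumption $\initassum$ the formula $\ddiamond{\pevolvein{\D{x}=\genDE{x}}{\ivr}}{x\neq y}$ is precisely $\dprogressinsmall{\D{x}=\genDE{x}}{\ivr}$ and likewise for $\rfvar$. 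Introducing the fresh $y$ and passing the implication $\initassum\limply\cdot$, we are left (at the final state of any solution staying in $\ivr\land(\rfvar\lor\initassum)$) with $\initassum$ and $\ivr$ available, the latter because the evolution-domain constraint holds at that state. Call the two conjuncts under the modality \textcircled{a} ($\rfvar$) and \textcircled{b} (the local-progress implication).

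I would then use \irref{andr} and prove \textcircled{a} first, so that $\rfvar$ is available for \textcircled{b}. For \textcircled{b}: assuming the hypothesis $\dprogressinsmall{\D{x}=\genDE{x}}{\ivr}$, the available assumptions $\initassum$, $\rfvar$, $\ivr$, $\dprogressinsmall{\D{x}=\genDE{x}}{\ivr}$ are exactly the antecedent of the first premise of \irref{realindin}, which closes this branch. For \textcircled{a} I argue by contraposition: assume $\lnot\rfvar$ at the current state. Since the current state is reached from the initial state by a forward solution staying in $\ivr$, axiom \irref{DadjointAx} turns this into a backward solution from the current state back to the initial state staying in $\ivr$, which witnesses $\dprogressinsmall{\D{x}=-\genDE{x}}{\ivr}$; together with $\initassum$, $\lnot\rfvar$, and $\ivr$ this matches the antecedent of the second premise of \irref{realindin}, giving a backward solution from the current state that stays in $\lnot\rfvar$ and reaches $x\neq y$. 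By \irref{UniqAx}, this backward solution agrees for short time with the time-reversal of the forward solution, which stayed in $\rfvar\lor\initassum$; at the point where $x\neq y$ this forces $\rfvar$, contradicting $\lnot\rfvar$. The degenerate case where the forward solution has zero duration (so there is no room to back up within $\ivr$) is exactly the case where the current state is the initial state, where $\rfvar$ holds by the conclusion's antecedent.

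The step I expect to be the main obstacle is the evolution-domain bookkeeping in conjunct \textcircled{a}: converting the information the box modality of \irref{RealIndInAx} hands us at the final state --- reachability from an $\rfvar$-state by a forward solution confined to $\ivr\land(\rfvar\lor\initassum)$ --- into the local backward-progress hypothesis $\dprogressinsmall{\D{x}=-\genDE{x}}{\ivr}$ required by the second premise, correctly dispatching the zero-duration boundary case through the conclusion's antecedent, and invoking \irref{UniqAx} to carry the $\rfvar\lor\initassum$ constraint onto the reversed trajectory so that the clash with $\lnot\rfvar$ is genuine. Conjunct \textcircled{b} and the quantifier/propositional manipulations are routine by comparison.
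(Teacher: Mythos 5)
Your proposal follows essentially the paper's route: apply \irref{RealIndInAx}, split the two conjuncts under the box modality, discharge the local-progress conjunct \textcircled{b} via \irref{dW} against the first premise of \irref{realindin}, and establish the $\rfvar$ conjunct \textcircled{a} by dualizing the modality, time-reversing via \irref{diareflect} (which the paper derives from \irref{DadjointAx}), cutting in the second premise, and deriving a contradiction via \irref{UniqAx}; your zero-duration boundary case is the paper's case split on $\initassum \lor x \neq y$, which closes the $\initassum$ branch with \irref{V} because $\rfvar(y)$ is constant along the ODE. The one imprecision is the plan to ``use \irref{andr} and prove \textcircled{a} first so that $\rfvar$ is available for \textcircled{b}'': \irref{andr} does not transport the proven conjunct into the other branch, and in any case you need $\rfvar$ available \emph{under the modality} at the reached state, not merely in the initial antecedent. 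The paper's fix is to apply \irref{Mb} with the propositional tautology $(\initassum \limply \rfvar \land \rrfvar) \lbisubjunct (\initassum \limply \rfvar) \land (\initassum \limply \rfvar \limply \rrfvar)$ \emph{before} splitting, which explicitly places $\rfvar$ to the left of the implication in the \textcircled{b} branch; this repair is routine, and the substance of your plan is correct.
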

\begin{proof}
We label the premises of \irref{realindin} with \textcircled{a} for the top premise and \textcircled{b} for the bottom premise.
The derivation starts by rewriting the succedent with \irref{RealIndInAx}. We have abbreviated the second conjunct from this step with $\rrfvar \mdefequiv \dprogressinsmall{\D{x}=\genDE{x}}{\ivr} \limply \dprogressinsmall{\D{x}=\genDE{x}}{\rfvar}$. The \irref{Mb} step rewrites the postcondition with the propositional tautology $(\initassum \limply \rfvar \land \rrfvar) \lbisubjunct (\initassum \limply \rfvar) \land (\initassum \limply \rfvar \limply \rrfvar)$. We label the two premises after \irref{band+andr} with \textcircled{1} and \textcircled{2} respectively.
{\footnotesize
\begin{sequentdeduction}[array]
\linfer[RealIndInAx]{
\linfer[allr]{
\linfer[Mb]{
\linfer[band+andr]{
  \textcircled{1} ! \textcircled{2}
}
  {\lsequent{\rfvar}{\dbox{\pevolvein{\D{x}=\genDE{x}}{\ivr \land (\rfvar \lor \initassum)}}{\big( (\initassum \limply \rfvar) \land (\initassum \limply \rfvar \limply \rrfvar)\big)}}}
}
  {\lsequent{\rfvar}{\dbox{\pevolvein{\D{x}=\genDE{x}}{\ivr \land (\rfvar \lor \initassum)}}{(\initassum \limply \rfvar \land \rrfvar)}}}
}
  {\lsequent{\rfvar}{\lforall{y}{\dbox{\pevolvein{\D{x}=\genDE{x}}{\ivr \land (\rfvar \lor \initassum)}}{(\initassum \limply \rfvar \land \rrfvar)}}}}
}
  {\lsequent{\rfvar}{\dbox{\pevolvein{\D{x}=\genDE{x}}{\ivr}}{\rfvar}}}
\end{sequentdeduction}
}%
We continue from open premise \textcircled{2} with a \irref{dW} step, which yields the premise \textcircled{a} of \irref{realindin} (by unfolding our abbreviation for $\rrfvar$):
{\footnotesize
\begin{sequentdeduction}[array]
\linfer[dW]{
\linfer[implyr]{
  \textcircled{a}
}
  {\lsequent{\ivr}{(\initassum \limply \rfvar \limply \rrfvar)}}
}
  {\lsequent{\rfvar}{\dbox{\pevolvein{\D{x}=\genDE{x}}{\ivr \land (\rfvar \lor \initassum)}}{(\initassum \limply \rfvar \limply \rrfvar)}}}
\end{sequentdeduction}
}%
We continue from the open premise \textcircled{1} by case splitting on the left with the provable real arithmetic formula $\initassum \lor x\neq y$. This yields two further cases labelled \textcircled{3} and \textcircled{4}.
{\footnotesize
\begin{sequentdeduction}[array]
\linfer[cut+qear]{
\linfer[orl]{
  \textcircled{3} ! \textcircled{4}
}
  {\lsequent{\initassum \lor x\neq y,\rfvar}{\dbox{\pevolvein{\D{x}=\genDE{x}}{\ivr \land (\rfvar \lor \initassum)}}{(\initassum \limply \rfvar)}}}
}
  {\lsequent{\rfvar}{\dbox{\pevolvein{\D{x}=\genDE{x}}{\ivr \land (\rfvar \lor \initassum)}}{(\initassum \limply \rfvar)}}}
\end{sequentdeduction}
}%

For \textcircled{3}, since $\initassum$ initially, we are trivially done, because $\rfvar(y)$ is true initially, and $y$ is held constant by $\D{x}=\genDE{x}$. This is proved with an \irref{Mb} step followed by \irref{V}.
{\footnotesize
\begin{sequentdeduction}[array]
  \linfer[cut+Mb]{
  \linfer[V]{
    \lclose
  }
    {\lsequent{\rfvar(y)}{\dbox{\pevolvein{\D{x}=\genDE{x}}{\ivr \land (\rfvar \lor \initassum)}}{\rfvar(y)}}}
  }
  {\lsequent{\initassum,\rfvar}{\dbox{\pevolvein{\D{x}=\genDE{x}}{\ivr \land (\rfvar \lor \initassum)}}{(\initassum \limply \rfvar)}}}
\end{sequentdeduction}
}%
For \textcircled{4}, where $x\neq y$, we first use \irref{DW} to assume $\ivr$ in the postcondition. Abbreviate $\rtfvar \mdefequiv \rfvar \lor \initassum$. We then move into the diamond modality, and use \irref{diareflect}. We cut the succedent of premise \textcircled{b}. The resulting two open premises are labelled \textcircled{5} and \textcircled{6}.
{\footnotesize
\begin{sequentdeduction}
  \linfer[Mb+DW]{
  \linfer[diamond+notr]{
  \linfer[diareflect]{
  \linfer[cut]{
    \textcircled{5} \qquad \textcircled{6}
  }
    {\lsequent{\initassum \land \ivr \land \lnot{\rfvar},\ddiamond{\pevolvein{\D{x}=-\genDE{x}}{\ivr \land \rtfvar}}{(x \neq y \land \rfvar)}}{\lfalse}}
  }
    {\lsequent{x\neq y \land \rfvar,\ddiamond{\pevolvein{\D{x}=\genDE{x}}{\ivr \land \rtfvar}}{(\ivr \land \initassum \land \lnot{\rfvar})}}{\lfalse}}
  }
  {\lsequent{x\neq y,\rfvar}{\dbox{\pevolvein{\D{x}=\genDE{x}}{\ivr \land \rtfvar}}{(\ivr \limply \initassum \limply \rfvar)}}}
  }
  {\lsequent{x\neq y,\rfvar}{\dbox{\pevolvein{\D{x}=\genDE{x}}{\ivr \land \rtfvar}}{(\initassum \limply \rfvar)}}}
\end{sequentdeduction}
}%
The premise \textcircled{5} reduces to premise \textcircled{b}, after we use \irref{Md+gddR} to simplify the diamond modality assumption in the antecedents. The \irref{gddR} step proves with the propositional tautology $\ivr \land \rtfvar \limply \ivr$.
{\footnotesize
\begin{sequentdeduction}[array]
  \linfer[Md]{
  \linfer[gddR]{
  \linfer[]{
    \textcircled{b}
  }
    {\lsequent{\initassum, \ivr, \lnot{\rfvar},\ddiamond{\pevolvein{\D{x}=-\genDE{x}}{\ivr}}{x\neq y }}{\dprogressinsmall{\D{x}=-\genDE{x}}{\lnot{\rfvar}}}}
  }
  {\lsequent{\initassum, \ivr, \lnot{\rfvar},\ddiamond{\pevolvein{\D{x}=-\genDE{x}}{\ivr \land \rtfvar}}{x\neq y }}{\dprogressinsmall{\D{x}=-\genDE{x}}{\lnot{\rfvar}}}}
  }
  {\lsequent{\initassum, \ivr, \lnot{\rfvar},\ddiamond{\pevolvein{\D{x}=-\genDE{x}}{\ivr \land \rtfvar}}{(x \neq y \land \rfvar)}}{\dprogressinsmall{\D{x}=-\genDE{x}}{\lnot{\rfvar}}}}
\end{sequentdeduction}
}%

Premise \textcircled{6} uses the cut. The first \irref{gddR} step unfolds the syntactic abbreviation in $\ddnextsmall$, and drops $\ivr$ from the domain constraint with the tautology $\ivr \land \rtfvar \limply \rtfvar$. We then combine the two diamond modalities in the antecedents with \irref{UniqAx} and simplify its resulting domain constraint and postcondition with \irref{Md+gddR}, which respectively use the tautologies $x{\neq} y \lor x {\neq} y \land \rfvar \lbisubjunct x {\neq} y$ and $\lnot{\rfvar} \land \rtfvar \limply \initassum$.
{\footnotesize
\begin{sequentdeduction}[array]
  \linfer[gddR]{
  \linfer[UniqAx]{
  \linfer[Md]{
  \linfer[gddR]{
      \lsequent{\ddiamond{\pevolvein{\D{x}=-\genDE{x}}{\initassum}}{\,x {\neq} y}}{\lfalse}
      }
      {\lsequent{\ddiamond{\pevolvein{\D{x}=-\genDE{x}}{\lnot{\rfvar} \land \rtfvar}}{\,x{\neq} y}}{\lfalse}}
    }
    {\lsequent{\ddiamond{\pevolvein{\D{x}=-\genDE{x}}{\lnot{\rfvar} \land \rtfvar}}{(x{\neq} y \lor x {\neq} y \land \rfvar)}}{\lfalse}}
  }
  {\lsequent{\ddiamond{\pevolvein{\D{x}=-\genDE{x}}{\lnot{\rfvar}}}{x{\neq} y},\ddiamond{\pevolvein{\D{x}=-\genDE{x}}{\rtfvar}}{(x {\neq} y \land \rfvar)}}{\lfalse}}
  }
  {\lsequent{\dprogressinsmall{\D{x}=-\genDE{x}}{\lnot{\rfvar}},\ddiamond{\pevolvein{\D{x}=-\genDE{x}}{\ivr \land \rtfvar}}{(x {\neq} y \land \rfvar)}}{\lfalse}}
\end{sequentdeduction}
}%
We complete the proof of \textcircled{6} by dualizing and \irref{dW}, since the domain constraint and postcondition of the diamond modality in the antecedents is contradictory.
{\footnotesize
\begin{sequentdeduction}[array]
  \linfer[diamond+notl]{
  \linfer[dW]{
    \lclose
  }
  {\lsequent{}{\dbox{\pevolvein{\D{x}=-\genDE{x}}{\initassum}}{\initassum}}}
  }
  {\lsequent{\ddiamond{\pevolvein{\D{x}=-\genDE{x}}{\initassum}}{x {\neq} y}}{\lfalse}}
\end{sequentdeduction}
}%
\end{proof}

The rule \irref{realindin} discards any additional context in the antecedents of its premises.
Intuitively, this is due to the use of \irref{RealIndInAx} which focuses on particular states along trajectories of the ODE $\D{x}=\genDE{x}$; it would be unsound to keep any assumptions about the initial state that depend on $x$ because we may not be at the initial state!
On the other hand, assumptions that do not depend on $x$ remain true along the ODE. They can be kept with uses of \irref{V} throughout the derivation above or added into $\ivr$ before using \irref{realindin} by a \irref{dC} that proves with \irref{V}.
We elide these additional steps, and directly use rule \irref{realindin} while keeping these constant context assumptions around.

\irref{realindin} is derived with the $\ddnextsmall$ modality. However, it is easy to convert between the two modalities with the following derived axiom.
\begin{corollary}[Initial state inclusion]
\label{cor:bigsmallequiv}
This is a derived axiom.
\[\dinferenceRule[bigsmallequiv|Init]{}
{
\linferenceRule[impl]
  {\initassum \land \rfvar}
  {\big(\axkey{\dprogressinsmall{\D{x}=\genDE{x}}{\rfvar}} \lbisubjunct \dprogressin{\D{x}=\genDE{x}}{\rfvar}\big)}
}
{}\]
\end{corollary}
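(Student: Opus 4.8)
The plan is to prove the two implications of the equivalence separately. The ``$\limply$'' direction ($\ddnextsmall$ implies $\ddnext$) holds unconditionally, since enlarging the evolution domain of a diamond modality can only add solutions; the ``$\lylpmi$'' direction is where the hypothesis $\initassum\land\rfvar$ is needed, reflecting the fact that the only difference between the two modalities is whether the \emph{initial} state is required to lie in the domain, which it does here because $\rfvar$ holds initially.

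For ``$\limply$'', unfold the abbreviations so that the goal becomes $\ddiamond{\pevolvein{\D{x}=\genDE{x}}{\rfvar}}{x\neq y} \limply \ddiamond{\pevolvein{\D{x}=\genDE{x}}{\rfvar\lor x=y}}{x\neq y}$. This closes by one application of \irref{gddR} with intermediate domain constraint $\rfvar$: its domain side premise is the tautology $\rfvar\limply\rfvar\lor x=y$ (discharged by \irref{qear}), and its diamond premise is exactly the antecedent.

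For ``$\lylpmi$'', after unfolding, the goal is $\lsequent{\initassum,\rfvar,\ddiamond{\pevolvein{\D{x}=\genDE{x}}{\rfvar\lor x=y}}{x\neq y}}{\ddiamond{\pevolvein{\D{x}=\genDE{x}}{\rfvar}}{x\neq y}}$. I would apply the refinement axiom \irref{dDR} with original domain $\rfvar\lor x=y$ and target domain $\rfvar$. This leaves the diamond premise $\ddiamond{\pevolvein{\D{x}=\genDE{x}}{\rfvar\lor x=y}}{x\neq y}$, which is in the antecedent and closes by \irref{id}, together with the box premise $\dbox{\pevolvein{\D{x}=\genDE{x}}{\rfvar\lor x=y}}{\rfvar}$. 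The key observation is that along any solution remaining in $\rfvar\lor x=y$ from an initial state satisfying $\initassum$ and $\rfvar$, the formula $\rfvar$ in fact holds at every point: wherever the solution satisfies $x=y$, its values on the variables $x$ (the only ones $\rfvar$ mentions) agree with their initial values, at which $\rfvar$ was assumed. I would make this syntactic as follows. Writing $\rfvar(y)$ for the formula obtained from $\rfvar(x)$ by replacing $x$ with the fresh vector $y$, first apply \irref{cut} to introduce $\rfvar(y)$ into the antecedent, justified by \irref{qear} since $\initassum\land\rfvar(x)\limply\rfvar(y)$ is valid real arithmetic. Since $y$ does not occur in $\D{x}=\genDE{x}$, a \irref{dC} with cut formula $\rfvar(y)$ has its left premise closed by \irref{V}, and its right premise $\dbox{\pevolvein{\D{x}=\genDE{x}}{(\rfvar\lor x=y)\land\rfvar(y)}}{\rfvar}$ closes by \irref{dW} using the valid arithmetic implication $(\rfvar(x)\lor x=y)\land\rfvar(y)\limply\rfvar(x)$.

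The main obstacle is pinpointing why the ``$x=y$'' disjunct added to the evolution domain in the $\ddnext$ abbreviation is harmless once $\rfvar$ is true initially --- which is exactly the box premise $\dbox{\pevolvein{\D{x}=\genDE{x}}{\rfvar\lor x=y}}{\rfvar}$ above --- after which everything else is routine refinement and constant-context bookkeeping. I note that this derivation stays within base \dL together with the already-derived diamond rules \irref{dDR} and \irref{gddR}, so it needs neither \irref{Cont} nor \irref{Uniq}.
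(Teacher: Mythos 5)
Your proposal is correct and takes essentially the same route as the paper: \irref{gddR} with the tautology $\rfvar\limply\rfvar\lor x=y$ for the forward direction, and \irref{dDR} followed by cutting in $\rfvar(y)$, moving it into the domain via the constant-context \irref{dC}/\irref{V} pattern, and closing with \irref{dW} for the converse. The only cosmetic difference is that the paper abbreviates the \irref{dC}/\irref{V} combination by a single \irref{V} label, which it explains earlier as shorthand for exactly the step you spell out.
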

\begin{proof}
We derive both directions of the equivalence by unfolding the syntactic abbreviations.
In the ``$\limply$'' direction, a \irref{gddR} step is sufficient, because $\rfvar \limply \rfvar \lor \initassum$ is a propositional tautology:
{\footnotesize
\begin{sequentdeduction}[array]
  \linfer[gddR]{
    \lclose
  }
  {\lsequent{\initassum,\rfvar,\ddiamond{\pevolvein{\D{x}=\genDE{x}}{\rfvar}}{x\neq y}}{\ddiamond{\pevolvein{\D{x}=\genDE{x}}{\rfvar \lor \initassum}}{x \neq y}}}
\end{sequentdeduction}
}%
In the ``$\lylpmi$'' direction, we start with a \irref{dDR} step to reduce to the box modality. We then cut $P(y)$, which proves from antecedents $x=y,\rfvar$. This is then introduced in the domain constraint by \irref{V}, which allows us to close the proof by \irref{dW}.
{\footnotesize
\begin{sequentdeduction}[array]
  \linfer[dDR]{
  \linfer[cut]{
  \linfer[V]{
  \linfer[dW]{
    \lclose
  }
    \lsequent{}{\dbox{\pevolvein{\D{x}=\genDE{x}}{(\rfvar \lor \initassum) \land \rfvar(y)}}{\rfvar}}
  }
    {\lsequent{\rfvar(y)}{\dbox{\pevolvein{\D{x}=\genDE{x}}{\rfvar \lor \initassum}}{\rfvar}}}
  }
    {\lsequent{\initassum,\rfvar}{\dbox{\pevolvein{\D{x}=\genDE{x}}{\rfvar \lor \initassum}}{\rfvar}}}
  }
  {\lsequent{\initassum,\rfvar,\ddiamond{\pevolvein{\D{x}=\genDE{x}}{\rfvar \lor \initassum}}{x \neq y}}{\ddiamond{\pevolvein{\D{x}=\genDE{x}}{\rfvar}}{x\neq y}}}
\end{sequentdeduction}
}%
\end{proof}

With \irref{bigsmallequiv}, we may equivalently rewrite the premises of \irref{realindin} with the $\ddnext$ modality, and so we will directly use it with $\ddnext$ instead of $\ddnextsmall$. Similarly, rule \irref{realind} from \rref{cor:realind} derives from \irref{realindin} with $\ivr \equiv \ltrue$.

Continuity (and local progress characterizations) generalize to open semialgebraic $\ivr$. Intuitively, the soundness of \irref{ContAx} only required that $p>0$ characterized an open set. The derived axiom \irref{ContOpen} builds on \irref{ContAx} to prove a stronger analogue for any formula characterizing an open semialgebraic set.

\begin{corollary}[Open continuity] \label{cor:ContOpen}
Let $\ivr$ be a formula characterizing an open semialgebraic set, this axiom derives from \irref{ContAx+UniqAx}.
\[
\dinferenceRule[ContOpen|Cont$^O$]{open continuous existence}
{
 \linferenceRule[impl]
  {\initassum}
  {\big(\ivr \limply \axkey{\dprogressinsmall{\D{x}=\genDE{x}}{\ivr}{}}\big)}
}{}
\]
\end{corollary}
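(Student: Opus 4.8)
The plan is to peel the open semialgebraic domain $\ivr$ apart into a finite disjunction of basic \emph{strictly positive} conditions, handle each such basic condition directly with \irref{ContAx}, and glue the resulting local progress witnesses back together with \irref{UniqAx}. First I would invoke the finiteness theorem~\cite[Theorem 2.7.2]{Bochnak1998}, applied to the \emph{closed} semialgebraic set described by $\lnot{\ivr}$: writing $\lnot{\ivr}$ as a finite union of basic closed sets and taking complements (distributing the resulting conjunction of disjunctions) produces polynomials $q_{ij}$ with $\ivr \lbisubjunct \lorfold_{i=0}^{M}\landfold_{j=0}^{n(i)} q_{ij} > 0$; openness of $\ivr$ is exactly what guarantees that only strict inequalities appear. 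Write $\phi_i \mdefequiv \landfold_{j=0}^{n(i)} q_{ij} > 0$. This equivalence, along with the implications $\phi_i \limply \ivr$ and $\phi_i \limply q_{ij} > 0$, are valid first-order formulas of real arithmetic and hence available by \irref{qear} (if the disjunction is empty, i.e.\ $\ivr\lbisubjunct\ltrue$, take a single conjunct $q = 1$).

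Next, starting from $\lsequent{\initassum, \ivr}{\dprogressinsmall{\D{x}=\genDE{x}}{\ivr}}$ (the sequent remaining after two \irref{implyr} steps), I would rewrite $\ivr$ in the antecedent to $\lorfold_i \phi_i$ via \irref{qear} and split with \irref{orl}. In the branch where $\phi_k$ holds, \irref{gddR} with first premise $\lsequent{\phi_k}{\ivr}$ (closed by \irref{qear}) reduces the goal to $\lsequent{\initassum, \phi_k}{\dprogressinsmall{\D{x}=\genDE{x}}{\phi_k}}$, i.e.\ local progress within a single basic open box. For each $j$, since $\phi_k \limply q_{kj} > 0$, axiom \irref{ContAx} yields $\lsequent{\initassum, \phi_k}{\ddiamond{\pevolvein{\D{x}=\genDE{x}}{q_{kj} > 0}}{x\neq y}}$. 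Conjoining these $n(k)+1$ diamond formulas and applying \irref{UniqAx} repeatedly --- each step with both postconditions taken to be $x\neq y$, followed by a trivial \irref{Md} step to rewrite $x\neq y\lor x\neq y$ to $x\neq y$ --- merges the domain constraints into their conjunction, yielding $\ddiamond{\pevolvein{\D{x}=\genDE{x}}{\landfold_j q_{kj}>0}}{x\neq y}$, i.e.\ $\dprogressinsmall{\D{x}=\genDE{x}}{\phi_k}$, which closes the branch. This last recombination is exactly the ``$\lylpmi$'' direction of \irref{decompand}, which needs only \irref{UniqAx}.

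The step I expect to be the crux is the recombination: the local progress facts $\dprogressinsmall{\D{x}=\genDE{x}}{q_{kj}>0}$ for the individual conjuncts are witnessed by \emph{a priori different} solutions, and it is uniqueness of solutions --- internalized by \irref{UniqAx} --- that forces a common prefix of these solutions to stay in \emph{all} of the regions $q_{kj}>0$ simultaneously, hence in $\phi_k$. The other genuine use of the hypothesis is in the normal-form step, where openness is what lets us eliminate non-strict conjuncts (so that \irref{ContAx} applies); but this is a standard fact of real algebraic geometry. Everything else (\irref{orl}, \irref{gddR}, \irref{Md}, \irref{qear}, the \irref{implyr} steps) is routine sequent bookkeeping.
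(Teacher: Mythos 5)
Your proposal matches the paper's proof essentially step by step: the finiteness theorem produces the strict-only disjunctive normal form, \irref{orl} and \irref{gddR} isolate a single basic open disjunct $\phi_k$, and the \irref{ContAx} instances for the individual $q_{kj}>0$ are recombined via \irref{UniqAx} --- which you correctly identify as the ``$\lylpmi$'' direction of \irref{decompand}, exactly the rule the paper invokes. The only cosmetic difference is that you reach the normal form by applying the finiteness theorem to the closed set $\lnot\ivr$ and complementing, whereas the paper cites the finiteness theorem for open semialgebraic sets directly; both yield the same $\lorfold_i\landfold_j q_{ij}>0$.
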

\begin{proof}
Since $\ivr$ characterizes an open, semialgebraic set, by the finiteness theorem~\cite[Theorem 2.7.2]{Bochnak1998} for open semialgebraic sets, $\ivr$ may be written as follows ($q_{ij}$ are polynomials):
\[\ivr \equiv \lorfold_{i=0}^M \landfold_{j=0}^{m(i)} q_{ij}> 0\]
We may assume that $\ivr$ is written in this form by an application of \irref{qear} (and congruence or \irref{dDR}). Throughout this proof, we collapse similar premises in derivations and index them by $i,j$.
We abbreviate the $i$-th disjunct of $\ivr$ with $\ivr_i \mdefequiv \landfold_{j=0}^{m(i)} q_{ij}> 0$.

We start by splitting on the outermost disjunction of $\ivr$ with \irref{orl}. For each resulting premise (indexed by $i$), we select the corresponding disjunct of $\ivr$ to prove local progress. The domain change with \irref{gddR} proves since \(\ivr_i {\limply} \ivr\) is a propositional tautology for each $i$.
{\footnotesize
\begin{sequentdeduction}[array]
\linfer[orl]{
\linfer[gddR]{
  \lsequent{\initassum,\ivr_i}{\dprogressinsmall{\D{x}=\genDE{x}}{\ivr_i}}
}
  {\lsequent{\initassum,\ivr_i}{\dprogressinsmall{\D{x}=\genDE{x}}{\ivr}}}
}
  {\lsequent{\initassum,\ivr}{\dprogressinsmall{\D{x}=\genDE{x}}{\ivr}}}
\end{sequentdeduction}
}%

Now, we only need to prove local progress in $\ivr_i$. We make use of \irref{decompand} to split up the conjunction in $\ivr_i$. This leaves premises (indexed by $j$) which are all closed by \irref{ContAx}.
{\footnotesize
\begin{sequentdeduction}[array]
\linfer[decompand+andr]{
\linfer[ContAx]{
  \lclose
}
  {\lsequent{\initassum,q_{ij}> 0}{\dprogressinsmall{\D{x}=\genDE{x}}{q_{ij}> 0}}}
}
  {\lsequent{\initassum,\landfold_{j=0}^{m(i)} q_{ij}> 0}{\dprogressinsmall{\D{x}=\genDE{x}}{\ivr_i}}}
\end{sequentdeduction}
}%
\end{proof}

\section{Completeness}
\label{app:completeness}

This appendix gives the full completeness arguments for the derived rules \irref{dRI} and \irref{sAI} (and the local progress conditions \irref{Lpiff}). We prove the completeness of \irref{dRI} by showing that \irref{DRI} is a derived axiom. We take a similar approach for \irref{sAI}, although the precise form of the resulting derived axiom is more involved. We take syntactic approaches to proving completeness of \irref{dRI} and \irref{sAI} to demonstrate the versatility of the \dL calculus and make it possible to disprove invariance properties (as opposed to just failing to apply a complete proof rule). We refer the readers to other presentations~\cite{DBLP:conf/tacas/GhorbalP14,DBLP:conf/emsoft/LiuZZ11} for purely semantical completeness arguments for invariants. Recall from~\rref{app:extaxiomatization}, that axioms \irref{Cont+RealIndInAx} have an additional syntactic requirement, e.g.~$\D{x_1}=1$. We assume that the syntactic requirement is met throughout this appendix, using \irref{DG} if necessary, but elide the explicit proof steps.

The $\ddnext$ and $\ddnextsmall$ modalities have their corresponding semantic readings only when the assumption $\initassum$ is true in the initial state~\rref{sec:extaxioms}. This additional assumption was elided in \rref{sec:semialg} for brevity, but is expanded in full in this appendix. For clarity, we re-state the derived axioms from \rref{sec:semialg} with this additional assumption where necessary. The ideas for these proofs are in the main paper.

\subsection{Progress Formulas}
\label{app:progressformulas}

We start with following useful observation on rearrangements of the progress formulas for polynomials:
\begin{proposition}
\label{prop:rearrangement}
Let $N$ be the rank of $p$. The following are provable equivalences on the progress and differential radical formulas.
\begin{align}
\sigliedgt{\genDE{x}}{p} \lbisubjunct&~p > 0 \lor (p = 0 \land \lied[]{\genDE{x}}{p} > 0) \label{eq:sigliedgtrearrangement}\\
&\lor \dots\nonumber\\
&\lor \big(p=0 \land \lied[]{\genDE{x}}{p} = 0 \land \dots \land \lied[N-2]{\genDE{x}}{p} = 0 \land \lied[N-1]{\genDE{x}}{p} > 0\big)\nonumber\\
\sigliedgeq{\genDE{x}}{p} \lbisubjunct&~p\geq 0 \land \big(p=0 \limply \lied[]{\genDE{x}}{p} \geq 0\big) \label{eq:sigliedgeqrearrangement}\\
&\land \dots\nonumber\\
&\land \big(p=0 \land \lied[]{\genDE{x}}{p} = 0 \land \dots \land \lied[N-2]{\genDE{x}}{p} = 0 \limply \lied[N-1]{\genDE{x}}{p} \geq 0\big)\nonumber\\
\lnot{(\sigliedgt{\genDE{x}}{p})} \lbisubjunct &~\sigliedgeq{\genDE{x}}{(-p)} \label{eq:sigliedfullrearrangement} \\
\lnot{(\sigliedzero{\genDE{x}}{p})} \lbisubjunct&~\sigliedgt{\genDE{x}}{p} \lor \sigliedgt{\genDE{x}}{(-p)} \label{eq:sigliedzerorearrangement}
\end{align}
\end{proposition}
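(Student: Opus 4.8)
\textbf{Proof plan for Proposition~\ref{prop:rearrangement}.} The plan is to prove the four equivalences \eqref{eq:sigliedgtrearrangement}--\eqref{eq:sigliedzerorearrangement} by unfolding the definitions of $\sigliedgt{\genDE{x}}{p}$, $\sigliedgeq{\genDE{x}}{p}$ and $\sigliedzero{\genDE{x}}{p}$ and then doing purely propositional (and trivial first-order real arithmetic) manipulations, all of which are discharged by \irref{qear}. The key observation that makes all of these routine is that the guards in the progress formula are \emph{nested}: the hypothesis of the $k$-th conjunct, $p=0 \land \lied[]{\genDE{x}}{p}=0 \land \dots \land \lied[k-1]{\genDE{x}}{p}=0$, is exactly the conjunction of the conclusions (with ``$=$'' in place of ``$\geq$/$>$'') that would have to hold at all earlier stages. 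This telescoping structure is what lets a conjunction of guarded implications be re-expressed as a disjunction of ``first significant sign'' cases, and vice versa.

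First I would handle \eqref{eq:sigliedgtrearrangement}. Writing $H_k \mdefequiv \bigwedge_{i=0}^{k-1} \lied[i]{\genDE{x}}{p}=0$ (so $H_0$ is vacuously true), the definition gives $\sigliedgt{\genDE{x}}{p} \lbisubjunct \bigwedge_{k=0}^{N-1}\big(H_k \limply \lied[k]{\genDE{x}}{p} \sim_k 0\big)$ where $\sim_k$ is ``$\geq$'' for $k<N-1$ and ``$>$'' for $k=N-1$. I claim this is propositionally equivalent to $\bigvee_{k=0}^{N-1}\big(H_k \land \lied[k]{\genDE{x}}{p} > 0\big)$. For the forward direction: assuming all the guarded implications hold, let $k$ be least such that $\lied[k]{\genDE{x}}{p}\neq 0$ (if no such $k$ exists, then $H_{N-1}$ holds, so the last implication forces $\lied[N-1]{\genDE{x}}{p}>0$, contradiction; so some such $k$ exists and moreover $H_k$ holds); the $k$-th implication then gives $\lied[k]{\genDE{x}}{p}\geq 0$, hence $>0$ since it is nonzero, establishing the $k$-th disjunct. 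For the backward direction: if $H_k \land \lied[k]{\genDE{x}}{p}>0$ holds, then for $i<k$ the guard $H_i$ holds but so does the weaker conclusion $\lied[i]{\genDE{x}}{p}=0 \limply \lied[i]{\genDE{x}}{p}\sim_i 0$ trivially, for $i=k$ the conclusion $\lied[k]{\genDE{x}}{p}>0$ (hence $\geq 0$) holds, and for $i>k$ the guard $H_i$ is \emph{false} (it contains $\lied[k]{\genDE{x}}{p}=0$), so the implication is vacuous. This is a finite propositional tautology once $N$ is fixed, so \irref{qear} closes it. Equivalence \eqref{eq:sigliedgeqrearrangement} is handled the same way but even more directly: $\sigliedgeq{\genDE{x}}{p}$ is \emph{defined} as $\sigliedgt{\genDE{x}}{p} \lor \sigliedzero{\genDE{x}}{p}$, and combining the defining conjunction of $\sigliedgt{\genDE{x}}{p}$ (all conjuncts) with the extra disjunct $\sigliedzero{\genDE{x}}{p} = H_N$ exactly turns the final strict conjunct $H_{N-1}\limply \lied[N-1]{\genDE{x}}{p}>0$ into $H_{N-1}\limply \lied[N-1]{\genDE{x}}{p}\geq 0$ (since the ``$=0$'' case is now absorbed into the $\sigliedzero{}$ disjunct); again a fixed-$N$ propositional tautology for \irref{qear}.

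For \eqref{eq:sigliedfullrearrangement} I would use the key fact that $p$ and $-p$ have the same rank $N$ (their ideals of higher Lie derivatives coincide, since $\lied[i]{\genDE{x}}{(-p)} = -\lied[i]{\genDE{x}}{p}$), so the progress and differential radical formulas for $-p$ use the same $N$. Then $\lnot(\sigliedgt{\genDE{x}}{p})$, expanded via \eqref{eq:sigliedgtrearrangement}, is $\bigwedge_{k=0}^{N-1} \lnot\big(H_k \land \lied[k]{\genDE{x}}{p}>0\big) = \bigwedge_{k=0}^{N-1}\big(H_k \limply \lied[k]{\genDE{x}}{p}\leq 0\big)$; rewriting $\lied[k]{\genDE{x}}{p}\leq 0$ as $\lied[k]{\genDE{x}}{(-p)}\geq 0$ and $H_k$ as $\bigwedge_{i<k}\lied[i]{\genDE{x}}{(-p)}=0$ gives precisely the right-hand side of \eqref{eq:sigliedgeqrearrangement} applied to $-p$, i.e.\ $\sigliedgeq{\genDE{x}}{(-p)}$. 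Finally \eqref{eq:sigliedzerorearrangement}: $\lnot(\sigliedzero{\genDE{x}}{p}) = \lnot\big(\bigwedge_{i=0}^{N-1}\lied[i]{\genDE{x}}{p}=0\big) = \bigvee_{i=0}^{N-1}\lied[i]{\genDE{x}}{p}\neq 0$; I would argue this equals $\sigliedgt{\genDE{x}}{p}\lor\sigliedgt{\genDE{x}}{(-p)}$ by picking the least $k$ with $\lied[k]{\genDE{x}}{p}\neq 0$ and case-splitting on its sign — if $\lied[k]{\genDE{x}}{p}>0$ use the $k$-th disjunct of \eqref{eq:sigliedgtrearrangement} for $p$ (since $H_k$ holds), if $<0$ use it for $-p$ — and conversely both $\sigliedgt{}$ disjuncts obviously imply some $\lied[i]{\genDE{x}}{p}\neq 0$. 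Again all four are, for each fixed $N$, tautologies of propositional logic over the atoms $\lied[i]{\genDE{x}}{p}\sim 0$, so \irref{qear} discharges them; the only non-propositional input is the remark that $-p$ has rank $N$, which follows from $\ideal{p,\dots,\lied[N-1]{\genDE{x}}{p}} = \ideal{-p,\dots,\lied[N-1]{\genDE{x}}{(-p)}}$ and the definition of rank. I expect the main (though modest) obstacle to be writing the telescoping argument carefully enough that the ``least $k$ with nonzero Lie derivative'' reasoning is visibly a finite propositional manipulation rather than an induction — but since $N$ is a fixed natural number for any given $p$, this genuinely is just one big tautology check.
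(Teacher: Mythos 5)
Your proposal is correct and follows essentially the same route as the paper's proof: unfold the definitions of $\sigliedgt{\genDE{x}}{p}$, $\sigliedgeq{\genDE{x}}{p}$, $\sigliedzero{\genDE{x}}{p}$, observe that all four equivalences are, for each fixed rank $N$, valid formulas of (quantifier-free) real arithmetic discharged by \irref{qear}, and note $\lied[i]{\genDE{x}}{(-p)}=-\lied[i]{\genDE{x}}{p}$ so $p$ and $-p$ share the same rank. The only cosmetic difference is that for \eqref{eq:sigliedzerorearrangement} you argue directly from \eqref{eq:sigliedgtrearrangement} with a sign split on the first nonzero Lie derivative, while the paper derives it by dualizing \eqref{eq:sigliedfullrearrangement}; both are fine.
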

\begin{proof}
We prove the equivalences case by case, in order. We will use the following real arithmetic equivalences:
\begin{align*}
p \geq 0 &\lbisubjunct p=0 \lor p > 0\\
-p \geq 0 \land p \geq 0 &\lbisubjunct p=0\\
\lnot{(p > 0)} &\lbisubjunct -p \geq 0
\end{align*}
Note, also that Lie derivation is linear i.e.~$\lied[i]{\genDE{x}}{(-p)} = -(\lied[i]{\genDE{x}}{p})$ is provable in real arithmetic for any $i$.
\begin{itemize}
\item[\rref{eq:sigliedgtrearrangement}] This equivalence follows by real arithmetic, and simplifying with propositional rearrangement as follows (here, the remaining conjuncts of $\sigliedgt{\genDE{x}}{p}$ are abbreviated to $\dots$):
\begin{align*}
&p \geq 0 \land \Big((p = 0 \limply \lied[]{\genDE{x}}{p} \geq 0) \land \dots \Big)\lbisubjunct \\
& p > 0 \land \Big((p = 0 \limply \lied[]{\genDE{x}}{p} \geq 0) \land \dots \Big) \lor \\
& p=0 \land \Big((p = 0 \limply \lied[]{\genDE{x}}{p} \geq 0) \land \dots \Big)
\end{align*}

The first disjunct on the RHS simplifies by real arithmetic to $p > 0$ since all of the implicational conjuncts contain $p=0$ on the left of an implication. The latter simplifies to $p=0 \land \Big(\lied[]{\genDE{x}}{p} \geq 0 \land \dots\Big)$, yielding the provable equivalence:
\[\sigliedgt{\genDE{x}}{p} \lbisubjunct p > 0 \lor p=0 \land \Big(\lied[]{\genDE{x}}{p} \geq 0 \land \dots\Big)\]
The equivalence $\rref{eq:sigliedgtrearrangement}$ follows by iterating this expansion for the conjuncts corresponding to higher Lie derivatives.

\item[\rref{eq:sigliedgeqrearrangement}] This equivalence proves by expanding the formula $\sigliedgeq{\genDE{x}}{p}$ which yields a disjunction between $\sigliedgt{\genDE{x}}{p}$ and $\sigliedzero{\genDE{x}}{p}$. The latter formula is used to relax the strict inequality in the last conjunct of $\sigliedgt{\genDE{x}}{p}$ to a non-strict inequality.
\item[\rref{eq:sigliedfullrearrangement}] This equivalence follows by negating both sides of the equivalence \rref{eq:sigliedgtrearrangement} and moving negations on the RHS inwards with propositional tautologies, yielding the provable equivalence:
\begin{align*}
\lnot{(\sigliedgt{\genDE{x}}{p})} \lbisubjunct &\Big( \lnot{(p > 0)} \land (p = 0 \limply \lnot{(\lied[]{\genDE{x}}{p} > 0)})\\
&\land \dots\\
&\land \big(p{=}0 \land \lied[]{\genDE{x}}{p} {=} 0 \land \dots \land \lied[N-2]{\genDE{x}}{p} {=} 0 \limply \lnot{(\lied[N-1]{\genDE{x}}{p}{>}0)}\big)\Big)
\end{align*}
The desired equivalence proves by further rewriting the above RHS with real arithmetic and equivalence \rref{eq:sigliedgeqrearrangement}.
\item[\rref{eq:sigliedzerorearrangement}] By \rref{eq:sigliedfullrearrangement}, we have the provable equivalence:
\[\lnot{(\sigliedgt{\genDE{x}}{p})} \land \lnot{(\sigliedgt{\genDE{x}}{(-p)})} \lbisubjunct (\sigliedgeq{\genDE{x}}{(-p)}) \land (\sigliedgeq{\genDE{x}}{p}) \]
By rewriting with \rref{eq:sigliedgeqrearrangement}, the RHS of this equivalence is equivalent to the formula $\sigliedzero{\genDE{x}}{p}$ by real arithmetic. Negating both sides yields the provable equivalence \rref{eq:sigliedzerorearrangement}. \qedhere
\end{itemize}
\end{proof}

The equivalence \rref{eq:sigliedfullrearrangement} is particularly important, because it underlies the next proposition, from which all results about local progress will follow.

\begin{proposition} \label{prop:negationrearrangement}
Let $\rfvar$ be in normal form:
\[\rfvar \equiv \lorfold_{i=0}^{M} \Big(\landfold_{j=0}^{m(i)} p_{ij} \geq 0 \land \landfold_{j=0}^{n(i)} q_{ij}> 0\Big) \]

$\lnot{\rfvar}$ can be put in a normal form:
\[\lnot{\rfvar} \equiv \lorfold_{i=0}^{N} \Big(\landfold_{j=0}^{a(i)} r_{ij} \geq 0 \land \landfold_{j=0}^{b(i)} s_{ij}> 0\Big) \]
for which we additionally have the provable equivalence:
\begin{align*}
\lnot{(\sigliedsai{\genDE{x}}{\rfvar})} \lbisubjunct \sigliedsai{\genDE{x}}{(\lnot{\rfvar})}
\end{align*}
\end{proposition}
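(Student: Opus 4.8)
The plan is to prove \rref{prop:negationrearrangement} by reducing the semialgebraic statement to the atomic statements already established in \rref{prop:rearrangement}, namely the equivalences \rref{eq:sigliedfullrearrangement} and \rref{eq:sigliedzerorearrangement}, combined with purely propositional (and real-arithmetic) manipulations of the disjunctive normal form. First I would fix, for each atom occurring in $\rfvar$, its rank (the rank of $p_{ij}$, resp.\ $q_{ij}$), which exists by the Noetherian argument in \rref{subsec:completenessalg}; all progress formulas $\sigliedgeq{\genDE{x}}{\cdot}$ and $\sigliedgt{\genDE{x}}{\cdot}$ are then finite formulas of real arithmetic with a determined number of conjuncts/disjuncts, so every manipulation below is a finite real-arithmetic tautology (hence \irref{qear}-provable).

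The core of the argument is to push the negation through $\rfvar$ in normal form. Negating $\rfvar \equiv \lorfold_{i=0}^{M} \big(\landfold_{j=0}^{m(i)} p_{ij} \geq 0 \land \landfold_{j=0}^{n(i)} q_{ij}> 0\big)$ turns the outer disjunction into a conjunction of disjunctions $\landfold_{i} \big(\lorfold_{j} p_{ij} < 0 \lor \lorfold_{j} q_{ij} \leq 0\big)$; by distributing (and rewriting $p < 0$ as $-p > 0$ and $q \leq 0$ as $-q \geq 0$) one obtains a normal form for $\lnot\rfvar$ with polynomials $r_{ij} \in \{-p_{i'j'}\}$ and $s_{ij} \in \{-q_{i'j'}\}$. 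This is the normal form claimed in the statement; note its atoms are sign-flips of the original atoms, so their ranks are unchanged and the relevant atomic equivalences from \rref{prop:rearrangement} apply to them directly. The second half is to show this same bookkeeping is compatible with the $\sigliedsai{\genDE{x}}{\cdot}$ operator. Starting from $\lnot(\sigliedsai{\genDE{x}}{\rfvar})$, I would expand the definition $\sigliedsai{\genDE{x}}{\rfvar} = \lorfold_{i} \big(\landfold_j \sigliedgeq{\genDE{x}}{p_{ij}} \land \landfold_j \sigliedgt{\genDE{x}}{q_{ij}}\big)$, negate, and distribute exactly as above. The key atomic facts are: $\lnot(\sigliedgt{\genDE{x}}{q}) \lbisubjunct \sigliedgeq{\genDE{x}}{(-q)}$, which is precisely \rref{eq:sigliedfullrearrangement}; and $\lnot(\sigliedgeq{\genDE{x}}{p}) \lbisubjunct \sigliedgt{\genDE{x}}{(-p)}$, which follows by negating \rref{eq:sigliedfullrearrangement} with $p \mapsto -p$ and using $-(-p) = p$ (equivalently, it is the complementary reading of the disjunction $\sigliedgeq{\genDE{x}}{p} \lor \sigliedgt{\genDE{x}}{(-p)} \lbisubjunct \ltrue$, a consequence of \rref{eq:sigliedzerorearrangement} and the definition of $\sigliedgeq{}$). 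Substituting these atom-wise into the distributed negation produces exactly $\lorfold_{i} \big(\landfold_j \sigliedgeq{\genDE{x}}{r_{ij}} \land \landfold_j \sigliedgt{\genDE{x}}{s_{ij}}\big) = \sigliedsai{\genDE{x}}{(\lnot\rfvar)}$, since a negated conjunct $\lnot\sigliedgeq{\genDE{x}}{p_{ij}}$ becomes $\sigliedgt{\genDE{x}}{(-p_{ij})} = \sigliedgt{\genDE{x}}{s}$ for the corresponding strict atom $s = -p_{ij}$ of $\lnot\rfvar$, and a negated conjunct $\lnot\sigliedgt{\genDE{x}}{q_{ij}}$ becomes $\sigliedgeq{\genDE{x}}{(-q_{ij})} = \sigliedgeq{\genDE{x}}{r}$ for the non-strict atom $r = -q_{ij}$. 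Matching the two distributions term-by-term shows the two formulas are literally the same up to associativity/commutativity of $\land,\lor$, so their equivalence is \irref{qear}-provable.

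The main obstacle I anticipate is purely organizational rather than mathematical: one must choose the normal form of $\lnot\rfvar$ in \emph{exactly} the way that makes the distribution of $\lnot(\sigliedsai{\genDE{x}}{\rfvar})$ line up index-for-index with $\sigliedsai{\genDE{x}}{(\lnot\rfvar)}$, because the $\sigliedsai{\genDE{x}}{\cdot}$ operator commutes with neither $\land$ nor $\lor$ in general, and only the specific DNF obtained by the above distribution has atoms that are sign-flips of the original atoms (so that \rref{eq:sigliedfullrearrangement} is directly applicable). The safest route is to carry out both distributions in parallel — the distribution of $\lnot\rfvar$ and the distribution of $\lnot(\sigliedsai{\genDE{x}}{\rfvar})$ — and observe that applying $\sigliedsai{\genDE{x}}{\cdot}$ to the result of the first yields (after atom-wise rewriting by \rref{eq:sigliedfullrearrangement} and its $-p$ instance) the result of the second. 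Everything is then closed by \irref{qear}, since ranks are fixed and all formulas involved are finite real-arithmetic formulas; this is what the full proof in \rref{app:completeness} spells out in detail.
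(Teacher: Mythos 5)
Your proposal is correct and follows essentially the same route as the paper's proof: push the negation through the DNF to a CNF with sign-flipped atoms, push the negation through $\sigliedsai{\genDE{x}}{\rfvar}$ and rewrite atom-wise by~\rref{eq:sigliedfullrearrangement} (and its $-p$ instance) to obtain a matching CNF of progress formulas, then distribute both CNFs into DNF \emph{in lockstep} so the indices line up. Your ``main obstacle'' remark is exactly the point the paper handles by ``distributing $\psi$ following the same syntactic steps taken in $\phi$''; the only nitpick is that your parenthetical alternative derivation of $\lnot(\sigliedgeq{\genDE{x}}{p}) \lbisubjunct \sigliedgt{\genDE{x}}{(-p)}$ from the totality disjunction alone is not quite an equivalence without also noting exclusivity, but your first derivation (instantiating~\rref{eq:sigliedfullrearrangement} at $-p$ and negating both sides) is the clean one and suffices.
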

\begin{proof}
Throughout this proof, we will make use of the standard propositional tautologies:
\begin{align*}
\lnot{(A \land B)} &\lbisubjunct \lnot{A} \lor \lnot{B}\\
\lnot{(A \lor B)} &\lbisubjunct \lnot{A} \land \lnot{B}
\end{align*}

We start by negating $\rfvar$ (in normal form), and negating polynomials so that all inequalities have $0$ on the RHS. We write $\phi$ for the resulting RHS:
\[\lnot{\rfvar} \lbisubjunct \underbrace{\landfold_{i=0}^{M} \Big(\lorfold_{j=0}^{m(i)} -p_{ij} > 0 \lor \lorfold_{j=0}^{n(i)} -q_{ij} \geq 0 \Big)}_{\phi} \]
The progress formula $\sigliedsai{\genDE{x}}{\rfvar}$ for the normal form of $\rfvar$ is:
\[ \lorfold_{i=0}^{M} \Big(\landfold_{j=0}^{m(i)} \sigliedgeq{\genDE{x}}{p_{ij}} \land \landfold_{j=0}^{n(i)} \sigliedgt{\genDE{x}}{q_{ij}}\Big)\]

Negating both sides of this progress formula for $\rfvar$ proves:
\[\lnot{(\sigliedsai{\genDE{x}}{P})} \lbisubjunct
\landfold_{i=0}^{M} \Big(\lorfold_{j=0}^{m(i)} \lnot{(\sigliedgeq{\genDE{x}}{p_{ij}})} \lor \lorfold_{j=0}^{n(i)} \lnot{(\sigliedgt{\genDE{x}}{q_{ij}})}\Big)\]
Rewriting the RHS with equivalence \rref{eq:sigliedfullrearrangement} from \rref{prop:rearrangement} yields the following provable equivalence. We write $\psi$ for the resulting RHS.
\[\lnot{(\sigliedsai{\genDE{x}}{P})} \lbisubjunct \underbrace{\landfold_{i=0}^{M} \Big(\lorfold_{j=0}^{m(i)} \sigliedgt{\genDE{x}}{(-p_{ij})} \lor \lorfold_{j=0}^{n(i)} \sigliedgeq{\genDE{x}}{(-q_{ij})}\Big)}_\psi \]

Observe that $\phi,\psi$ have the same conjunctive normal form shape. We distribute the outer conjunction over the inner djsjunction in $\phi$ to obtain the following provable equivalence, whose RHS is a normal form for $\lnot{P}$ (for some indices $N,a(i),b(i)$ and polynomials $r_{ij},s_{ij}$):
\[\lnot{\rfvar} \lbisubjunct \lorfold_{i=0}^{N} \Big(\landfold_{j=0}^{a(i)} r_{ij} \geq 0 \lor \landfold_{j=0}^{b(i)} s_{ij} > 0\Big)\]
We distribute the disjunction in $\psi$ following the same syntactic steps taken in $\phi$ to obtain the following provable equivalence:
\[ \psi \lbisubjunct \lorfold_{i=0}^{N} \Big(\landfold_{j=0}^{a(i)} \sigliedgeq{\genDE{x}}{r_{ij}} \lor \landfold_{j=0}^{b(i)} \sigliedgt{\genDE{x}}{s_{ij}} \Big)\]
Rewriting with the equivalences derived so far, and using the above normal form for $\lnot{\rfvar}$, yields the required, provable equivalence:
\[
\lnot{(\sigliedsai{\genDE{x}}{\rfvar})} \lbisubjunct \sigliedsai{\genDE{x}}{(\lnot{\rfvar})}
\qedhere
\]
\end{proof}

\subsection{Local Progress}
\label{app:localprogress}

We first derive the properties about local progress stated in~\rref{subsec:localprogress}. These properties will be used in the completeness arguments for both algebraic and semialgebraic invariants.

\subsubsection{Atomic Non-strict Inequalities}

The \irref{Lpgeq} axiom derived in \rref{subsec:localprogress} has an implicit initial state assumption on the left of the implication for the $\ddnextsmall$ modality:
\[\dinferenceRule[Lpgeqxeqy|LPi$_\geq$]{}
{
\linferenceRule[impll]
  {\initassum \land p \geq 0 \land \big(p=0 \limply \dprogressinsmall{\D{x}=\genDE{x}}{\lied[]{\genDE{x}}{p} \geq 0}\big)}
  {\axkey{\dprogressinsmall{\D{x}=\genDE{x}}{p \geq 0}}}
}{}\]

We start by completing the proof of \rref{lem:localprogressgeq} that was outlined in \rref{subsec:localprogress}: we either iterate \irref{Lpgeqxeqy} until the first significant Lie derivative, or prove local progress in $p\geq 0$ using \irref{dRI} immediately.

\begin{proof}[Proof of \rref{lem:localprogressgeq}]
We derive the following axiom:
\[\dinferenceRule[Lpgeqfullxeqy|LP$_{\geq^*}$]{Progress Conditions}
{
\linferenceRule[impl]
  {\initassum \land \sigliedgeq{\genDE{x}}{p}}
  {\dprogressinsmall{\D{x}=\genDE{x}}{p \geq 0}}
}{}
\]

Let $N$ be the rank of $p$ with respect to $\D{x}=\genDE{x}$.
We unfold the definition of $\sigliedgeq{\genDE{x}}{p}$ and handle both cases separately.
{\footnotesize\renewcommand{\linferPremissSeparation}{~~~}%
\begin{sequentdeduction}[array]
\linfer[]{
  \linfer[orl]{
  \lsequent{\initassum,\sigliedgt{\genDE{x}}{p}}{\dprogressinsmall{\D{x}=\genDE{x}}{p \geq 0}} !
  \lsequent{\initassum,\sigliedzero{\genDE{x}}{p}}{\dprogressinsmall{\D{x}=\genDE{x}}{p \geq 0}}
  }
  {\lsequent{\initassum,\sigliedgt{\genDE{x}}{p} \lor \sigliedzero{\genDE{x}}{p}}{\dprogressinsmall{\D{x}=\genDE{x}}{p \geq 0}}}
}
  {\lsequent{\initassum,\sigliedgeq{\genDE{x}}{p}}{\dprogressinsmall{\D{x}=\genDE{x}}{p \geq 0}}}
\end{sequentdeduction}}%
The right premise by \irref{dDR}, because by \irref{dRI}, $p=0$ is invariant. The proof is completed with \irref{ContAx} using the trivial arithmetic fact $1 > 0$:
{\footnotesize
\begin{sequentdeduction}[array]
\linfer[dDR]{
  \linfer[dRI]{
    \lclose
  }
  {\lsequent{\sigliedzero{\genDE{x}}{p}}{\dbox{\pevolvein{\D{x}=\genDE{x}}{1 > 0}}{p = 0}}} !
  \linfer[qear+ContAx]{
    \lclose
  }
  {\lsequent{\initassum}{\dprogressinsmall{\D{x}=\genDE{x}}{1 \geq 0}}}
}
  {\lsequent{\initassum,\sigliedzero{\genDE{x}}{p}}{\dprogressinsmall{\D{x}=\genDE{x}}{p \geq 0}}}
\end{sequentdeduction}}%
The left premise also closes, because it gathers all of the open premises obtained by iterating \irref{Lpgeqxeqy} for higher Lie derivatives. In this way, the derivation continues until we are left with the final open premise which is abbreviated here, and continued below.
{\footnotesize\renewcommand{\linferPremissSeparation}{~~~}%
\begin{sequentdeduction}[array]
\linfer[Lpgeqxeqy]{
  \linfer[qear]{\lclose}{\lsequent{\sigliedgt{\genDE{x}}{p}}{p \geq 0}} !
  \linfer[Lpgeqxeqy]{
  \linfer[qear]{\lclose}{\lsequent{\sigliedgt{\genDE{x}}{p},p=0}{\lied[]{\genDE{x}}{p} \geq 0}} !
  \linfer[Lpgeqxeqy]{
    \lsequent{\initassum,\sigliedgt{\genDE{x}}{p},\dots}{\dots}
  }
  {\dots}
}
  {\lsequent{\initassum,\sigliedgt{\genDE{x}}{p},p=0}{\dprogressinsmall{\D{x}=\genDE{x}}{\lied[]{\genDE{x}}{p} \geq 0}}}
}
  {\lsequent{\initassum,\sigliedgt{\genDE{x}}{p}}{\dprogressinsmall{\D{x}=\genDE{x}}{p \geq 0}}}
\end{sequentdeduction}
}%
The open premise corresponds to the last conjunct of $\sigliedgt{\genDE{x}}{p}$. The implication in the conjunct is discharged with the gathered antecedents $p=0,\dots,\lied[N-2]{\genDE{x}}{p}= 0$.
{\footnotesize\renewcommand{\linferPremissSeparation}{~~~}%
\begin{sequentdeduction}[array]
\linfer[cut]{
  \linfer[gddR]{
  \linfer[ContAx]{
    \lclose
  }
    {\lsequent{\initassum,\lied[N-1]{\genDE{x}}{p} > 0}{\dprogressinsmall{\D{x}=\genDE{x}}{\lied[N-1]{\genDE{x}}{p} > 0}}}
  }
  {\lsequent{\initassum,\lied[N-1]{\genDE{x}}{p} > 0}{\dprogressinsmall{\D{x}=\genDE{x}}{\lied[N-1]{\genDE{x}}{p} \geq 0}}}
}
  \lsequent{\initassum,\sigliedgt{\genDE{x}}{p},p=0,\dots,\lied[N-2]{\genDE{x}}{p}= 0}{\dprogressinsmall{\D{x}=\genDE{x}}{\lied[N-1]{\genDE{x}}{p} \geq 0}}
\end{sequentdeduction}
}%
\end{proof}

\subsubsection{Atomic Strict Inequalities}

Next, we prove~\rref{lem:localprogressgt}. The essential idea is to reduce back to the non-strict case. We do so with the aid of the following proposition.
\begin{proposition}
\label{prop:leibnizpowers}
Let \(r = p^{k}\) for some $k \geq 1$, then \(\lied[i]{\genDE{x}}{r} \in \ideal{p}\) for all $0 \leq i \leq k-1$.
\end{proposition}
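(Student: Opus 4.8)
The claim is that for $r = p^{k}$ with $k \geq 1$, the first $k$ Lie derivatives $\lied[0]{\genDE{x}}{r}, \lied[1]{\genDE{x}}{r}, \dots, \lied[k-1]{\genDE{x}}{r}$ all lie in the principal ideal $\ideal{p}$. The natural approach is induction on $i$, simultaneously tracking a more refined statement: that $\lied[i]{\genDE{x}}{(p^{k})} \in \ideal{p^{k-i}}$ for $0 \leq i \leq k$. Since $\ideal{p^{k-i}} \subseteq \ideal{p}$ whenever $k - i \geq 1$, i.e.\ $i \leq k-1$, this refined statement immediately yields the proposition. The base case $i = 0$ is trivial: $\lied[0]{\genDE{x}}{(p^{k})} = p^{k} \in \ideal{p^{k}}$.

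For the inductive step, suppose $\lied[i]{\genDE{x}}{(p^{k})} \in \ideal{p^{k-i}}$, so we may write $\lied[i]{\genDE{x}}{(p^{k})} = h \cdot p^{k-i}$ for some polynomial $h$ (using that $\ideal{p^{k-i}}$ is principal). Applying the Lie derivation operator and using the product rule (inherited from the differential axioms as noted in Section~\ref{subsec:differentials}):
\[
\lied[i+1]{\genDE{x}}{(p^{k})} = \lie{\genDE{x}}{h \cdot p^{k-i}} = \lied[]{\genDE{x}}{h} \cdot p^{k-i} + h \cdot \lied[]{\genDE{x}}{(p^{k-i})}.
\]
The first summand is clearly in $\ideal{p^{k-i}} \subseteq \ideal{p^{k-i-1}}$ provided $k - i \geq 1$. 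For the second summand, we use the power rule $\lied[]{\genDE{x}}{(p^{k-i})} = (k-i) \, p^{k-i-1} \, \lied[]{\genDE{x}}{p}$ (again from the product rule, by induction on the exponent), which lies in $\ideal{p^{k-i-1}}$ as long as $k - i - 1 \geq 0$. Hence $\lied[i+1]{\genDE{x}}{(p^{k})} \in \ideal{p^{k-i-1}} = \ideal{p^{k-(i+1)}}$, completing the induction. Running this for $i$ up to $k-1$ gives $\lied[i]{\genDE{x}}{r} \in \ideal{p^{k-i}} \subseteq \ideal{p}$ for all $0 \leq i \leq k-1$.

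There is no serious obstacle here; the only point requiring mild care is bookkeeping the exponent so that the inclusions $\ideal{p^{k-i}} \subseteq \ideal{p}$ remain valid throughout (they do, since $i \leq k-1$ keeps $k-i \geq 1$). An alternative, equally short route is to note that all higher Lie derivatives of $p^k$ are polynomials in $p$ and its Lie derivatives, and that by the generalized product rule, each monomial appearing in $\lied[i]{\genDE{x}}{(p^{k})}$ for $i < k$ must contain at least one factor that is a (possibly zeroth) Lie derivative of $p$ of order strictly less than... — but the exponent-tracking induction above is cleaner and avoids combinatorial subtleties, so I would present that. This proposition is then exactly what is needed to reduce local progress for the strict inequality $p > 0$ (equivalently $p^{k} > 0$ after suitable rewriting, with $k$ related to the rank $N$) to the non-strict case handled by \rref{lem:localprogressgeq}, as indicated in the proof sketch of \rref{lem:localprogressgt}.
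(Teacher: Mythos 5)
Your proof is correct, but you take a genuinely different route from the paper. The paper proves the proposition by induction on the exponent $k$: for $r = p^{k+1} = p^{k}\cdot p$ it expands the $j$-th Lie derivative via the higher-order Leibniz (binomial) rule, $\lied[j]{\genDE{x}}{r} = \sum_{i=0}^{j}\binom{j}{i}\lied[j-i]{\genDE{x}}{(p^{k})}\lied[i]{\genDE{x}}{p}$, and then applies the induction hypothesis on $k$ to each factor $\lied[j-i]{\genDE{x}}{(p^{k})}$ with $i\geq 1$ (and observes the $i=0$ summand carries an explicit factor of $p$). You instead strengthen the statement to $\lied[i]{\genDE{x}}{(p^{k})} \in \ideal{p^{k-i}}$ and induct on $i$ using just the first-order product and power rules, so each step peels off one power of $p$. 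Your version avoids the binomial coefficients and the double-index bookkeeping, at the cost of first guessing the sharper invariant $\ideal{p^{k-i}}$; the paper's version uses a standard off-the-shelf identity and keeps the inductive invariant weaker but simpler. Both close the proof in essentially the same length, and the exponent-tracking you do is careful and valid at the boundary (the step into $i=k$ lands in $\ideal{p^{0}}$, which is trivially everything, and the proposition itself only needs $i \leq k-1$ so the inclusion $\ideal{p^{k-i}} \subseteq \ideal{p}$ always applies).
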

\begin{proof}
We proceed by induction on $k$.
\begin{itemize}
\item For $k=1$, we have $r=p^1$ so $\lied[0]{\genDE{x}}{r}=r \in \ideal{p}$ trivially.

\item For $r=p^{k+1}$, we obtain an expression for the $j$-th Lie derivative of $r$ by Leibniz's rule, where $0 \leq j \leq k$:
\[\lied[j]{\genDE{x}}{r} = \lie[j]{\genDE{x}}{p^{k}p} = \sum_{i=0}^j {j \choose i} \lied[j-i]{\genDE{x}}{(p^k)} \lied[i]{\genDE{x}}{p}\]
The induction hypothesis implies $\lied[j-i]{\genDE{x}}{(p^k)} \in \ideal{p}$ for $1 {\leq} i {\leq} j$, and thus, each summand ${j \choose i} \lied[j-i]{\genDE{x}}{(p^k)} \lied[i]{\genDE{x}}{p} \in \ideal{p}$ by \rref{def:ideal}.

The final summand for $i=0$ is:
\[{j \choose 0} \lied[j]{\genDE{x}}{(p^k)} \lied[0]{\genDE{x}}{p} = \lied[j]{\genDE{x}}{(p^k)} p \in \ideal{p}\]

Hence, $\lied[j]{\genDE{x}}{r} \in \ideal{p}$ as required. \qedhere
\end{itemize}
\end{proof}

For $r=p^{k},k \geq 1$, the formula \(p=0 \limply \landfold_{i=0}^{k-1} \lied[i]{\genDE{x}}{r} = 0\), thus, is provable in real arithmetic. This enables a proof of \rref{lem:localprogressgt}.

\begin{proof}[Proof of \rref{lem:localprogressgt}]
We derive the following axiom:
\[\dinferenceRule[Lpgtfullxeqy|LP$_{>^*}$]{Progress Conditions}
{
\linferenceRule[impl]
  {\initassum \land \sigliedgt{\genDE{x}}{p}}
  {\dprogressin{\D{x}=\genDE{x}}{p > 0}}
}{}
\]

Let $N$ be the rank of $p$ with respect to $\D{x}=\genDE{x}$. The rank bounds the number of higher Lie derivatives that we will need to consider.
Recall $N\geq1$ by~\rref{eq:differential-rank}.

We start by unfolding the syntactic abbreviation of the $\ddnext$ modality, and observe that we can reduce to the non-strict case with \irref{gddR} and the real arithmetic fact \(p{-}r\geq 0 \limply p > 0 \lor x=y\) for the abbreviation \(r\mdefeq |x-y|^{2N}\), which is a polynomial term: \(\big((x_1-y_1)^2 + \dots + (x_n-y_n)^2\big)^N\).

{\footnotesize
\begin{sequentdeduction}[array]
\linfer[]{
  \linfer[gddR]{
    \linfer[qear]{ \lclose }
    {\lsequent{p{-}r \geq 0}{p > 0 \lor x=y}}!
    \lsequent{\initassum,\sigliedgt{\genDE{x}}{p}}{\dprogressinsmall{\D{x}=\genDE{x}}{p{-}r \geq 0}}
  }
  {\lsequent{\initassum,\sigliedgt{\genDE{x}}{p}}{\dprogressinsmall{\D{x}=\genDE{x}}{p > 0 \lor x=y}}}
}
  {\lsequent{\initassum,\sigliedgt{\genDE{x}}{p}}{\dprogressin{\D{x}=\genDE{x}}{p > 0}}}
\end{sequentdeduction}
}%
Next, we make use of $\initassum$ in the antecedents. The first cut proves because $\initassum \limply |x-y|^2=0$ is a provable formula of real arithmetic. As remarked, with $|x-y|^2=0$ and $N \geq 1$, by \rref{prop:leibnizpowers}, $|x-y|^2=0 \limply \landfold_{i=0}^{N-1} \lied[i]{\genDE{x}}{r} = 0$ is a provable real arithmetic formula. The second cut proves using this fact. We call the resulting open premise \textcircled{1}.
{\footnotesize
\begin{sequentdeduction}[array]
\linfer[cut+qear]{
\linfer[cut+qear]{
  \lsequent{\initassum,\landfold_{i=0}^{N-1} \lied[i]{\genDE{x}}{r} = 0, \sigliedgt{\genDE{x}}{p}}{\dprogressinsmall{\D{x}=\genDE{x}}{p{-}r \geq 0}}
}
  {\lsequent{\initassum,|x-y|^2=0,\sigliedgt{\genDE{x}}{p}}{\dprogressinsmall{\D{x}=\genDE{x}}{p{-}r \geq 0}}}
}
  {\lsequent{\initassum,\sigliedgt{\genDE{x}}{p}}{\dprogressinsmall{\D{x}=\genDE{x}}{p{-}r \geq 0}}}
\end{sequentdeduction}
}%
To continue from \textcircled{1}, we observe for $0\leq i \leq N-1$,
\[\lie[i]{\genDE{x}}{p{-}r} = \lied[i]{\genDE{x}}{p} - \lied[i]{\genDE{x}}{r}\]

Using the conjunction $\landfold_{i=0}^{N-1} \lied[i]{\genDE{x}}{r} = 0$ in the antecedents, the formula $\lied[i]{\genDE{x}}{(p{-}r)} = \lied[i]{\genDE{x}}{p}$ proves by a cut and real arithmetic for $0\leq i \leq N-1$.
This justifies the next real arithmetic step from \textcircled{1}, where we abbreviate \(\Gamma_r \mdefequiv \landfold_{i=0}^{N-1} \lied[i]{\genDE{x}}{(p{-}r)} = \lied[i]{\genDE{x}}{p}\). Intuitively, $\Gamma_r$ will allow us to locally consider higher Lie derivatives of $p$ instead of higher Lie derivatives of $p-r$ in subsequent derivation steps.
{\footnotesize
\begin{sequentdeduction}[array]
\linfer[cut+qear]{
  \lsequent{\Gamma_r, \initassum, \sigliedgt{\genDE{x}}{p}}{\dprogressinsmall{\D{x}=\genDE{x}}{p{-}r \geq 0}}
}
  {\lsequent{\initassum,\landfold_{i=0}^{N-1} \lied[i]{\genDE{x}}{r} = 0, \sigliedgt{\genDE{x}}{p}}{\dprogressinsmall{\D{x}=\genDE{x}}{p{-}r \geq 0}}}
\end{sequentdeduction}
}%

It remains for us to use the same technique of iterating \irref{Lpgeqxeqy}, as shown in the proof of~\rref{lem:localprogressgeq}.
The following derivation starts with a single \irref{Lpgeqxeqy} step. The left premise closes by real arithmetic because $\sigliedgt{\genDE{x}}{p}$ has the conjunct $p\geq 0$, and $\Gamma_r$ provides $p-r = r$, which imply \(p-r\geq0\). The right premise is labeled \textcircled{2}:
{\footnotesize
\begin{sequentdeduction}[array]
\linfer[Lpgeqxeqy]{
  \linfer[qear]{
    \lclose
  }{\lsequent{\Gamma_r, \sigliedgt{\genDE{x}}{p}}{p{-}r \geq 0}} !
  \textcircled{2}
}
  {\lsequent{\Gamma_r, \initassum, \sigliedgt{\genDE{x}}{p}}{\dprogressinsmall{\D{x}=\genDE{x}}{p{-}r\geq 0}}}
\end{sequentdeduction}
}%
Continuing from \textcircled{2}, we now need to show local progress for the first Lie derivative of $p{-}r$. The first step simplifies formula $p{-}r=0$ in the antecedents using $\Gamma_r$. We use \irref{Lpgeqxeqy} again, use $\Gamma_r$ to simplify and prove the left premise, abbreviating the right premise with \textcircled{3}.
{\footnotesize
\begin{sequentdeduction}[array]
\linfer[qear]{
\linfer[Lpgeqxeqy]{
  \linfer[qear]{
  \linfer[qear]{
    \lclose
  }
    {\lsequent{p=0 \limply \lied[]{\genDE{x}}{p} \geq 0, p=0}{\lied[1]{\genDE{x}}{p}\geq 0}}
  }
  {\lsequent{\Gamma_r, \sigliedgt{\genDE{x}}{p}, p=0}{\lied[1]{\genDE{x}}{(p{-}r)}{\geq} 0}} !
  \textcircled{3}
}
  {\lsequent{\Gamma_r, \initassum, \sigliedgt{\genDE{x}}{p}, p=0}{\dprogressinsmall{\D{x}=\genDE{x}}{\lied[1]{\genDE{x}}{(p{-}r)}{\geq} 0}}}
}
  {\lsequent{\Gamma_r, \initassum, \sigliedgt{\genDE{x}}{p}, p{-}r=0}{\dprogressinsmall{\D{x}=\genDE{x}}{\lied[1]{\genDE{x}}{(p{-}r)}{\geq} 0}}}
\end{sequentdeduction}
}%
We continue similarly for the higher Lie derivatives from \textcircled{3}, using $\Gamma_r$ to replace $\lied[i]{\genDE{x}}{(p{-}r)}$ with $\lied[i]{\genDE{x}}{p}$, and then using the corresponding conjunct of $\sigliedgt{\genDE{x}}{p}$. The final open premise obtained from \textcircled{3} by iterating \irref{Lpgeqxeqy} corresponds to the last conjunct of $\sigliedgt{\genDE{x}}{p}$:
{\footnotesize\renewcommand{\sigliedgt}[3][]{\siglied[#1]{#2}{#3}{>}0}%
\begin{sequentdeduction}[array]
\linfer[qear]{
\linfer[Lpgeqxeqy]{
\linfer[Lpgeqxeqy]{
  \lsequent{\Gamma_r, \initassum,\sigliedgt{\genDE{x}}{p},p{=}0,\dots,\lied[N-2]{\genDE{x}}{p}{=}0}{\dprogressinsmall{\D{x}=\genDE{x}}{\lied[N-1]{\genDE{x}}{(p{-}r)} {\geq} 0}}
}
  {\dots}
}
  {\lsequent{\Gamma_r, \initassum, \sigliedgt{\genDE{x}}{p}, p=0,\lied[1]{\genDE{x}}{p}\geq 0}{\dprogressinsmall{\D{x}=\genDE{x}}{\lied[2]{\genDE{x}}{(p{-}r)}{\geq} 0}}}
}
  {\lsequent{\Gamma_r, \initassum, \sigliedgt{\genDE{x}}{p}, p{=}0,\lied[1]{\genDE{x}}{(p{-}r)}{\geq} 0}{\dprogressinsmall{\D{x}=\genDE{x}}{\lied[2]{\genDE{x}}{(p{-}r)}{\geq} 0}}}
\end{sequentdeduction}}%

The gathered antecedents $p=0,\dots,\lied[N-2]{\genDE{x}}{p}= 0$ are respectively obtained from $\Gamma_r$ by real arithmetic. The proof is closed with \irref{gddR+ContAx}, similarly to~\rref{lem:localprogressgeq}.
{\footnotesize
\begin{sequentdeduction}[array]
\linfer[cut]{
  \linfer[gddR]{
  \linfer[cut+qear]{
  \linfer[ContAx]{
    \lclose
  }
    {\lsequent{\initassum,\lied[N-1]{\genDE{x}}{(p{-}r)}{>} 0}{\dprogressinsmall{\D{x}=\genDE{x}}{\lied[N-1]{\genDE{x}}{(p{-}r)} {>} 0}}}
  }
    {\lsequent{\Gamma_r,\initassum,\lied[N-1]{\genDE{x}}{p} {>} 0}{\dprogressinsmall{\D{x}=\genDE{x}}{\lied[N-1]{\genDE{x}}{(p{-}r)} {>} 0}}}
  }
  {\lsequent{\Gamma_r,\initassum,\lied[N-1]{\genDE{x}}{p} {>} 0}{\dprogressinsmall{\D{x}=\genDE{x}}{\lied[N-1]{\genDE{x}}{(p{-}r)} {\geq} 0}}}
}
  {\lsequent{\Gamma_r, \initassum, \sigliedgt{\genDE{x}}{p},p{=}0,..,\lied[N-2]{\genDE{x}}{p}{=}0}{\dprogressinsmall{\D{x}=\genDE{x}}{\lied[N-1]{\genDE{x}}{(p{-}r)} {\geq} 0}}}
\end{sequentdeduction}}%
\end{proof}

\subsubsection{Semialgebraic Case}

We now prove the main lemma for local progress.

\begin{proof}[Proof of \rref{lem:localprogresssemialg}]
We derive the following axiom:
\[
\dinferenceRule[LpRfullxeqy|LP\usebox{\Rval}]{Progress Condition}
{\linferenceRule[impl]
  {\initassum \land \sigliedsai{\genDE{x}}{\rfvar}}
  {\dprogressin{\D{x}=\genDE{x}}{\rfvar}}
}{}
\]

We assume that $\rfvar$ is written in normal form~\rref{eq:normalform}. Throughout this proof, we will collapse similar premises in derivations and index them by $i,j$. We abbreviate the $i$-th disjunct of $\rfvar$ with $\rfvar_i \mdefequiv \landfold_{j=0}^{m(i)} p_{ij} \geq 0 \land \landfold_{j=0}^{n(i)} q_{ij}> 0$.

We start by splitting the outermost disjunction in $\sigliedsai{\genDE{x}}{\rfvar}$ with \irref{orl}. For each resulting premise (indexed by $i$), we select the corresponding disjunct of $\rfvar$ to prove local progress. The domain change with \irref{gddR} proves because $\rfvar_i \limply \rfvar$ is a propositional tautology for each $i$.
{\footnotesize
\begin{sequentdeduction}[array]
\linfer[orl]{
\linfer[gddR]{
  \lsequent{\initassum,\landfold_{j=0}^{m(i)} \sigliedgeq{\genDE{x}}{p_{ij}} \land \landfold_{j=0}^{n(i)} \sigliedgt{\genDE{x}}{q_{ij}}}{\dprogressin{\D{x}=\genDE{x}}{\rfvar_i}}
}
  {\lsequent{\initassum,\landfold_{j=0}^{m(i)} \sigliedgeq{\genDE{x}}{p_{ij}} \land \landfold_{j=0}^{n(i)} \sigliedgt{\genDE{x}}{q_{ij}}}{\dprogressin{\D{x}=\genDE{x}}{\rfvar}}}
}
  {\lsequent{\initassum,\sigliedsai{\genDE{x}}{\rfvar}}{\dprogressin{\D{x}=\genDE{x}}{\rfvar}}}
\end{sequentdeduction}
}%

Now, we only need to prove local progress in $\rfvar_i$. We make use of \irref{decompand} to split up the conjunct in $\rfvar_i$. This leaves premises (indexed by $j$) for the non-strict and strict inequalities of $\rfvar_i$ respectively. These premises are abbreviated with \textcircled{1} and \textcircled{2} respectively.

{\footnotesize
\begin{sequentdeduction}[default]
\linfer[decompand+andr]{
  \textcircled{1} \qquad\qquad \textcircled{2}
}
  {\lsequent{\initassum,\landfold_{j=0}^{m(i)} \sigliedgeq{\genDE{x}}{p_{ij}} \land \landfold_{j=0}^{n(i)} \sigliedgt{\genDE{x}}{q_{ij}}}{\dprogressin{\D{x}=\genDE{x}}{\rfvar_i}}}
\end{sequentdeduction}
}%
For the non-strict inequalities (\textcircled{1}), we use \irref{Lpgeqfullxeqy}, after unfolding $\ddnext$ and using \irref{gddR}, because $p_{ij} \geq 0 \limply p_{ij} \geq 0 \lor x=y$ is a propositional tautology:
{\footnotesize
\begin{sequentdeduction}[array]
  \linfer[]{
  \linfer[gddR]{
  \linfer[Lpgeqfullxeqy]{
    \lclose
  }
    {\lsequent{\initassum,\sigliedgeq{\genDE{x}}{p_{ij}}}{\dprogressinsmall{\D{x}=\genDE{x}}{p_{ij} \geq 0}}}
  }
    {\lsequent{\initassum,\sigliedgeq{\genDE{x}}{p_{ij}}}{\dprogressinsmall{\D{x}=\genDE{x}}{p_{ij} \geq 0 \lor x=y}}}
  }
  {\lsequent{\initassum,\sigliedgeq{\genDE{x}}{p_{ij}}}{\dprogressin{\D{x}=\genDE{x}}{p_{ij} \geq 0}}}
\end{sequentdeduction}
}%
For the strict inequalities (\textcircled{2}), we use \irref{Lpgtfullxeqy} directly:
{\footnotesize
\begin{sequentdeduction}[array]
  \linfer[Lpgtfullxeqy]{
    \lclose
  }
  {\lsequent{\initassum,\sigliedgt{\genDE{x}}{p_{ij}}}{\dprogressin{\D{x}=\genDE{x}}{p_{ij} > 0}}}
\end{sequentdeduction}
}%
Note that the $\ddnext$ modality is not required for non-strict inequalities, but they are crucially used for the strict inequalities.
\end{proof}

Finally, we give a characterization of semialgebraic local progress.
\begin{proof}[Proof of \rref{cor:localprogresscomplete}]
We derive the following axioms:

\begin{calculus}
\dinferenceRule[Lpiffxeqy|LP]{Iff Progress Condition}
{\linferenceRule[impl]
  {\initassum}
  {\big(\dprogressin{\D{x}=\genDE{x}}{\rfvar} \lbisubjunct \sigliedsai{\genDE{x}}{\rfvar}\big)}
}{}

\dinferenceRule[dualityxeqy|$\lnot{\ddnext}$]{Duality}
{\linferenceRule[impl]
  {\initassum}
  {\big(\dprogressin{\D{x}=\genDE{x}}{\rfvar} \lbisubjunct \lnot{\dprogressin{\D{x}=\genDE{x}}{\lnot{\rfvar}}}\big)}
}{}
\end{calculus}

We assume that $\rfvar$ is written in normal form~\rref{eq:normalform}.
By \rref{prop:negationrearrangement}, there is a normal form for $\lnot{P}$, \ie,
\[\lnot{\rfvar} \equiv \lorfold_{i=0}^{N} \big(\landfold_{j=0}^{a(i)} r_{ij} = 0 \land \landfold_{j=0}^{b(i)} s_{ij}> 0\big) \]
where we additionally have the provable equivalence:
\[\lnot{(\sigliedsai{\genDE{x}}{\rfvar})} \lbisubjunct \sigliedsai{\genDE{x}}{(\lnot{\rfvar})}\]

We first derive \irref{Lpiffxeqy}. The ``$\lylpmi$'' direction is \irref{LpRfullxeqy}. The proof for the ``$\limply$'' direction (of the inner equivalence) starts by reducing to the contrapositive statement by logical manipulation. We then use the above normal form to rewrite the negation in the antecedents. By \irref{LpRfullxeqy}, we cut in the local progress formula for $\lnot{\rfvar}$. We then move the negated succedent into the antecedents, and combine the two local progress antecedents with \irref{UniqAx}. This combines their respective domain constraints:
{\footnotesize
\begin{sequentdeduction}[array]
\linfer[cut+notl+notr]{
\linfer[qear]{
\linfer[LpRfullxeqy]{
\linfer[notr]{
\linfer[UniqAx]{
  \lsequent{\dprogressin{\D{x}=\genDE{x}}{\lnot{\rfvar} \land \rfvar}}{\lfalse}
}
  {\lsequent{\dprogressin{\D{x}=\genDE{x}}{\lnot{\rfvar}},\dprogressin{\D{x}=\genDE{x}}{\rfvar}}{\lfalse}}
}
  {\lsequent{\dprogressin{\D{x}=\genDE{x}}{\lnot{\rfvar}}}{\lnot{\dprogressin{\D{x}=\genDE{x}}{\rfvar}}}}
}
  {\lsequent{\initassum,\sigliedsai{\genDE{x}}{(\lnot{\rfvar})}}{\lnot{\dprogressin{\D{x}=\genDE{x}}{\rfvar}}}}
}
  {\lsequent{\initassum,\lnot{(\sigliedsai{\genDE{x}}{\rfvar})}}{\lnot{\dprogressin{\D{x}=\genDE{x}}{\rfvar}}}}
}
  {\lsequent{\initassum,\dprogressin{\D{x}=\genDE{x}}{\rfvar}}{\sigliedsai{\genDE{x}}{\rfvar}}}
\end{sequentdeduction}
}%
Observe that we now have $\lnot{\rfvar} \land \rfvar$ in the domain constraints which is equivalent to $\lfalse$; but we cannot locally progress into an empty set of states. The proof is completed by unfolding the $\ddnext$ syntactic abbreviation, and shifting to the box modality:
{\footnotesize
\begin{sequentdeduction}[array]
\linfer[]{
\linfer[diamond+notl]{
\linfer[dW]{
\linfer[qear]{
  \lclose
}
  {\lsequent{\lnot{\rfvar} \land \rfvar \lor x=y}{x=y}}
}
  {\lsequent{}{\dbox{\pevolvein{\D{x}=\genDE{x}}{\lnot{\rfvar} \land \rfvar \lor x=y}}{x=y}}}
}
  {\lsequent{\ddiamond{\pevolvein{\D{x}=\genDE{x}}{\lnot{\rfvar} \land \rfvar \lor x=y}}{x\not=y}}{\lfalse}}
}
  {\lsequent{\dprogressin{\D{x}=\genDE{x}}{\lnot{\rfvar} \land \rfvar}}{\lfalse}}
\end{sequentdeduction}
}%

The self-duality axiom \irref{dualityxeqy} derives from \irref{Lpiffxeqy} using the equivalence \(\lnot{(\sigliedsai{\genDE{x}}{\rfvar})} \lbisubjunct \sigliedsai{\genDE{x}}{(\lnot{\rfvar})}\) from \rref{prop:negationrearrangement}.
\end{proof}

\subsection{Algebraic Invariants}
\label{app:alginvariants}

This section proves the completeness results for algebraic invariants.
We first prove Liouville's formula which was used in \rref{lem:vdbx} to derive vectorial Darboux from vectorial \irref{DG}. We then complete the proof of \irref{dRI}. This allows us to prove the completeness result and its corollary.

\subsubsection{Liouville's Formula}

We give an arithmetic proof of Liouville's formula, which holds for any derivation operator $(\cdot)'$, i.e., any operator satisfying the usual sum and product rules of differentiation.
Derivation operators include Lie derivatives (which we will use \rref{lem:Liouville} for in the proof of \rref{lem:vdbx}), differentials, and time derivatives.

\begin{lemma}[Liouville] \label{lem:Liouville}
Let the $n\times n$ matrix of polynomials $A$ satisfy the equation \(A' = BA\) for an $n \times n$ matrix $B$ of cofactor polynomials.
Then the following is a provable real arithmetic identity:
\[ (\determinant{A})'  = \trace(B) \determinant{A} \]
\end{lemma}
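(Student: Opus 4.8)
\textbf{Proof plan for Liouville's formula (\rref{lem:Liouville}).}
The plan is to compute $(\determinant A)'$ directly by expanding the determinant as a sum over permutations and applying the product rule. Write $\determinant A = \sum_{\sigma \in S_n} \sgn(\sigma) \prod_{i=1}^n A_{i\sigma(i)}$. Differentiating each product term with the product rule produces $(\determinant A)' = \sum_{k=1}^n \sum_{\sigma} \sgn(\sigma) A'_{k\sigma(k)} \prod_{i \neq k} A_{i\sigma(i)}$. The inner sum over $\sigma$ for a fixed $k$ is exactly $\determinant A^{(k)}$, where $A^{(k)}$ is the matrix $A$ with its $k$-th row replaced by the $k$-th row of $A'$; so $(\determinant A)' = \sum_{k=1}^n \determinant A^{(k)}$. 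This is the usual ``derivative of a determinant is the sum of determinants with one row differentiated'' identity, and it holds for any operator obeying the sum and product rules, which is all we need here.

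Next I would substitute the hypothesis $A' = BA$, i.e. $A'_{k\ell} = \sum_{m=1}^n B_{km} A_{m\ell}$. So the $k$-th row of $A'$ is a linear combination $\sum_m B_{km} (\text{$m$-th row of }A)$. Plugging this into $\determinant A^{(k)}$ and using multilinearity of the determinant in the $k$-th row, $\determinant A^{(k)} = \sum_{m=1}^n B_{km} \determinant A^{(k \leftarrow m)}$, where $A^{(k\leftarrow m)}$ is $A$ with its $k$-th row replaced by its own $m$-th row. For $m \neq k$ this matrix has two equal rows and hence vanishing determinant; for $m = k$ it is just $A$ itself. Therefore $\determinant A^{(k)} = B_{kk} \determinant A$, and summing over $k$ gives $(\determinant A)' = \big(\sum_{k=1}^n B_{kk}\big) \determinant A = \trace(B)\determinant A$, as claimed.

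Everything above is a finite sequence of polynomial identities in the entries of $A$ and $B$ and their formal derivatives, so each step is a valid equality of polynomials and the whole chain is a provable real arithmetic identity; this is exactly what is needed for the \irref{dbxineq} application (step \textcircled{4}) in the proof of \rref{lem:vdbx}, where $(\cdot)'$ is instantiated as the Lie derivative with respect to the extended system $\D{x}=\genDE{x}\syssep Y'=-YG$. The main obstacle, such as it is, is purely bookkeeping: carefully justifying the multilinearity/alternating steps at the level of the permutation expansion rather than invoking determinant theory as a black box, since we want an arithmetic proof that is agnostic to the particular derivation operator. One clean way to organize this is to prove the auxiliary identity $(\determinant A)' = \sum_{k} \determinant A^{(k)}$ first as a standalone lemma by induction on $n$ via cofactor (Laplace) expansion along the first row — base case $n=1$ is immediate, and the inductive step uses the product rule on each cofactor term — and then feed in $A' = BA$ as above; the alternating property (vanishing on repeated rows) likewise follows from the sign-reversal of determinants under row swaps, which is again just a permutation-sign computation.
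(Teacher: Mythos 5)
Your proof is correct and takes essentially the same route as the paper: expand the determinant over permutations, apply the product rule to obtain \((\determinant A)' = \sum_k \determinant A^{(k)}\), substitute \(A'=BA\), and use multilinearity plus the vanishing of determinants with a repeated row to collapse each \(\determinant A^{(k)}\) to \(B_{kk}\determinant A\). The only cosmetic difference is that the paper compresses the multilinearity/alternating argument into a single ``row-reduction'' step, whereas you spell it out (and sketch an alternative organization via Laplace expansion and induction on \(n\)), but the underlying identity chain is the same.
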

\begin{proof}
We consider the following expression for the determinant $\determinant{A}$, where $\sigma \in \perm n$ is a permutation on $n$ indices and $\sgn(\sigma)$ denotes its sign:
\[\determinant(A) = \sum_{\sigma \in \perm n} (\sgn(\sigma) A_{1,\sigma_1}A_{2,\sigma_2}\dots A_{n,\sigma_n}) \]
By the sum and product rules of derivation operators:
\begin{align*}
(\determinant(A))' =& \sum_{\sigma \in \perm n} \sgn(\sigma) (A_{1,\sigma_1}A_{2,\sigma_2}\dots A_{n,\sigma_n})' \\
=& \sum_{\sigma \in \perm n} \sgn(\sigma) A'_{1,\sigma_1}A_{2,\sigma_2}\dots A_{n,\sigma_n} \\
+& \sum_{\sigma \in \perm n} \sgn(\sigma) A_{1,\sigma_1}A'_{2,\sigma_2}\dots A_{n,\sigma_n} \\
+& \cdots \\
+&\sum_{\sigma \in \perm n} \sgn(\sigma) A_{1,\sigma_1}A_{2,\sigma_2}\dots A'_{n,\sigma_n}
\end{align*}
Let us write:
{\footnotesize
\[A_{[i]'} = \left(\begin{array}{cccc}
a_{11} & a_{12} & \dots & a_{1n} \\
\vdots & \vdots & \ddots & \vdots \\
a'_{i1} & a'_{i2} & \dots & a'_{in} \\
\vdots & \vdots & \ddots & \vdots \\
a_{n1} & a_{n2} & \dots & a_{nn} \\
\end{array}\right)\]}%
Then
\[(\determinant(A))' = \sum_{i=1}^n \determinant(A_{[i]'})\]
Using $A'=BA$, and by row-reduction properties of determinants:
\begin{align*}
\determinant(A_{[i]'}) &= \determinant {\footnotesize \left(\begin{array}{cccc}
a_{11} & a_{12} & \dots & a_{1n} \\
\vdots & \vdots & \ddots & \vdots \\
\sum_{k=1}^n b_{ik}a_{k1} & \sum_{k=1}^n b_{ik}a_{k2} & \dots & \sum_{k=1}^n b_{ik}a_{kn} \\
\vdots & \vdots & \ddots & \vdots \\
a_{n1} & a_{n2} & \dots & a_{nn} \\
\end{array}\right)}  \\
&= \determinant {\footnotesize \left(\begin{array}{cccc}
a_{11} & a_{12} & \dots & a_{1n} \\
\vdots & \vdots & \ddots & \vdots \\
b_{ii}a_{i1} & b_{ii}a_{i2} & \dots & b_{ii}a_{in} \\
\vdots & \vdots & \ddots & \vdots \\
a_{n1} & a_{n2} & \dots & a_{nn} \\
\end{array}\right)} \\
&= b_{ii}\determinant{(A)}
\end{align*}
Therefore,
\[
(\determinant(A))' = \sum_{i=1}^n b_{ii}\determinant(A) = \trace{(B)}\determinant(A)
\qedhere\]
\end{proof}

\rref{lem:Liouville} proves the arithmetic fact \m{(\determinant(Y))' = - \trace{(\matpolyn{G}{x})}\determinant(Y)}
used in the proof of \rref{lem:vdbx} to derive rule \irref{vdbx}.
In that proof, $Y'=-YG$. Transposing yields $(Y^T)' = -G^TY^T$, so:
\[
(\determinant(Y))' = (\determinant(Y^T))'
= -\trace{(G^T)}\determinant(Y^T)\\
= -\trace{(G)}\determinant(Y)
\]

\subsubsection{Differential Radical Invariants}

Next, we complete the derivation of rule \irref{dRI} from derived rule \irref{vdbx}.
In this derivation, we use a \irref{DIeq} step which may be slightly unfamiliar since it differs from our usage in \irref{dI}. Rewriting axiom \irref{DIeq} with \irref{testb}, and abbreviating $\rrfvar\mdefequiv \dbox{\pevolvein{\D{x}=\usDE}{\usQ\argx}}{(\usp\argx)'=0}$ proves the following formula:
\[(\ivr \limply \rrfvar) \limply \big(\dbox{\pevolvein{\D{x}=\usDE}{\ivr}}{\usp=0} \lbisubjunct (\ivr \limply \usp=0)\big)\]
Propositionally, if we only consider of the ``$\lylpmi$'' direction of the nested equivalence we have the provable formula:
\[(\ivr \limply \rrfvar) \limply \big((\ivr \limply \usp=0) \limply \dbox{\pevolvein{\D{x}=\usDE}{\ivr}}{\usp=0}\big) \]
Thus, the formula \(\lnot{\ivr} \limply \dbox{\pevolvein{\D{x}=\usDE}{\ivr}}{\usp=0}\) proves propositionally from \irref{DIeq+testb}, since $\lnot{\ivr}$ implies both formulas on the left of the implication above.
Intuitively, the formula states that if the domain constraint $\ivr$ is false in an initial state, then the box modality in the conclusion is trivially true, because no trajectories stay in $\ivr$.

Therefore, as \(\ivr \limply \dbox{\pevolvein{\D{x}=\usDE}{\ivr}}{\usp=0}\) is propositionally equivalent to \(\lnot{\ivr} \lor \dbox{\pevolvein{\D{x}=\usDE}{\ivr}}{\usp=0}\) the following formula (which we use below) proves propositionally from \irref{DIeq+testb}:
\[(\ivr \limply \dbox{\pevolvein{\D{x}=\usDE}{\ivr}}{\usp=0}) \limply \dbox{\pevolvein{\D{x}=\usDE}{\ivr}}{\usp=0} \]

\begin{proof}[Proof of \rref{thm:DRI}]
\def\cmp{=}%
Let $p$ be a polynomial satisfying both premises of the \irref{dRI} proof rule, and let $\vecpolyn{p}{x}_i \mdefeq \lied[i-1]{\genDE{x}}{p}$ for $i = 1,2,\dots,N$, i.e.
\[\vecpolyn{p}{x} \mdefeq
\left(\begin{array}{l}p\\ \lied[1]{\genDE{x}}{p}\\ \vdots \\\lied[N-1]{\genDE{x}}{p}\end{array}\right) \]
The component-wise Lie derivative of $\vecpolyn{p}{x}$ is: $(\lied[]{\genDE{x}}{\vec{p}})_i = \lie[]{\genDE{x}}{\vec{p}_i} = \lied[i]{\genDE{x}}{p}$.

We start by setting up for a proof by \irref{vdbx}. In the first step, we used \irref{DIeq+testb} to assume $\ivr$ is true initially (see above). On the left premise after the cut, arithmetic equivalence \m{\landfold_{i=0}^{N-1} \lied[i]{\genDE{x}}{p} = 0 \lbisubjunct \vecpolyn{p}{x}\cmp0} is used to rewrite the succedent to the left premise of \irref{dRI}.
{\footnotesize
\begin{sequentdeduction}[array]
\linfer[DIeq+testb]{
\linfer[cut]{
  \linfer[qear]{
    \lsequent{\Gamma,\ivr} {\landfold_{i=0}^{N-1} \lied[i]{\genDE{x}}{p} = 0 }
  }
  {\lsequent{\Gamma,\ivr} {\vecpolyn{p}{x} \cmp0}} !
  \linfer[Mb]{
    \lsequent{\vecpolyn{p}{x} \cmp0} {\dbox{\pevolvein{\D{x}=\genDE{x}}{\ivr}}{\vecpolyn{p}{x} \cmp0}}
  }
  {\lsequent{\vecpolyn{p}{x} \cmp0} {\dbox{\pevolvein{\D{x}=\genDE{x}}{\ivr}}{p=0}}}
}
  {\lsequent{\Gamma,\ivr} {\dbox{\pevolvein{\D{x}=\genDE{x}}{\ivr}}{p\cmp0}}}
}
  {\lsequent{\Gamma} {\dbox{\pevolvein{\D{x}=\genDE{x}}{\ivr}}{p\cmp0}}}
\end{sequentdeduction}
}%

The right premise continues by \irref{vdbx} with the following choice of $\matpolyn{G}{x}$, with $1$ on its superdiagonal, and $g_i$ cofactors in the last row:
{\footnotesize
\begin{sequentdeduction}[default]
\linfer[vdbx]{
  {\lsequent{\ivr} {\lied[]{\genDE{x}}{\vec{p}}=\overbrace{
\left(\begin{array}{ccccc}
0      & 1      & 0      & \dots & 0      \\
0      & 0      & \ddots & \ddots & \vdots \\
\vdots & \vdots & \ddots & \ddots & 0      \\
0      & 0      & \dots & 0      & 1\\
g_0    & g_1    & \dots & g_{N-2}& g_{N-1} \end{array}\right)
}^{\matpolyn{G}{x}} \itimes\vecpolyn{p}{x}}}
}
    {\lsequent{\vecpolyn{p}{x} = 0} {\dbox{\pevolvein{\D{x}=\genDE{x}}{\ivr}}{\vecpolyn{p}{x}=0}}}
\end{sequentdeduction}
}%
The open premise requires us to prove a component-wise equality on two vectors, \ie, $(\lied[]{\genDE{x}}{\vec{p}})_i = (\matpolyn{G}{x}\itimes\vecpolyn{p}{x})_i$ for $1 \leq i \leq N$. For $i < N$, explicit matrix multiplication yields:
\[ (\lied[]{\genDE{x}}{\vec{p}})_i = \lied[i]{\genDE{x}}{p} = (\vecpolyn{p}{x})_{i+1} = (G\itimes\vecpolyn{p}{x})_i \]
Therefore, all but the final component-wise equality prove trivially by \irref{qear}. The remaining premise is:
\[\lsequent{\ivr}{(\lied[]{\genDE{x}}{\vec{p}})_{N} = (G\itimes\vecpolyn{p}{x})_N}\]
The LHS of this equality simplifies to:
\[ (\lied[]{\genDE{x}}{\vec{p}})_{N} = \lied[N]{\genDE{x}}{p}\]
The RHS simplifies to:
\[ (G\itimes\vecpolyn{p}{x})_N = \sum_{i=1}^{N} g_{i-1} (\vec{p})_i = \sum_{i=1}^{N} g_{i-1} \lied[i-1]{\genDE{x}}{p} = \sum_{i=0}^{N-1} g_{i} \lied[i]{\genDE{x}}{p}\]
Therefore, real arithmetic equivalently reduces the remaining open premise to the right premise of \irref{dRI}.
\end{proof}

\subsubsection{Completeness for Algebraic Invariants}

We now derive axiom \irref{DRI}, which makes use of \irref{LpRfullxeqy} from \rref{lem:localprogresssemialg} and \irref{ContOpen} from \rref{cor:ContOpen}.

\begin{proof}[Proof of \rref{thm:algcomplete}]
In the ``$\lylpmi$'' direction, we use \irref{dRI}, by setting $N$ to the rank of $p$, so that the succedent of its left premise is exactly $\sigliedzero{\genDE{x}}{p}$. The right premise closes by real arithmetic, since $N$ is the rank of $p$, it must, by definition satisfy the rank identity \rref{eq:differential-rank}.
{\footnotesize
\renewcommand{\arraystretch}{1.3}
\begin{sequentdeduction}[array]
\linfer[dRI]{
  \linfer[implyl]{\lclose}
  {\lsequent{\ivr \limply \sigliedzero{\genDE{x}}{p},\ivr}{\sigliedzero{\genDE{x}}{p}}}!
  \linfer[qear]{\lclose}
  {\lsequent{}{\lied[N]{\genDE{x}}{p} = \sum_{i=0}^{N-1} g_i \lied[i]{\genDE{x}}{p}}}
}
  {\lsequent{\ivr \limply \sigliedzero{\genDE{x}}{p}}{\dbox{\pevolvein{\D{x}=\genDE{x}}{\ivr}}{p=0}}}
\end{sequentdeduction}
}%

For the ``$\limply$'' direction, we first reduce to the contrapositive statement by logical manipulation. An application of \irref{diamond} axiom turns the negated box modality in the succedent to a diamond modality. By \rref{eq:sigliedzerorearrangement} from \rref{prop:rearrangement}, we equivalently rewrite the negated differential radical formula in the antecedents to two progress formulas. We cut the first-order formula $\lexists{y}{\initassum}$ which proves trivially in real arithmetic to get an initial state assumption. Finally, we use \irref{ContOpen} to cut local progress for $\ivr$ because, by assumption, $\ivr$ characterizes an open semialgebraic set. Splitting with \irref{orl} yields two premises, which we label \textcircled{1} and \textcircled{2}.
{\footnotesize
\begin{sequentdeduction}[array]
\linfer[implyr+diamond]{
\linfer[notl+notr]{
\linfer[qear]{
\linfer[cut]{
\linfer[existsl]{
\linfer[cut+ContOpen]{
\linfer[orl]{
  \textcircled{1} ! \textcircled{2}
}
  {\lsequent{\initassum,\dprogressinsmall{\D{x}=\genDE{x}}{\ivr}, \sigliedgt{\genDE{x}}{p} \lor \sigliedgt{\genDE{x}}{(-p)} }{\ddiamond{\pevolvein{\D{x}=\genDE{x}}{\ivr}}{p\neq 0}}}
}
  {\lsequent{\initassum,\ivr, \sigliedgt{\genDE{x}}{p} \lor \sigliedgt{\genDE{x}}{(-p)} }{\ddiamond{\pevolvein{\D{x}=\genDE{x}}{\ivr}}{p\neq 0}}}
}
  {\lsequent{\lexists{y}{\initassum},\ivr, \sigliedgt{\genDE{x}}{p} \lor \sigliedgt{\genDE{x}}{(-p)}}{\ddiamond{\pevolvein{\D{x}=\genDE{x}}{\ivr}}{p\neq 0}}}
}
  {\lsequent{\ivr, \sigliedgt{\genDE{x}}{p} \lor \sigliedgt{\genDE{x}}{(-p)}}{\ddiamond{\pevolvein{\D{x}=\genDE{x}}{\ivr}}{p\neq 0}}}
}
  {\lsequent{\ivr, \lnot{(\sigliedzero{\genDE{x}}{p})}}{\ddiamond{\pevolvein{\D{x}=\genDE{x}}{\ivr}}{p\neq 0}}}
}
  {\lsequent{\ivr, \lnot\ddiamond{\pevolvein{\D{x}=\genDE{x}}{\ivr}}{\lnot p=0}}{\sigliedzero{\genDE{x}}{p}}}
}
  {\lsequent{\dbox{\pevolvein{\D{x}=\genDE{x}}{\ivr}}{p=0}}{\ivr \limply \sigliedzero{\genDE{x}}{p}}}
\end{sequentdeduction}
}%

Continuing on \textcircled{1}, because we already have $\sigliedgt{\genDE{x}}{p}$ in the antecedents, \irref{Lpgtfullxeqy} derives local progress for $p>0$: $\dprogressin{\D{x}=\genDE{x}}{p>0}$. Unfolding the $\ddnext$ abbreviation, an application of \irref{decompand} allows us to combine the two local progress formulas in the antecedent.
{\footnotesize
\begin{sequentdeduction}[array]
\linfer[cut+Lpgtfullxeqy]{
\linfer[decompand]{
  \lsequent{\dprogressinsmall{\D{x}=\genDE{x}}{\ivr \land (p>0 \lor x=y)} }{ \ddiamond{\pevolvein{\D{x}=\genDE{x}}{\ivr}}{p\neq0}}
}
  {\lsequent{\dprogressinsmall{\D{x}=\genDE{x}}{\ivr},\dprogressin{\D{x}=\genDE{x}}{p > 0 } }{ \ddiamond{\pevolvein{\D{x}=\genDE{x}}{\ivr}}{p\neq0}}}
}
  {\lsequent{\initassum,\dprogressinsmall{\D{x}=\genDE{x}}{\ivr},\sigliedgt{\genDE{x}}{p}}{ \ddiamond{\pevolvein{\D{x}=\genDE{x}}{\ivr}}{p\neq0}}}
\end{sequentdeduction}
}%
We continue by removing the syntactic abbreviation for $\ddnextsmall$. Since $\overbrace{\ivr \land (p > 0 \lor x=y)}^{\rrfvar} \limply \ivr$ is a propositional tautology, we use \irref{gddR} to strengthen the evolution domain constraint in the succedent. This allows us to use Kripke axiom \irref{Kd} which reduces our succedent to the box modality. We finish the proof with a \irref{dW} step, because the formula $p>0 \lor x=y$ in the domain constraint $\rrfvar$ implies the succedent by real arithmetic.
{\footnotesize
\begin{sequentdeduction}[array]
\linfer[]{
\linfer[gddR]{
\linfer[Kd]{
\linfer[dW]{
\linfer[qear]{ \lclose }
  {\lsequent{\rrfvar}{(x\neq y \limply p\neq0)}}
}
  {\lsequent{}{\dbox{\pevolvein{\D{x}=\genDE{x}}{\rrfvar}}{(x\neq y \limply p\neq0)}}}
}
  {\lsequent{\ddiamond{\pevolvein{\D{x}=\genDE{x}}{\rrfvar}}{x\neq y} }{\ddiamond{\pevolvein{\D{x}=\genDE{x}}{\rrfvar}}{p\neq0}}}
}
  {\lsequent{\ddiamond{\pevolvein{\D{x}=\genDE{x}}{\rrfvar}}{x\neq y} }{\ddiamond{\pevolvein{\D{x}=\genDE{x}}{\ivr}}{p\neq0}}}
}
  {\lsequent{\initassum,\dprogressinsmall{\D{x}=\genDE{x}}{\rrfvar} }{ \ddiamond{\pevolvein{\D{x}=\genDE{x}}{\ivr}}{p\neq0}}}
\end{sequentdeduction}}%
The remaining premise \textcircled{2} follows similarly, except that the progress formula $\sigliedgt{\genDE{x}}{(-p)}$ enables the cut $\dprogressin{\D{x}=\genDE{x}}{{-}p > 0}$. It leads to the same conclusion of $p\neq 0$ in the postcondition.
\end{proof}

We now prove \rref{cor:testfree} using the characterization of algebraic invariants of ODEs from~\rref{thm:algcomplete}. We shall consider the fragment of \dL programs generated by the following grammar, where $\coalgvar$ denotes a first-order formula of real arithmetic that characterizes the \emph{complement} of a real algebraic variety, or equivalently, a formula of the form $r \neq 0$ where $r$ is a polynomial:
\[ \alpha,\beta \bebecomes \pumod{x}{e}  \alternative \ptest{\coalgvar} \alternative \pevolvein{\D{x}=\genDE{x}}{\coalgvar}  \alternative \pchoice{\alpha}{\beta} \alternative \alpha;\beta \alternative \prepeat{\alpha}\]

\begin{proof}[Proof of \rref{cor:testfree}]
Firstly, since $P$ is algebraic, it is equivalent to a formula $p=0$ for some polynomial $p$ so we may, by real arithmetic, assume that it is written in this form. Similarly, we shall, by real arithmetic, assume that $\coalgvar$ already has the form $r\neq 0$.

We proceed by structural induction on the form of $\alpha$ following \cite[Theorem 1]{DBLP:conf/lics/Platzer12b}, and show that for some (computable) polynomial $q$, we can derive the equivalence \(\dbox{\alpha}{p=0} \lbisubjunct q=0\) in \dL.
\begin{itemize}
\item Case $ \pevolvein{\D{x}=\genDE{x}}{r \neq 0}$. The set of states characterized by $r \neq 0$ is open. Thus, \rref{thm:algcomplete} derives the equivalence \(\dbox{\pevolvein{\D{x}=\genDE{x}}{r \neq 0}}{p=0} \lbisubjunct (r \neq 0 \limply \sigliedzero{\genDE{x}}{p})\).
Let $N$ be the rank of $p$ so that $\sigliedzero{\genDE{x}}{p}$ expands to \(\landfold_{i=0}^{N-1}  \lied[i]{\genDE{x}}{p} = 0\). Let \(q \mdefeq r(\sum_{i=0}^{N-1} (\lied[i]{\genDE{x}}{p})^2)\), giving the provable real arithmetic equivalence \((r \neq 0 \limply \sigliedzero{\genDE{x}}{p}) \lbisubjunct q = 0\). Rewriting with this derives the equivalence,
\[\dbox{\pevolvein{\D{x}=\genDE{x}}{r \neq 0}}{p=0} \lbisubjunct q=0\]

\item Case $\pumod{x}{e}$. By axiom \irref{assignb}, $\dbox{\pumod{x}{e}}{p(x)=0} \lbisubjunct p(e)=0$. As a composition of polynomials, $p(e)$ is a polynomial.

\item Case $\ptest{r \neq 0}$. By axiom \irref{testb}, $\dbox{\ptest{r\neq 0}}{p=0} \lbisubjunct (r \neq 0 \limply p = 0)$. Let $q \mdefeq rp$, giving the provable real arithmetic equivalence $(r \neq 0 \limply p = 0)\lbisubjunct q=0$. Rewriting with this derived equivalence yields the derived equivalence:
\[\dbox{\ptest{r\neq 0}}{p=0} \lbisubjunct q=0\]

\item Case $\pchoice{\alpha}{\beta}$. By \irref{choiceb}, $\dbox{\pchoice{\alpha}{\beta}}{p=0} \lbisubjunct \dbox{\alpha}{p=0} \land \dbox{\beta}{p=0}$. By the induction hypothesis on $\alpha,\beta$, we may derive $\dbox{\alpha}{p=0} \lbisubjunct q_1=0$ and $\dbox{\beta}{p=0} \lbisubjunct q_2 =0$ for some polynomials $q_1,q_2$. Moreover, $q_1=0 \land q_2=0 \lbisubjunct q_1^2+q_2^2=0$ is a provable formula of real arithmetic. Rewriting with the derived equivalences yields the derived equivalence:
\[\dbox{\pchoice{\alpha}{\beta}}{p=0} \lbisubjunct q_1^2+q_2^2=0\]

\item Case $\alpha;\beta$. By \irref{composeb}, $\dbox{\alpha;\beta}{p=0} \lbisubjunct \dbox{\alpha}{\dbox{\beta}{p=0}}$. By the induction hypothesis on $\beta$, we derive $\dbox{\beta}{p=0} \lbisubjunct q_2=0$. By rewriting with this equivalence, we derive $\dbox{\alpha;\beta}{p=0} \lbisubjunct \dbox{\alpha}{q_2=0}$. Now, by the induction hypothesis on $\alpha$, we derive $\dbox{\alpha}{q_2=0} \lbisubjunct q_1=0$ for some $q_1$. Rewriting with the derived equivalences yields the derived equivalence:
\[\dbox{\alpha;\beta}{p=0} \lbisubjunct q_1=0\]

\item Case $\prepeat{\alpha}$. This case relies on the fact that the polynomial ring $\polynomials{\reals}{x}$ (and $\polynomials{\rationals}{x}$) over a finite number of indeterminates $x$ is a Noetherian domain, i.e., every ascending chain of ideals is finite. We first construct the following sequence of polynomials $q_i$:
\[q_0 \mdefeq p,\quad q_{i+1} \mdefeq f_i\]
where $f_i$ is the polynomial satisfying the derived equivalence \(f_i \lbisubjunct \dbox{\alpha}{q_i=0}\) obtained by applying the induction hypothesis on $\alpha$ with postcondition $q_i=0$.
Since the ring of polynomials over the (finite set) of variables mentioned in $\alpha$ or $p$ is Noetherian, the following chain of ideals is finite:
\[ \ideal{q_0} \subset \ideal{q_0,q_1} \subset \ideal{q_0,q_1,q_2} \subset \dots \]
Thus, there is some smallest $k$ such that $q_k$ satisfies the following polynomial identity, with polynomial cofactors $g_i$:
\begin{equation}
  q_k = \sum_{i=0}^{k-1} g_i q_i
  \label{eq:HPNoetherianalgebraicloop}
\end{equation}
We claim that $\landfold_{i=0}^{k-1} q_i=0 \lbisubjunct \dbox{\prepeat{\alpha}}{p=0}$ is derivable. Since the following real arithmetic equivalence is provable \(\sum_{i=0}^{k-1} q_i^2 = 0 \lbisubjunct \landfold_{i=0}^{k-1} q_i=0\), this claim yields the derived equivalence \(\dbox{\prepeat{\alpha}}{p=0} \lbisubjunct \sum_{i=0}^{k-1} q_i^2 = 0\), as required. We show both directions of the claim separately.
\begin{itemize}
\item[``$\lylpmi$''] This direction is straightforward using $k$ times the iteration axiom \irref{iterateb} together with \irref{band}. By construction, we may successively replace $\dbox{\alpha}{q_i}$ with $q_{i+1}$, which gives us the required implication.
{\footnotesize
\renewcommand*{\arraystretch}{1.3}
\begin{sequentdeduction}[array]
\linfer[iterateb]{
\linfer[iterateb+band]{
\linfer[iterateb+band]{
\linfer[iterateb+band]{
\linfer[]{
\linfer[]{
  \lclose
}
  {\lsequent{q_0=0 \land q_1=0 \land q_2=0 \land \cdots \land q_{k-1} = 0}{\landfold_{i=0}^{k-1} q_i=0}}
}
  {\lsequent{p=0 \land \dbox{\alpha}{p=0} \land \dbox{\alpha}{\dbox{\alpha}{p=0}} \land \cdots}{\landfold_{i=0}^{k-1} q_i=0}}
}
  {\cdots}
}
  {\lsequent{p=0 \land \dbox{\alpha}{p=0} \land \dbox{\alpha}{\dbox{\alpha}{\dbox{\prepeat{\alpha}}{p=0}}}}{\landfold_{i=0}^{k-1} q_i=0}}
}
  {\lsequent{p=0 \land \dbox{\alpha}{\dbox{\prepeat{\alpha}}{p=0}}}{\landfold_{i=0}^{k-1} q_i=0}}
}
  {\lsequent{\dbox{\prepeat{\alpha}}{p=0}}{\landfold_{i=0}^{k-1} q_i=0}}
\end{sequentdeduction}
}%

\item[``$\limply$''] We strengthen the postcondition of the box modality to $\landfold_{i=0}^{k-1} q_i=0$ (recall that $q_0 \mdefeq p$, so $\landfold_{i=0}^{k-1} q_i=0 \limply p=0$ is a propositional tautology), and prove it as a loop invariant. By an application of \irref{band} followed by \irref{andr}, we may split the postcondition to its constituent conjuncts (indexed by $0 \leq i \leq k-1$). By construction, we can equivalently replace each $\dbox{\alpha}{q_i}$ with $q_{i+1}$.
{\footnotesize
\renewcommand*{\arraystretch}{1.3}
\begin{sequentdeduction}[array]
\linfer[Mb]{
\linfer[loop]{
\linfer[band]{
\linfer{
\linfer[qear]{
  \lclose
}
  {\lsequent{\landfold_{i=0}^{k-1} q_i=0}{\landfold_{i=0}^{k-1} q_{i+1}=0}}
}
  {\lsequent{\landfold_{i=0}^{k-1} q_i=0}{\landfold_{i=0}^{k-1} \dbox{\alpha}{q_i=0}}}
}
  {\lsequent{\landfold_{i=0}^{k-1} q_i=0}{\dbox{\alpha}{\landfold_{i=0}^{k-1} q_i=0}}}
}
  {\lsequent{\landfold_{i=0}^{k-1} q_i=0}{\dbox{\prepeat{\alpha}}{\landfold_{i=0}^{k-1} q_i=0}}}
}
  {\lsequent{\landfold_{i=0}^{k-1} q_i=0}{\dbox{\prepeat{\alpha}}{p=0}}}
\end{sequentdeduction}
}%

The conjuncts for $0 \leq i < k-1$ close trivially because $q_{i+1}=0$ is in the antecedent. The last conjunct for $i=k-1$ is $q_k=0$, which follows from the antecedent using \rref{eq:HPNoetherianalgebraicloop}.
\qedhere
\end{itemize}
\end{itemize}
\end{proof}

\subsection{Completeness for Semialgebraic Invariants with Semialgebraic Evolution Domain Constraints}
\label{app:semialginvariants}

The following generalized version of \irref{sAI} for \rref{thm:sAI}, which also handles the evolution domain constraints derives, from \irref{realindin+Lpiffxeqy}.

\begin{theorem}[Semialgebraic invariants for \rref{thm:sAI} with semialgebraic domain constraints]
For semialgebraic $\ivr,\rfvar$, with progress formulas $\sigliedsai{\genDE{x}}{\ivr},\sigliedsai{\genDE{x}}{\rfvar},\sigliedsai[-]{\genDE{x}}{\ivr}, \sigliedsai[-]{\genDE{x}}{(\lnot{\rfvar})}$ w.r.t.\ their respective normal forms \rref{eq:normalform}, this rule derives from the \dL calculus with \irref{RealIndInAx+DadjointAx+ContAx+UniqAx}.
\[
\dinferenceRule[sAIQ|sAI{$\&$}]{Semialgebraic invariant with domains}
{\linferenceRule
  {
  \lsequent{\rfvar,\ivr,\sigliedsai{\genDE{x}}{\ivr}}{\sigliedsai{\genDE{x}}{\rfvar}} &
  \lsequent{\lnot{\rfvar},\ivr,\sigliedsai[-]{-\genDE{x}}{\ivr}}{\sigliedsai[-]{-\genDE{x}}{(\lnot{\rfvar})}}
  }
  {\lsequent{\rfvar}{\dbox{\pevolvein{\D{x}=\genDE{x}}{\ivr}}{\rfvar}}}
}{}\]
\end{theorem}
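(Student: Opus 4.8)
The plan is to reduce the invariance property in the conclusion to local progress properties using the real induction rule \irref{realindin} from \rref{cor:realindin}, and then to translate those local progress properties into arithmetic progress formulas using the characterization \irref{Lpiffxeqy} from \rref{cor:localprogresscomplete}; both derived rules are available since the calculus has been extended with \irref{RealIndInAx+DadjointAx+ContAx+UniqAx}. As usual I first rewrite $\ivr$ and $\rfvar$ into normal form \rref{eq:normalform} using \irref{qear} (and congruence, or \irref{dDR}) so that \irref{Lpiffxeqy} applies; since different normal forms of the same formula yield provably equivalent progress formulas (\cf~\rref{prop:rearrangement} and \rref{prop:negationrearrangement}), these can be matched with the progress formulas named in \irref{sAIQ}.

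First I would apply \irref{realindin} to \(\lsequent{\rfvar}{\dbox{\pevolvein{\D{x}=\genDE{x}}{\ivr}}{\rfvar}}\). Using \irref{bigsmallequiv} to state \irref{realindin} with the $\ddnext$ modality (see the remark following \rref{cor:bigsmallequiv}), this leaves two open premises, keeping the implicit $\initassum$ in the antecedents:
\[
\lsequent{\rfvar,\ivr,\dprogressin{\D{x}=\genDE{x}}{\ivr}}{\dprogressin{\D{x}=\genDE{x}}{\rfvar}}
\qquad
\lsequent{\lnot{\rfvar},\ivr,\dprogressin{\D{x}=-\genDE{x}}{\ivr}}{\dprogressin{\D{x}=-\genDE{x}}{\lnot{\rfvar}}}
\]
Rule \irref{realindin} discards any $x$-dependent context, but this is harmless here since the premises of \irref{sAIQ} carry no extra context ($x$-independent assumptions could be threaded through with \irref{V}).

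On the first open premise I would apply \irref{Lpiffxeqy} (legitimate because $\initassum$ is present): it rewrites the antecedent \(\dprogressin{\D{x}=\genDE{x}}{\ivr}\) to \(\sigliedsai{\genDE{x}}{\ivr}\) and the succedent \(\dprogressin{\D{x}=\genDE{x}}{\rfvar}\) to \(\sigliedsai{\genDE{x}}{\rfvar}\), yielding \(\lsequent{\rfvar,\ivr,\sigliedsai{\genDE{x}}{\ivr}}{\sigliedsai{\genDE{x}}{\rfvar}}\), which is exactly the left premise of \irref{sAIQ}. The second open premise is symmetric: applying \irref{Lpiffxeqy} with respect to the reversed system \(\D{x}=-\genDE{x}\) to antecedent and succedent reduces it to \(\lsequent{\lnot{\rfvar},\ivr,\sigliedsai[-]{-\genDE{x}}{\ivr}}{\sigliedsai[-]{-\genDE{x}}{(\lnot{\rfvar})}}\), the right premise of \irref{sAIQ}. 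This completes the derivation.

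The derivation is short given \rref{cor:realindin} and \rref{cor:localprogresscomplete}; the only point requiring care is the bookkeeping of the $\ddnext$ and $\ddnextsmall$ modalities together with the initial-state assumption $\initassum$. Each use of \irref{Lpiffxeqy} needs $\initassum$ in the antecedent, and each modality conversion via \irref{bigsmallequiv} additionally needs the relevant formula ($\ivr$, or $\rfvar$ resp.\ $\lnot{\rfvar}$) in the antecedent; starting from the $\ddnext$ form of \irref{realindin}, whose premises supply exactly these assumptions, is precisely what makes the conversions go through. No appeal to existence, uniqueness, or real induction is needed beyond what \rref{cor:realindin} and \rref{cor:localprogresscomplete} already encapsulate.
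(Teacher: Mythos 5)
Your proposal is correct and takes essentially the same approach as the paper: the paper's (terse) proof derives \irref{sAIQ} from \irref{realindin}, \irref{bigsmallequiv}, and \irref{Lpiffxeqy}, discarding the $\initassum$ assumptions by weakening afterwards, which is exactly the argument you spell out in more detail.
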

\begin{proof}
Rule \irref{sAIQ} derives directly from rule \irref{realindin} derived in \rref{cor:realindin}, \rref{cor:bigsmallequiv}, and the characterization of semialgebraic local progress \irref{Lpiffxeqy} from \rref{cor:localprogresscomplete}. The $\initassum$ assumptions provided by \irref{realindin} are used to convert between the local progress modalities and the semialgebraic progress formulas by \irref{Lpiffxeqy}, but they do not need to remain in the premises of \irref{sAIQ} (by weakening).
\end{proof}

Recalling our earlier discussion for rule \irref{realindin}, we may again use \irref{V} to keep any additional context assumptions that do not depend on variables $x$ for the ODEs $\D{x}=\genDE{x}$ in rule \irref{sAIQ}, because it immediately derives from \irref{realindin}, which supports constant contexts.

We now prove a syntactic completeness theorem for \irref{sAIQ}, from which \rref{thm:semialgcompleteness} follows as a special case (where $\ivr \equiv \ltrue$).

\begin{theorem}[Semialgebraic invariant completeness for \rref{thm:semialgcompleteness} with semialgebraic domains]
\label{thm:semialgcompletenessdom}
For semialgebraic $\ivr,\rfvar$, with progress formulas $\sigliedsai{\genDE{x}}{\ivr},\sigliedsai{\genDE{x}}{\rfvar},\sigliedsai[-]{\genDE{x}}{\ivr}, \sigliedsai[-]{\genDE{x}}{(\lnot{\rfvar})}$ w.r.t.\ their respective normal forms \rref{eq:normalform}, this axiom derives in \dL with $\irref{RealIndInAx+DadjointAx+ContAx+UniqAx}$.
\[\dinferenceRule[semialgiffQ|SAI{$\&$}]{Semialgebraic invariant with domains axiom}
{\linferenceRule[equivl]
  {\lforall{x}{\big(\rfvar {\land} \ivr {\land} \sigliedsai{\genDE{x}}{\ivr} {\limply} \sigliedsai{\genDE{x}}{\rfvar}\big)} {\land}\lforall{x}{\big(\lnot{\rfvar} {\land} \ivr {\land} \sigliedsai[-]{-\genDE{x}}{\ivr} {\limply} \sigliedsai[-]{\-genDE{x}}{(\lnot{\rfvar})}\big)}}
  {\axkey{\lforall{x}{(\rfvar \limply \dbox{\pevolvein{\D{x}=\genDE{x}}{\ivr}}{\rfvar})}}}
}{}\]

In particular, the \dL calculus is complete for invariance properties of the following form with semialgebraic formulas $\rfvar,\ivr$:
\[\lsequent{\rfvar} {\dbox{\pevolvein{\D{x}=\genDE{x}}{\ivr}}{\rfvar}}\]
\end{theorem}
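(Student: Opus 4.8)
The plan is to derive axiom \irref{semialgiffQ} as an equivalence whose two directions decompose cleanly, mirroring the structure already established for the algebraic case in \rref{thm:algcomplete}. The ``$\lylpmi$'' direction (soundness of the rule used as a proof principle) follows immediately: assuming the two universally quantified progress implications, apply the generalized semialgebraic invariant rule \irref{sAIQ} --- which we have already derived from \irref{realindin}, \irref{bigsmallequiv}, and \irref{Lpiffxeqy} --- after instantiating the quantifiers at the current state. The real content is the ``$\limply$'' direction (completeness), where we must show that an invariant $\rfvar$ forces the progress implications to hold at every state. First I would reduce to the contrapositive of each conjunct: if $\rfvar \land \ivr \land \sigliedsai{\genDE{x}}{\ivr}$ holds at some state but $\sigliedsai{\genDE{x}}{\rfvar}$ fails, derive a contradiction with $\dbox{\pevolvein{\D{x}=\genDE{x}}{\ivr}}{\rfvar}$, and symmetrically for the backward direction with $\lnot{\rfvar}$ and $-\genDE{x}$.

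The key steps, in order: (1) introduce a fresh $y$ with $\initassum$ via a trivial \irref{qear} cut $\lexists{y}{\initassum}$ followed by \irref{existsl}, so that the $\ddnext$-modality semantics are available and axiom \irref{Lpiffxeqy} becomes applicable; (2) use \irref{Lpiffxeqy} to convert $\lnot{(\sigliedsai{\genDE{x}}{\rfvar})}$ into $\dprogressin{\D{x}=\genDE{x}}{\lnot{\rfvar}}$ after rewriting with $\lnot{(\sigliedsai{\genDE{x}}{\rfvar})} \lbisubjunct \sigliedsai{\genDE{x}}{(\lnot{\rfvar})}$ from \rref{prop:negationrearrangement}, and similarly use \irref{LpRfullxeqy} on $\sigliedsai{\genDE{x}}{\ivr}$ to obtain $\dprogressin{\D{x}=\genDE{x}}{\ivr}$; (3) combine the two local-progress diamond facts with \irref{decompand}/\irref{Uniq} to get $\dprogressin{\D{x}=\genDE{x}}{\ivr \land \lnot{\rfvar}}$, i.e.\ a genuine solution prefix that stays in $\ivr$ yet reaches a state violating $\rfvar$; (4) unfold the $\ddnext$ abbreviation, strengthen the domain via \irref{gddR}, push to the box modality with \irref{Kd}, and close with \irref{dW} plus \irref{qear}, because a trajectory staying in $\ivr$ and ending in $\lnot{\rfvar}$ directly contradicts the assumed $\dbox{\pevolvein{\D{x}=\genDE{x}}{\ivr}}{\rfvar}$ once we account for the initial state via $\initassum$. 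The backward-direction conjunct is handled the same way after flipping signs with \irref{Dadjoint} (equivalently \irref{diareflect}/\irref{reflect}), exploiting that $-(-\genDE{x}) = \genDE{x}$.

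The main obstacle I expect is the careful bookkeeping around the initial point: the $\ddnext$ modality excludes the starting state, so the local-progress facts only guarantee $\rfvar$ (or $\ivr$) on $(0,T]$, not at $0$; matching this against the invariant $\dbox{}{}$ claim, which includes the endpoint reached from the initial state, requires threading $\initassum$ through \irref{bigsmallequiv} and being precise about whether $x=y$ holds at the contradiction point. A secondary subtlety is that \irref{realindin} (hence \irref{sAIQ}) discards context depending on $x$, so the completeness derivation must either avoid relying on such context or reintroduce it inside $\ivr$ via a \irref{dC} that proves by \irref{V} --- but since we are proving a closed equivalence with all free variables universally quantified, this is not an issue here. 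Everything else is assembling already-derived axioms (\irref{Lpiffxeqy}, \irref{LpRfullxeqy}, \irref{Uniq}, \irref{Dadjoint}, \irref{Kd}, \irref{gddR}, \irref{dW}) together with the normal-form manipulations of \rref{prop:negationrearrangement}, so no new mathematical ingredient beyond what \rref{sec:extaxioms} and \rref{app:localprogress} provide is needed. The final clause --- that \dL decides $\lsequent{\rfvar}{\dbox{\pevolvein{\D{x}=\genDE{x}}{\ivr}}{\rfvar}}$ --- then follows because $\sigliedsai{\genDE{x}}{\rfvar}$ and $\sigliedsai[-]{-\genDE{x}}{(\lnot{\rfvar})}$ are first-order formulas of real arithmetic, and quantifier elimination puts any $\rfvar, \ivr$ into normal form \rref{eq:normalform} first.
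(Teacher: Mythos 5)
Your proposal follows essentially the same route as the paper's proof: the ``$\lylpmi$'' direction by instantiating the quantifiers and applying the derived rule \irref{sAIQ}, and the ``$\limply$'' direction by contraposition, dualizing to a diamond modality, introducing an initial-state assumption $\initassum$ via a trivial cut, rewriting the negated progress formula with \rref{prop:negationrearrangement}, cutting in the two local-progress facts with \irref{LpRfullxeqy}/\irref{Lpiffxeqy}, combining them via \irref{decompand}, and closing with \irref{Kd}+\irref{dW}+\irref{gddR}, with the second conjunct handled by \irref{reflect} (equivalently \irref{Dadjoint}). The only cosmetic gap is that your combined progress modality $\dprogressin{\D{x}=\genDE{x}}{\ivr\land\lnot\rfvar}$ is stated a bit loosely — the paper's \irref{decompand} step actually yields a domain of the form $\ivr\land(\lnot\rfvar\lor\initassum)$ since $\ivr$ is tracked with the closed-interval $\ddnextsmall$ modality (via \irref{bigsmallequiv}) while $\lnot\rfvar$ is tracked with $\ddnext$ — but you explicitly flag this initial-point bookkeeping as the main obstacle, and the paper's handling confirms your assessment.
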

\begin{proof}
We abbreviate the left and right conjunct on the RHS of \irref{semialgiffQ} as \textcircled{a}  and \textcircled{b}, respectively.

The ``$\lylpmi$'' direction derives directly by an application of \irref{sAIQ}. The antecedents \textcircled{a} and \textcircled{b} are first-order formulas of real arithmetic which are quantified over $x$, the variables evolved by the ODE $\D{x}=\genDE{x}$. They may, therefore, be kept as constant context in the antecedents of the premises when applying rule \irref{sAIQ}.

{\footnotesize
\begin{sequentdeduction}[array]
\linfer[allr]{
\linfer[implyr]{
\linfer[sAIQ]{
  \linfer[alll+implyl]{
  \lclose
  }
  {\lsequent{\textcircled{a},\rfvar,\ivr,\sigliedsai{\genDE{x}}{\ivr}}{\sigliedsai{\genDE{x}}{\rfvar}}} !
  \linfer[alll+implyl]{
  \lclose
  }
  {\lsequent{\textcircled{b},\lnot{\rfvar},\ivr,\sigliedsai[-]{-\genDE{x}}{\ivr}}{\sigliedsai[-]{-\genDE{x}}{(\lnot{\rfvar})}}}
}
  {\lsequent{\textcircled{a}, \textcircled{b},\rfvar}{\dbox{\pevolvein{\D{x}=\genDE{x}}{\ivr}}{\rfvar}}}
}
  {\lsequent{\textcircled{a}, \textcircled{b}}{(\rfvar {\limply} \dbox{\pevolvein{\D{x}=\genDE{x}}{\ivr}}{\rfvar})}}
}
  {\lsequent{\textcircled{a}, \textcircled{b}}{\lforall{x}{(\rfvar {\limply} \dbox{\pevolvein{\D{x}=\genDE{x}}{\ivr}}{\rfvar})}}}
\end{sequentdeduction}
}%

In the ``$\limply$'' direction, we show the contrapositive statement in both cases. For \textcircled{b}, we use the derived invariant reflection axiom (\irref{reflect}) to turn the invariance assumption for the forwards ODE to an invariance assumption for the backwards ODE. The open premises with \textcircled{a} and \textcircled{b} in the succedent are labeled \textcircled{1} and \textcircled{2} respectively.
{\footnotesize\renewcommand{\linferPremissSeparation}{~~~}%
\begin{sequentdeduction}[array]
\linfer[andr]{
  \lsequent{\lforall{x}{(\rfvar {\limply} \dbox{\pevolvein{\D{x}=\genDE{x}}{\ivr}}{\rfvar})}}{\textcircled{a}} !
  \linfer[reflect]{
  \lsequent{\lforall{x}{(\lnot{\rfvar} {\limply} \dbox{\pevolvein{\D{x}=-\genDE{x}}{\ivr}}{\lnot{\rfvar}})}}{\textcircled{b}}
  }
  {\lsequent{\lforall{x}{(\rfvar {\limply} \dbox{\pevolvein{\D{x}=\genDE{x}}{\ivr}}{\rfvar})}}{\textcircled{b}}}
}
  {\lsequent{\lforall{x}{(\rfvar {\limply} \dbox{\pevolvein{\D{x}=\genDE{x}}{\ivr}}{\rfvar})}}{\textcircled{a} \land \textcircled{b}}}
\end{sequentdeduction}
}%

Continuing on \textcircled{1}, we expand \textcircled{a} and dualize on both sides of the sequent. We choose $x$ as our witness for the (then) existentially quantified succedent. The \irref{qear} two steps respectively introduce an initial state assumption, and replace $\lnot{(\sigliedsai{\genDE{x}}{\rfvar})}$ with $\sigliedsai{\genDE{x}}{(\lnot{\rfvar})}$, by \rref{prop:negationrearrangement}. We continue by cutting in local progress for $\ivr$ and $\lnot{P}$ respectively in steps \textcircled{3}, \textcircled{4}. The open premise after both cuts is labeled \textcircled{5}. All three steps are continued below.
{\footnotesize
\begin{sequentdeduction}[array]
\linfer[diamond+notl+notr]{
\linfer[existsl]{
\linfer[existsr+andr]{
\linfer[cut]{
\linfer[existsl]{
\linfer[qear]{
\linfer[cut]{
  \textcircled{3} !
  \textcircled{4} !
  \textcircled{5}
}
  {\lsequent{\initassum,\rfvar, \ivr, \sigliedsai{\genDE{x}}{\ivr}, \sigliedsai{\genDE{x}}{(\lnot{\rfvar})}}{\ddiamond{\pevolvein{\D{x}=\genDE{x}}{\ivr}}{\lnot{\rfvar}}}}
}
  {\lsequent{\initassum,\rfvar, \ivr, \sigliedsai{\genDE{x}}{\ivr}, \lnot{(\sigliedsai{\genDE{x}}{\rfvar})}}{\ddiamond{\pevolvein{\D{x}=\genDE{x}}{\ivr}}{\lnot{\rfvar}}}}
}
  {\lsequent{\lexists{y}{\initassum},\rfvar, \ivr, \sigliedsai{\genDE{x}}{\ivr}, \lnot{(\sigliedsai{\genDE{x}}{\rfvar})}}{\ddiamond{\pevolvein{\D{x}=\genDE{x}}{\ivr}}{\lnot{\rfvar}}}}
}
  {\lsequent{\rfvar, \ivr, \sigliedsai{\genDE{x}}{\ivr}, \lnot{(\sigliedsai{\genDE{x}}{\rfvar})}}{\ddiamond{\pevolvein{\D{x}=\genDE{x}}{\ivr}}{\lnot{\rfvar}}}}
}
  {\lsequent{\rfvar, \ivr, \sigliedsai{\genDE{x}}{\ivr}, \lnot{(\sigliedsai{\genDE{x}}{\rfvar})}}{\lexists{x}{(\rfvar \land \ddiamond{\pevolvein{\D{x}=\genDE{x}}{\ivr}}{\lnot{\rfvar}})}}}
}
  {\lsequent{\lexists{x}{\big(\rfvar \land \ivr \land \sigliedsai{\genDE{x}}{\ivr} \land \lnot{(\sigliedsai{\genDE{x}}{\rfvar})}\big)}}{\lexists{x}{(\rfvar \land \ddiamond{\pevolvein{\D{x}=\genDE{x}}{\ivr}}{\lnot{\rfvar}})}}}
}
  {\lsequent{\lforall{x}{(\rfvar {\limply} \dbox{\pevolvein{\D{x}=\genDE{x}}{\ivr}}{\rfvar})}}{\lforall{x}{\big(\rfvar \land \ivr \land \sigliedsai{\genDE{x}}{\ivr} {\limply} \sigliedsai{\genDE{x}}{\rfvar}\big)}}}
\end{sequentdeduction}
}%
The proofs in \textcircled{3}, \textcircled{4} are similar and they prove local progress conditions within $\ivr$ and $\lnot\rfvar$ from $\sigliedsai{\genDE{x}}{\ivr}, \sigliedsai{\genDE{x}}{(\lnot{\rfvar})}$ respectively. On \textcircled{3}, the first step uses \irref{bigsmallequiv} to weaken to the $\ddnext$ modality since we have $\initassum$ and $\rfvar$ in the antecedents, and $y$ is constant along the ODE $\D{x}=\genDE{x}$. The proof for \textcircled{3} completes with \irref{LpRfullxeqy}.
{\footnotesize
\begin{sequentdeduction}[array]
\linfer[bigsmallequiv]{
\linfer[LpRfullxeqy]{
  \lclose
}
  {\lsequent{\initassum,\ivr, \sigliedsai{\genDE{x}}{\ivr}}{\dprogressin{\D{x}=\genDE{x}}{\ivr}}}
}
  {\lsequent{\initassum,\ivr, \sigliedsai{\genDE{x}}{\ivr}}{\dprogressinsmall{\D{x}=\genDE{x}}{\ivr}}}
\end{sequentdeduction}
}%
The proof for \textcircled{4} uses~\irref{LpRfullxeqy} directly.
{\footnotesize
\begin{sequentdeduction}[array]
\linfer[LpRfullxeqy]{
  \lclose
}
  {\lsequent{\initassum, \sigliedsai{\genDE{x}}{(\lnot{\rfvar})}}{\dprogressin{\D{x}=\genDE{x}}{\lnot{\rfvar}}}}
\end{sequentdeduction}
}%

The proof completes on \textcircled{5} by unfolding the $\ddnextsmall$ and $\ddnext$ modalities and combining the two local progress modalities in the antecedents using \irref{decompand}.
A \irref{Kd+dW} step allows us to turn the postcondition of the diamond modality in the antecedent to $\lnot{\rfvar}$ because at the endpoint where $x \neq y$ is satisfied the domain constraint must still be true which implies that $\lnot{\rfvar}$ is true at that endpoint. Formally, the \irref{dW} step proves with the tautology $\ivr \land (\lnot{\rfvar} \lor \initassum) \limply x \neq y \limply \lnot{\rfvar}$. A \irref{gddR} step completes the proof.
{\footnotesize
\begin{sequentdeduction}[array]
\linfer[]{
\linfer[decompand]{
\linfer[Kd+dW]{
\linfer[gddR]{
\linfer[andl]{
  \lclose
}
  {\lsequent{\ivr \land (\lnot{\rfvar} \lor \initassum)}{\ivr}}
}
  {\lsequent{\ddiamond{\pevolvein{\D{x}=\genDE{x}}{\ivr \land (\lnot{\rfvar} \lor \initassum)}}{\lnot{\rfvar}}}{\ddiamond{\pevolvein{\D{x}=\genDE{x}}{\ivr}}{\lnot{\rfvar}}}}
}
  {\lsequent{\ddiamond{\pevolvein{\D{x}=\genDE{x}}{\ivr \land (\lnot{\rfvar} \lor \initassum)}}{x \neq y}}{\ddiamond{\pevolvein{\D{x}=\genDE{x}}{\ivr}}{\lnot{\rfvar}}}}
}
  {\lsequent{\ddiamond{\pevolvein{\D{x}=\genDE{x}}{\ivr}}{x \neq y}, \ddiamond{\pevolvein{\D{x}=\genDE{x}}{\lnot{\rfvar} \lor \initassum}}{x \neq y}}{\ddiamond{\pevolvein{\D{x}=\genDE{x}}{\ivr}}{\lnot{\rfvar}}}}
}
  {\lsequent{\dprogressinsmall{\D{x}=\genDE{x}}{\ivr}, \dprogressin{\D{x}=\genDE{x}}{\lnot{\rfvar}}}{\ddiamond{\pevolvein{\D{x}=\genDE{x}}{\ivr}}{\lnot{\rfvar}}}}
\end{sequentdeduction}
}%

The remaining derivation from \textcircled{2} (with succedent \textcircled{b}) is similar using local progress for the backwards differential equations instead.

For an invariance property $\lforall{x}{(\rfvar\limply\dbox{\pevolvein{\D{x}=\genDE{x}}{\ivr}}{\rfvar})}$, we may, without loss of generality, assume that $\rfvar,\ivr$ (and $\lnot{\rfvar},\lnot{\ivr}$) are equivalently rewritten into appropriate normal forms when necessary with an application of rule \irref{qear}. Therefore, by \irref{semialgiffQ}, the \dL calculus reduces all such invariance questions to a first-order formula of real arithmetic which is decidable.
\end{proof}
\end{document}